\newtheorem{theorem}{Theorem}
\newtheorem{proposition}{Proposition}
\newtheorem{corollary}{Corollary}
\newtheorem{definition}{Definition}
\newtheorem{remark}{Remark}
\newtheorem{ex}{Example}
\newtheorem{lemma}{Lemma}
\def\a{\alpha}
\def\b{\beta}
\def\e{\eta}
\def\g{\gamma}
\def\k{\kappa}
\def\s{\sigma}
\def\vt{\vartheta}
\def\m{\mu}
\def\n{\nu}
\def\veps{\varepsilon}
\def\bvt{\bar{\vartheta}}
\def\o{\omega}
\def\tdo{\widetilde{\omega}}
\def\O{\Omega}
\def\lr{\rfloor}
\def\p{\partial}
\def\G{\Gamma}
\def\d{\delta}
\theoremstyle{plain}
\renewcommand{\@chapapp}{}% Not necessary...
\newenvironment{chapquote}[2][2em]
  {\setlength{\@tempdima}{#1}%
   \def\chapquote@author{#2}%
   \parshape 1 \@tempdima \dimexpr\textwidth-2\@tempdima\relax   \itshape}
 {\par\normalfont\hfill--\ \chapquote@author\hspace*{\@tempdima}\par\bigskip}
\begin{document}

%--------------------------------------------------

\begin{titlepage}  % Titlepage
\begin{center}
\Large Department of Physics\\
\Large National Tsing Hua University\\
\Large Doctoral Dissertation\\
\vspace*{15ex}
\huge Gravitational Theories with Torsion\\
\vspace*{15ex}
\Large Huan-Hsin Tseng\\

\Large Advisor\\
\Large Prof. Dr. Chao-Qiang Geng\\
\vspace*{2ex} \Large June, 2015
\end{center}
\end{titlepage}

%--------------------------------------------------

\pagenumbering{roman}  % Use roman numbers before the main body.

%--------------------------------------------------
\chapter*{Abstract}
\addcontentsline{toc}{chapter}{Abstract}

We give a complete formulation of Poincar\'{e} gauge theory, starting from the fibre bundle formulation to the resultant Riemann-Cartan spacetime. We also introduce several diverse gravity theories descendent from the Poincar\'{e} gauge theory. Especially, the cosmological effect of the simple scalar-torsion ($0^+$) mode in Poincar\'{e} gauge theory of gravity is studied. In the theory, we treat the geometric effect of torsion as an effective quantity,
which behaves like dark energy, and study the effective equation of state (EoS) of the model.

We concentrate on the two cases of the constant curvature solution and positive kinetic energy. In the former, we find that the torsion EoS has different values corresponding to the
stages of the universe. For example, it behaves like the radiation (matter) EoS of $w_r = 1/3$ $(w_m =0)$ in the radiation (matter) dominant epoch, while in the late time the torsion density is supportive for the accelerating universe. In the latter case of positive kinetic energy, we find the (affine) curvature is not constant in general and hence requires numerical solution. Our numerical analysis shows that the EoS in general has an asymptotic behavior in the high redshift regime, while it could cross the phantom divide line in the low redshift regime. By further analysis of the Laurent series expansion, we find that the early evolution of the torsion density
$\rho_T$ has a radiation-like asymptotic behavior of $O(a^{-4})$ where $a(t)$ denotes the scale factor, along
with a stable point of the torsion pressure $(P_T)$ and a density ratio $P_T/\rho_T \to 1/3 $ in the high redshift regime $(z \gg 0)$, this is different from the previous result in the literature. Some numerical illustrations are also demonstrated.

We construct the extra dimension theory of teleparallel gravity by using differential forms. In particular, we discuss the Kaluza-Klein and braneworld scenarios by direct dimensional reduction and specifying the shape of fibre. The FLRW cosmological scenario of the braneworld theory in teleparallel gravity demonstrates its equivalence to general relativity (GR) in the field equations, namely they possess the same Friedmann equation.

%--------------------------------------------------
\chapter*{Acknowledgements}
\addcontentsline{toc}{chapter}{Acknowledgements}
This Ph.D. thesis cannot be done without Prof.~Chao-Qiang Geng, for he provided an excellent opportunity and an enthusiastic environment for me to delve into gravity theory and cosmology. Most of all, Prof.~Chao-Qiang Geng was kind to allow me pursuing my own way of understanding gravitational physics and gauge theories via fundamental mathematics. 

I am mostly in deep debt to Prof.~Friedrich W. Hehl in University of K\"{o}ln, for my complete understanding towards the Poincar\'e gauge theory was based on his careful guidance and patience via countless electronic communication. I am also in great honor to thank Prof.~James M. Nester in National Central University for he guided me studying mathematical physics and axiomatized gravitational theory in early Ph.D stage.

Much of my published research on cosmology was under collaboration with my excellent coworkers Chung-Chi Lee and Ling-Wei Luo. I am also grateful for National Center for Theoretical Sciences (NCTS) for travel sponsorship of several seminars and conferences. 

My deep gratitude also goes to Prof.~Larry Ford in Tufts University for his hospitality of hosting my visit in Boston and providing many interesting ideas for collaborated investigation on fluctuation of quantum fields.

Finally, my sincere appreciation goes to those who indirectly help me on accomplishing understanding of general relativity, such as books and researches by David Bleecker, Andrzej Trautman, J\"{u}rgen Jost, and Theodore Frankel, for you have shown me the beauty that never fades.

This Ph.D. thesis is dedicated to my parents and my best friend Hsin-Yi Lin.

%--------------------------------------------------

\newpage

\frontmatter
%--------------------------------------------------
\tableofcontents  % Table of contents
\newpage
%\listoftables  % List of tables
%\newpage
%\listoffigures  % List of figures
%\newpage

\pagenumbering{arabic}  % Use arabic numbers when the main body starts.
%--------------------------------------------------

%--------------------------------------------------

\mainmatter

\chapter{Introduction} \label{intro}

The recent cosmological observations, such as those from type Ia
supernovae~\cite{obs1, obs11}, cosmic microwave background
radiation~\cite{obs12, arXiv:1001.4538}, large scale
structure~\cite{astro-ph/0501171, obs13} and weak
lensing~\cite{astro-ph/0306046}, reveal that our universe is subject
to a period of accelerated expansion.

Although general relativity (GR) developed in the last century has been successful in many ways
of explaining various experimental results in gravity, the nature of the accelerating universe now
rises as a small cloud shrouding it. We thereby look for a more
general theory that comprises GR yet being able to explain the
accelerating problem referred to as dark energy~\cite{DE}.

In general, there are two ways to resolve the phenomenon of the late-time accelerated universe~\cite{DE} either
by \emph{modified gravity} or by \emph{modified matter theories}. Modified gravity asserts that considering alternative geometry may be responsible for forces that we are not able to explain, usually by modifying geometric Lagrangians or changing the geometric framework of spacetime. Modified matter theories include some negative pressure matter that could result in the expanding effect. In this thesis, we adopt the viewpoint of
an alternative gravity theory by considering the so-called
Poincar\'{e} gauge theory (PGT)~\cite{Hehl:1976kj, Obukhov:1987tz,Hehl:1994ue}, which integrates the gauge covariant idea into spacetime.

PGT starts with the consideration of gauging the
Poincar\'{e} group $\mathcal{P} = \mathbb{R}^{1,3} \rtimes SO(1,3)$, where $\mathbb{R}^{1,3}$ denotes the Minkowski spacetime $(\mathbb{R}^4, \langle \cdot, \cdot \rangle_{\mathbb{R}^{1,3}} )$ and $ \langle e_\m, e_\n \rangle_{\mathbb{R}^{1,3}} = \text{diag}(-1,+1,+1,+1)$, into
gravity and ends up in effect as a Riemann-Cartan spacetime $(M,g,\nabla)$,
where $M$ is the spacetime manifold, $g$ is a metric and $\nabla$ is a general metric-compatible connection on $M$.
Such a general connection can be decomposed into
$\nabla = \overline{\nabla} - K$ such that $\overline{\nabla}$ is the Riemannian one and $K$ is the contortion tensor related to torsion tensor $T$ of $\nabla$.
As a result, PGT is in general a gravitational theory with torsion~\cite{Hehl:1976kj,Obukhov:1987tz}
that couples to the spin of the matter field. Gauge theory with the Poincar\'{e} group can be considered as a natural extension of GR, in the sense that it
contains GR as a degenerate case. In fact, it comprises a large class including GR, the Einstein-Cartan theory~\cite{Trautman:2006fp}, teleparallel gravity, and quadratic Poincar\'{e} gauge theory.

Historically, Einstein effectively assumed vanishing torsion ad hoc in 1915, later in 1928 he attempted to utilize the teleparallel (purely torsional) theory to unify gravitation and electromagnetism \cite{Einstein:ap}. Around the same time, \'{E}lie Cartan, as a mathematician who constructed the idea of torsion in 1922, communicated with Einstein about his work. In their sequence of communications \cite{ECletter}, they set up a large portion of the foundation for the teleparallel gravity theory of nowadays.

The birth of gauge theory was innovated in the hand of Hermann Weyl in 1918 \cite{Weyl1918}, while in 1929 he achieved the concept of $U(1)$-gauge theory \cite{Weyl1929} we know nowadays and introduced the vierbein (orthonormal basis, tetrad) into general relativity. The success of local gauge theory in 1950s brought new life into the gravity with torsion. Utiyama gave a first attempt in gauging $SO(1,3)$ into spacetime without success \cite{Utiyama:1956sy}, mainly due to the Riemannian connection used. On this track, Sciama then introduced torsion and related it to spin \cite{Sciama}, and later Kibble showed how to describe gravity with torsion as a local gauge theory of the Poincar\'{e} group \cite{Kibble}. Later in 1976 Hehl \emph{et al} formulated a complete gravitation theory that demonstrates Poincar\'{e} gauge invariance and eventually results in gravity with torsion \cite{Hehl:1976kj}. The great success of gauge theory in fundamental physics leads us to believe that gravity should also belong to the roll of gauge theories, since all the other fundamental interactions like the electroweak and the strong are beautifully formulated by such rules. In this sense, this provides a best guiding principle to follow in searching for an alternative gravity theory.

The framework of PGT is based on the gauge principles of Yang-Mill's theory of non-Abelian group. Through the use of principal fibre bundles in mathematics, one derives a more clear vision of gauge structures and its essence which is eventually beneficial for the transition between different Lie groups.

We set out from the formulation of PGT in fibre bundle language \cite{McInnes:1984sm},\cite{Trautman:1970cy} (Chapter 2) which is also general for all gauge theories, and try to address the story of PGT as complete as possible in a united and compact way with the minimal offering of bundle materials for essential study. Such construction will then provide an integrated and clear view for the Poincar\'{e} gauge gravity and Riemann-Cartan spacetime, which in the end leads to several diverse alternative gravity theories, such as the Einstein-Cartan theory, GR, teleparallel gravity, and quadratic PGT.

In particular, we shall investigate a specific quadratic theory of the \emph{scalar-torsion mode} in PGT \cite{Tseng:2012hn},\cite{Geng:2013hp},\cite{Geng:2013hp}(Chapter 3), that possesses dynamical torsion field. This particular mode is also called \emph{simple $0^+$ mode or SNY-model,\cite{Shie:2008ms}}, which is
one of the six modes: $0^{\pm},1^{\pm}$ and $2^{\pm}$ labeled by spin and parity, based on the linearlized
theory of PGT~\cite{Hayashi:1981mm,Sezgin:1979zf}. In the theory, the $0^+$ mode is known to have no interaction with any fundamental source~\cite{Kopczynski} and thus it could have a significant magnitude without being much noticed within the current universe. In particular, it has been studied by Shie, Nester and Yo (SNY) that the spin-$0^+$ mode is divided into two classes: one with negative energy density but exhibiting late time de-Sitter universe served as dark energy; the other with normal positive energy condition that is also responsible for the late-time acceleration but demonstrates the early oscillation in various physical quantities. In this thesis, we concentrate on these two classes and present the numerical solutions of the late-time acceleration behavior and their corresponding equation of state (EoS), defined by $w=p/\rho$, where $\rho$ and $p$ are the energy density and pressure of the relevant component of the universe respectively. This type of cosmology provides some realistic and interesting features so that it has been explored in numerous discussions~\cite{Chen:2009at,Li:2009zzc,Li:2009gj,
Ho:2011qn,Ho:2011xf,Ao:2010mg,Baekler:2010fr,Ao:2011kc,Xi:2011uz}. Consequently, this mode naturally becomes a subject to study~\cite{Hehl:2012pi}.

In PGT, there is another interesting degenerate case called \textit{teleparallelism} (in contrast to GR of zero torsion), where curvature vanishes identically on the spacetime and torsion is the only responsiblility for the gravitational force. The name is so dubbed simply because without curvature every vector field is parallel. Such construction is possible if we adopt the so-called Weitzenb\"{o}ck connection $\nabla^W$ on a Riemann-Cartan space.  As indicated early, such a spacetime was considered by Einstein \cite{Einstein:ap}, who had unified theory concerns, and later it was developed into a type called \emph{teleparallel equivalent to general relativity} (TEGR). The Lagrangian is in a special form such that it is \textit{almost equivalent} to GR in every aspect. TEGR is equivalent to GR in the field equations and the matter evolution so that they cannot be told from the dynamics. One distinction between TEGR and GR is the local Lorentz violation at the Lagrangian level \cite{Li:2010cg}, i.e, TEGR does not respect local Lorentz transformation. However such local Lorentz violation terms appear in the form of an exact differential such that it can be regarded as a boundary term were we in a closed manifold. In any case this term does not affect the field equations such that it gives the same action as GR, which is what we meant by ``almost equivalent''. We also give an account from bundle formalism why such violation occurs. In Chapter 4, we shall provide more discussions.

As is known, there exists another type of modified gravity theory called \textit{extra dimension theory} that consists of a higher dimensional spacetime (called \emph{bulk}, generally higher than four) and a 4-dimensional submanifold as our living space(-time). The five dimension gravity generally induces gravity on the four-dimensional spacetime along with a type of force. First extra dimension theory that unifies electromagnetism and gravitation was initiated by Nordstr\"{o}m \cite{Nordstrom} around 1914 as well as Kaluza \cite{Kaluza} and Klein \cite{Klein}, known as the KK theory. In KK theory, electromagnetic field is from the projection of five-dimensional spacetime whose fibre is a small circle $S^1 \cong U(1)$. It is usually used to explain the hierarchy problems with the effective Planck scale in 4-dimension by dimensional reduction. There is another type of extra dimension theory called \emph{large extra dimension} or ADD model, proposed by Arkani-Hamed, Dimopoulos and Dvali \cite{ADD} in 1998. It was proposed to explain why gravity is so weak compared to other forces. The ADD theory assumes that the fields of the Standard Model are confined on the 4-dimensional \emph{membrane}, with only gravity being able to propagate through the large extra dimension that is spatial. Thus it is also referred to as the \textit{braneworld} theory.

We construct the extra dimension theory for TEGR gravity \cite{Geng:2014yya},\cite{Geng:2014nfa}. In order to build such theory, it is necessary to search for the torsion relations between the brane and the bulk mimicking the Gauss-Codacci equation. In general, such relations could be complicated in component form. Thus we adopt differential forms
to reduce the large amount of computation and to serve as a rigorous tool. In the construction, we keep our geometric setting as general as possible to contain branworld theory and KK theory under the same geometric framework. Some of the aspects have been explored in the literature~\cite{deAndrade:1999vq, Barbosa:2002mg,
Fiorini:2013hva, Bamba:2013fta, Nozari:2012qi}, which can be compared with our results. In the end, we utilize our extra dimension theory for TEGR for FLRW cosmology as an application and derive a result consistent with GR.

%--------------------------------------------------

%--------------------------------------------------
\chapter{Poincar\'{e} Gauge Gravity Theory} \label{PGT}

%--------------------------------------------------------- Style 1 ------------
%\epigraph{\ldots \textit{Torsion is not
%just a tensor, but rather a very specific
%tensor that is intrinsically related to the
%translation group, as was shown by
%\'{E}lie Cartan in 1923-24.}}{\textit{Friedrich W. Hehl}}
%----------------------------------------------------------------------------

%--------------------------------------------------------- Style 2 ------------
\begin{chapquote}{\textit{A. Trautman}}
\ldots \textit{It is
possible that ECT will prove to be a better classical
limit of a future quantum theory of gravitation than
the theory without torsion.}
\end{chapquote}
%----------------------------------------------------------------------------

Poincar\'{e} Gauge Theory for gravity (PGT) is a theory that
incorporates gravity as a gauge theory of the Poincar\'{e} group $\mathcal{P} = \mathbb{R}^{1,3} \rtimes SO(1,3)$. To
develop such a theory, one needs to find an ambient space where both theories
cooperate.

Recall that to describe (local) gauge symmetry, one requires gauge invariance of an internal group $G$, that is a Lie group, and the covariant transition induced by an external group of diffeomorphisms Diff$(M)$ between two
observers of spacetime, and hence such a theory must be based on a 4-dimensional Lorentzian
manifold. For a gauge theory modelled on a spacetime, a principal
fibre bundle is then a natural candidate. For example,
electromagnetism of $U(1)$ symmetry can be formulated on a
$U(1)$-principal bundle.

Therefore one finds that the best suited mathematical theory that depicts PGT
is the principal fibre bundle theory. Below we mainly follow the treatment of \cite{Trautman:1970cy}, \cite{Steenrod}, \cite{Kobayashi}, \cite{Bleecker}, \cite{Jost1}, and \cite{Trautman:1979cq} in principal fibre bundle theory to provide essential bundle material to clearly address PGT.

\section{Preliminaries and Notations}

\subsection{Geometric construction of PGT}

\begin{definition}{(Principal $G$-bundle)}\label{Def:PFB}

Let $G$ be a Lie group. A principal $G$-bundle consists of a pair of differentiable manifolds $P$ called the \textbf{total
space} and $M$ called \textbf{base manifold} with a differentiable (surjective) projection $\pi:P \to M$ and an action of $G$ on $P$ such that
\begin{enumerate}
\item
For every $g\in G$, there exists a diffeomorphism $R_g : P \to P$
such that $R_{g_1 g_2}(p) = R_{g_2} \circ R_{g_1}(p)$ for all $g_1$,
$g_2 \in G$ and $p\in P$. And if $e \in G$ is the identity element,
then $R_e (p) = p $ for all $p\in P$. We also require the group action of $G$ acts freely on $P$, $
(p,g) \in P \times G \mapsto R_g(P) \in P$ such that $ R_g(p)
\neq p $ for all $g\neq e$. We also write $R_g (p) = p g$.

The action of $G$ on $P$ then defines an equivalence relation. Define
$p\sim q$ for $p$, $q\in P$ $\Leftrightarrow$ if there exists $g\in
G$ such that $p=qg$.

\item
$M$ is the quotient space of the equivalence relation $\sim$ induced
by $G$, $M=P/G$. Hence $\pi^{-1}(\pi(p))= \{p g| g\in G\}$ (the
orbit of $G$ through $p$). If $x\in M$, then $\pi^{-1}(x)$ is called the
\textbf{fibre} above $x$.

\item
$P$ is locally trivial. For each $x\in M$, there exists an open set
$U$ containing $x$ and a diffeomorphism $T_U:\pi^{-1}(U) \to U
\times G$ such that $T_U(p)=(\pi(p), \psi_U(p))$ satisfying $
\psi(p g) = \psi(p) g$ for all $g\in G$. The map $T_U$ is
called a \textbf{a local trivialization} or a \textbf{choice of
gauge} (in physics language).

\end{enumerate}

\end{definition}

In practice the base manifold $M$ corresponds to
the 4-dimensional spacetime in consideration, while for all $p \in \pi^{-1}(x)$ there exists a map $ G \to \pi^{-1} (x)$ by
$g \mapsto pg$, which is a diffeomorphism depending on $p$. Thus all
fibres $\pi^{-1}(x)$ for $x \in M$ are isomorphic to $G$ called the
\textbf{internal (symmetry) group} or gauge group.

A principal $G$-bundle provides a space for phase factors, see \cite{Wu:1975es}, while a
connection on $P$ yields a \textbf{gauge potential} in physics
language, which we denote

\begin{definition}
Let $\mathfrak{g}$ be the Lie algebra of $G$. A connection is a
$\mathfrak{g}$-valued 1-form  $\omega $ on $P$, denoted by $\Lambda^1(P;
\mathfrak{g}) := \Lambda^1(P)\otimes \mathfrak{g}$, such that

\begin{enumerate}

\item
Let $A\in \mathfrak{g}$, define the \textbf{fundamental vector
field} corresponding to $A$ on $P$ by
\begin{equation}\label{E:fundamental vector field}
A_p^* := \frac{d}{dt} \left( p \exp (t A) \right)\Bigr|_{t=0}
\end{equation}
then we require $\omega_p (A_p^*) = A$.

\item
For $g\in G$, let $\mathfrak{ad}_g : \mathfrak{g} \to \mathfrak{g} $
be the associated adjoint map\footnote{The associated adjoint map $\mathfrak{ad}_g : \mathfrak{g} \to \mathfrak{g} $ is
defined as the differential map of $Ad_g : G \to G$ at $e \in G$, where $Ad_g(h)
:= ghg^{-1}$. Specifically, let $a\in \mathfrak{g}$ find a local
curve $t\mapsto c(t)$ on $G$ such that $c(0)=e$, $c'(0) = a $, then
define $\mathfrak{ad}_g (a) = \frac{d}{dt} \left( g \, c(t)
g^{-1}\right)\Bigr|_{t=0} $ }. We require $\omega_{pg}(R_{g*} X_p) =
\mathfrak{ad}_{g^{-1}} \omega_p(X_p)$ for all $g\in G$, $p\in P$ and
$X \in TP$ a vector field on $P$. i.e, $R_g^* \omega =
\mathfrak{ad}_{g^{-1}} \omega$
\end{enumerate}

\end{definition}

On a principal fibre bundle $P$, the notion of \emph{vertical} is already distinguished by the projection $\pi: P \to M$, specifically if $w\in T_pP$, a vector on $P$, is \textbf{vertical} if $\pi_*(w) = 0$. In contrast, for the notion of \emph{horizontal} we need extra structure to specify it. A connection $\omega$ on $P$ essentially tells us what vector is \textit{horizontal}, defined by $\o(v) =0$ if $v\in T_pP$. Therefore given a connection 1-form $\o$ on $(P,\pi,M,G)$, we may decompose a vector field $w\in T_pP$ as $w = w^V+ w^H$ such that $w^V $ is vertical $\pi_*(w^V) =0$ and $w^H$ is horizontal $\o(w^H) = 0$.

With the technical definitions above, we can then describe gravitational gauge theories. We recall that Einstein's general relativity can be reformulated on the following principal bundle whose structure group is $SO(1,3)$,

\begin{definition}{(Linear frame bundle)}\label{Def:frame bundle}

Let $M$ be a 4-dim Lorentz manifold (spacetime). Define a
\textbf{generalized frame (a generalized observer)} at $x\in M$ to
be a linear isomorphism $u: \mathbb{R}^{1,3} \to T_xM $, and thus
$\{u(e_0), \ldots, u(e_3)\} \in T_xM $ is a basis for $T_xM$ (not necessarily orthonormal), where $e_0,\ldots,e_3 $ is the
canonical basis of the Minkowski spacetime $\mathbb{R}^{1,3}$. Let $L_x(M):= \{ u_x:
\mathbb{R}^{1,3} \to T_xM  \, \mbox{a isomorphism}\}$ be the set of all generalized observers at $x \in
M$. Define the set
\begin{equation}\label{L(M)}
L(M) := \bigcup_{x\in M} L_x(M)
\end{equation}
with $\pi(u_x):= x$ for $u_x\in L_x(M)$,, and the action of
$GL(\mathbb{R}^{1,3})$ on $L(M)$ by $R_A: L(M) \to
L(M)$ with $R_A(u_x) := u_x
\circ A$ for $A\in GL(\mathbb{R}^{1,3})$. Equipped with a suitable differentiable structure on $L(M)$, one
verifies that $(L(M),\pi, M, GL(\mathbb{R}^{1,3})$ is a
principal fibre bundle called the \textbf{linear frame bundle}.

\end{definition}

One sees that if $u_x$ is an observer at $x$, then the right action
$R_A$, $A\in GL(\mathbb{R}^{1,3})$, brings it from one
frame to another. Furthermore, on a semi-Riemannian\footnote{The Semi-Riemannian geometry denotes a differentiable manifold $M$ and a Lorentz metric $g$ that has one negative signature. } spacetime
$(M,g)$ we can consider an \textbf{orthonormal frame} (or \textbf{tetrad}) $u_x \in
L_x(M)$ at $x\in M$ such that the $g(u_x (v), u_x (w)) = \langle v,w \rangle_{\mathbb{R}^{1,3}}$ for
all $v, w \in \mathbb{R}^{1,3} $. An orthonormal frame then corresponds to an
\textbf{observer} in the usual physical sense. We can further define an
\textbf{orthonormal frame bundle}.

\begin{definition}{(Orthonormal frame bundle)}

Let $F_x(M):= \{ u_x \in L_x(M) \, | \, g\circ u_x = \langle \cdot,\cdot \rangle_{\mathbb{R}^{1,3}} \}$ be the
set of all orthonormal frames. Define the set
\begin{equation}\label{F(M)}
F(M) := \bigcup_{x\in M} F_x(M)
\end{equation}
with $\pi: F(M) \to M$ by $\pi(u_x) = x$ for $u_x \in F_x(M)$, and
the right action of $G=O(1,3)$ by $R_A (u_x) = u_x \circ A $ for $A
\in O(1,3)$. With the differential structure induced from $L(M)$,
one can verify that $F(M)$ is also a principal fibre bundle called the
\textbf{orthonormal frame bundle}.

\end{definition}

Since we know that the hyper-rotation group
\[
O(1,3):=\{ A\in
GL(\mathbb{R}^{1,3} ) \, | \, \langle Av, Aw \rangle_{\mathbb{R}^{1,3}}  =
\langle v, w \rangle_{\mathbb{R}^{1,3}} \}
\]
where $\langle \cdot,\cdot \rangle_{\mathbb{R}^{1,3}} $ is the
Lorentzian inner product of Minkowski spacetime, has 4 components:
\begin{equation}
\begin{aligned}
L_+^{\uparrow} &: = \{ B \in O(1,3)\, | \, \det B = 1, B^0_0 \geq 1
\}, \\
L_-^{\uparrow}  &:= \{ B \in O(1,3)\, | \, \det B = -1,
B^0_0 \geq
1 \}\\
L_+^{\downarrow} &: = \{ B \in O(1,3)\, | \, \det B = 1, B^0_0 \leq
- 1 \}, \\
L_-^{\downarrow} &:= \{ B \in O(1,3)\, | \, \det B =
-1, B^0_0 \leq - 1 \}
\end{aligned}
\end{equation}
where the connected component $L_+^{\uparrow}$ is usually referred
as the Lorentz group $SO(1,3)$. It follows that $F(M)$ can
contain up to 4 components in general. If we assume $F(M)$ has 4
components for simplicity, then the base manifold (spacetime) $M$ is
then called \textbf{space and time orientable}. A choice of one
component for $F(M)$ corresponds to a \textbf{space and time
orientation}. Let $F_0(M)$ be such a choice, then the principal
fibre bundle $(F_0(M),\pi, M, SO(1,3))$ describes a spacetime with
Lorentz gauge covariance. Since the fibre bundle $(F_0(M),\pi, M,
SO(1,3))$ construction is canonical whenever $M$ exists, we conclude
that Einstein's general relativity has local Lorentz gauge freedom.

So far, we have established the living space for gauge gravitational
theories, yet we do not have notion of curvature or shape for $P$ at this moment. Curvature is the core of general relativity in the sense that it is responsible
for the gravitational force one perceives but one should emphasize that curvature is not a property pertaining to a spacetime. In the theory
of General Relativity in 1915, Einstein adopted the unique
Levi-Civita (Riemannian) connection for defining curvature of a spacetime. The following reformulation of the Levi-Civita connection on a linear frame bundle equivalent to the one in general GR textbooks is helpful for PGT generalization later. We begin with defining curvature

\begin{definition}\label{Def:Exterior covariant derivative}

If $\psi \in \Lambda^k(P, V)$, then we define $\psi^H \in \Lambda^k(P, V)$ by $\psi^H  (X_1, \ldots ,X_k) := \psi (X_1^H, \ldots , X_k^H)$, where $X_i \in TP$.
\end{definition}

\begin{definition}{(Exterior covariant derivative)}

If $\psi \in \Lambda^k(P, V)$, then we define $ D^\o \psi := (d\psi)^H \in \Lambda^{k+1}(P,V)$ with respect to the connection $\o$.
\end{definition}

With the exterior covariant derivative in particular for $V= \mathfrak{g}$ of $G$, we can define the notion of curvature on a principal fibre bundle.
\begin{definition}{(Curvature of a connection)}\label{Def:curvature}

If $\o$ is a connection 1-form on $P$, the \textbf{curvature} of the connection is defined as the 2-form $\O^\o := D^\o \o \in \Lambda^2(P, \mathfrak{g})$.
\end{definition}

In terms of physics context, $\o$ is referred to as the \textbf{gauge potential} and $\O^\o $ is the \textbf{field strength} (corresponding to $\o$). The following form is usually more familiar in the physics literature than Definition (\ref{Def:curvature}).
\begin{theorem}

If $G$ is a matrix group, then the curvature 2-form can be expressed by
\begin{equation}\label{E:connection curvature}
\O^\o = d\o + \o \wedge \o
\end{equation}
\end{theorem}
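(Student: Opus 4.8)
The plan is to verify the identity pointwise. Both $\Omega^\o = D^\o\o = (d\o)^H$ and $d\o + \o\wedge\o$ are $\mathfrak{g}$-valued $2$-forms on $P$, so it suffices to show they agree on an arbitrary pair $X,Y \in T_pP$. Writing $X = X^H + X^V$ and $Y = Y^H + Y^V$ and invoking bilinearity and antisymmetry, I would reduce the check to three cases according to whether each argument is horizontal (H) or vertical (V): (HH), (HV), (VV). Throughout I would use the intrinsic formula $d\o(X,Y) = X(\o(Y)) - Y(\o(X)) - \o([X,Y])$, choosing convenient vector-field extensions of $X,Y$, which is legitimate since $d\o$ is tensorial.

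First I would record two preliminary facts. (a) The equivariance axiom $R_g^*\o = \mathfrak{ad}_{g^{-1}}\o$ forces $R_{g*}$ to map horizontal vectors to horizontal vectors, so the horizontal distribution is $R_g$-invariant; since the flow of the fundamental vector field $A^*$ is the right action $p \mapsto p\exp(tA)$, it follows that $[A^*, X]$ is again horizontal whenever $X$ is a horizontal vector field, because $[A^*,X]_p$ is the derivative of a curve lying entirely in the vector subspace of horizontal vectors at $p$. (b) The assignment $A \mapsto A^*$ is a Lie-algebra homomorphism, $[A,B]^* = [A^*,B^*]$, and every vertical vector at $p$ equals $A^*_p$ for a unique $A \in \mathfrak{g}$ with $\o_p(A^*_p)=A$.

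Then I would dispatch the three cases. In (HH) both sides reduce immediately to $d\o(X,Y)$, since $\o$ annihilates horizontal vectors and hence $\o\wedge\o$ vanishes on such a pair. In (HV), taking $X$ horizontal and $Y = A^*$, the left side is $(d\o)^H(X,A^*) = d\o(X^H, (A^*)^H) = 0$, and on the right $\o(X)=0$ kills $\o\wedge\o$ while $d\o(X,A^*) = X(\o(A^*)) - A^*(\o(X)) - \o([X,A^*]) = X(A) - 0 - 0 = 0$, using that $\o(A^*)\equiv A$ is constant and that $[X,A^*]$ is horizontal by (a). In (VV), taking $X = A^*$, $Y = B^*$, the left side vanishes because horizontal parts do; on the right, $d\o(A^*,B^*) = -\o([A^*,B^*]) = -\o([A,B]^*) = -[A,B]$ by (b), while the matrix-group hypothesis gives $(\o\wedge\o)(A^*,B^*) = \o(A^*)\o(B^*) - \o(B^*)\o(A^*) = AB - BA = [A,B]$, so the two terms cancel. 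Agreement in all three cases yields the formula.

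The only genuinely nontrivial step is fact (a) — the horizontality of $[A^*,X]$ — which is exactly where the defining equivariance property $R_g^*\o = \mathfrak{ad}_{g^{-1}}\o$ of a connection is used, via the identification of the flow of $A^*$ with the one-parameter group of right translations by $\exp(tA)$. The matrix-group assumption itself enters only in case (VV), to identify the abstract bracket $[A,B]$ with the commutator $AB-BA = (\o\wedge\o)(A^*,B^*)$; for a general Lie group the same computation produces the familiar form $\O^\o = d\o + \tfrac12[\o\wedge\o]$.
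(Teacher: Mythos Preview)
Your proof is correct and is the standard argument (essentially the one found in Kobayashi--Nomizu or Bleecker, which the paper cites). The paper itself does not supply a proof of this theorem: it is stated as a known result and immediately followed by the explanation of what $\o\wedge\o$ means for a matrix-valued form, with the reader implicitly referred to the background references. So there is no ``paper's own proof'' to compare against; your three-case decomposition (HH), (HV), (VV) via horizontal/vertical splitting is exactly the classical route, and your identification of the equivariance axiom as the ingredient behind fact~(a) is on point.
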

where $\o$ is regarded as a matrix in $\mathfrak{g}$ with each entry a real-valued 1-form $ \o_\m{}^\n \in \Lambda^1(P, \mathbb{R})$, write $\o = \begin{pmatrix}
  \o_1{}^1 & \o_1{}^2 & \cdots & \o_1{}^n \\
 \o_2{}^1 & \o_2{}^2 & \cdots & \o_2{}^n \\
  \vdots  & \vdots  & \ddots & \vdots  \\
  \o_n{}^1 & \o_n{}^2 & \cdots &\o_n{}^n
 \end{pmatrix}$ and the wedge $\o \wedge \o$ is defined as
\begin{equation}\label{E:Lie alg exp}
 \begin{pmatrix}
  \o_1{}^1 & \o_1{}^2 & \cdots & \o_1{}^n \\
 \o_2{}^1 & \o_2{}^2 & \cdots & \o_2{}^n \\
  \vdots  & \vdots  & \ddots & \vdots  \\
  \o_n{}^1 & \o_n{}^2 & \cdots &\o_n{}^n
 \end{pmatrix} \wedge
  \begin{pmatrix}
  \o_1{}^1 & \o_1{}^2 & \cdots & \o_1{}^n \\
 \o_2{}^1 & \o_2{}^2 & \cdots & \o_2{}^n \\
  \vdots  & \vdots  & \ddots & \vdots  \\
  \o_n{}^1 & \o_n{}^2 & \cdots &\o_n{}^n
 \end{pmatrix}
 =
   \begin{pmatrix}
  \o_1{}^k \wedge \o_k{}^1 & \o_1{}^k \wedge \o_k{}^2 & \cdots & \o_1{}^k \wedge \o_k{}^n \\
  \o_2{}^k \wedge \o_k{}^1 & \o_2{}^k \wedge \o_k{}^2 & \cdots & \o_2{}^k \wedge \o_k{}^n \\
  \vdots  & \vdots  & \ddots & \vdots  \\
  \o_n{}^k \wedge \o_k{}^1 & \o_n{}^k \wedge \o_k{}^2 & \cdots & \o_n{}^k \wedge \o_k{}^n
 \end{pmatrix}
\end{equation}

In the case of a linear frame bundle $L(M)$, (\ref{E:connection curvature}) leads to the usual definition of the curvature tensor on a Riemannian manifold $(M,g,\nabla)$ by pullback of local sections $\s_{U_i} : U_i \subseteq M \to L(M)$, given by
\begin{equation}\label{E:curvature tensor}
R(X,Y) Z := \nabla_X \nabla_Y Z - \nabla_Y \nabla_X Z - \nabla_{[X,Y]} Z
\end{equation}
for $X, Y, Z \in \mathfrak{X}(M)$, the collection of all $C^{\infty}$-vector fields. In particular, the curvature 2-from (\ref{E:connection curvature}) reduces to the Faraday tensor $F_{\m\n} = A_{\m,\n} - A_{\m,\n} $ for $G= U(1)$ an Abelian group,
\[
\O^\o_U = d\o_U :=\frac{1}{i} \, d (A_\m \, dx^\m ) =\frac{1}{i} \, dA_\m \wedge dx^\m = \frac{1}{i} \, ( A_{\n,\m} -  A_{\m,\n} ) dx^\m \wedge dx^\n
\]
where $\o_U := \s^*_U \o : = \frac{1}{i} \, A_\m \, dx^\m \in \Lambda(U,\mathfrak{u}(1) \cong i \mathbb{R})$ is a local 1-form on $U \subseteq M$ pulled back by a local section $\s_U: U\subset M \to L(M)$. As for the particle force where $G=SU(2)$ is non-Abelian, the field strength (\ref{E:connection curvature}) applies to see non-linear dependence on the gauge potential $\o_\m$ as in the case of PGT. We are ready for introducing torsion, which appears as another field strength in the PGT as we explain later.

\begin{definition}{(The \textbf{canonical 1-form}, the \textbf{soldering form})}\label{Def:canonical 1-form}

For a generalized observer $u\in L(M)$, define the \textbf{canonical
1-form} $\varphi \in \Lambda^1(L(M),\mathbb{R}^{1,3})$ by $\varphi(X_u) := u^{-1}
(\pi_*(X_u))$, where $X \in TL(M)$ is a vector field on $L(M)$ and $u^{-1}:T_xM \to \mathbb{R}^{1,3}$ is the inverse map of the linear isomorphism $u$. The
restricted canonical 1-form on $F(M)$ is defined similarly on
$F(M)$, which is denoted by the same notation $\varphi$.

\end{definition}

In order to describe the interaction between the external symmetry
of the spacetime and the internal symmetry of the gauge, a representation of $G$ is needed. Recall that a
representation of a group $G$ on a vector space $V$ is a group
homomorphism $\rho : G \to GL(V)$ such that $\rho(g h) = \rho(g) \cdot \rho(h)$ for all $g,h \in G$. With a representation of a Lie group $G$, a
fundamental object is defined:

\begin{definition}{(Basic differential forms)}\label{Def:basic differential form}

Let $V$ be a vector space and $\overline{\Lambda}^k(P,V)$ be a space
of \textbf{basic differential $k$-forms} on $P$ such that, for a
given representation $\rho: G \to GL(V)$, it is both
\textbf{$G$-invariant} and \textbf{horizontal}:
\begin{enumerate}
\item
($G$-invariant)  For $\a \in \overline{\Lambda}^k(P,V)$,
$X_1,\ldots, X_k \in T_pP$, $p\in P$, we have,
\begin{equation}\label{E:basic differential form}
\alpha (R_{g_*} X_1, \ldots, R_{g_*} X_k  ) = \rho(g^{-1}) \cdot
\alpha (X_1, \ldots, X_k) \quad \mbox{or} \quad R^*_g \a =
\rho(g^{-1}) \cdot \a,
\end{equation}

\item
(horizontal) If one of $X_1,\ldots, X_k$ is vertical, then $\alpha
(X_1, \ldots, X_k) =0$.
\end{enumerate}

\end{definition}
with the definitions above, immediately one finds a fact that the canonical
1-form $\varphi$ is a basic differential 1-form with respect to the representation $\rho: O(1,3) \to
GL(\mathbb{R}^{1,3})$. The exterior covariant derivative of basic differential forms are defined similarly:
\begin{definition}{(Exterior covariant derivative of $\overline{\Lambda}^k(P,V)$ ) }\label{Def:Exterior covariant derivative2}

Given a connection $\o$ on $P$, with respect to the adjoint representation $V=\mathfrak{g}$, $\rho: G \to GL(V)$ by $g \mapsto \rho(g)=\mathfrak{ad}_g$. We define
\begin{equation}
D^\o : \overline{\Lambda}^k(P,V) \to \overline{\Lambda}^{k+1}(P,V), \,\, \mbox{by} \quad D^\o \psi :=(d\psi)^H
\end{equation}
\end{definition}
one can easily see that $D^\o \psi$ satisfies (\ref{E:basic differential form}) since
\[
R^*_g D^\o \psi = R_g^* (d\psi)^H = ( R_g^* d\psi)^H = ( d R_g^* \psi)^H = \mathfrak{ad}_{g^{-1}} \cdot(d\psi)^H =  \mathfrak{ad}_{g^{-1}} \cdot D^\o \psi
\]
Similarly, the exterior covariant derivative of $\overline{\Lambda}^k(P,V)$ has another form.
\begin{theorem}\label{Thm:ext derivative}

For $\psi \in \overline{\Lambda}^k(P,V)$, we have $ D^\o \psi =  d\psi + \o \dot{\wedge} \psi$.
\end{theorem}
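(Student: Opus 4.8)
The plan is to establish the identity $D^\o\psi = d\psi + \o\,\dot\wedge\,\psi$ for $\psi\in\overline\Lambda^k(P,V)$ by testing both sides against an arbitrary collection of $k+1$ tangent vectors at a point $p\in P$ and using the direct-sum decomposition $T_pP = V_pP\oplus H_pP$ into vertical and horizontal subspaces determined by $\o$. Since both $D^\o\psi$ and the proposed expression are $\mathbb{R}$-multilinear and alternating in their arguments, it suffices to check the identity on a basis-adapted set of vectors, i.e. when each argument is either purely vertical or purely horizontal. This reduces the proof to three cases according to how many of the $k+1$ arguments are vertical: zero, exactly one, or at least two.

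First I would dispose of the easy cases. If all $k+1$ arguments are horizontal, then $(d\psi)^H$ applied to them is just $d\psi$ on them by definition, while $\o\,\dot\wedge\,\psi$ vanishes because every term in that wedge contracts $\o$ against one of the (horizontal) arguments, and $\o$ annihilates horizontal vectors; so both sides agree. If at least two arguments are vertical, the left side vanishes since $(d\psi)^H$ only sees horizontal projections, and on the right side one checks that $d\psi$ evaluated on $k+1$ vectors, of which two are vertical fundamental fields $A^*,B^*$, vanishes using the structure equation together with horizontality of $\psi$ — concretely, I would extend the vertical vectors to fundamental vector fields $A^*_p,B^*_p$ for $A,B\in\mathfrak g$, use the invariant formula for $d\psi$ (the sum over $\widehat X_i$-hatted Lie-bracket and directional-derivative terms), and note that $[A^*,B^*]=[A,B]^*$ is again vertical while $\psi$ kills anything with a vertical slot; the $\o\,\dot\wedge\,\psi$ term on the right likewise dies because its nonzero contributions would need exactly one vertical argument fed to $\o$ and the rest fed to $\psi$, but the remaining vertical argument still sits in a slot of $\psi$.

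The crux is the mixed case: exactly one argument, say the first, is vertical and the remaining $k$ are horizontal. Here the left side $(d\psi)^H$ vanishes, so I must show $d\psi + \o\,\dot\wedge\,\psi$ also vanishes on such a tuple. Write the vertical vector as $A^*_p$ for a suitable $A\in\mathfrak g$, and extend the $k$ horizontal vectors to horizontal vector fields $X_1,\dots,X_k$ that are invariant under the flow of $A^*$ (this is the standard technical maneuver — one can take them $R_{\exp tA}$-related). Then in the invariant formula for $d\psi(A^*,X_1,\dots,X_k)$ every bracket $[A^*,X_i]$ vanishes by the invariance choice and every directional derivative $X_i\big(\psi(\dots)\big)$ with $A^*$ still in a slot kills $\psi$; the only surviving term is $A^*\big(\psi(X_1,\dots,X_k)\big)$, which I evaluate as $\frac{d}{dt}\big|_0 \psi\big(R_{\exp(tA)*}X_1,\dots\big)\circ R_{\exp(tA)} = \frac{d}{dt}\big|_0 \rho(\exp(-tA))\cdot\psi(X_1,\dots,X_k) = -\rho_*(A)\cdot\psi(X_1,\dots,X_k)$ using the $G$-equivariance property \eqref{E:basic differential form} of basic forms. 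On the other side, $(\o\,\dot\wedge\,\psi)(A^*,X_1,\dots,X_k)$ collapses to the single term $\rho_*\big(\o(A^*)\big)\cdot\psi(X_1,\dots,X_k) = \rho_*(A)\cdot\psi(X_1,\dots,X_k)$ since $\o(X_i)=0$ and $\o(A^*)=A$ by the defining property of a connection; here $\dot\wedge$ is understood to combine the Lie-algebra-valued $\o$ with the $V$-valued $\psi$ via the infinitesimal representation $\rho_*$. The two contributions cancel, completing the mixed case and hence the theorem. The main obstacle — and the place to be careful — is bookkeeping the $\dot\wedge$ operation and the sign/representation conventions so that $\o$ acts on $\psi$ through $\rho_*$ consistently with how \eqref{E:basic differential form} is normalized; once that convention is pinned down the cancellation is forced, but getting the sign wrong there is the obvious trap.
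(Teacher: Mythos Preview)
Your argument is correct and is precisely the standard proof one finds in the references the paper leans on (Kobayashi--Nomizu, Bleecker). The paper itself does not supply a proof of this theorem---it is stated as a known structural fact and then used---so there is nothing to compare your route against; what you have written is the expected verification via the horizontal/vertical splitting, and each of the three cases is handled correctly, including the key cancellation in the mixed case coming from the $G$-equivariance of $\psi$ against the infinitesimal action $\rho_*(\o(A^*))$.

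One minor remark: the paper's displayed definition of $\dot\wedge$ contains a typo (the second $\a$ should be $\psi$) and omits the sign of the permutation, so when you say ``the place to be careful is bookkeeping the $\dot\wedge$ operation,'' you are right to flag it---but the intended convention is the usual graded one, and with that your sign comes out as written.
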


where the notation $\psi \in \overline{\Lambda}^{k}(P,V)$ is defined as
\[
\a \dot{\wedge} \psi (X_1, \ldots, X_j, X_{j+1}, \ldots, X_{j+k}) := \frac{1}{j! k!} \sum_{\s}  \rho_* \left( \a \left( X_{\s(1)}, \ldots , X_{\s(j)} \right) \right) \cdot  \a \left( X_{\s(j+1)}, \ldots , X_{\s(j+k)}\right)
\]
for $\a\in \overline{\Lambda}^{j}(P,\mathfrak{g})$ and $\s$ is a permutation of $\{ 1, \ldots, j+k \}$. Finally, we may present the notion of torsion stemming from the canonical 1-form $\varphi$ of the frame bundle $F(M)$.

\begin{definition}{(Torsion 2-form)}

The \textbf{torsion 2-form} of $\o$ is defined by $\Theta^\o := D^\o \varphi \in \overline{\Lambda}^2(F(M),\mathbb{R}^{1,3})$, where $\varphi $ is the canonical 1-form on $F(M)$ and $\o$ is a connection $\o$ given on $F(M)$.

Therefore by Theorem. (\ref{Thm:ext derivative}) one has
\begin{equation}\label{E:connection torsion}
\Theta^\o := D^\o \varphi = d\varphi + \o \dot{\wedge } \varphi
\end{equation}

\end{definition}
In particular, Riemannian geometry describing GR is of one special case

\begin{theorem}{(Levi-Civita connection)}

On the orthonormal frame bundle $F(M)$, there exists a unique
connection $\widetilde{\omega}\in \Lambda^1(F(M),\mathfrak{so}(1,3))$ such that
$\widetilde{\omega}$ has vanishing torsion, $D^{\widetilde{\omega}} \varphi =0 $, which is called the \textbf{Levi-Civita connection}.
\end{theorem}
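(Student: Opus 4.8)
The plan is to realize $\tdo$ as the unique point of the affine space of all connections on $F(M)$ singled out by a pointwise linear-algebra condition, so that the statement becomes the bundle-theoretic form of the fundamental theorem of (pseudo-)Riemannian geometry. \emph{Step 1 (a reference connection and the affine structure).} First I would invoke the standard fact that every principal bundle admits a connection: cover $M$ by trivializing opens, take on each $\pi^{-1}(U_i)$ the flat connection coming from the trivialization, and glue with a partition of unity subordinate to $\{U_i\}$ — the two defining conditions of a connection $1$-form are affine, so a convex combination of connections is again a connection. Fix one such $\o_0\in\Lambda^1(F(M),\mathfrak{so}(1,3))$. If $\o$ is any other connection, then $\t:=\o-\o_0$ is \emph{horizontal} (the inhomogeneous normalization $\o_p(A^*_p)=A$ cancels in the difference) and \emph{$O(1,3)$-equivariant of adjoint type} (the condition $R_g^*\o=\mathfrak{ad}_{g^{-1}}\o$ is linear in $\o$), i.e.\ $\t\in\overline{\Lambda}^1(F(M),\mathfrak{so}(1,3))$; conversely $\o_0+\t$ is a connection for every such $\t$. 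Hence the connections on $F(M)$ form an affine space modelled on $\overline{\Lambda}^1(F(M),\mathfrak{so}(1,3))$.

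\emph{Step 2 (reduction of the torsion-free condition).} Using Theorem~\ref{Thm:ext derivative} and the torsion formula~\eqref{E:connection torsion}, for $\o=\o_0+\t$ one computes $D^\o\vphi=d\vphi+\o\dot{\wedge}\vphi=(d\vphi+\o_0\dot{\wedge}\vphi)+\t\dot{\wedge}\vphi=\Theta^{\o_0}+\t\dot{\wedge}\vphi$. Therefore $D^\o\vphi=0$ is equivalent to $\t\dot{\wedge}\vphi=-\Theta^{\o_0}$, an equation between basic forms. Since a basic form is determined by its values on horizontal vectors and $\vphi$ restricts to an isomorphism of each horizontal subspace onto $\mathbb{R}^{1,3}$, evaluating at a single point $u\in F(M)$ turns this into a \emph{linear} equation between the fixed finite-dimensional spaces $(\mathbb{R}^{1,3})^*\otimes\mathfrak{so}(1,3)$ (where the pointwise value of $\t$ lives) and $\Lambda^2(\mathbb{R}^{1,3})^*\otimes\mathbb{R}^{1,3}$ (where that of $\Theta^{\o_0}$ lives), namely $L(\t_u)=-\Theta^{\o_0}_u$ where $L(\t):=\t\dot{\wedge}\vphi$.

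\emph{Step 3 (the algebraic lemma --- the main obstacle).} The heart of the proof, and the step I expect to cost the real work, is that $L$ is a linear isomorphism; this is exactly where the structure group being $O(1,3)$ rather than $GL(\mathbb{R}^{1,3})$ enters. Identifying $\mathfrak{so}(1,3)\cong\Lambda^2\mathbb{R}^{1,3}$ via the Minkowski metric gives $\dim\bigl((\mathbb{R}^{1,3})^*\otimes\mathfrak{so}(1,3)\bigr)=4\cdot 6=24=6\cdot 4=\dim\bigl(\Lambda^2(\mathbb{R}^{1,3})^*\otimes\mathbb{R}^{1,3}\bigr)$, so it suffices to exhibit an explicit inverse. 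Writing $\t_{abc}=-\t_{acb}$ (antisymmetry in the last two slots is the $\mathfrak{so}$-condition) and $\Theta_{abc}=-\Theta_{bac}$, the equation $\t\dot{\wedge}\vphi=\Theta$ reads, up to the combinatorial weights built into $\dot{\wedge}$, $\t_{abc}-\t_{bac}=\Theta_{abc}$; the classical cyclic trick --- add the $(abc)$ and $(cab)$ relations, subtract the $(bca)$ one --- produces the Christoffel-type formula $2\t_{abc}=\Theta_{abc}+\Theta_{cab}-\Theta_{bca}$, the unique solution. Carrying this out once, carefully, gives injectivity and surjectivity of $L$ simultaneously; the genuine labour is only in tracking the permutation factors in the definition of $\dot{\wedge}$ and keeping the $\mathfrak{so}(1,3)$ index manipulations straight.

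\emph{Step 4 (globalization and uniqueness).} Because the inversion formula is assembled solely from $\vphi$ and the metric, both $O(1,3)$-equivariant, the pointwise solutions $\t_u$ depend smoothly and equivariantly on $u$ and hence patch into a global $\t\in\overline{\Lambda}^1(F(M),\mathfrak{so}(1,3))$; then $\tdo:=\o_0+\t$ is a connection with $D^{\tdo}\vphi=0$. Uniqueness is precisely injectivity of $L$: if $\tdo'$ is another torsion-free connection, then $\t':=\tdo'-\tdo\in\overline{\Lambda}^1(F(M),\mathfrak{so}(1,3))$ satisfies $\t'\dot{\wedge}\vphi=D^{\tdo'}\vphi-D^{\tdo}\vphi=0$, whence $\t'=0$. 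Finally, I would note that the whole argument can equivalently be run downstairs: pulling back along a local section $\s_U:U\to F(M)$, with $\vt:=\s_U^*\vphi$ and $\o_U:=\s_U^*\tdo$, the condition becomes Cartan's first structure equation with vanishing torsion, $d\vt^a+\o_U{}^a{}_b\wedge\vt^b=0$ together with $\o_{U\,ab}=-\o_{U\,ba}$, whose unique solution is the usual spin connection, and one then checks these local $1$-forms transform correctly under change of section.
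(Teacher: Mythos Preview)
Your proof is correct and is the standard bundle-theoretic proof of the fundamental theorem of (pseudo-)Riemannian geometry: set up the affine space of connections, reduce torsion-freeness to a pointwise linear equation, and invert that equation via the cyclic ``Koszul'' trick, using that $\dim\bigl((\mathbb{R}^{1,3})^*\otimes\mathfrak{so}(1,3)\bigr)=\dim\bigl(\Lambda^2(\mathbb{R}^{1,3})^*\otimes\mathbb{R}^{1,3}\bigr)=24$.

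The paper, however, gives no proof at all: the theorem is simply stated as a known fact and the exposition moves on. So there is nothing to compare against; you have supplied what the paper omits. One small remark: your index bookkeeping in Step~3 is slightly loose --- with the convention that the first slot of $\t_{abc}$ is the form index and the last two the $\mathfrak{so}$ indices, the equation $\t\dot{\wedge}\vphi=\Theta$ reads $\Theta^a{}_{bc}=\t_b{}^a{}_c-\t_c{}^a{}_b$, not quite $\t_{abc}-\t_{bac}=\Theta_{abc}$ as you wrote; the cyclic manipulation then yields the contortion-type formula the paper later records as $K_{\a\b}=i_{e_{[\a}}T_{\b]}-\tfrac{1}{2}\,i_{e_\a}(i_{e_\b}T_\g)\,\vt^\g$. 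This is cosmetic and does not affect the validity of your argument.
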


The above bundle tools contain minimal geometric conception for us to understand PGT theory as a whole. From this viewpoint, GR can be reformulated on the frame bundle $\pi: F(M) \to M$. Below is a dictionary of terminology between general principal fibre bundles and gauge theories, see \cite{Bleecker}, \cite{Wu:1975es}, and \cite{Jost2}:

\begin{table}[h]
\centering
\begin{tabular}{c c c}
Principal fibre bundle theory & & Gauge theory \\
\hline \hline
total space & $P$ & space of phase factors \\
base space & $M$ & spacetime \\
structure group & $G$ & gauge group \\
local section & $\s$ & local gauge \\
connection 1-form & $\o$ & gauge potential \\
curvature 2-form & $\O$ & field strength \\
\hline \hline
\end{tabular}
\end{table}

With the bundle tools above, we can start to explore the Poincar\'{e} gauge theory of gravitation.

%--------------------------------------------------------

%--------------------------------------------------------

\section{Poincar\'{e} gauge theory on the affine frame bundle $\mathbb{A}(M)$}

PGT is, by definition, a gauge gravity theory of the Poincar\'{e} group $\mathcal{P} = \mathbb{R}^{1,3} \rtimes SO(1,3)$. Since we have seen that GR can be formulated by the frame bundle $(F(M),\pi,M,SO(1,3),\widetilde{\o})$, one may ask what is the suitable bundle description for PGT? That is we intend to find a principal bundle $(P, \pi ,M, G)$ with $G=\mathbb{R}^{1,3} \rtimes SO(1,3)$.

Of course, we may just assume that such a fibre bundle exists. However, we want to search for a natural construction like the frame bundle $L(M)$ in Definition (\ref{Def:frame bundle}). Recall that the Poincar\'{e} group $\mathcal{P}$ is the semidirect product of translations in Minkowski $\mathbb{R}^{1,3}$ and hyper-rotations $SO(1,3)$, defined by all rigid motion of the Minkowski spacetime $\mathbb{R}^{1,3}$. Analyzing the Poincar\'{e} group helps us to construct the PGT, \cite{TrautmanDG}.

\begin{definition}{(Affine space)}

If $E$ is an \textbf{affine space} with a vector space $V$ as the space of \emph{translations}. We require that $V$ acts \textbf{freely} and \textbf{transitively} on $E$. By this we mean:
\begin{itemize}
\item
(freely) for $p \in E$ and $v\in V$, if $v+ p = p $ $\Leftrightarrow$ $v=0$,

\item
(transitively) for all $p, q \in E$, there exists $v\in V$ such that $p+v=q$.

\end{itemize}

\end{definition}

If $E_1$, $E_2$ are two affine spaces, $V_1$ and $V_2$ are their groups of translations respectively, then the map $f:E_1 \to E_2$ is called an \textbf{affine map} if there exists a linear map $\b : V_1 \to V_2$ such that $f(v+p ) = \b_f (v) + f(p)$ for all $v\in V$, $p \in E$. In particular, if $E_1 =E_2 =E$ we define the \textbf{affine group} $GA(E)$ of $E$ by
\[
GA(E): = \{ f: E \to E | \,\, \mbox{$f$ is a bijective affine map} \}
\]
then we have an \emph{exact sequence} from the above,
\begin{equation}\label{E:exact seq}
\xymatrix{ 0 \ar[r]  & V \ar[r]^\a  & GA(E) \ar[r]^\b & GL(V) \ar[r] & 1}
\end{equation}\
in the sense that $Im \, \a = Ker \, \b$, which indicates that $GA(E)$ splits into a semi-direct product $GA(E) = V \rtimes GL(V)$. Thus if we take $V= \mathbb{R}^{1,3}$, one obtains $GA(\mathbb{R}^{1,3}) =  \mathbb{R}^{1,3}\rtimes GL(\mathbb{R}^{1,3}) \supset \mathcal{P}$, all rigid motion of the Minkowski space $\mathbb{R}^{1,3}$, including Lorentz group $SO(1,3)$, parity transformation, time reversal and translation.

Let $M$ be the spacetime. If we regard $\mathbb{R}^{1,3}$ as an affine space $\mathbb{A}^{1,3}$ and also the tangent space $T_x M $ at $x\in M$ as another affine space $A_x M$ (\textbf{tangent affine space}). Then elements in $A_x M $ are of the form $\bar{u} = (p, u(e_0), \ldots, u(e_3)) \in A_x M $, where $p\in A_xM$, $\{ e_0, \ldots, e_3 \} $ is the standard basis for $\mathbb{R}^{1,3}$, with $u: \mathbb{R}^{1,3} \to T_xM $ a linear isomorphism such that $\{ u(e_0), \ldots, u(e_3) \}$ forms a vector basis for $T_xM$. One can then identify an element $\bar{u} \in A_xM$ as an affine transformation $\widetilde{u}: \mathbb{A}^{1,3} \to A_xM$ by
\[
\widetilde{u} (0;e_0, \ldots, e_3) := (p, u(e_0), \ldots, u(e_3)) \in A_xM , \qquad \left( (0;e_0, \ldots, e_3) \in \mathbb{A}^{1,3} \right)
\]
Denote $\mathbb{A}_x M:= \{  \widetilde{u}= (p,u )  \, | \,\, \widetilde{u}: \mathbb{A}^{1,3}  \to A_x M , \,\, \mbox{an affine transformation} \} $. Then there is a 1-1 correspondence of $\bar{u} \in A_x M \,\, \overset{1-1}{\longleftrightarrow} \,\, \widetilde{u} \in \mathbb{A}_x M$ such that we can identity $A_x M  \cong \mathbb{A}_x M$. As a comparison,
\[
\begin{aligned}
\mbox{a linear frame $u\in F_xM$} \quad &\Leftrightarrow \quad u:\mathbb{R}^{1,3} \to T_x M \quad \mbox{( $u$ linear isomorphism)} \\
\mbox{an affine frame $\widetilde{u}\in \mathbb{A}_x M $} \quad &\Leftrightarrow \quad \widetilde{u}:\mathbb{A}^{1,3} \to A_x M \quad \mbox{( $\widetilde{u}$ affine transformation)}
\end{aligned}
\]
If we define the set $\mathbb{A}(M):= \bigcup_{x\in M} \mathbb{A}_x M$ and $R_g: \mathbb{A}(M) \to \mathbb{A}(M) $ the right action of the gauge group $ GA(\mathbb{R}^{1,3})= \{ g= (A, \xi)| \,\, A \in GL(\mathbb{R}^{1,3}) , \,\, \xi \in \mathbb{R}^{1,3} \} $ on $\mathbb{A}(M)$ by $R_{(A,\xi) } (p,u) := (p + u \cdot \xi, u \circ A)$, where $(p,u) \in \mathbb{A}_x M $ and $ (A,\xi) \in GA(\mathbb{R}^{1,3})$, and define the projection map $\widetilde{\pi} : \mathbb{A}(M)\to M $ by $\widetilde{\pi} (p,u) := x$ for all affine frames $(p,u)$ at $x \in M$. Then with proper differential structure on $\mathbb{A}(M)$, one can prove that $(\mathbb{A}(M), \widetilde{\pi} , M, GA(\mathbb{R}^{1,3}) )$ forms a principal fibre bundle called the \textbf{affine frame bundle}, \cite{McInnes:1984sm}, \cite{Kobayashi}. Such is the living space of the PGT. Since we want to study the relationship between the PGT and GR, we should investigate the connection between $\mathbb{A}(M)$ and $L(M)$. Recall that we have the exact sequence (\ref{E:exact seq}) where
\[
\begin{aligned}
\a : \mathbb{R}^{1,3} \to GA(\mathbb{R}^{1,3}) \quad  &\mbox{by}  \quad \a(\xi) := \begin{pmatrix}
                                                                                    I_{4\times 4} & \xi \\
                                                                                    0             & 1
                                                                                   \end{pmatrix} \\
 \b : GA(\mathbb{R}^{1,3}) \to GL(\mathbb{R}^{1,3}) \quad  &\mbox{by}  \quad \b\left( \begin{pmatrix}
                                                                                    A & \xi \\
                                                                                    0             & 1
                                                                                   \end{pmatrix} \right) := A           \\
\g : GL(\mathbb{R}^{1,3}) \to GA(\mathbb{R}^{1,3}) \quad  &\mbox{by}  \quad \g(A) := \begin{pmatrix}
                                                                                    A & 0 \\
                                                                                    0             & 1
                                                                                   \end{pmatrix}
                                                                                   \end{aligned}
\]
and thus $\b \circ \g = Id$ on $GL(\mathbb{R}^{1,3})$. Corresponding to the homomorphisms $\a$, $\b$, $\g$ , we define a natural projection $\b : \mathbb{A}(M) \to L(M)$ by $\b (p,u) := u $ and $\g : L(M) \to \mathbb{A}(M) $ by $\g (u) := (0,u) $ so that $\b \circ \g =Id $ on $L(M)$. A connection $\widetilde{\o}$ defined on the affine frame bundle $\widetilde{\pi} : \mathbb{A}(M) \to M$ is called a \textbf{generalized affine connection}. Taking $V = \mathbb{R}^{1,3}$ in the exact sequence (\ref{E:exact seq}) results in the splitting exact sequence of Lie algebras:
\begin{equation}\label{E:exact seq 2}
\xymatrix{ 0 \ar[r]  & \mathbb{R}^{1,3} \ar[r]  & \mathfrak{ga}(\mathbb{R}^{1,3}) \ar[r] & \mathfrak{gl}(\mathbb{R}^{1,3}) \ar[r] & 1}
\end{equation}
so that we obtain a decomposition of the Lie algebra $\mathfrak{ga}(\mathbb{R}^{1,3}) = \mathfrak{gl}(\mathbb{R}^{1,3}) \oplus \mathbb{R}^{1,3}$.

Thus for a generalized affine connection $\widetilde{\o}$, the pull-back $\g^* \widetilde{\o} \in \Lambda^1(L(M),\mathfrak{ga}(\mathbb{R}^{1,3}))$ can be decomposed into two 1-forms
\begin{equation}\label{E:GAC split}
\g^* \widetilde{\o} = \omega \oplus \psi
\end{equation}
according to their Lie-algebra value such that $\o \in \Lambda^1(L(M),\mathfrak{gl}(\mathbb{R}^{1,3}))$ and $\psi \in \overline{\Lambda}^1(L(M),\mathbb{R}^{1,3})$. In fact, by a theorem in \cite{Kobayashi}, it turns out that $\o$ defines a connection on $L(M)$. While from another theorem of McInnes, there is a $1:1$ correspondence between an affine connection and a linear connection
\begin{equation}
\widetilde{\o} \overset{1-1}{\longleftrightarrow} (\o,\psi).
\end{equation}
In summary, one has
\begin{theorem}{(Kobayashi\& Nomizu, \cite{Kobayashi})}

Let $\widetilde{\o}$ be a generalized affine connection on $(\mathbb{A}(M), \widetilde{\pi}, M, GA(\mathbb{R}^{1,3}))$ with the decomposition (\ref{E:GAC split}), then the affine curvature 2-form $\widetilde{\O}^\o \in \overline{\Lambda}(\mathbb{A}(M), \mathfrak{ga}(\mathbb{R}^{1,3}))$ has the decomposition
\begin{equation}\label{E:GAC curvature}
\g^* \widetilde{\O}^{\widetilde{\o}} = \O^\o + D^\o \psi
\end{equation}
where the covariant derivative $D^\o$ is defined by $\o $ on $L(M)$.
\end{theorem}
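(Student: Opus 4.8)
The plan is to work directly with the decomposition $\g^*\widetilde\o = \o \oplus \psi$ of \eqref{E:GAC split} and compute the pullback $\g^*\widetilde\O^{\widetilde\o}$ using the structural equation for curvature. First I would write $\widetilde\O^{\widetilde\o} = d\widetilde\o + \widetilde\o \wedge \widetilde\o$ (the matrix-group form of \eqref{E:connection curvature}, valid since $GA(\mathbb{R}^{1,3})$ is realized as a matrix group via the $5\times 5$ block embedding $\begin{pmatrix} A & \xi \\ 0 & 1\end{pmatrix}$ used above). Pullback commutes with $d$ and with wedge, so $\g^*\widetilde\O^{\widetilde\o} = d(\g^*\widetilde\o) + (\g^*\widetilde\o)\wedge(\g^*\widetilde\o)$. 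Substituting $\g^*\widetilde\o = \o\oplus\psi$, which in block-matrix form reads $\begin{pmatrix}\o & \psi \\ 0 & 0\end{pmatrix}$, I would carry out the block multiplication: the diagonal block of $d(\g^*\widetilde\o)+(\g^*\widetilde\o)\wedge(\g^*\widetilde\o)$ is $d\o + \o\wedge\o = \O^\o$, while the off-diagonal (translational) block is $d\psi + \o\wedge\psi$. By Theorem~\ref{Thm:ext derivative}, the latter is exactly $D^\o\psi$ (here $\psi\in\overline\Lambda^1(L(M),\mathbb{R}^{1,3})$ and $\o$ acts on it through the standard representation $\mathfrak{gl}(\mathbb{R}^{1,3})$ on $\mathbb{R}^{1,3}$, which is precisely the $\dot\wedge$ appearing in that theorem). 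Hence $\g^*\widetilde\O^{\widetilde\o} = \O^\o \oplus D^\o\psi$, which under the Lie-algebra splitting $\mathfrak{ga}(\mathbb{R}^{1,3}) = \mathfrak{gl}(\mathbb{R}^{1,3})\oplus\mathbb{R}^{1,3}$ is the claimed \eqref{E:GAC curvature}.

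There are two points requiring care. The first is that the structural equation $\widetilde\O = d\widetilde\o + \widetilde\o\wedge\widetilde\o$ is stated in the excerpt for curvature \emph{on} the bundle (Definition~\ref{Def:curvature} and the theorem following it), so I must make sure the identity $\g^*\O^{\widetilde\o} = d(\g^*\widetilde\o)+(\g^*\widetilde\o)\wedge(\g^*\widetilde\o)$ is legitimate: this follows because $\g: L(M)\to\mathbb{A}(M)$ is a bundle morphism covering the identity on $M$, so $\g^*$ is an algebra homomorphism on forms and commutes with $d$. The second, more delicate point is that $\g^*\widetilde\o$ is a priori only a $\mathfrak{ga}$-valued $1$-form on $L(M)$, not obviously a connection; but the theorem's hypothesis (via the cited result of Kobayashi--Nomizu) is exactly that its $\mathfrak{gl}$-component $\o$ is a genuine connection on $L(M)$, so $D^\o$ is well defined. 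I expect the main obstacle to be purely bookkeeping: keeping the block-matrix conventions for $\mathfrak{ga}(\mathbb{R}^{1,3})$ consistent with the definition of $\dot\wedge$ in Theorem~\ref{Thm:ext derivative}, in particular checking that the representation $\rho_*$ implicit in $\o\,\dot\wedge\,\psi$ is indeed the defining action of $\mathfrak{gl}(\mathbb{R}^{1,3})$ on $\mathbb{R}^{1,3}$ and not its adjoint action — this is where a sign or a transpose could slip in, and it is the step I would verify most carefully.

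Alternatively, if one prefers a representation-free argument, I would invoke the general structure equation $\widetilde\O^{\widetilde\o} = d\widetilde\o + \tfrac12[\widetilde\o,\widetilde\o]$ valid for an arbitrary Lie group, then use that the bracket on $\mathfrak{ga}(\mathbb{R}^{1,3}) = \mathfrak{gl}(\mathbb{R}^{1,3})\ltimes\mathbb{R}^{1,3}$ splits as $[(X_1,v_1),(X_2,v_2)] = ([X_1,X_2],\, X_1 v_2 - X_2 v_1)$; projecting onto each summand again yields $d\o + \tfrac12[\o,\o] = \O^\o$ in the $\mathfrak{gl}$-slot and $d\psi + \o\cdot\psi = D^\o\psi$ in the $\mathbb{R}^{1,3}$-slot. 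This makes transparent why the translational part of the affine curvature is precisely the exterior covariant derivative of the $\mathbb{R}^{1,3}$-valued part $\psi$ of the affine connection; when $\psi$ is taken to be the canonical (soldering) form $\vphi$, this recovers $D^\o\vphi = \Theta^\o$, the torsion, which is the payoff the paper is building toward.
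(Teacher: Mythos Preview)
The paper does not supply its own proof of this theorem; it is stated with attribution to Kobayashi--Nomizu and then used as input for the subsequent discussion of affine connections and torsion. Your proposal is correct and is in fact the standard argument one finds in that reference: pull back the structure equation, use the block-matrix (or equivalently semidirect-product bracket) form of $\mathfrak{ga}(\mathbb{R}^{1,3})$, and read off the $\mathfrak{gl}$- and $\mathbb{R}^{1,3}$-components as $\O^\o$ and $D^\o\psi$ respectively. The two caveats you flag (that $\g^*$ commutes with $d$ and $\wedge$, and that the $\mathfrak{gl}$-part $\o$ is a genuine connection so that $D^\o$ makes sense) are exactly the right ones, and your identification of $\o\,\dot\wedge\,\psi$ with the defining action rather than the adjoint is correct since $\psi$ is $\mathbb{R}^{1,3}$-valued.
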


In particular, if a generalized affine connection $\widetilde{\o}$ coincidentally has the form
\begin{equation}\label{E:AC split}
\g^* \widetilde{\o} = \omega \oplus \varphi
\end{equation}
such that the $\mathbb{R}^{1,3}$ Lie-algebra valued component $\varphi$ is the canonical 1-form in Definition.~\ref{Def:canonical 1-form}, then we call such a connection $\widetilde{\o}$ an \textbf{affine connection} (c.f. generalized affine connection). Thus by (\ref{E:GAC curvature}), we see that an affine connection $\widetilde{\o}$ on $\mathbb{A}(M)$ yields
\begin{equation}\label{E:AC curvature}
\g^* \widetilde{\O}^\o = \O^\o + \Theta^\o
\end{equation}
\[
\mbox{\textit{ affine curvature $=$ spacetime curvature $+$ torsion} }
\]
where $\Theta^\o $ is the torsion form on $L(M)$ as given by $\Theta^\o: = D^\o \varphi$ in (\ref{E:connection torsion}). This shows that on the affine frame bundle $(\mathbb{A}(M), \widetilde{\pi}, M,  GA(\mathbb{R}^{1,3}) , \widetilde{\o} ) $ where $\widetilde{\o}$ is an affine connection, the curvature and torsion of spacetime on $M$ are united as one affine curvature $\widetilde{\O}^\o$. In other words, the affine curvature $\widetilde{\O}^\o$ splits into curvature $\O$ and torsion $\Theta^\o$ on the (base) spacetime $M$. This is reminiscent of a quotation in \cite{McInnes:1984sm}:

\textit{``This removes the objection that torsion is a feature of space-time structure which has no analogue in gauge theory.''}

To understand the above statement more clearly, we remark that, for example, one can define the curvature 2-form $\O^\o$ of the $U(1)$-bundle $(P,\pi, M , U(1),\o = \frac{1}{i} \, A_\m \, dx^\m)$ for Maxwell's electrodynamics, called the electromagnetic field strength $F = d\o$ . But one cannot define the \emph{torsion} of the electromagnetic potential $A_{\mu}$. Similarly, one cannot speak of the torsion of the $SU(n)$-gauge field. What makes gravity special is exactly the existence of the canonical 1-form $\varphi$ in $L(M)$, Definition \ref{Def:canonical 1-form}. However, from the viewpoint of bundles torsion is nothing more than a byproduct of the specific group $\mathcal{P}$ used.

Here, one can raise a na\"{\i}ve question as a summary for this section:

\textbf{Why is Poincar\'{e} gauge gravity theory (PGT) naturally related to gravity with torsion?}

The answer is the following:

Since the gauging of the Poincar\'{e} group $\mathcal{P}$ into gravity theory necessarily leads to the affine frame bundle $(\mathbb{A}(M), \widetilde{\pi}, M,  GA(\mathbb{R}^{1,3}) , \widetilde{\o} ) $, and an affine connection $\widetilde{\o}$ on $\mathbb{A}(M)$ is pulled back naturally on the frame bundle $L(M)$ by $\g$ such that it results in the splitting (\ref{E:AC split}) according to the Lie-algebra, and thus the affine curvature splits accordingly by (\ref{E:AC curvature}), hence the torsion and the curvature on the spacetime.

In fact, if we require the PGT to be the bundle defined by $(\mathcal{P}(M),\widetilde{\pi}, M, \mathcal{P}, \widetilde{\o})$ with structure group, called the Poincar\'{e} bundle as $\mathcal{P}$, then we will need to reduce from $(\mathbb{A}(M), \widetilde{\pi}, M, GA(\mathbb{R}^{1,3}))$ to the Poincar\'{e} bundle. However such a reduction is a delicate issue, which is beyond our scope, see \cite{McInnes:1984sm}. Thus in the following discussion, we assume the Poincar\'{e} bundle as $\mathcal{P}$ exists as a sub-bundle of $(\mathbb{A}(M), \widetilde{\pi}, M, GA(\mathbb{R}^{1,3}))$

The above bundle construction clearly describes the gauge scenario for PGT. However, when we project from the total bundle space $\mathcal{P}(M) $ to the spacetime $M$, the resultant field strength we observe on $M$ is simply curvature and torsion. Thus for most PGT theorists, their actual computation does not involve the bundle formulation, except that it provides a clear account of gauging gravity. Rather, one can only regard the projected version of the spacetime with torsion and curvature, namely $(M, g, \nabla)$ with $\nabla $ a general connection compatible to $g$ on $M$, and such a local version on $M$ is what we introduce next.

%------------------------------------------------------------

\section{Riemann-Cartan geometry}

Given a semi-Riemannian spacetime $(M,g)$, GR is formulated on the tuple $(M,g ,\widetilde{\nabla} )$ where $\widetilde{\nabla} $ is the Levi-Civita connection. If one considers a general connection $\nabla$, then $(M,g,\nabla )$ is called a \textbf{metric-affine spacetime}, which can be regarded as the most general spacetime under the framework of differentiable geometry. Here a general connection $\nabla: \mathfrak{X}(M) \times \mathfrak{X}(M) \to \mathfrak{X}(M)$  on $M$ is an operator such that \cite{Jost1},
\begin{equation}
\begin{aligned}
\nabla_{f X +g Y} Z &= f \nabla_{X} Z + g \nabla_{ Y} Z  \\
\nabla_{X} (Y+Z) &=  \nabla_{X} Y + \nabla_{X} Z\\
\nabla_{X} (fY) &= f \nabla_{X} Y + X(f ) Y
\end{aligned}
\end{equation}
where $X, Y, Z \in  \mathfrak{X}(M)$, $f, g \in C^{\infty}(M)$. With a connection, one can then define the torsion tensor and the curvature tensor by \cite{Jost1},
\begin{definition}{(Curvature tensor and torsion tensor)}

The \textbf{Riemann curvature tensor} $R(\cdot, \cdot): \mathfrak{X}(M)  \times \mathfrak{X}(M) \times \mathfrak{X}(M)  \to  \mathfrak{X}(M) $ and the \textbf{torsion tensor} $T(\cdot, \cdot):\mathfrak{X}(M) \times \mathfrak{X}(M)  \to  \mathfrak{X}(M) $ of $\nabla$ are defined by
\begin{equation}\label{E:R and T}
\begin{aligned}
R(X,Y) Z &= \nabla_X  \nabla_Y Z - \nabla_Y  \nabla_X Z - \nabla_{[X,Y]} Z \\
T(X,Y) Z &= \nabla_X  Y - \nabla_Y  X - [X,Y]
\end{aligned}
\end{equation}
for all $X,Y,Z \in \mathfrak{X}(M) $.
\end{definition}
Notice that on a metric-affine spacetime $(M,g,\nabla)$ one does not have to place the condition of \textbf{metric compatibility} (or \textbf{metricity}) constrained by
\begin{equation}\label{E:metricity}
d\left( g(X,Y) \right) = g\left( \nabla X, Y \right) + g\left( X, \nabla  Y \right)
\end{equation}
which is equivalent to a usual coordinate form $g_{\a\b,\m} = \G_{\m\a}{}{}^\n \, g_{\n\b} + \G_{\m\b}{}{}^\n \, g_{\n\a} $ by taking coordinate vector fields $X= \frac{\partial}{\partial x^\a}$, $ Y = \frac{\partial}{\partial x^\b }$, and $Z = \frac{\partial}{\partial x^\m}$ in (\ref{E:metricity}).

\begin{definition}{(Riemann-Cartan spacetime, underlying space of PGT)}\label{Def:RC space}

The tuple $(M,g,\nabla)$ with a metric compatible (\ref{E:metricity}) connection $\nabla$ on $M$ is called the \textbf{Riemann-Cartan spacetime}, which is the underlying space of PGT, \cite{Trautman:2006fp}.
\end{definition}

The Definition (\ref{Def:RC space}) of PGT is equivalent to the version of bundle formulation. That is $(M,g, \nabla)$ $\Leftrightarrow$ $(\mathcal{P}(M), \widetilde{\pi}, M, \mathcal{P},\widetilde{\o})$. However, this requires some steps which we now explain.

Starting from the Poincar\'{e} bundle $(\mathcal{P}(M), \widetilde{\pi}, M, \mathcal{P},\widetilde{\o})$, to obtain curvature and torsion defined on $M$, one requires a local section.

Let $U \subseteq M$ be an open neighborhood in $M$ and $\s_U: U \to F(M)$ be a local (smooth) section on $U$, a $C^{\infty}$-map such that $\pi \circ \s_U = Id_U$. With the given local section $\s_U$ on $U \subseteq M$, equivalent to choosing an observer, we derive a locally defined curvature 2-form on $U$ from the Poincar\'{e} bundle $(\mathcal{P}(M), \widetilde{\pi}, M, \mathcal{P},\widetilde{\o})$ via
\begin{equation}\label{E:pull back R and T}
\begin{aligned}
\s_U^* \Omega^\o &\overset{(\ref{E:connection curvature})}{=} \s_U^* \left( d\o + \o \dot{\wedge} \o \right)  =  d (\s_U^* \o)  + (\s_U^*\o ) \dot{\wedge} (\s_U^*\o ) \\
\s_U^* \Theta^\o &\overset{(\ref{E:connection torsion})}{=} \s_U^* \left( d\varphi + \o \dot{\wedge} \varphi \right) = d( \s_U^*\varphi) + (\s_U^*\o) \dot{\wedge} (\s_U^* \varphi)
\end{aligned}
\end{equation}
if we denote $\O_U:= \s_U^* \O^\o \in \Lambda^2(U, \mathfrak{so}(1,3)) $, $T_U:= \s_U^* \Theta^\o \in \Lambda^2(U, \mathbb{R}^{1,3}) $, $\o_U:= \s_U^* \o \in \Lambda^1(U, \mathfrak{so}(1,3))$, and $\vt_U:= \s_U^* \varphi \in \Lambda^1(U, \mathbb{R}^{1,3})$, then we can rewrite the above as
\begin{equation}\label{E:local R and T}
\begin{aligned}
\O_U &= d\o_U + \o_U \wedge \o_U \,\, \in \Lambda^2(U, \mathfrak{so}(1,3))\\
T_U  &= d\vt_U + \o_U \wedge \vt_U \,\, \in \Lambda^2(U, \mathbb{R}^{1,3})
\end{aligned}
\end{equation}
so that we return to the case in (\ref{E:Lie alg exp}), where the Lie group is a matrix group. If we omit the sub-index $U$ for simplicity without confusion and write
\begin{equation}\label{E:Lie basis exp}
\begin{aligned}
\O &= \O^{\b\a} \, E_{\a\b} ,\quad  T = T^\a \, \widetilde{e}_\a, \\
\o &= \o^{\a\b} \, E_{\a\b} ,\quad \vt = \vt^\a{} \, \widetilde{e}_\a,
 \end{aligned}
\end{equation}
where $\{\widetilde{e}_\a\in \mathbb{R}^{1,3}\}$ are the standard basis of the Minkowski spacetime and $\{ E_{\a\b} \in \mathfrak{so}(1,3) \}$ is a basis of the Lie algebra $\mathfrak{so}(1,3)$ with $\a, \b, \ldots = 0,1,2,3 $. Notice that since the Lie-algebra basis $E_{\a\b} = -E_{\b\a}$ is anti-symmetric so that $\O^{\b\a} = - \O^{\a\b}$ and $\o^{\b\a} = - \o^{\a\b}$ is anti-symmetric in the indices $\a$ and $\b$. One also remarks that the indices of $\O_{\a\b}$, $\o_\a{}^\b$, $T^\b$, \ldots , etc. have nothing to do with the spacetime index or coordinate. i.e, one cannot say $T^\a $ is a 1-tensor or $\o_\a{}^\b$ is a 2-tensor simply by judging from the number of indices.

With the basis expansion (\ref{E:Lie basis exp}), one may further rewrite (\ref{E:local R and T}) with (\ref{E:Lie alg exp}) in a usual form recognizable in several references \cite{Hehl:1994ue}, \cite{Trautman:2006fp}
\begin{equation}\label{E:T and R}
\begin{aligned}
\O^\b{}_\a &= \nabla \, \o_\a{}^\b = d \o_\a{}^\b + \o_\g{}^\b \wedge \o_\a{}^\g =
\frac{1}{2} R^\b{}_{\a\mu\nu} \, \vt^{\mu} \wedge \vt^{\nu} ,\\
T^\a &= \nabla \, \vt^\a = d\vt^\a + \o_\b{}^\a \wedge \vt^\b =
\frac{1}{2} T_{\mu\nu}{}{}^\a \, \vt^{\mu} \wedge \vt^{\nu},
\end{aligned}
\end{equation}
both the last equalities may be verified via the relationships:
\begin{equation}
\begin{aligned}
\O^\b{}_\a (X,Y) &= \vt^\b \left( R(X,Y) e_\a \right), \qquad T^\a(X,Y) = \vt^\a \left( T( X,Y )\right) ,\quad \mbox{or}\\
\O_{\b\a} (X,Y) &= g\left( R(X,Y) e_\a , e_\b \right), \qquad T_\a(X,Y) =  g( \left( T( X,Y ) , e_\a \right)
\end{aligned}
\end{equation}
where $e_\a = \s(x)(\widetilde{e}_\a) \in \mathfrak{X}(M)$ is an orthonormal basis (tetrad) induced by the section $\s$, $X, Y \in \mathfrak{X}(M)$ and $R(e_{\mu}, e_\n)e_\g := R^\s{}_{\g\m\n} \, e_\s $, $T(e_\m,e_\n) := T_{\m\n}{}{}^\s \, e_\s$ are defined by (\ref{E:R and T}).
One can also verify the following useful identities by direct computations, \cite{Kobayashi}:
\begin{equation}
\nabla T^\a = \O^\a{}_\b \wedge \vt^\b  , \quad
\mbox{($1^{st}$ Bianchi)}, \qquad \nabla \O^{\a\b} = 0, \quad \mbox{($2^{nd}$ Bianchi)}
\end{equation}

Also the requirement of the metricity condition (\ref{E:metricity}) in a tetrad $e_\a$ is written as
\begin{equation}\label{E:metricity2}
 0 = \o_\a{}^\m \, \eta_{\m\b} + \o_\b{}^{\m} \, \eta_{\m\a}, \qquad \o_{\a\b} = - \o_{\b\a}
\end{equation}
where $\eta_{\a\b}:= g(e_\a , e_\b)$ and in the last line we denote $\o_{\a\b} := \o_\a{}^\m \, \eta_{\m\b}$ for simplicity. Here we remark that the connection 1-form $\o_\a{}^\b \in \Lambda^1(M, \mathbb{R})$ is as equivalently defined by $\nabla e_\a := \o_\a{}^\b \, e_\b$.

In summary, if we pullback from $(\mathbb{A}(M), \widetilde{\pi}, GA(\mathbb{R}^{1,3}), \widetilde{\o})$, the underlying space of PGT, to the spacetime $M$, the resultant elements are $(M,g,\nabla, \vt^\a )$ with $\varphi \Leftrightarrow  \vt^\a $ a tetrad as the $\mathbb{R}^{1,3}$-gauge potential for translation and $\nabla  \Leftrightarrow \o_\a{}^\b $ as the $\mathfrak{so}(1,3)$-gauge potential for rotations satisfying (\ref{E:metricity2}), called Riemann-Cartan spacetime,\cite{Hehl:1976kj}. The two field strength torsion and curvature $(\O^\a{}_\b, T^\b)$ of $\nabla$ are given by (\ref{E:T and R}). Below we write the geometric correspondence of the PGT in gauge theoretic language (c.f. \cite{Hehl:1994ue}, \cite{Trautman:2006fp}):

\begin{table}[h]
\centering
\begin{tabular}{c | c  c}
PGT & translational $\mathbb{R}^{1,3}$ &  rotational $SO(1,3)$ \\
\hline \hline
gauge potentials (on $L(M)$) & $\varphi$                      & $\o$ \\
gauge potentials (on $M$)             & $\vt^\a$                       & $\o_\a{}^\b$ \\
field strengths (on $L(M)$)  & $\Theta^\o =D^\o \varphi$       & $\O^\o = D^\o \o$ \\
field strengths (on $M$)              &  $T^\a = \nabla \vt^\a $             & $\O^\b{}_\a = \nabla  \o_\a{}^\b $ \\
\end{tabular}
\end{table}

%--------------------------------------------------------------------

%--------------------------------------------------------------------

% \subsection{Kibble's laboratory}
% \subsection{Kibble's laboratory}
% \subsection{Kibble's laboratory}
% \subsection{Kibble's laboratory}
% \subsection{Kibble's laboratory}
% \subsection{Kibble's laboratory}
% \subsection{Kibble's laboratory}
% \subsection{Kibble's laboratory}
% \subsection{Kibble's laboratory}

\subsection{Changes of frames}\label{Sec:chaning frame}

It is important to study changes of frames under different observers in spacetime. Some gauge quantities may change correspondingly under the transition and here we wish to give a clear view.

In the definition of a principal fibre bundle $P$, it is defined to have a local trivialization on $T_U: \pi^{-1}(U) \subseteq P \to U \times G $ such that $T_U(p)=(\pi(p), \psi_U(p))$ with $ \psi_U:\pi^{-1}(U) \to G$ satisfying $ \psi_U(p g) =  \psi_U(p) \cdot g$ for all $p\in\pi^{-1}(U)$, $g\in G$.

One can then define the transition function on the overlap
\begin{definition}{(Transition function)}\label{Def:transition function}

Let $U$, $V$ be two open sets in $M$ such that $U \cap V \neq \phi$, and let $T_U : \pi^{-1}(U)  \to U \times G$ and $T_V : \pi^{-1}(V)  \to V \times G$ be two local trivializations of a principal bundle $(P,\pi,M,G)$. Then the \textbf{transition function} from $T_U$ to $T_V$ is defined as the map $\Psi_{UV} : U \cap V \to P$ by
\begin{equation}
\Psi_{UV}(x) := \psi_U(p) \, \psi_V(p)^{-1},
\end{equation}
where $p\in P$, $x\in M$ such that $\pi(p)=x$ and $\psi_V(p)^{-1}$ denotes the group inverse, not the functional inverse.
\end{definition}
The definition of the transition function is well-defined since if $p,q\in \pi^{-1}(U)$ ($ p =q g$ for some $g \in G$), then $\psi_U(qg) \, \psi_V(qg)^{-1} = \psi_U(q) \, \psi_V(q)^{-1} $. For the transition function it is easy to verify that it possesses the following properties
\begin{proposition}
The transition function $\Psi_{UV}$ satisfies,
\begin{equation}
\begin{aligned}
\Psi_{UU}(y) &= e, \qquad \qquad \quad ( \forall y\in U)\\
\Psi_{UV}(y) &= \Psi_{VU}^{-1}(y), \qquad   (\forall y\in U\cap V) \\
\Psi_{UV}(y) \cdot \Psi_{VW}(y) \cdot \Psi_{WU}(y) &= e   \qquad \qquad\quad  (\forall y\in U\cap V \cap W)
\end{aligned}
\end{equation}
\end{proposition}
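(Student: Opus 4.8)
The plan is to verify each of the three identities by directly unwinding the definition $\Psi_{UV}(x) = \psi_U(p)\,\psi_V(p)^{-1}$ for a suitably chosen $p \in \pi^{-1}(x)$, using the fact --- established in the remark immediately preceding the statement --- that the right-hand side does not depend on which point $p$ of the fibre over $x$ we pick (this follows from $\psi_U(pg) = \psi_U(p)\,g$, since $\psi_U(qg)\,\psi_V(qg)^{-1} = \psi_U(q)\,g\,g^{-1}\,\psi_V(q)^{-1} = \psi_U(q)\,\psi_V(q)^{-1}$). Once this independence is in hand, all three relations are pure cancellations of group elements, where throughout ``inverse'' means the group inverse in $G$.

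Concretely: for $\Psi_{UU}(y) = e$, fix $y \in U$ and any $p \in \pi^{-1}(y)$, so $\Psi_{UU}(y) = \psi_U(p)\,\psi_U(p)^{-1} = e$. For $\Psi_{UV}(y) = \Psi_{VU}(y)^{-1}$, fix $y \in U \cap V$ and $p \in \pi^{-1}(y)$; then $\Psi_{UV}(y)\,\Psi_{VU}(y) = \psi_U(p)\,\psi_V(p)^{-1}\,\psi_V(p)\,\psi_U(p)^{-1} = e$, and the product in the reverse order is $e$ as well, so the two elements are mutually inverse. For the cocycle relation, fix $y \in U \cap V \cap W$ and --- this is the one place a remark is needed --- use a \emph{single} representative $p \in \pi^{-1}(y)$ to evaluate all three transition functions at once; then
\[
\Psi_{UV}(y)\,\Psi_{VW}(y)\,\Psi_{WU}(y) = \psi_U(p)\,\psi_V(p)^{-1}\,\psi_V(p)\,\psi_W(p)^{-1}\,\psi_W(p)\,\psi_U(p)^{-1} = e,
\]
the interior factors telescoping pairwise.

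The only subtlety --- and so the ``main obstacle,'' though a very mild one --- is precisely the last point: a priori the three transition functions in the cocycle identity could be evaluated using different fibre representatives, and it is the well-definedness statement that licenses choosing one common $p$ (the fibre $\pi^{-1}(y)$ being nonempty, any of its points works). Everything else is immediate from associativity and the group axioms, so no further machinery from the bundle structure is required.
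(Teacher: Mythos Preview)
Your proof is correct and is exactly the direct verification the paper has in mind: the paper does not spell out a proof but simply remarks that ``it is easy to verify'' these properties after noting the well-definedness of $\Psi_{UV}$, and your telescoping computation from the definition $\Psi_{UV}(x)=\psi_U(p)\,\psi_V(p)^{-1}$ is precisely that verification.
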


The change of the two frames is equivalent to the transition between two sections.
\begin{proposition}

Let $T_U : \pi^{-1}(U)  \to U \times G$ and $T_V : \pi^{-1}(V)  \to V \times G$ be two local trivializations of $P$ as given above, one can define a section $\s_U: U \to P$ associated to the trivialization $T_U$ given by $\s_U(x) := T_U^{-1}(x,e)$, along with $\s_V: V \to P$ similarly defined. For $Y \in T_xM$, we have
\begin{equation}\label{E:change section}
\s_{V*} (Y_x) = \left[ L_{\Psi^{-1}_{UV}(x)*} \left( \Psi_{UV*}(Y_x)  \right) \right]^*_{\s_V(x)} + R_{ \Psi_{UV}(x) *} \circ \s_{U*}(Y_x)
\end{equation}

\end{proposition}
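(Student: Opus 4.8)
The plan is to pin down the pointwise relation between the two sections on $U\cap V$ and then differentiate it, viewing $\s_V$ as a composite through the right $G$-action and splitting the resulting differential into a group-direction part and a bundle-direction part. \emph{Pointwise relation:} since $\s_U(x)$ and $\s_V(x)$ lie in the same fibre $\pi^{-1}(x)$, on which $G$ acts freely and transitively, there is a unique $g\in G$ with $\s_V(x)=\s_U(x)\,g$. From $\s_V(x)=T_V^{-1}(x,e)$ one has $\psi_V(\s_V(x))=e$; applying $\psi_V$ to $\s_V(x)=\s_U(x)g$ and using $\psi_V(pg)=\psi_V(p)g$ gives $g=\psi_V(\s_U(x))^{-1}$. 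Evaluating the transition function (Definition~\ref{Def:transition function}) at $p=\s_U(x)$, where $\psi_U(\s_U(x))=e$, yields $\Psi_{UV}(x)=\psi_U(\s_U(x))\,\psi_V(\s_U(x))^{-1}=\psi_V(\s_U(x))^{-1}=g$. Hence $\s_V(x)=\s_U(x)\cdot\Psi_{UV}(x)=\Phi\big(\s_U(x),\Psi_{UV}(x)\big)$ on $U\cap V$, where $\Phi:P\times G\to P$, $\Phi(p,g)=R_g(p)=pg$, is the action map.

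Differentiating this composite, the chain rule gives $\s_{V*}(Y_x)=\Phi_*\big(\s_{U*}(Y_x),\Psi_{UV*}(Y_x)\big)$, and linearity of $\Phi_*$ at $(\s_U(x),\Psi_{UV}(x))$ together with the splitting $T(P\times G)\cong TP\oplus TG$ breaks this as $\Phi_*(\s_{U*}(Y_x),0)+\Phi_*(0,\Psi_{UV*}(Y_x))$. Freezing the second slot at $g=\Psi_{UV}(x)$ makes $\Phi(\cdot,g)=R_g$, so the first term is $R_{\Psi_{UV}(x)*}\circ\s_{U*}(Y_x)$, the last term of (\ref{E:change section}). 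Freezing instead the first slot at $p=\s_U(x)$ makes $\Phi(p,\cdot):G\to P$ the map $g\mapsto pg$, so the second term is the image of $W:=\Psi_{UV*}(Y_x)\in T_h G$, with $h:=\Psi_{UV}(x)$, under this map. Put $A:=L_{h^{-1}*}(W)\in\mathfrak{g}$; then $t\mapsto h\exp(tA)$ passes through $h$ at $t=0$ with velocity $L_{h*}(A)=W$, so the second term equals $\frac{d}{dt}\left(p\,h\exp(tA)\right)\big|_{t=0}=\frac{d}{dt}\left((ph)\exp(tA)\right)\big|_{t=0}$, which by (\ref{E:fundamental vector field}) is $A^*_{ph}=A^*_{\s_V(x)}$, the base point being $ph=\s_V(x)$ by the pointwise relation. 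Since $A=L_{\Psi^{-1}_{UV}(x)*}(\Psi_{UV*}(Y_x))$, adding the two terms gives (\ref{E:change section}).

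The only genuinely delicate point is the bookkeeping in the last step: one must check that it is the left translation $L_{h^{-1}*}$ (not $R_{h^{-1}*}$) that produces $A$, and that the resulting fundamental vector field is based at $\s_V(x)=ph$ rather than at $p$. The curve $t\mapsto h\exp(tA)$, together with the associativity rewriting $p\big(h\exp(tA)\big)=(ph)\exp(tA)$, settles both; everything else reduces to the chain rule and the defining properties of $T_U$, $T_V$ and $\Psi_{UV}$.
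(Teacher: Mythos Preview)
Your proof is correct and is essentially the same argument as the paper's: both establish the pointwise relation $\s_V(x)=\s_U(x)\,\Psi_{UV}(x)$, differentiate the product via a Leibniz-type splitting, identify one piece as $R_{\Psi_{UV}(x)*}\circ\s_{U*}(Y_x)$, and recognize the other as the fundamental vector field at $\s_V(x)$ generated by $A=L_{\Psi_{UV}^{-1}(x)*}\Psi_{UV*}(Y_x)$. The only cosmetic difference is that the paper carries this out with an explicit curve $\gamma(t)$ and the rewriting $\s_U(x)\,\Psi_{UV}(\gamma(t))=\s_V(x)\,\Psi_{UV}^{-1}(x)\,\Psi_{UV}(\gamma(t))$, while you package the same computation through the action map $\Phi(p,g)=pg$ and the chain rule; your use of $h\exp(tA)$ is in fact slightly cleaner than the paper's informal ``$\alpha(t)=e^{tA}$''.
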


\begin{proof}
First we notice that since $\s_U(x) = T_U^{-1}(x,e)$ we have $T_U(\s_U(x) \, g) = (x,g) $, in particular $T_U(\s_U(x) ) = (x,e)  = (\pi(\s_U(x)), \psi_U(\s_U(x)))$. Thus we find $\psi_U(\s_U(x)) = e$ for all $x\in \pi^{-1}(U)$.
Now let $\g: \mathbb{R} \to M $ be a curve such that $\g(0) =x$ and $\g'(0) =Y$, then
\begin{equation}
\begin{aligned}
\s_{V*}(Y_x) &:= \frac{d}{dt}\Bigr|_{t=0} \, \s_{V} (\g(t)) = \frac{d}{dt} \, \left[ \s_U (\g(t)) \, \Psi_{UV}(\g(t)) \right]\\
             &= \frac{d}{dt}\Bigr|_{t=0} \, \left[ \s_U (x) \, \Psi_{UV}(\g(t))  \right] +  \frac{d}{dt}\Bigr|_{t=0} \, \left[ \s_U (\g(t)) \, \Psi_{UV}(x) \right]\\
             &= \frac{d}{dt}\Bigr|_{t=0} \, \left[ \s_V (x) \, \Psi^{-1}_{UV}(x) \, \Psi_{UV}(\g(t))  \right] +  R_{\Psi_{UV}(x)*} \s_{Ux}(Y_x)
\end{aligned}
\end{equation}
where we have used $\Psi_{VU} (x)= \Psi^{-1}_{UV}(x) = \psi_V(x) \cdot \psi_U^{-1}(x)$. If we define a curve $\a$ in $G$ by $\a(t):= \Psi^{-1}_{UV}(x) \, \Psi_{UV}(\g(t))$, we find $\a(0) = e$ and
\[
\a'(0) := \frac{d}{dt}\Bigr|_{t=0} \left( \Psi^{-1}_{UV}(x) \, \Psi_{UV}(\g(t)) \right) =  L_{\Psi^{-1}_{UV}(x)*} \left( \Psi_{UV*}(Y_x) \right) \in \mathfrak{g}
\]
Then we may write $\a(t) = e^{tA} $ with $A:= L_{\Psi^{-1}_{UV}(x)*} \left( \Psi_{UV*}(Y_x) \right)\in \mathfrak{g} $ and the first term in the last equality of (\ref{E:change section}) becomes
\[
\frac{d}{dt}\Bigr|_{t=0} \, \left( \s_V (x) \, \Psi^{-1}_{UV}(x) \, \Psi_{UV}(\g(t))  \right) = \frac{d}{dt}\Bigr|_{t=0}  \left( \s_V (x) \cdot e^{tA} \right) : = \left[ L_{\Psi^{-1}_{UV}(x)*} \left( \Psi_{UV*}(Y_x)  \right) \right]^*_{\s_V(x)}
\]
where the last equality is due to (\ref{E:fundamental vector field}).
\end{proof}
With the proposition above, we have
\begin{corollary}\label{Cor:LM change section}

On the frame bundle $(L(M),\pi,M, GL(\mathbb{R}^{1,3}),\o)$, let $\s_U: U \to L(M)$, $\s_V: V \to L(M)$ be two local sections corresponding to the local trivializations $T_U$ and $T_V$ respectively. For the canonical 1-form $\varphi \in \overline{\Lambda}^1(L(M), \mathbb{R}^{1,3})$ and torsion 2-form $\Theta^\o \in \overline{\Lambda}^2(L(M), \mathbb{R}^{1,3})$ with canonical representation $\rho : GL(\mathbb{R}^{1,3}) \to GL(4,\mathbb{R})$ and $Y_x \in T_xM$ one has following local forms $\s_U^*\varphi \in \Lambda^1(U, \mathbb{R}^{1,3}) $, $\s_U^*\Theta^\o \in \Lambda^2(U, \mathbb{R}^{1,3})$, $\s_V^*\varphi \in \Lambda^1(V, \mathbb{R}^{1,3}) $, and $\s_V^*\Theta^\o \in \Lambda^2(V, \mathbb{R}^{1,3})$ and they are related by
\begin{equation}\label{E:LM change section}
\begin{aligned}
(\s_V^*\varphi )(Y_x) = \varphi (\s_{V*} Y_x)  &= \left(\Psi_{UV}(x)\right)^{-1} \cdot (\s_U^* \varphi) (Y_x)\\
(\s_V^*\Theta^\o )(Y_x) = \varphi (\s_{V*} Y_x)  &= \left(\Psi_{UV}(x)\right)^{-1} \cdot (\s_U^* \Theta^\o) (Y_x)
\end{aligned}
\end{equation}
\begin{proof}
By applying the canonical 1-form $\varphi$ directly on (\ref{E:change section})
\[
(\s_V^*\varphi )(Y_x) = \varphi (\s_{V*} Y_x) = \varphi \left( R_{\Psi_{UV}(x)*} \s_{Ux*}(Y_x) \right) = \left(\Psi_{UV}(x)\right)^{-1} \cdot (\s_U^* \varphi) (Y_x)
\]
where the two equalities are true because $\varphi$ is a basic differential 1-form in Def. (\ref{Def:basic differential form}) and similarly for the torsion 2-form.
\end{proof}

\end{corollary}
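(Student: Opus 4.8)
The plan is to feed the change-of-section formula (\ref{E:change section}) directly into the defining properties of a basic differential form, Definition \ref{Def:basic differential form}, together with the already-recorded fact that $\varphi \in \overline{\Lambda}^1(L(M),\mathbb{R}^{1,3})$ is basic with respect to the canonical representation $\rho$. First I would observe that, with the sections chosen as $\s_U(x) = T_U^{-1}(x,e)$ and $\s_V(x) = T_V^{-1}(x,e)$, one has $\s_V(x) = \s_U(x)\,\Psi_{UV}(x)$ on $U\cap V$, which is exactly the hypothesis under which (\ref{E:change section}) was proved; thus for $Y_x \in T_xM$ the tangent vector $\s_{V*}(Y_x)$ splits as a vertical fundamental vector field $A^*_{\s_V(x)}$, with $A := L_{\Psi_{UV}^{-1}(x)*}\bigl(\Psi_{UV*}(Y_x)\bigr) \in \mathfrak{g}$, plus the term $R_{\Psi_{UV}(x)*}\bigl(\s_{U*}Y_x\bigr)$.

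For the canonical $1$-form, I would apply $\varphi$ to both sides of (\ref{E:change section}). Horizontality (property (2) of Definition \ref{Def:basic differential form}) kills the vertical piece $A^*_{\s_V(x)}$, and the $G$-invariance property (\ref{E:basic differential form}) with $g = \Psi_{UV}(x)$ turns the remaining term into $\rho\bigl(\Psi_{UV}(x)^{-1}\bigr)\cdot\varphi(\s_{U*}Y_x) = \bigl(\Psi_{UV}(x)\bigr)^{-1}\cdot(\s_U^*\varphi)(Y_x)$; note that $G$-invariance is used on an arbitrary tangent vector, so no horizontality of $\s_{U*}Y_x$ is needed. For the torsion $2$-form the structure is the same, applied in each of the two slots: one first checks that $\Theta^\o = D^\o\varphi$ is itself basic with respect to $\rho$ --- this is the analogue for the canonical representation of the short computation given just after Definition \ref{Def:Exterior covariant derivative2}, since $D^\o\psi := (d\psi)^H$ is horizontal by construction and $R_g^*D^\o\varphi = (d\,R_g^*\varphi)^H = (d(\rho(g^{-1})\varphi))^H = \rho(g^{-1})\cdot(d\varphi)^H = \rho(g^{-1})\cdot D^\o\varphi$ because $R_g^*$ commutes with $d$ and with the horizontal projection and $\rho(g^{-1})$ is constant. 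Once $\Theta^\o$ is known to be basic, feeding (\ref{E:change section}) into each argument, discarding every term containing a vertical factor by horizontality and extracting $\bigl(\Psi_{UV}(x)\bigr)^{-1}$ by $G$-invariance, gives the second line of (\ref{E:LM change section}).

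I do not expect a genuine obstacle here; the proof is essentially bookkeeping. The only mild points of care are: verifying the ``$D^\o$ preserves basic forms'' statement for the canonical representation rather than the adjoint one (routine, as above); making sure the horizontal projection in $\Theta^\o$ genuinely annihilates the vertical contribution coming from $A^*_{\s_V(x)}$ in (\ref{E:change section}); and keeping the action symbol consistent, i.e. reading the abbreviated $\bigl(\Psi_{UV}(x)\bigr)^{-1}\cdot(-)$ in the statement as $\rho\bigl(\Psi_{UV}(x)^{-1}\bigr)\cdot(-)$.
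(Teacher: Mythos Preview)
Your proposal is correct and follows essentially the same route as the paper: apply $\varphi$ (respectively $\Theta^\o$) to the change-of-section formula (\ref{E:change section}), use horizontality to kill the fundamental-vector-field term and $G$-invariance to extract the $\rho(\Psi_{UV}(x)^{-1})$ factor. You are simply more explicit than the paper in spelling out why $\Theta^\o = D^\o\varphi$ is itself basic with respect to $\rho$, which the paper subsumes under ``similarly for the torsion 2-form.''
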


We explain why the local form $\s^* \varphi \in \Lambda^1(U, \mathbb{R}^{1,3})$ is important, simply because it gives a coframe $\vt^\a \in TM$ on $M$.
\begin{remark}{(Coframe on $M$)}

Let $\s : U\subseteq M \to L(M)$ be a local section on $U$ and $(E_\a ) \in \mathbb{R}^{1,3}$, $\a = 0,1,2,3$, be the standard basis. Since $\s(x) \in L(M)$ for $x\in U$, denoting $\s(x)(E_\a) := e_\a(x) \in T_xU$, then $(e_\a) \in \mathfrak{X}(U)$ form a basis on $U$ and let the coframe $(\vt^\a \in T^*U)$ dual to $(e_\a)$ be such that $\vt^\a (e_\b) := \d^\a_\b$. Therefore
\begin{equation}
(\s^* \varphi) (e_\a ) := \varphi(\s_* e_\a) := (\s(x))^{-1} \cdot \pi_*(\s_* \, e_\a) = (\s(x))^{-1} (e_\a) = E_\a = (\vt^\b \otimes E_\b)  (e_\a)
\end{equation}
and hence
\begin{equation}\label{E:local canonical form}
\s^* \varphi = \vt^\b \otimes E_\b, \qquad (\s^* \varphi)^\b = \vt^\b
\end{equation}
This indicates that the pull back of the canonical 1-form $\varphi$ gives a coframe $\vt^\a $ corresponding to the given section $\s:U\to L(M)$.
\end{remark}

By the remark, Corollary \ref{Cor:LM change section} then implies the transition of the two frames on $M$.
\begin{remark}{(Local transition)}\label{Rmk:local transition}

Let $(e_\a) \in \mathfrak{X}(U) $ and $(\vt^\a := (\s_U^* \varphi)^\a) \in \Lambda^1(U)$ be a set of frame and coframe on $U \subset M $ as defined above by a local section $\s_U : U \to L(M)$ and $(\widetilde{e}_\a) \in \mathfrak{X}(V) $ and $(\widetilde{\vt}^\a := (\s_V^* \varphi)^\a) \in \Lambda^1(V)$ be another set of frame and coframe on $V \subset M $ similarly defined by a section $\s_V : V \to L(M)$. Also denote the transition function $\Psi_{UV}: U\cap V \to GL(1,3)$ in a matrix form $A^\b_\a (x):= (\Psi_{UV}(x))^\b_\a$ for $x\in U \cap V$, then (\ref{E:LM change section}) states that if $\widetilde{e}_\a(x) = A_\a^\b(x) \, e_\b$, then the local tetrad and torsion 2-form follow the transformation
\begin{equation}\label{E:frame change}
\widetilde{\vt}^\b(x) = (A^{-1})_\a^\b(x) \, \vt^\a (x) , \qquad \widetilde{T}^\b(x) = (A^{-1})_\a^\b(x) \, T^\a (x) ,
\end{equation}
where $T^\a \in \Lambda^2(U,\mathbb{R})$ are as defined in (\ref{E:pull back R and T}) and (\ref{E:local R and T}).
\end{remark}

In fact, by (\ref{E:change section}) we also obtain a transformation law for a connection 1-form $\o$ and curvature 2-form $\O^\o$ on a general principal bundle $(P,\pi, M,G,\o)$.

\begin{proposition}\label{Prop:connection transform}

With the same notations above and define $\o_U:= \s_U^* \o \in \Lambda^1(U, \mathfrak{g})$, $\o_V:= \s_V^* \o \in \Lambda^1(V,\mathfrak{g})$, we have the transformation behavior between two local trivializations
\begin{equation}\label{E:connection transition}
\o_V (Y_x) = L^{-1}_{\Psi_{UV}(x)*} \left( \Psi_{UV*}(Y_x) \right) + \mathfrak{ad}_{\Psi_{UV}^{-1}(x)} (\o_U(Y_x)), \qquad \forall Y_x \in T_xM
\end{equation}
in particular, when the Lie group $G$ is a matrix group we obtain a familiar form
\begin{equation}\label{E:connection transition2}
\o_V =  \Psi_{UV}^{-1} \cdot d  \Psi_{UV} +  \Psi_{UV}^{-1} \cdot \o_U  \cdot \Psi_{UV}
\end{equation}
therefore one says that ``a connection does not transform like a tensor" due to the extra piece $\Psi_{UV}^{-1} \cdot d  \Psi_{UV}$.

The transformation of the local curvature 2-forms $\O^\o_U:= \s_U^*\O^\o \in \overline{\Lambda}^2(U, \mathfrak{g})$ and $\O^\o_V:= \s_V^*\O^\o \in \overline{\Lambda}^2(V, \mathfrak{g})$ is
\begin{equation}\label{E:curvature transition}
\O^\o_V (x) =  \mathfrak{ad}_{\Psi_{UV}^{-1}(x)} (\O^\o_U(x)) , \O^\o_V  =   \Psi_{UV}^{-1} \cdot \O^\o_U  \cdot \Psi_{UV} ,\qquad (\mbox{if $G$ is a matrix group})
\end{equation}

\end{proposition}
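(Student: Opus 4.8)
The plan is to feed the frame--change formula (\ref{E:change section}) into the defining axioms of a connection $1$-form, and for the curvature to exploit that $\O^\o=D^\o\o$ is a basic (horizontal, $\mathfrak{ad}$-equivariant) form.

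First I would evaluate $\o$ at the point $\s_V(x)$ on both sides of (\ref{E:change section}), paired with the chosen tangent vector. The right-hand side splits into two contributions. The first is the fundamental vector field $[A]^*_{\s_V(x)}$ with $A:=L^{-1}_{\Psi_{UV}(x)*}\bigl(\Psi_{UV*}(Y_x)\bigr)\in\mathfrak{g}$; the normalization axiom $\o_p(A^*_p)=A$ returns $A$ at once. The second is $R_{\Psi_{UV}(x)*}\circ\s_{U*}(Y_x)$, a pushforward taken at $\s_U(x)$ and landing in $T_{\s_V(x)}P$ since $\s_U(x)\cdot\Psi_{UV}(x)=\s_V(x)$; so the equivariance axiom $R_g^*\o=\mathfrak{ad}_{g^{-1}}\o$ with $g=\Psi_{UV}(x)$ turns $\o\bigl(R_{g*}\s_{U*}Y_x\bigr)$ into $\mathfrak{ad}_{\Psi_{UV}^{-1}(x)}\bigl(\o_U(Y_x)\bigr)$. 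Summing the two pieces gives exactly (\ref{E:connection transition}).

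To descend to the matrix form (\ref{E:connection transition2}), I would specialize both terms: for a matrix group $L_{g^{-1}*}$ on $T_gG$ is just left multiplication by $g^{-1}$, so $A=\Psi_{UV}^{-1}\,d\Psi_{UV}(Y_x)$, the Maurer--Cartan term; and $\mathfrak{ad}_{g^{-1}}$ is conjugation $a\mapsto g^{-1}ag$, yielding $\Psi_{UV}^{-1}\o_U\Psi_{UV}$. The inhomogeneous piece $\Psi_{UV}^{-1}d\Psi_{UV}$ is precisely the non-tensoriality of a connection. For the curvature I would use that $\O^\o=(d\o)^H$ is horizontal and, as recorded after Definition~\ref{Def:Exterior covariant derivative2}, satisfies $R_g^*\O^\o=\mathfrak{ad}_{g^{-1}}\O^\o$. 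Evaluating $\O^\o$ on $\s_{V*}(Y_x)$ and $\s_{V*}(Z_x)$ and inserting (\ref{E:change section}), the vertical fundamental-vector-field contributions are annihilated by horizontality, leaving only the $R_{\Psi_{UV}(x)*}\s_{U*}$ parts; equivariance then gives $\O^\o_V(x)=\mathfrak{ad}_{\Psi_{UV}^{-1}(x)}\bigl(\O^\o_U(x)\bigr)$, i.e.\ the conjugation (\ref{E:curvature transition}) in the matrix case. (Alternatively one can substitute (\ref{E:connection transition2}) into $\O^\o=d\o+\o\wedge\o$ and simplify, but the horizontality argument avoids the cancellations.)

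The only genuinely delicate point is the first step: one must be scrupulous about the base point of each pushforward in (\ref{E:change section}) and about which group element and which one-sided translation enters, since an inverse or a left/right swap there propagates into a wrong sign in the Maurer--Cartan term. Once that bookkeeping is fixed, the rest is a direct application of the connection axioms and of the horizontality and equivariance of the curvature, with no hard estimate or construction involved.
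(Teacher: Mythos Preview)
Your proposal is correct and follows precisely the route the paper indicates: the paper does not spell out a proof but states that the transformation law is obtained ``by (\ref{E:change section}),'' and you carry this out explicitly by applying $\o$ to (\ref{E:change section}) and invoking the two connection axioms (normalization on fundamental vector fields, $\mathfrak{ad}$-equivariance under $R_g$), then handling the curvature via horizontality and equivariance of $\O^\o$. One small remark: the equivariance $R_g^*\O^\o=\mathfrak{ad}_{g^{-1}}\O^\o$ is not literally the computation displayed after Definition~\ref{Def:Exterior covariant derivative2} (that one is for $D^\o$ acting on basic forms, whereas $\o$ itself is not basic), but the same chain of equalities goes through for $\psi=\o$ using the connection's own equivariance, so your citation is morally right.
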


One remarks that Proposition (\ref{Prop:connection transform}) is true for all principal bundles, while (\ref{Rmk:local transition}) is only defined on the frame bundle $L(M)$ over $M$.

%-----------------------------------------------------

\subsection{Computational Aspects}\label{Sec:Computational Aspects}

So far we only setup the essential equipment for the PGT, but we have not yet specified the gravitational dynamics to follow, a gravitational Lagrangian. Thus a specific gravity Lagrangian shall give us the unique evolution of the PGT spacetime. Before we turn to imposing meaningful Lagrangians on $(M,g,\nabla)$, first we demonstrate some computational skills.

Given a RC-spacetime $(M,g,\nabla)$ with a coframe $\vt^\a$ such that the connection 1-form $\nabla e_\a := \o_\a{}^\b \, e_\b$. Then we can define the volume 4-form on $M$
\begin{equation}
\eta:= \star 1 =  \vt^0 \wedge \vt^1 \wedge \vt^2
\wedge \vt^3 = \frac{1}{4!} \,
\epsilon_{\a\b\mu\nu} \, \vt^\a \wedge \vt^\b \wedge \vt^\m \wedge \vt^\n
\end{equation}
by the \textbf{star operator} (Hodge dual) $\star:\Lambda^k(M) \to \Lambda^{4-k}(M)$. Using
the Hodge dual, we define a convenient basis called \textbf{$\eta$-basis} for $\Lambda(M) := \bigoplus_k \Lambda^k(M)$, see \cite{Trautman-EC1}-\cite{Trautman-EC4}, \cite{Hehl:1994ue}, which is a vector space.
\begin{equation}\label{E:eta basis}
\begin{aligned}
\eta_\a &:= \star \vt_\a, \quad \qquad  \qquad  \, \qquad \qquad \mbox{(3-form)}\\
\eta_{\a\b} &:=  \star (\vt_\a \wedge \vt_\b ), \,\,\,\,\,\, \qquad \quad \qquad \mbox{(2-form)}\\
\eta_{\a\b\g} &:=   \star (\vt_\a \wedge \vt_\b \wedge \vt_\g ), \, \,\, \quad \,\, \qquad \mbox{(1-form)}\\
\eta_{\a\b\g\delta} &:= \star(\vt_\a \wedge \vt_\b \wedge \vt_\g \wedge \vt_{\delta} ), \qquad \mbox{(0-form)}
\end{aligned}
\end{equation}
where $\vt_\a:= g_{\a\b} \, \vt^\b$, and each of which has $dim \{\eta \} = \binom{4}{0}$, $dim \{\eta_\a \} = \binom{4}{1}$, $dim \{\eta_{\a\b} \} = \binom{4}{2}$, $dim \{\eta_{\a\b\g} \} = \binom{4}{3}$, $dim \{\eta_{\a\b\g\d} \} = \binom{4}{4}$. This is to be compared to the $\vt$-basis, that also forms a basis for each $\Lambda^k(M) $.
\[ \underbrace{1}_{\mathclap{\text{0-form}}} ,\quad \underbrace{\vt^\a}_{\mathclap{\text{1-form}}},\quad \underbrace{\vt^\a \wedge
\vt^\b}_{\mathclap{\text{2-form}}}, \quad \underbrace{\vt^\a \wedge
\vt^\b \wedge \vt^\g}_{\mathclap{\text{3-form}}}, \quad \underbrace{
\vt^\a \wedge \vt^\b \wedge \vt^\g \wedge
\vt^{\delta}}_{\mathclap{\text{4-form}}}
\]
In fact, one can verify that the $\eta$-basis can be written in another form,
\begin{equation}\label{E:eta basis2}
\begin{aligned}
\eta_\a &:= i_{e_\a}  \eta , \qquad \eta_{\a\b} :=  i_{e_\b}   \eta_\a ,\\
\eta_{\a\b\g} &:=  i_{e_\g}   \eta_{\a\b}, \qquad  \eta_{\a\b\g\delta} :=  i_{e_\d}  \eta_{\a\b\g}
\end{aligned}
\end{equation}
where $i_{V} :\Lambda^k(M) \to \Lambda^{k-1}(M)$ denotes the \textbf{interior product} by $(i_{V} \a) (X_1, \cdots, X_{k-1}) := \a(V, X_1, \ldots, X_{k-1} )$ for a given vector (field) $V\in T_M$ with $\a \in \Lambda^k(M)$, and $X_1,\ldots ,X_{k-1} \in\mathfrak{X}(M)$. In some of the literature, it is also denoted by symbol $V\lr \a := i_V \a$, where we take both interchangeably.

The $\eta$-basis is useful because of the following identities that help to reduce computation.
\begin{proposition}

\begin{equation}\label{E:eta basis id}
\begin{aligned}
\vt^\a \wedge \eta_{\b} &= \delta^\a_\b \, \eta\\
\vt^\a \wedge \eta_{\b\g} &= \delta^\a_\g \, \eta_{\b} -
\delta^\a_\b \, \eta_{\g}\\
\vt^\a \wedge \eta_{\b\g\sigma} &= \delta^\a_\b \, \eta_{\g\sigma} +
\delta^\a_\g \, \eta_{\sigma\b} + \delta^\a_{\sigma} \, \eta_{\b\g}\\
\vt^\a \wedge \eta_{\b\g\m\n} &= \d^\a_\n \, \eta_{\b\g\m} - \d^\a_\m \, \eta_{\n\b\g} + \d^\a_\g \, \eta_{\m\n\b} - \d^\a_\b \, \eta_{\g\m\n}
\end{aligned}
\end{equation}
\end{proposition}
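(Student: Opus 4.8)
The plan is to derive all four identities in \eqref{E:eta basis id} from a single source, namely the contraction formula for the Hodge star, and then specialize. The cleanest route is to use the representation \eqref{E:eta basis2} of the $\eta$-basis via interior products together with the elementary rule relating wedging and contraction: for any $1$-form $\vt^\a$ and any $k$-form $\beta$,
\[
i_{e_\a}(\vt^\beta \wedge \beta) = \delta^\beta_\a\,\beta - \vt^\beta \wedge i_{e_\a}\beta ,
\]
which is just the antiderivation property of $i_{e_\a}$ combined with $i_{e_\a}\vt^\beta = \vt^\beta(e_\a) = \delta^\beta_\a$. Starting from $\eta = \vt^0\wedge\vt^1\wedge\vt^2\wedge\vt^3$, each $\eta_{\a\cdots}$ is obtained by successive contractions $i_{e_\a}$, so everything reduces to bookkeeping with this one rule.

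First I would establish the base case $\vt^\a \wedge \eta_\beta = \delta^\a_\beta\,\eta$. Since $\eta_\beta = i_{e_\beta}\eta$ and $\eta$ is a top form on a $4$-manifold, $\vt^\a \wedge i_{e_\beta}\eta$ is again a $4$-form, hence proportional to $\eta$; contracting with $e_\beta$ (or evaluating on the frame $e_0,\dots,e_3$) shows the proportionality constant is $\delta^\a_\beta$. Concretely, $i_{e_\beta}(\vt^\a \wedge \eta) = \delta^\a_\beta\,\eta - \vt^\a \wedge i_{e_\beta}\eta$, and the left side vanishes because $\vt^\a\wedge\eta$ is a $5$-form on a $4$-manifold, giving $\vt^\a \wedge \eta_\beta = \delta^\a_\beta\,\eta$ at once. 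Then I would induct downward: assuming the identity for $\eta_{\beta_1\cdots\beta_{j}}$, apply $i_{e_{\beta_{j+1}}}$ to the relation $\vt^\a \wedge \eta_{\beta_1\cdots\beta_j} = (\text{alternating sum of }\delta\text{'s})\cdot(\text{lower }\eta)$ and use the antiderivation rule plus \eqref{E:eta basis2} to read off the next line. Each step introduces exactly one new Kronecker delta term (from $i_{e_{\beta_{j+1}}}\vt^\a = \delta^\a_{\beta_{j+1}}$) and shuffles the index on the remaining $\eta$ by one slot, which accounts for the cyclic/alternating index patterns and the sign $(-1)^{\text{position}}$ appearing in the third and fourth identities.

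The only genuinely delicate point is keeping the signs and the cyclic ordering of indices straight: $i_{e_{\beta_{j+1}}}$ must be commuted past the already-present $\delta$-coefficients (which is harmless, they are scalars) and the resulting contractions $i_{e_{\beta_{j+1}}}\eta_{\gamma_1\cdots\gamma_j}$ must be re-expressed through \eqref{E:eta basis2}, which is only valid for indices in a specific order, so one must antisymmetrize/reorder and track the sign of each transposition. I expect this index-shuffling to be the main (though entirely mechanical) obstacle; a convenient way to organize it is to note that all four formulas are the single statement $\vt^\a \wedge \eta_{\beta_1\cdots\beta_k} = \sum_{i=1}^{k} (-1)^{k-i}\,\delta^\a_{\beta_i}\,\eta_{\beta_1\cdots\widehat{\beta_i}\cdots\beta_k}$ (with the convention that $\eta_{\beta_1\cdots\beta_4}$ is the scalar $\epsilon_{\beta_1\cdots\beta_4}$ up to sign), proved once and for all by the downward induction above; the four displayed lines are then just $k=1,2,3,4$ written out. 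Alternatively, one can bypass the induction entirely by expanding both sides in the $\vt$-basis using $\eta_{\beta_1\cdots\beta_k} = \frac{1}{(4-k)!}\,\epsilon_{\beta_1\cdots\beta_k\mu_{k+1}\cdots\mu_4}\,\vt^{\mu_{k+1}}\wedge\cdots\wedge\vt^{\mu_4}$ and invoking the standard $\epsilon$-$\delta$ contraction identities, which proves each line by a direct combinatorial count — I would present the interior-product induction as the main argument and mention the $\epsilon$-expansion as a cross-check.
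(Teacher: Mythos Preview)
Your proposal is correct and follows essentially the same approach as the paper: the paper also proves the first identity via $0=i_{e_\b}(\vt^\a\wedge\eta)=\delta^\a_\b\,\eta-\vt^\a\wedge\eta_\b$ (your ``$5$-form vanishes'' argument) and then obtains each subsequent identity by applying $i_{e_\gamma}$ to the previous one using the antiderivation rule together with \eqref{E:eta basis2}. Your packaging of all four lines into the single formula $\vt^\a\wedge\eta_{\beta_1\cdots\beta_k}=\sum_{i=1}^k(-1)^{k-i}\delta^\a_{\beta_i}\,\eta_{\beta_1\cdots\widehat{\beta_i}\cdots\beta_k}$ and the $\epsilon$-expansion cross-check are nice touches the paper only hints at.
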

\begin{proof}
There are at least three ways to compute the first identity: the first is to utilize
\begin{lemma}
For $\psi, \phi \in \Lambda^k(M)$, we have $\langle \psi, \phi \rangle = - \star \left( \phi \wedge \star \psi \right)$.
\end{lemma}
A second way is compute it directly
\[
\vt^\a \wedge \star \vt_\b = \vt^\a \wedge \left( \frac{1}{3!} \veps_{\b\m\n\g} \, \vt^\mu \wedge \vt^\n \wedge \vt^\g \right) = \frac{1}{3!} \veps_{\b\m\n\g} \, \veps^{\a\m\n\g} \, \eta = \delta^\a_\b \, \eta
\]
or a third way by considering
\[
 0= i_{e_\b} (\vt^\a \wedge \eta) = i_{e_\b} (\vt^\a ) \wedge \eta - \vt^\a \wedge i_{e_\b} (\eta)
\]
due to $\vt^\a \wedge \eta$ being a 5-form, which vanishes identically on 4-dimensional $M$. The second identity can be proved by considering
\[
i_{e_\g} (\vt^\a \wedge \eta_\b) = \d^\a_\g \, \eta_\b - \vt^\a \wedge \left( i_{r_\g} (\eta_\b) \right)
\]
where $i_{e_\g} (\eta_\b) = \eta_{\b\g}$ by using (\ref{E:eta basis2}). The third then follows a similar iteration trick.
\end{proof}
The following covariant derivatives of the $\eta$-basis are also useful in computation.
\begin{proposition}
\begin{equation}\label{E:eta basis id2}
D \eta_\a = \eta_{\a\g} \wedge T^\g, \quad D \eta_{\a\b} = \eta_{\a\b\g} \wedge T^\g, \quad D\eta_{\a\b\m} = \eta_{\a\b\m\g} \, T^\g
\end{equation}
\end{proposition}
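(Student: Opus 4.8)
The plan is to prove the three identities in \eqref{E:eta basis id2} by using the alternative characterization \eqref{E:eta basis2} of the $\eta$-basis as iterated interior products of the volume form $\eta$, together with the structure equation $T^\g = \nabla\vt^\g = d\vt^\g + \o_\b{}^\g\wedge\vt^\b$ from \eqref{E:T and R} and the compatibility of the covariant exterior derivative $D$ with the interior product. The key observation I would establish first is that $D\eta = 0$: since $\eta = \vt^0\wedge\vt^1\wedge\vt^2\wedge\vt^3$ and $D$ acts as a graded derivation on the wedge product, $D\eta = \sum_\a \pm\,\vt^0\wedge\cdots\wedge(D\vt^\a)\wedge\cdots\wedge\vt^3 = \sum_\a \pm\,\vt^0\wedge\cdots\wedge T^\a\wedge\cdots\wedge\vt^3$, and each such term expands via \eqref{E:T and R} into a sum of four-forms of the shape $\vt^{\mu}\wedge\vt^{\nu}\wedge(\text{three other }\vt\text{'s})$ which contains a repeated $\vt$ and hence vanishes; the connection terms cancel by the metricity/antisymmetry \eqref{E:metricity2}. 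Concretely one checks $D\eta = (d\vt^\a - \vt^\b\wedge\o_\b{}^\a + \cdots)$-type bookkeeping reduces to $\tfrac12 T_{\mu\nu}{}^\a$ wedged into a form that is already saturated, giving zero. So $D\eta=0$.

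Next I would use the general commutation rule between $D$ and $i_{e_\a}$. For a frame field $e_\a$ with $\nabla e_\a = \o_\a{}^\b\,e_\b$ and any vector-valued or scalar-valued form $\psi$, one has the Leibniz-type identity $D(i_{e_\a}\psi) = i_{e_\a}(D\psi) - i_{\,\o_\a{}^\b e_\b\wedge}\,\psi$, or more usefully in this setting the cleaner statement $D(i_{e_\a}\psi) + \o_\a{}^\b\wedge i_{e_\b}\psi = i_{e_\a}(D\psi) \pm i_{e_\a}(\text{contortion terms})$. The crucial special case is $\psi=\eta$: applying $D$ to $\eta_\a = i_{e_\a}\eta$ and using $D\eta=0$ yields $D\eta_\a = -\,(D e_\a)\lrcorner\,\eta$ in the appropriate sense, which upon writing $De_\a$ in terms of torsion — here is where $T^\g$ enters, since the "derivative of the frame seen against the coframe" produces exactly the torsion two-form — collapses to $D\eta_\a = \eta_{\a\g}\wedge T^\g$. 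I would carry this out by contracting the known identity $\vt^\a\wedge\eta_{\b} = \d^\a_\b\,\eta$ from \eqref{E:eta basis id} with $D$: $0 = D(\d^\a_\b\,\eta) = (DT^\a\ \text{stuff})$... more precisely, differentiate $\vt^\g\wedge\eta_\g = 4\eta$ to get $T^\g\wedge\eta_\g + \vt^\g\wedge D\eta_\g = 0$, which pins down $\vt^\g\wedge D\eta_\g$ and, combined with the algebraic identities \eqref{E:eta basis id}, forces $D\eta_\a = \eta_{\a\g}\wedge T^\g$.

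The second and third identities then follow by the same mechanism applied inductively: from $\eta_{\a\b} = i_{e_\b}\eta_\a$, apply $D$, use the first identity $D\eta_\a = \eta_{\a\g}\wedge T^\g$ together with the commutator of $D$ and $i_{e_\b}$, and simplify using \eqref{E:eta basis id} to land on $D\eta_{\a\b} = \eta_{\a\b\g}\wedge T^\g$; one more iteration with $\eta_{\a\b\m} = i_{e_\m}\eta_{\a\b}$ gives $D\eta_{\a\b\m} = \eta_{\a\b\m\g}\,T^\g$ (now a product of a $3$-form-index object with a $2$-form, wedged/contracted to a $3$-form, so no explicit $\wedge$ symbol is needed once all four frame indices are used up). Alternatively, and perhaps more transparently for the write-up, I would prove all three at once by differentiating the algebraic identities of \eqref{E:eta basis id} — e.g. apply $D$ to $\vt^\a\wedge\eta_{\b\g} = \d^\a_\g\eta_\b - \d^\a_\b\eta_\g$, getting $T^\a\wedge\eta_{\b\g} + \vt^\a\wedge D\eta_{\b\g} = \d^\a_\g D\eta_\b - \d^\a_\b D\eta_\g$, then feed in the previously established identities and use that $\vt^\a\wedge(\text{unknown})$ determines the unknown (a form of lower degree) uniquely. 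I expect the main obstacle to be purely organizational rather than conceptual: getting the signs right in the graded Leibniz rule for $D$ acting through interior products, and confirming that all connection-dependent terms reorganize into the torsion $T^\g$ with the correct placement of indices and no leftover contortion pieces — the metricity condition \eqref{E:metricity2} is what guarantees this, and verifying its use carefully at each step is the part that needs attention.
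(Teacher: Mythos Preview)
Your approach can be made to work, but it is considerably more elaborate than what the paper actually does. The paper's proof is a one-line direct computation: write $\eta_{\a\b} = \frac{1}{2!}\,\veps_{\a\b\m\n}\,\vt^\m\wedge\vt^\n$, apply $D$ using that $D\vt^\m = T^\m$ and that the totally antisymmetric symbol is covariantly constant (metricity), and read off
\[
D\eta_{\a\b} = \veps_{\a\b\m\n}\,T^\m\wedge\vt^\n = T^\m\wedge\star(\vt_\a\wedge\vt_\b\wedge\vt_\m) = \eta_{\a\b\m}\wedge T^\m .
\]
The other two identities are identical computations with one fewer or one more $\vt$ in the epsilon expression. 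No interior products, no commutator of $D$ with $i_{e_\a}$, no induction, no differentiation of the algebraic identities \eqref{E:eta basis id} are needed.

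What your route buys is a more structural picture --- the identities emerge from iterating $i_{e_\a}$ against $D\eta=0$ --- but the price is exactly the bookkeeping you flag as the obstacle. In particular, the commutator of $D$ with $i_{e_\a}$ that you invoke is stated imprecisely (the phrase ``$De_\a$ in terms of torsion'' is misleading: $\nabla e_\a$ is governed by the connection $\o_\a{}^\b$, not the torsion directly; torsion only appears once you pair this against the coframe structure equation), and making that step rigorous would require essentially redoing the epsilon computation anyway. Your alternative plan --- differentiating the relations in \eqref{E:eta basis id} --- is cleaner and would succeed, but it still reconstructs by indirection what the paper obtains immediately from the explicit $\veps$-tensor form of the $\eta$-basis.
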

with $D\eta =0$ and $D\eta_{\a\b\m\n} =0$.
\begin{proof}
We only prove the second, the rest are similar. Since
\[
D \eta_{\a\b} = \frac{1}{2!} D \left( \veps_{\a\b\m\n} \, \vt^\m \wedge \vt^\n \right) = \veps_{\a\b\m\n} \, T^\m \wedge \vt^\n = T^\m \wedge \star (\vt_\a \wedge \vt_\b \wedge \vt_\m)
\]
\end{proof}
With the convenient notations above, next we introduce the decomposition of the field strengths.

Under the local Lorentz group $SO(1,3)$, the torsion and the curvature are decomposed into 3 and 6 irreducible pieces respectively,
see \cite{Hehl:1994ue}, \cite{McCrea:1992wa}. The torsion tensor of 24 independent components $T_{\mu\nu}{}{}^\a$ is decomposed into
\begin{equation}\label{E:torsion decomp}
\begin{aligned}
T^\a &= {}^{(1)}T^\a + {}^{(2)}T^\a + {}^{(3)}T^\a  \quad \in \Lambda^2(M)\\
(24) &= \,\, (16)  \,\, +(4)  \,\, \, +(4)
\end{aligned}
\end{equation}
where
\begin{equation}\label{E:T irr pieces}
\begin{aligned}
{}^{(1)}T^\a &:= T^\a - {}^{(2)}T^\a - {}^{(3)}T^\a , \qquad
(\mbox{tensor} )\\
{}^{(2)}T^\a &:= \frac{1}{3} \vt^\a \wedge i_{e\b} (T^\b), \qquad
(\mbox{vector} )\\
{}^{(3)}T^\a &:= \frac{1}{3} \star( \star(\vt_\a \wedge T^\a) \wedge \vt^\a),
\qquad (\mbox{axial vector})
\end{aligned}
\end{equation}
and the curvature of 36 independent components $R^\g{}_{\a\b\m}$ is decomposed according
to
\begin{equation}\label{E:Riem irr1}
\O^{\a\b} = \sum_{I=1}^6 {}^{(I)}\O^{\a\b} \quad \in \Lambda^2(M)
\end{equation}
where
\begin{equation}\label{E:Riem irr2}
\begin{aligned}
{}^{(1)}\O_{\a\b} &:= \O_{\a\b} - \sum_{I=2}^6 {}^{(I)}\O_{\a\b} \\
{}^{(2)}\O_{\a\b} &:= (-1) \star \left( \vt_{[\a} \wedge \Psi_{\b]} \right)\\
{}^{(3)}\O_{\a\b} &:= -\frac{1}{12} \star \left( X \wedge \vt_\a \wedge \vt_\b \right)\\
{}^{(4)}\O_{\a\b} &:= (-1) \, \vt_{[\a} \wedge \Phi_{\b]}\\
{}^{(5)}\O_{\a\b} &:= -\frac{1}{2} \vt_{[\a} \wedge e_{\b]} \lr \left( \vt^\g \wedge Ric_\g \right) \\
{}^{(6)}\O_{\a\b} &:= -\frac{1}{12} R \, \vt_{\a} \wedge \vt_{\b}
\end{aligned}
\end{equation}
with
\begin{equation}\label{E:Riem irr3}
\begin{aligned}
         Ric_\a  &:=  e_\b \lr \O^\b{}_\a, \qquad  R:= e_\a \lr Ric^\a, \qquad
         X^\a    :=\star \left( \O^{\a\b}\wedge \vt_\b \right), \qquad  X := e_\a \lr X^\a,\\
         \Psi_\a  &:= X_\a - \frac{1}{4} \vt_\a \wedge X -
         \frac{1}{2} e_\a \lr \left( \vt^\b \wedge X_\b \right),
         \qquad
         \Phi_\a := Ric_\a - \frac{1}{4} R \, \vt_\a - \frac{1}{2}
         e_\a \lr \left( \vt^\b \wedge Ric_\b \right)
\end{aligned}
\end{equation}
where the first term, called the Ricci 1-form is recognized as $Ric_\a  :=  e_\b \lr \O^\b{}_\a = R_{\a\m} \, \vt^\m$ with Ricci tensor coefficients. In general $R_{\a\b} \neq R_{\b\a}$ in the PGT and more generally in Riemann-Cartan geometry (independent of any physics) due to the presence of torsion, and thus it has 16 independent components. The rest of the 5 irreducible pieces have their own properties, see \cite{Hehl:1994ue}. Thus we see from above that if $T^\a = 0$, then ${}^{(2)}\O^{\a\b}=
{}^{(3)}\O^{\a\b} ={}^ {(5)}\O^{\a\b} =0$, which recovers the Riemannian space $V_4$ describing GR.

We also define a convenient quantity to compare the connection of PGT $\o_\a{}^\b $ to the Riemannian one $\widetilde{\o}_\a{}^\b$, the \textbf{contortion 1-form} $K_\a{}^\b := \widetilde{\o}_\a{}^\b - \o_\a{}^\b $. Then we can write $T^\a = K^\a{}_\b \wedge \vt^\b$. In fact, the metricity condition (\ref{E:metricity2}) imposes that $K_{(\a\b)} \equiv 0$. It turns out that one can solve the contortion 1-form in terms of torsion
\begin{equation}
K_{\a\b} = i_{e_[\a}  T_{\b]} - \frac{1}{2} i_{e_\a} \left(i_{e_\b} T_\g \right) \vt^\g
\end{equation}

The differential form formalism, compared to the components formalism, usually can be avoid much of the complicated computation. We conclude this section with an important example:
\begin{ex}{(Gravity Lagrangian)}

Let the gravity Lagrangian be of the 4-form $\mathcal{L}_G := \frac{1}{2\k} \, \O^{\a\b} \wedge \eta_{\a\b}$, we want to translate it into a component form. Expand $\O^{\a\b} \wedge \eta_{\a\b} = \left( \frac{1}{2} R^{\a\b}{}{}_{\m\n} \, \vt^\m \wedge \vt^\n \right) \wedge \eta_{\a\b}$ and using (\ref{E:eta basis id}) repeatedly one obtains $\frac{1}{2} R^{\a\b}{}{}_{\m\n} \, \left(   \d^\n_\b \, \d^\m_\a -  \d^\n_\a \, \d^\m_\b  \right) \, \eta = R \, \eta$. Hence we conclude that the Hilbert-Einstein Lagrangian of GR can be written into the 4-form on $M$, given by \cite{Trautman-EC1}.
\begin{equation}\label{E:HE Lagrangian}
\mathcal{L}_G := \frac{1}{2\k} \, \O^{\a\b} \wedge \eta_{\a\b} = \frac{1}{2\k} \, R \, \sqrt{-g} \, d^4x
\end{equation}
where $\k = 8\pi G / c^4$.
\end{ex}

%------------------------------------------------------------------------------

%------------------------------------------------------------------------------

\section{the Einstein-Cartan theory}

The Einstein-Cartan theory is a special case of Poincar\'{e} gauge gravity theory, and can be considered as a generalization of GR with torsion.
\begin{definition}{(Einstein-Cartan Theory)}\label{Def:Einstein-Cartan}

The \textbf{Einstein-Cartan theory} is defined by the tuple $(M,g,\nabla)$ with the gravity action $\mathcal{L}_{EC} := \frac{1}{2\k} \, \O^{\a\b} \wedge \eta_{\a\b}$, where $\nabla$ is a metric-compatible connection.
\end{definition}

In fact, the connection $\o_\a{}^\b$ in (\ref{E:HE Lagrangian}) need not be restricted to that of Levi-Civita. Hence although the Einstein-Cartan theory and GR share the same form of the action, they have a different geometry in general, especially when spin matter appears. Recall that in GR the metric is coupled to the energy-momentum tensor $\mathcal{T}_{ij}$. In the Einstin-Cartan theory the metric (or coframe) is coupled to the canonical energy-momentum tensor (of the Noether type), i.e,
\begin{equation}
\mathcal{T}_{ij} := g_{ij} \, \mathcal{L}_{\text{M}} - \frac{\partial \mathcal{L}_{\text{M}}}{\partial ( \phi^{a;i}) } \, \phi^a_{;j}
\end{equation}
where $\mathcal{L}_{\text{M}} = \mathcal{L}_{\text{M}}(\phi^a, D_i \phi^a)$ is the matter Lagrangian of the matter field $\phi^a$, the index $a=1, \ldots , k $ denotes the vector components of the matter field, and the indices $i,j,\ldots = 0,1,2,3$ denote the coordinate indices of the spacetime. Note that in general $\mathcal{T}_{ij} \neq \mathcal{T}_{ji} $. The reason for such an asymmetry is due to the presence of torsion, which shall be explained later.

The torsion tensor in the Einstein-Cartan theory is coupled to the \textbf{spin current} tensor of a matter field $\phi $ defined by \cite{Hehl:1994ue}
\begin{equation}\label{E:spin current}
\mathcal{S}_{ij}{}{}^k = \left( \frac{\p \mathcal{L}_{\text{M}}}{\p \phi^a_{;k}} \right) \,  \rho_{[ij]}{}{}_b{}{}^a \cdot \phi^b = - \mathcal{S}_{ji}{}{}^k
\end{equation}
where $\rho_{[ij]}{}{}_b{}{}^a := \left( \rho_* E_{ij} \right)^a_b $
is a Lie-algebra representation of $\rho: SO(1,3) \to GL(V)$ and $ \{ E_{ij} \in \mathfrak{so}(1,3) \}$ is a basis of $\mathfrak{so}(1,3)$ (for details see (\ref{E:spin current2})). The spin current (\ref{E:spin current}) comes from the conserved current of the internal $SO(1,3)$ symmetry, see Appendix. \ref{APP:spin current} and couples to torsion. Thus in PGT the spin does play a role in the spacetime evolution, unlike GR. However, in terms of differential forms the canonical energy-momentum tensor and the spin-current of matter are found to be simpler expressed by the 3-forms, see \cite{Hehl:1994ue}, \cite{Trautman-EC1}, \cite{Hehl:2014eja}:
\begin{equation}\label{E:matter current}
\mathcal{T}_{\a} = \frac{\d \mathcal{L}_{\text{M}}}{\d \vt^\a}, \qquad
\mathcal{S}_{\a\b} = \frac{\d \mathcal{L}_{\text{M}}}{\d \o^{\a\b}}
\end{equation}
where more specific explanation will be given later.

The field equations for the Einstein-Cartan theory are given in component form by \cite{Sciama}, \cite{Kibble},
\begin{equation}\label{E:Sciama-Kibble eq}
\begin{aligned}
R_{ij} - \frac{1}{2} g_{ij} \, R &= \k \mathcal{T}_{ij}\\
T_{ij}{}{}^k + \d^k_i \, T_{j l}{}{}^l -  \d^k_j \, T_{i l}{}{}^l &= \k \mathcal{S}_{ij}{}{}^k
\end{aligned}
\end{equation}

One can also write the (Einstein-Cartan)-Sciama-Kibble equation in a simpler form. Before the derivation, we first observe a property that

\begin{theorem}\label{Thm:EC to GR}

If the spin current of matter vanishes, $\mathcal{S}_{ij}{}{}^k \equiv 0$, then the Einstein-Cartan theory reduces to GR. Hence GR is a degenerate case of the Einstein-Cartan.
\begin{proof}
If $\mathcal{S}_{ij}{}{}^k \equiv 0$, then from $(\ref{E:Sciama-Kibble eq})_2$ contracting the indices $j$ and $k$, one has $T_{il}{}{}^l \equiv 0$, which indicates that in fact the whole torsion tensor $T_{ij}{}{}^k \equiv 0$ again by $(\ref{E:Sciama-Kibble eq})_2$, and thus recovers GR.
\end{proof}

\end{theorem}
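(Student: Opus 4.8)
The plan is to work entirely from the second Sciama-Kibble field equation in $(\ref{E:Sciama-Kibble eq})$, namely
\[
T_{ij}{}^k + \d^k_i\, T_{jl}{}^l - \d^k_j\, T_{il}{}^l = \k\, \mathcal{S}_{ij}{}^k ,
\]
and to exploit the fact that in the Einstein-Cartan theory this relation is purely algebraic rather than differential: torsion does not propagate but is pinned down pointwise by the spin current. Hence once the right-hand side is set to zero, the torsion tensor should be forced to vanish outright, after which the first field equation must collapse to the ordinary Einstein equation. So the argument has two algebraic moves followed by a short remark about the connection.

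The first move is to take the trace of the displayed equation by setting $k=j$ and summing. Writing the torsion trace vector as $T_i := T_{il}{}^l$, the contraction $T_{ij}{}^j$ is again $T_i$, the term $\d^j_i\, T_{jl}{}^l$ contributes $T_i$, and $\d^j_j\, T_{il}{}^l$ contributes $4T_i$ since $\dim M = 4$. With $\mathcal{S}\equiv 0$ this gives $T_i + T_i - 4T_i = -2T_i = 0$, so $T_{il}{}^l$ vanishes identically. The second move is to substitute $T_{il}{}^l\equiv 0$ back into the full equation: both Kronecker-delta terms drop out and one is left with $T_{ij}{}^k = \k\,\mathcal{S}_{ij}{}^k = 0$, so the entire torsion tensor vanishes.

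The third step is to read off the consequence for the geometry. With $T\equiv 0$, the contortion $1$-form $K_\a{}^\b$ — characterized by $T^\a = K^\a{}_\b \wedge \vt^\b$ together with $K_{(\a\b)}\equiv 0$ from metricity $(\ref{E:metricity2})$ — is uniquely forced to zero, so $\nabla$ coincides with the Levi-Civita connection $\widetilde{\nabla}$ (this is precisely the uniqueness half of the Levi-Civita theorem quoted earlier). Consequently the curvature in the first equation of $(\ref{E:Sciama-Kibble eq})$ is the Riemannian one, $R_{ij}$ is the symmetric Ricci tensor, and $R_{ij} - \frac{1}{2} g_{ij} R = \k\,\mathcal{T}_{ij}$ is exactly Einstein's field equation; in particular the Einstein tensor being symmetric forces $\mathcal{T}_{ij}$ onto its symmetric Hilbert form on shell. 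Thus the dynamics is that of GR.

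There is no real obstacle here: the whole argument is a two-line index manipulation plus an appeal to a result already in hand. The only point needing genuine care is the trace computation — one must track index positions in $T_{ij}{}^k$ (antisymmetric in $i,j$) so that the three contractions genuinely recombine into a single multiple of $T_{il}{}^l$, and one must invoke $\dim M = 4$ so that the coefficient $2 - \dim M$ is nonzero (in two dimensions the first move would fail). A secondary expository point worth spelling out is why vanishing torsion already determines the connection, i.e. the uniqueness of the metric-compatible torsion-free connection.
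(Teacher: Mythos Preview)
Your proposal is correct and follows exactly the same approach as the paper's proof: contract $k$ with $j$ in $(\ref{E:Sciama-Kibble eq})_2$ to kill the torsion trace, then feed that back into the same equation to kill the full torsion. Your write-up is simply more explicit than the paper's two-line version, spelling out the dimension count $\d^j_j = 4$ and the Levi-Civita uniqueness step that the paper leaves implicit in ``and thus recovers GR.''
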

Thus in the Einstein-Cartan theory without spin matter the second equation of (\ref{E:Sciama-Kibble eq}) is simply null. Now we derive (\ref{E:Sciama-Kibble eq}) by using the differential forms formalism.

\begin{theorem}{(Sciama-Kibble equation)}

In differential forms, the Einstein-Cartan field equations are written as
\begin{equation}\label{E:EC EOM}
\begin{aligned}
&\frac{1}{2} \eta_{\a\b\g} \wedge \O^{\g\b} = \k \,\mathcal{T}_\a \\
&\frac{1}{2} \eta_{a\b\g}\wedge T^{\g}     = \k \, \mathcal{S}_{\a\b}
\end{aligned}
\end{equation}
\end{theorem}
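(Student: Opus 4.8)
The plan is to obtain the Einstein--Cartan field equations by varying the total action $\mathcal{L} = \mathcal{L}_{EC} + \mathcal{L}_{\text{M}}$ independently with respect to the two gauge potentials, the coframe $\vt^\a$ and the connection $\o^{\a\b}$, treating them as independent variables in the spirit of the first-order (Palatini-type) formalism appropriate to PGT. The gravitational part is $\mathcal{L}_{EC} = \frac{1}{2\k}\,\O^{\a\b}\wedge\eta_{\a\b}$, and the matter currents are defined precisely by $\mathcal{T}_\a = \delta\mathcal{L}_{\text{M}}/\delta\vt^\a$ and $\mathcal{S}_{\a\b} = \delta\mathcal{L}_{\text{M}}/\delta\o^{\a\b}$ as in \eqref{E:matter current}, so the whole content is to compute $\delta\mathcal{L}_{EC}/\delta\vt^\a$ and $\delta\mathcal{L}_{EC}/\delta\o^{\a\b}$ and match.

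First I would carry out the coframe variation. Only $\eta_{\a\b} = \star(\vt_\a\wedge\vt_\b)$ depends on $\vt^\a$ (the curvature $\O^{\a\b} = d\o_\a{}^\b + \o_\g{}^\b\wedge\o_\a{}^\g$ is built purely from the connection), so $\delta(\O^{\a\b}\wedge\eta_{\a\b}) = \O^{\a\b}\wedge\delta\eta_{\a\b}$. Using $\eta_{\a\b} = i_{e_\b} i_{e_\a}\eta$ together with the algebraic identity $\vt^\g\wedge\eta_{\a\b} = \delta^\g_\b\,\eta_\a - \delta^\g_\a\,\eta_\b$ (the second line of \eqref{E:eta basis id}), one checks that $\delta\eta_{\a\b}$ is, up to index bookkeeping, $\eta_{\a\b\g}\wedge\delta\vt^\g$; more precisely the variation of the $\eta$-basis with respect to the coframe produces $\delta\eta_{\a\b} = \eta_{\a\b\g}\,\delta\vt^\g$ in the same way that the covariant derivative acts in \eqref{E:eta basis id2}. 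Substituting and reading off the coefficient of $\delta\vt^\g$ gives $\frac{1}{2\k}\,\O^{\a\b}\wedge\eta_{\a\b\g} = \frac{1}{2\k}\,\eta_{\g\b\a}\wedge\O^{\a\b}$, which after relabelling is exactly $\frac{1}{2}\eta_{\a\b\g}\wedge\O^{\g\b} = \k\mathcal{T}_\a$ once the matter term is included; comparison with the component Einstein tensor via \eqref{E:T and R} and \eqref{E:eta basis id} confirms the normalization and reproduces $(\ref{E:Sciama-Kibble eq})_1$.

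Next the connection variation. Here $\eta_{\a\b}$ is inert and $\delta\O^{\a\b} = D(\delta\o^{\a\b})$, the covariant exterior derivative of the variation (the standard Palatini identity, valid since $\delta\o^{\a\b}$ transforms tensorially). Thus $\delta\mathcal{L}_{EC} = \frac{1}{2\k}\,D(\delta\o^{\a\b})\wedge\eta_{\a\b}$, and integrating by parts (discarding the exact term $d(\delta\o^{\a\b}\wedge\eta_{\a\b})$) moves the derivative onto the $\eta$-basis: $\delta\mathcal{L}_{EC} = -\frac{1}{2\k}\,\delta\o^{\a\b}\wedge D\eta_{\a\b}$ up to sign conventions for the degree of the forms. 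Now invoke \eqref{E:eta basis id2}, $D\eta_{\a\b} = \eta_{\a\b\g}\wedge T^\g$, to get $\delta\mathcal{L}_{EC} = \mp\frac{1}{2\k}\,\delta\o^{\a\b}\wedge\eta_{\a\b\g}\wedge T^\g$, so that $\delta\mathcal{L}_{EC}/\delta\o^{\a\b} = \frac{1}{2\k}\,\eta_{\a\b\g}\wedge T^\g$ (fixing signs); setting this equal to $-\mathcal{S}_{\a\b}$ from the matter variation yields $\frac{1}{2}\eta_{\a\b\g}\wedge T^\g = \k\mathcal{S}_{\a\b}$, i.e.\ $(\ref{E:Sciama-Kibble eq})_2$. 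Finally I would briefly indicate the equivalence with the component form \eqref{E:Sciama-Kibble eq} by expanding $\O^{\g\b} = \frac12 R^{\g\b}{}_{\m\n}\vt^\m\wedge\vt^\n$ and $T^\g = \frac12 T_{\m\n}{}^\g\vt^\m\wedge\vt^\n$ and applying the contraction identities \eqref{E:eta basis id}.

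The main obstacle is purely bookkeeping rather than conceptual: getting the signs, the factors of $\tfrac12$, and the ordering of the antisymmetrized frame indices right, since $\eta_{\a\b\g}$ carries Hodge-dual sign conventions and the integration by parts picks up a sign depending on form degree. The variational identity $\delta\O^{\a\b} = D\delta\o^{\a\b}$ and the two $\eta$-basis facts $\vt^\g\wedge\eta_{\a\b} = \delta^\g_\b\eta_\a - \delta^\g_\a\eta_\b$ and $D\eta_{\a\b} = \eta_{\a\b\g}\wedge T^\g$ do essentially all the work; once they are in hand the derivation is a few lines each.
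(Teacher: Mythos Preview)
Your proposal is correct and follows essentially the same route as the paper: independent variation of $\mathcal{L}_{EC}$ with respect to $\vt^\a$ and $\o^{\a\b}$, using the variational identities $\delta\eta_{\a\b} = \delta\vt^\g \wedge \eta_{\a\b\g}$ and $\delta\O^{\a\b} = D\,\delta\o^{\a\b}$, then integrating by parts via $D(\eta_{\a\b}\wedge\delta\o^{\a\b})$ and invoking $D\eta_{\a\b} = \eta_{\a\b\g}\wedge T^\g$. The paper carries out exactly these steps with the signs and factors made explicit, so your acknowledged bookkeeping issues are the only thing left to pin down.
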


\begin{proof}
Since the Einstein-Cartan theory is a special case of PGT, the two gauge potentials $(\vt^a , \o_\a{}^\b)$ are considered as independent variables. Gravitational variation is of the form
\begin{equation}\label{E:EC var0}
\d \mathcal{L}_G = \d \vt^\a \wedge \frac{\d \mathcal{L}_G}{\d \vt^\a } + \d \o^{\a\b} \wedge \frac{\d \mathcal{L}_G}{\d \o^{\a\b} } + d (\cdots)
\end{equation}
where $d (\cdots)$ is some exact differential term which vanishes upon the integration over a closed 4-manifold by Stoke's theorem. Similarly,
\begin{equation}\label{E:EC mat var}
\d \mathcal{L}_{\text{M}} = \d \vt^\a \wedge \frac{\d \mathcal{L}_{\text{M}}}{\d \vt^\a } + \d \o^{\a\b} \wedge \frac{\d \mathcal{L}_{\text{M}}}{\d \o^{\a\b} } + d (\cdots)
\end{equation}

Consider the total variation
\begin{equation}\label{E:EC var1}
\d \left( \eta_{\a \b} \wedge \O^{\a \b} \right) = \delta \eta_{\a\b} \wedge \O^{\a\b} + \eta_{\a\b} \wedge \delta  \O^{\a\b},
\end{equation}
where the variational formula $ \d \left( \psi \wedge \phi \right) = \d \psi \wedge \phi + \psi \wedge \d \phi$ is used for all $\psi \in \Lambda^p(M)$, $\phi \in \Lambda^q(M)$ and since
\begin{equation}\label{E:variational formula}
\begin{aligned}
\delta \eta_{\a\b}& = \frac{1}{2!} \d \left(  \veps_{\a\b\m\n} \, \vt^\m \wedge \vt^\n \right)  = \d \vt^\m \wedge \eta_{\a\b\m} \\
\d \O^{\a\b} &= d \d \o^{\a\b} + \d \o^{\g\b} \wedge \o^\a{}_\g +
\o^{\g\b} \wedge \d \o^\a{}_\g\\
           &= d \d \o^{\a\b}+ \o_\g{}^\a \wedge \d
\o^{\g\b} + \o_\g{}^\b \wedge \d \o^{\a\g} = D\d \o^{\a\b}.
\end{aligned}
\end{equation}
then (\ref{E:EC var1}) reads
\begin{equation}
\begin{aligned}
\label{E:EC var2}
\d \left( \eta_{\a \b} \wedge \O^{\a \b} \right) &= \d \vt^{\mu} \wedge \left( \eta_{\a\b\mu} \wedge \O^{\a \b} \right)  + \eta_{\a\b} \wedge  D\d \o^{\a\b}\\
 &= \d \vt^{\mu} \wedge \left( \eta_{\a\b\mu} \wedge \O^{\a \b} \right) + D\left( \eta_{\a\b} \wedge \d \o^{\a\b} \right) - (D \eta^{\a\b} ) \wedge (\d \eta_{\a\b})\\
 &= \d \vt^{\mu} \wedge \left( \eta_{\a\b\mu} \wedge \O^{\a \b} \right) + \d \o^{\a\b} \wedge \left( T^\g \wedge \eta_{\a\b\g} \right)+  D \left( \eta_{\a\b} \wedge \d \o^{\a\b} \right)
\end{aligned}
\end{equation}
where we have used $D( \eta_{\a\b} \wedge \d \o^{\a\b}) = D \eta_{\a\b} \wedge \d \o^{\a\b} + \eta_{\a\b} \wedge D \d \o^{\a\b} $ in the second equality and the identity $D\eta_{\a\b} = T^\g \wedge \eta_{\a\b\g}$ from (\ref{E:eta basis id2}) is used in the last equality. Since the last term $D( \eta_{\a\b} \wedge \d \o^{\a\b}) = d ( \eta_{\a\b} \wedge \d \o^{\a\b})$ in (\ref{E:EC var2}) vanishes upon integration, by comparison to (\ref{E:EC var0}) one derives the field equations:
\begin{equation}
\begin{aligned}
\frac{\d \mathcal{L}_{EC}}{\d \vt^{\mu} } &= \frac{1}{2\k} \eta_{\mu\a\b} \wedge \O^{\a\b} =  -\mathcal{T}_{\mu}\\
\frac{\d \mathcal{L}_{EC}}{\d \o^{\a\b} } &= \frac{1}{2\k} \eta_{a\b\g}\wedge T^{\g}  = \mathcal{S}_{\a\b}
\end{aligned}
\end{equation}

\end{proof}

\begin{remark}
Here we remark that the variational formula $(\ref{E:variational formula})_1$ can be derived rigorously from the formula in Eq.(33) of \cite{Muench:1998ay}, namely
\begin{theorem}
For all $\phi \in \Lambda^p(M)$, and an arbitrary frame $(e_\a) \in TM$, one has
\begin{equation}
\left( \d \cdot \star - \star \cdot \d  \right) \, \phi  = \d \vt^\a \wedge (i_{e_\a} (\star \phi)) - \star [\d \vt^\g \wedge i_{e_\g} \phi ] + \d g_{\a\b} \left[ \vt^{(\a} \wedge \left( i_{e^{\b)}} (\star \phi) \right) - \frac{1}{2} g^{\a\b}\, \star \phi   \right]
\end{equation}
which indicates that the Hodge dual operator does not commute with the variational operator $\d$ in general. To obtain $(\ref{E:variational formula})_1$, one applies $\phi = \vt^\a \wedge \vt^\b$ with the orthonormal frame condition such that $\d g_{\a\b} \equiv 0$.
\end{theorem}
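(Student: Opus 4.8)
The plan is to reduce the claim to coframe monomials and then compute the variation of the Hodge dual directly in the $\eta$-basis of $(\ref{E:eta basis})$, where $\d$ varies the coframe $\vt^\a$ and the frame‑metric components $g_{\a\b}:=g(e_\a,e_\b)$ as independent fields. First note that both sides of the asserted identity are $C^\infty(M)$-linear in $\phi$: for a $0$-form $f$ one has $\d(\star(f\phi))-\star\,\d(f\phi)=f\,[\d(\star\phi)-\star\,\d\phi]$ since $\d$ does not differentiate $f$, and every term on the right-hand side scales by $f$ as well (using $i_{e_\g}(f\phi)=f\,i_{e_\g}\phi$). Hence it suffices to take $\phi=\vt^{\a_1}\wedge\cdots\wedge\vt^{\a_p}$. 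For such a monomial the Leibniz rules for $\d$ and for the interior product give $\d\phi=\d\vt^\g\wedge i_{e_\g}\phi$, so the middle term $-\star[\d\vt^\g\wedge i_{e_\g}\phi]$ on the right is exactly $-\star\,\d\phi$ and cancels the matching piece of the left. Everything therefore reduces to proving, for coframe monomials,
\[ \d(\star\phi)\;=\;\d\vt^\a\wedge i_{e_\a}(\star\phi)\;+\;\d g_{\a\b}\Big[\vt^{(\a}\wedge i_{e^{\b)}}(\star\phi)-\tfrac12\,g^{\a\b}\,\star\phi\Big]. \]

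\emph{Varying the star.} Writing $\vt^\a=g^{\a\m}\vt_\m$ and using $(\ref{E:eta basis})$ and $(\ref{E:eta basis2})$ we have $\star\phi=g^{\a_1\m_1}\cdots g^{\a_p\m_p}\,\eta_{\m_1\cdots\m_p}$, where $\eta_{\m_1\cdots\m_p}$ is expressed through the coframe and $\sqrt{\lvert\det g_{\rho\s}\rvert}$ via the constant Levi-Civita symbol. Applying $\d$ factor by factor: each of the $p$ inverse-metric factors yields $\d g^{\a_i\m_i}=-\,g^{\a_i\rho}g^{\m_i\s}\,\d g_{\rho\s}$; the density factor yields $\d\sqrt{\lvert\det g\rvert}=\tfrac12\,g^{\rho\s}\,\d g_{\rho\s}\,\sqrt{\lvert\det g\rvert}$; and varying the coframe $1$-forms inside $\eta_{\m_1\cdots\m_p}$ reassembles (after relabeling and using $(\ref{E:eta basis2})$) into $\d\vt^\g\wedge i_{e_\g}\eta_{\m_1\cdots\m_p}=\d\vt^\g\wedge\eta_{\m_1\cdots\m_p\g}$. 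Summing this last contribution over the surviving $g$-factors gives exactly $\d\vt^\a\wedge i_{e_\a}(\star\phi)$, matching the first term on the right; the density contribution gives $\tfrac12\,g^{\rho\s}\d g_{\rho\s}\,\star\phi$; and what remains is the collection of $\d g^{\a_i\m_i}$-terms, which I denote $(\mathrm{I})$.

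\emph{The metric terms.} It then remains to check $(\mathrm{I})+\tfrac12\,g^{\rho\s}\d g_{\rho\s}\star\phi=\d g_{\a\b}[\vt^{(\a}\wedge i_{e^{\b)}}(\star\phi)-\tfrac12 g^{\a\b}\star\phi]$, equivalently $(\mathrm{I})=\d g_{\a\b}\,\vt^\a\wedge i_{e^\b}(\star\phi)-g^{\a\b}\d g_{\a\b}\,\star\phi$ (symmetrization is automatic against the symmetric $\d g_{\a\b}$). For this, expand $\d g_{\a\b}\,\vt^\a\wedge i_{e^\b}(\star\phi)=\d g_{\a\b}\,g^{\b\tau}\,g^{\a_1\m_1}\cdots g^{\a_p\m_p}\,\vt^\a\wedge\eta_{\m_1\cdots\m_p\tau}$ and apply the generalized $\eta$-basis identity $\vt^\a\wedge\eta_{\b_1\cdots\b_k}=\sum_{j=1}^{k}(-1)^{k-j}\,\d^\a_{\b_j}\,\eta_{\b_1\cdots\widehat{\b_j}\cdots\b_k}$, which is the pattern of $(\ref{E:eta basis id})$ proved there by iterated interior products. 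The term where $\d^\a$ hits the trailing index $\tau$ gives $+\d^\a_\tau\,\eta_{\m_1\cdots\m_p}$, contracting to $g^{\a\b}\d g_{\a\b}\,\star\phi$, which cancels the $-g^{\a\b}\d g_{\a\b}\star\phi$; in each remaining term $\d^\a_{\m_j}$ contracts with $g^{\a_j\m_j}$ to give $-\d g^{\a_j\tau}$, and relabeling $\tau$ as $\m_j$ and moving it back to the $j$-th slot of the $\eta$-symbol produces a sign $(-1)^{p-j}$ which combines with $(-1)^{k-j}$ to leave $+\d g^{\a_j\m_j}\prod_{k\neq j}g^{\a_k\m_k}\,\eta_{\m_1\cdots\m_p}$; summing over $j$ this is precisely $(\mathrm{I})$. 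Imposing finally $\d g_{\a\b}\equiv 0$ and $\phi=\vt^\a\wedge\vt^\b$ recovers $(\ref{E:variational formula})_1$.

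\emph{Expected obstacle.} There is no conceptual difficulty; the one delicate point is the sign and index bookkeeping of the last step --- tracking how the extra index $\tau$ migrates through the totally antisymmetric $\eta$-symbols and verifying that the accumulated factors of $(-1)$ reproduce $(\mathrm{I})$ with the correct overall sign (the Lorentzian signature also feeds in the $-1$ of $\langle\psi,\phi\rangle=-\star(\phi\wedge\star\psi)$, which must be carried consistently). An alternative I would keep in reserve is to vary the defining relation $\psi\wedge\star\phi=\langle\psi,\phi\rangle\,\eta$ against an arbitrary test $p$-form $\psi$ and insert the metric variation of $\langle\psi,\phi\rangle$; this trades the combinatorics above for a computation of comparable length.
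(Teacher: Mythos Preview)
Your argument is correct. The reduction to coframe monomials via $C^\infty$-linearity and the observation that $\d\phi=\d\vt^\g\wedge i_{e_\g}\phi$ for such monomials are both valid, and the remaining computation of $\d(\star\phi)$ in the $\eta$-basis is carried out accurately. In particular, the sign bookkeeping in your final step checks out: with $k=p+1$ indices on $\eta_{\m_1\cdots\m_p\tau}$, the $j\le p$ terms carry $(-1)^{(p+1)-j}$ from the $\eta$-identity and $(-1)^{p-j}$ from reshuffling $\tau$ back to slot $j$, giving an overall $-1$ that is then absorbed by $\d g^{ab}=-g^{a\rho}g^{b\s}\d g_{\rho\s}$ to reproduce $(\mathrm{I})$ with the correct sign.

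As for comparison: the paper does not actually prove this identity. It states the formula and attributes it to Eq.~(33) of Muench--Gronwald--Hehl \cite{Muench:1998ay}, using it only to justify the special case $(\ref{E:variational formula})_1$. So your write-up supplies what the paper omits. Your direct $\eta$-basis computation is essentially the standard route; the alternative you mention in reserve --- varying the defining relation $\psi\wedge\star\phi=\langle\psi,\phi\rangle\,\eta$ against a test form --- is closer in spirit to how the cited reference organizes the calculation, but both approaches are of comparable length and neither offers a decisive shortcut over the other.
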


\end{remark}

Here we remark that the translation from (\ref{E:EC EOM}) to (\ref{E:Sciama-Kibble eq}) is direct as the following: for $(\ref{E:EC EOM})_1$
\begin{equation}
\frac{1}{2} \, \eta_{\a\b\g} \wedge \O^{\g\b} = \frac{1}{4} \,  \eta_{\a\b\g} \wedge \left( R^{\g\b}{}{}_{\mu\nu} \, \vt^{\m} \wedge \vt^\n \right) = \frac{1}{2} \left( R_{\b\a} - \frac{1}{2} g_{\a\b} \, R \right) \, \eta^\b = \k \, \mathcal{T}_{\a\b} \, \eta^\b
\end{equation}
where we first expand $\O^{\g\b}$ with (\ref{E:T and R}) and apply (\ref{E:eta basis id}) repeatedly. Notice in the last equality we have defined the coefficients of the 3-form expansion $\mathcal{T}_{\a}$ in terms of the 3-form basis $\eta^\a$ as the \textbf{energy-momentum tensor} in the usual sense. The coefficient $\mathcal{S}_{\a\b\g}$ called \textbf{spin current tensor} is similarly defined i.e,
\begin{equation}\label{E:source 3-form}
\mathcal{T}_{\a}  := \mathcal{T}_{\a\b} \, \eta^\b, \qquad \mathcal{S}_{\a\b} =  \mathcal{S}_{\a\b\g} \, \eta^{\g}
\end{equation}
Then we see from this expansion that in general $\mathcal{T}_{\a\b} \neq \mathcal{T}_{\b\a}$ and $\mathcal{S}_{\a\b\g} \neq \mathcal{S}_{\a\g\b}$. The translation for $(\ref{E:EC EOM})_2$ is similar.

As we have seen from the proof of Theorem (\ref{Thm:EC to GR}) in $(\ref{E:Sciama-Kibble eq})_2$ that the relationship between torsion and spin currents is \emph{algebraic}.\footnote{"Algebraic" means only operations of $+,-,\times, \div$ and $\sqrt[n]{\cdot}$ are involved, especially not differential or integral operators.} Basically, the Einstein-Cartan field equations (\ref{E:EC EOM}) are of $1^{st}$ order partial differential equations (PDEs) in $(\vt^\a, \o_\a{}^\b)$ if it is supported by a spin fluid source, \cite{Hehl:1976kj}, \cite{Gronwald:1995em}. With a detailed examination of the field equations, in the end one finds that the torsion field $T^\a = D \vt^\a $ is not a dynamical field, and hence not propagating. This is unlike the typical Yang-Mills gauge theory, $\mathcal{L}_{\text{YM}} \sim F \wedge F$, whose field equations are generally of $2^{nd}$ order PDEs in field variables. Thus from the gauge theoretical point of view, EC is a degenerate theory. Following the essence of Yang-Mills, Hehl, Nitsch, and Von der Heyde \cite{Hehl:1979gp} constructed a more general framework for PGT that in addition to the linear scalar curvature $R$ in the gravity action (Einstein-Cartan), quadratic terms of $(\O^\b{}_\a, T^\a)$ in the Lagrangian should be also in consideration, hence the introduction of quadratic PGT, \cite{Hayashi:1981mm}, \cite{Hehl:1979gp}.

%------------------------------------------------------------------------------

\section{Quadratic PGT (qPG)}

\subsection{Field equations for general PGT}

On a Riemann-Cartan spacetime $(M,g, \nabla)$, we consider a general Lagrangian of the form
\begin{equation}\label{E:general Lagrangian}
\mathcal{L} = \mathcal{L}(g_{\a\b} , \vt^\a, d \vt^\a , \o_\a{}^\b, d\o_\a{}^\b , \phi^a, d\phi^a)
\end{equation}
which is a total Lagrangian containing both gravity and matter field of $\phi \in \Lambda^p (U \subseteq M ,V)$ and the index $a$ denotes the component in the vector space $V$. However, if we require $\mathcal{L} $ to have the Lorentz symmetry, i.e, under the transformation of two frames at $x\in M$
\begin{equation}\label{E:local Lorentz trans}
\widetilde{e}_\a (x) = \left( A(x) \right)_\a^\b \, \vt^\b (x), \qquad  \widetilde{\vt}^\a = \left( A^{-1}(x) \right)^\a_\b \, \vt^\b (x)
\end{equation}
where $A = (A_\a^\b) : U \subseteq M \to SO(1,3)$, the Lagrangian $\mathcal{L} $ has an invariant value. Then $d$ and $\o_\a{}^\b$ in $\mathcal{L} $ can only be involved with certain combination such that under (\ref{E:local Lorentz trans}) the Lorentz symmetry is preserved. In fact, it can only be the combination of the exterior covariant derivative given in Definition (\ref{Def:Exterior covariant derivative2}) or Theorem (\ref{Thm:ext derivative})
\[
D \phi = d \phi  + \o \dot{\wedge} \phi , \quad \Leftrightarrow \quad D = d \phi  + \rho_\b^\a \, \o_\a{}^\b \wedge \phi
\]
where $\rho: SO(1,3) \to GL(V)$. Therefore

\begin{proposition}{($SO(1,3)$-invariant Lagrangian)}

By imposing the Lorentz symmetry, $G = SO(1,3)$, the general Lagrangian (\ref{E:general Lagrangian}) is reduced down to the form
\begin{equation}\label{E:reduced Lagrangian}
\mathcal{L} = \mathcal{L}(g_{\a\b} , \vt^\a , \phi^a, D\phi^a, T^\a, \O^\b{}_\a)
\end{equation}
\end{proposition}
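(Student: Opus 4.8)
The plan is to show that requiring $\mathcal{L}$ in (\ref{E:general Lagrangian}) to have an invariant value under the local frame changes (\ref{E:local Lorentz trans}) forces the bare exterior derivative $d$ and the connection $\o_\a{}^\b$ to enter only through the covariant combinations $D\phi^a$ of Definition~\ref{Def:Exterior covariant derivative2}, the torsion $T^\a = d\vt^\a + \o_\b{}^\a\wedge\vt^\b$, and the curvature $\O^\b{}_\a = d\o_\a{}^\b + \o_\g{}^\b\wedge\o_\a{}^\g$ from (\ref{E:T and R}) --- precisely the arguments listed in (\ref{E:reduced Lagrangian}). The argument has two moves: a harmless change of arguments in $\mathcal{L}$, and then an exploitation of the freedom in the $1$-jet of a local Lorentz transformation at a point to delete the residual explicit dependence on $\o_\a{}^\b$.

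\emph{Change of variables.} For fixed values of the non-derivative data $(\vt^\a,\o_\a{}^\b,\phi^a)$, the map $(d\vt^\a, d\o_\a{}^\b, d\phi^a)\mapsto(T^\a,\O^\b{}_\a,D\phi^a)$ is an invertible affine map, with inverse $d\vt^\a = T^\a - \o_\b{}^\a\wedge\vt^\b$, $d\o_\a{}^\b = \O^\b{}_\a - \o_\g{}^\b\wedge\o_\a{}^\g$, and $d\phi = D\phi - \o\dot{\wedge}\phi$; it is lower triangular in the derivative variables, so invertibility survives the quadratic piece $\o\wedge\o$ and the representation term. Hence any Lagrangian (\ref{E:general Lagrangian}) may be rewritten, with no loss, as $\mathcal{L} = \widehat{\mathcal{L}}(g_{\a\b},\vt^\a,\o_\a{}^\b,\phi^a,D\phi^a,T^\a,\O^\b{}_\a)$, and the whole content of the Proposition reduces to showing that $\widehat{\mathcal{L}}$ does not in fact depend on its argument $\o_\a{}^\b$.

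\emph{Killing the explicit $\o$-dependence.} Fix $x_0\in M$. Because $\mathcal{L}$ is built pointwise and algebraically from the $1$-jets of the fields, its invariance under (\ref{E:local Lorentz trans}) holds separately at each point. Given an arbitrary $\mathfrak{so}(1,3)$-valued $1$-form value $\lambda_\a{}^\b\big|_{x_0}$ --- metricity (\ref{E:metricity2}) making $\mathfrak{so}(1,3)$ the right target --- pick a map $A\colon U\to SO(1,3)$ with $A(x_0)=I$ and $(A^{-1}dA)_\a{}^\b\big|_{x_0}=\lambda_\a{}^\b$ (take $A(\xi)=\exp(\lambda_\a{}^\b(\xi))$ in normal coordinates $\xi$ about $x_0$). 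Under this change of frame the covariant objects transform homogeneously --- $\vt^\a$ and $T^\a$ by Remark~\ref{Rmk:local transition} (see (\ref{E:frame change})), $\O^\b{}_\a$ by (\ref{E:curvature transition}), and $\phi^a,D\phi^a$ by the matter representation $\rho(A^{-1})$, the latter being the covariance built into Definition~\ref{Def:Exterior covariant derivative2} --- so, since $A(x_0)=I$, all of $g_{\a\b},\vt^\a,\phi^a,D\phi^a,T^\a,\O^\b{}_\a$ are unchanged at $x_0$; only the explicit connection argument moves, by the inhomogeneous law (\ref{E:connection transition2}), to $\o_\a{}^\b + \lambda_\a{}^\b$ at $x_0$. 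Invariance of $\mathcal{L}$ then yields
\[
\widehat{\mathcal{L}}\bigl(g_{\a\b},\vt^\a,\o_\a{}^\b+\lambda_\a{}^\b,\phi^a,D\phi^a,T^\a,\O^\b{}_\a\bigr)\big|_{x_0} = \widehat{\mathcal{L}}\bigl(g_{\a\b},\vt^\a,\o_\a{}^\b,\phi^a,D\phi^a,T^\a,\O^\b{}_\a\bigr)\big|_{x_0}
\]
for every $\lambda\in\mathfrak{so}(1,3)\otimes T^*_{x_0}M$; since $x_0$ and the field values there are arbitrary, $\widehat{\mathcal{L}}$ is independent of $\o_\a{}^\b$, which is (\ref{E:reduced Lagrangian}).

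The main obstacle is bookkeeping discipline rather than any hard estimate: one must verify (i) that $(d\vt,d\o,d\phi)\leftrightarrow(T,\O,D\phi)$ really is invertible once the quadratic and representation pieces are restored (it is, by triangularity), (ii) that an arbitrary value of $A^{-1}dA$ at $x_0$ is attainable with $A(x_0)=I$ (immediate from the exponential chart), and (iii) --- the conceptual point --- that ``$\mathcal{L}$ has an invariant value'' must genuinely be read as a pointwise condition on the field $1$-jets, which is what licenses freezing every argument at $x_0$ except $\o_\a{}^\b$. I would also note that the surviving frame indices on $\widehat{\mathcal{L}}$ remain constrained: invariance under constant $A\in SO(1,3)$ forces $\widehat{\mathcal{L}}$ to be built from $SO(1,3)$-invariant contractions of its arguments, though this refinement is not needed for the statement as phrased.
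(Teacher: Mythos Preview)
Your argument is correct and is in fact considerably more thorough than what the paper offers. The paper does not give a proof in any real sense: the text immediately preceding the Proposition simply asserts that, under local Lorentz invariance, ``$d$ and $\o_\a{}^\b$ in $\mathcal{L}$ can only be involved with certain combination such that under (\ref{E:local Lorentz trans}) the Lorentz symmetry is preserved,'' and then names $D\phi$ as the allowed combination --- with no mechanism explained for why nothing else survives, and no explicit mention that $T^\a$ and $\O^\b{}_\a$ arise the same way from $d\vt^\a$ and $d\o_\a{}^\b$.

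Your two-step approach --- first the invertible affine change of derivative variables $(d\vt,d\o,d\phi)\leftrightarrow(T,\O,D\phi)$ at fixed $(\vt,\o,\phi)$, then the Utiyama-style use of the $1$-jet freedom in $A$ at a point to shift $\o\mapsto\o+\lambda$ while freezing all homogeneously transforming arguments --- is the standard rigorous route and supplies exactly what the paper omits. The bookkeeping you flag (triangularity for invertibility, realizability of $A^{-1}dA|_{x_0}=\lambda$ via the exponential, the pointwise reading of invariance) is all in order. Your closing remark that constant-$A$ invariance further restricts $\widehat{\mathcal{L}}$ to $SO(1,3)$-invariant contractions is also correct and goes beyond what the Proposition as stated requires.
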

One defines the variation of the independent arguments
\begin{equation}\label{E:action variation 1}
\delta \mathcal{L} = \delta \vt^\a \wedge \frac{\p
\mathcal{L}}{\p\vt^\a} + \delta T^\a \wedge \frac{\p
\mathcal{L}}{\p T^\a} + \delta \O^\b{}_\a \wedge \frac{\p
\mathcal{L}}{\p \O^\b{}_\a} + \delta \phi^a \wedge
\frac{\p \mathcal{L}}{\partial \phi^a} +\delta (D\phi^a)
\wedge \frac{\p \mathcal{L}}{\partial (D\phi^a)}
\end{equation}
Moreover, if we replace the variations of $\d T^\a $ and $\delta \O^\b{}_\a$ by
\begin{equation}
\delta T^\a = D \left( \delta \vt^\a \right) + \delta \o_\b{}^\a \wedge \vt^\b, \qquad
\delta \O^\b{}_\a = D \delta \o_\a{}^\b
\end{equation}
where the identities are derived by applying the relation $[\delta,d]=0$. As a consequence, one has
\begin{multline}\label{E:action variation 2}
\delta \mathcal{L} = \delta \vt^\a \wedge \left[ \frac{\partial \mathcal{L}}{\partial \vt^\a} + D \left(
\frac{\partial \mathcal{L}}{\partial T^\a} \right) \right] + \delta
\o^{\a\b} \wedge \left[ \rho_{\a\b}{}{}^a{}_b \, \phi^b \wedge \frac{\partial
\mathcal{L}}{\partial (D \phi^a) } + \vt_{[\a} \wedge \frac{\partial
\mathcal{L}}{\partial T^{\b]}} + D \left( \frac{\partial
\mathcal{L}}{\partial \O^{\b\a}} \right) \right]\\
 + \delta \phi^a \wedge
\frac{\delta\mathcal{L}}{\delta \phi^a} + d\left( \delta \vt^\a
\wedge \frac{\partial\mathcal{L}}{\partial T^\a} + \delta
\o_\a{}^\b \wedge \frac{\partial\mathcal{L}}{\partial \O^\b{}_\a}
+ \delta \phi^a\wedge \frac{\partial\mathcal{L}}{\partial D \phi^a}
\right)
\end{multline}
where $\rho_{\a\b}{}{}^a{}_b$ is defined in (\ref{E:spin current2}) and $
D \d \vt^\a \wedge \frac{\partial \mathcal{L}}{\partial T^\a} = D \left(  \d \vt^\a \wedge \frac{\partial \mathcal{L}}{\partial T^\a} \right) +  \d \vt^\a \wedge D \left( \frac{\partial \mathcal{L}}{\partial T^\a} \right) $ is used. By the comparison of (\ref{E:action variation 1}) to (\ref{E:action variation 2}), we obtain \cite{Hehl:1994ue}
\begin{equation}
\begin{aligned}
\frac{\d \mathcal{L}}{\d\vt^\a} &=  \frac{\partial \mathcal{L}}{\partial \vt^\a} + D \left(
\frac{\partial \mathcal{L}}{\partial T^\a} \right), \\
 \frac{\d
\mathcal{L}}{\d\o^{\b\a}} &=  \rho_{\a\b}{}{}^a{}_b \, \phi^b \wedge \frac{\partial
\mathcal{L}}{\partial (D \phi^a) } + \vt_{[\a} \wedge \frac{\partial
\mathcal{L}}{\partial T^{\b]}} + D \left( \frac{\partial
\mathcal{L}}{\partial \O^{\b\a}} \right)
\end{aligned}
\end{equation}

In particular, if the general Lagrangian (\ref{E:general Lagrangian}) is decomposable into the form
\begin{equation}\label{E:decomposable Lagrangian}
\mathcal{L} = \mathcal{L}_G(\vt^\a, T^\a, \O_\a{}^\b) + \mathcal{L}_{M}(\vt^\a, \phi^a , D\phi^a)
\end{equation}
where $\mathcal{L}_G$ denotes the gravitational Lagrangian in PGT and $\mathcal{L}_{M}$ is the matter Lagrangian. (\ref{E:action variation 2}) simply tells us

\begin{theorem}{(Field equations for PGT)}

The field equations for the Lagrangian (\ref{E:decomposable Lagrangian}) is given by
\begin{equation}\label{E:PGT EOM}
\begin{aligned}
\frac{\partial \mathcal{L}_G}{\partial \vt^\a} + D \left(
\frac{\partial \mathcal{L}_G}{\partial T^\a} \right) &= \mathcal{T}_\a, \\
\vt_{[\a} \wedge \frac{\partial \mathcal{L}_G}{\partial T^{\b]}} + D \left( \frac{\partial
\mathcal{L}_G}{\partial \O^{\b\a}} \right) &= \mathcal{S}_{\a\b}
\end{aligned}
\end{equation}
where $\mathcal{T}_\a \in \Lambda^3(M)$ and $\mathcal{S}_{\a\b} \in \Lambda^3(M)$ are the \textbf{(canonical) energy-momentum current} and \textbf{spin-current of matter} defined by \cite{Hehl:1994ue}
\begin{equation}\label{E:matter current2}
\begin{aligned}
\mathcal{T}_\a &:= \frac{\d\mathcal{L}_M}{\d \vt^\a} = \frac{\partial \mathcal{L}_M}{\partial \vt^\a} + D \left(
\frac{\partial \mathcal{L}_M}{\partial T^\a} \right),\\
\mathcal{S}_{\a\b} &:= \frac{\d\mathcal{L}_M}{\d \o^{[\a\b]}} = \rho_{\a\b}{}{}^a{}_b \, \phi^b \wedge \frac{\partial
\mathcal{L}}{\partial (D \phi^a) } + \vt_{[\a} \wedge \frac{\partial \mathcal{L}_M}{\partial T^{\b]}} + D \left( \frac{\partial
\mathcal{L}_M}{\partial \O^{\b\a}} \right),
\end{aligned}
\end{equation}
with the evolution of the matter field given by
\begin{equation}\label{E:PGT matter EOM}
\frac{\d \mathcal{L}_M}{\d \phi} = \frac{\partial \mathcal{L}_M}{\partial \phi} - D \left( \frac{\partial \mathcal{L}_M}{\partial (D\phi)}\right) =0.
\end{equation}
Together (\ref{E:PGT EOM}) and (\ref{E:PGT matter EOM}) constitute the complete evolution of PGT in the presence of the matter field $\phi$.
\end{theorem}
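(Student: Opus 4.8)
The plan is to read the three field equations off directly from the master variational identity (\ref{E:action variation 2}) by imposing Hamilton's principle: the physical configurations are the stationary points of $\int_M \mathcal{L}$ under independent, compactly supported variations $\d\vt^\a$, $\d\o^{\a\b}$ and $\d\phi^a$ (equivalently, on a closed $M$ all boundary contributions drop out by Stokes' theorem). The structural fact that makes this legitimate is that in PGT the coframe $\vt^\a$ and the Lorentz connection $\o_\a{}^\b$ are treated as \emph{a priori} independent gauge potentials, so their variations --- together with the matter variation --- may be chosen freely of one another; this is what allows us to equate the coefficient of each variation to zero separately.

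First I would feed the decomposable Lagrangian (\ref{E:decomposable Lagrangian}) into (\ref{E:action variation 2}), which was derived for the general Lorentz-invariant Lagrangian (\ref{E:reduced Lagrangian}) of which (\ref{E:decomposable Lagrangian}) is a special case. Because $\mathcal{L}_G$ carries no dependence on $\phi^a$ or $D\phi^a$, and $\mathcal{L}_M = \mathcal{L}_M(\vt^\a,\phi^a,D\phi^a)$ carries no explicit dependence on $T^\a$ or $\O^\b{}_\a$, the partial derivatives separate cleanly: $\partial\mathcal{L}/\partial T^\a = \partial\mathcal{L}_G/\partial T^\a$, $\partial\mathcal{L}/\partial\O^{\b\a} = \partial\mathcal{L}_G/\partial\O^{\b\a}$, $\partial\mathcal{L}/\partial(D\phi^a) = \partial\mathcal{L}_M/\partial(D\phi^a)$, while the $\d\vt^\a$-coefficient receives contributions from both pieces. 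Next I would recognise, invoking the very definitions (\ref{E:matter current2}), that the matter part of the $\d\vt^\a$-coefficient assembles into $\mathcal{T}_\a$ and the matter part of the $\d\o^{\a\b}$-coefficient into $\mathcal{S}_{\a\b}$, and that the $\d\phi^a$-coefficient is exactly $\d\mathcal{L}_M/\d\phi^a$. Setting each coefficient to zero --- justified by the fundamental lemma of the calculus of variations applied separately to $\d\vt^\a$, $\d\o^{\a\b}$ and $\d\phi^a$ --- and identifying the matter currents on the right-hand side according to the conventions of (\ref{E:matter current2}) then yields the two equations of (\ref{E:PGT EOM}) together with the matter field equation (\ref{E:PGT matter EOM}).

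The genuine work is already done in the derivation of (\ref{E:action variation 2}) --- the substitutions $\d T^\a = D(\d\vt^\a) + \d\o_\b{}^\a\wedge\vt^\b$ and $\d\O^\b{}_\a = D\d\o_\a{}^\b$ coming from $[\d,d]=0$, the integration-by-parts moves $D\d\vt^\a\wedge(\partial\mathcal{L}/\partial T^\a) = D(\d\vt^\a\wedge\partial\mathcal{L}/\partial T^\a) + \d\vt^\a\wedge D(\partial\mathcal{L}/\partial T^\a)$ and its analogues, and the grouping of the $\d\o^{\a\b}$-terms (the $\rho$-term from $D\phi^a$, the $\vt_{[\a}\wedge\partial\mathcal{L}/\partial T^{\b]}$ term, and the $D$-term) --- so no new estimate is required. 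The one point that demands real care, and which I expect to be the main obstacle, is the index/antisymmetrisation bookkeeping when matching the $\d\o^{\a\b}$-equation to the definition of $\mathcal{S}_{\a\b}$ in (\ref{E:matter current2}): the ordering $\o^{\b\a}$ versus $\o^{[\a\b]}$, the brackets on $[\a\b]$, and the attendant signs must be reconciled with the conventions adopted for the matter currents, and one should cross-check consistency against the Einstein--Cartan special case (where $\mathcal{L}_G = \frac{1}{2\k}\O^{\a\b}\wedge\eta_{\a\b}$) treated earlier. Once this bookkeeping is settled the theorem is immediate --- indeed, as the excerpt itself remarks, ``(\ref{E:action variation 2}) simply tells us'' the result.
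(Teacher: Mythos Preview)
Your proposal is correct and follows essentially the same route as the paper: the paper itself offers no separate proof of this theorem but simply states that ``(\ref{E:action variation 2}) simply tells us'' the result, and your plan --- specialise the master variational identity (\ref{E:action variation 2}) to the split $\mathcal{L}=\mathcal{L}_G+\mathcal{L}_M$, use the restricted dependences of each piece to separate the partial derivatives, and set each independent variation to zero --- is exactly what that remark encodes. Your caution about the $[\a\b]$/sign bookkeeping in the $\d\o^{\a\b}$-equation is well placed (note also that with $\mathcal{L}_M=\mathcal{L}_M(\vt^\a,\phi^a,D\phi^a)$ the $\partial\mathcal{L}_M/\partial T^\a$ and $\partial\mathcal{L}_M/\partial\O^{\b\a}$ contributions in (\ref{E:matter current2}) vanish identically, so the currents reduce to $\mathcal{T}_\a=\partial\mathcal{L}_M/\partial\vt^\a$ and $\mathcal{S}_{\a\b}=\rho_{\a\b}{}^a{}_b\,\phi^b\wedge\partial\mathcal{L}_M/\partial(D\phi^a)$), and cross-checking against the Einstein--Cartan case is the right sanity check for the overall sign convention.
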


In fact, computing terms like $\frac{\partial \mathcal{L}_G}{\partial \vt^\a}$, $\frac{\partial \mathcal{L}_G}{\partial T^\a}$, \ldots in (\ref{E:PGT EOM}) require more work. The following theorem can make computation easier, see \cite{Hehl:1994ue}
\begin{theorem}{(Field equations for PGT)}

The field equations for the Lagrangian (\ref{E:decomposable Lagrangian}) are given by
\begin{equation}\label{E:PGT EOM2}
\begin{aligned}
DH_\a - t_\a &= \mathcal{T}_\a ,\qquad DH_{\a\b} - s_{\a\b} &= \mathcal{S}_{\a\b}
\end{aligned}
\end{equation}
where $H_\a \in \Lambda^2(M)$ and $H_{\a\b} \in \Lambda^2(M)$ are called the \textbf{translational excitation} and the \textbf{Lorentz excitation} (field
momenta), defined by
\begin{equation}\label{E:excitations}
H_\a := - \frac{\partial \mathcal{L}_G}{\partial T^\a}, \quad H_{\a\b} := -
\frac{\partial \mathcal{L}_G}{\partial \O^{\a\b}}
\end{equation}
with $t_\a \in \Lambda^3(M)$ and $s_{\a\b} \in \Lambda^3(M)$ called the \textbf{gravitational energy-momentum} and the
\textbf{gravitational spin current}, respectively, given defined by
\begin{equation}\label{E:gravitational energy-momentum}
\begin{aligned}
t_\a &:= \frac{\partial \mathcal{L}_G}{\partial \vt^\a} = i_{e_\a}  \mathcal{L}_G +  \left( i_{e_\a} T^\b \right) \wedge H_\b +
\left( i_{e_\a} \O^{\g\b} \right) \wedge H_{\b\g}, \\
s_{\a\b} &:=
\frac{\partial \mathcal{L}_G}{\partial \o^{\a\b}} = - \vt_{[\a} \wedge
H_{\b]}
\end{aligned}
\end{equation}
\end{theorem}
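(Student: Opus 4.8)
The plan is to start from the field equations in the form \eqref{E:PGT EOM} already derived from the variational principle, and to rewrite the two ``bulk'' terms $\partial\mathcal{L}_G/\partial T^\a$, $\partial\mathcal{L}_G/\partial\O^{\b\a}$ in terms of the excitations \eqref{E:excitations}, so that the content of the theorem reduces to identifying $t_\a$ and $s_{\a\b}$ with the expressions in \eqref{E:gravitational energy-momentum}. The definitions $H_\a := -\partial\mathcal{L}_G/\partial T^\a$ and $H_{\a\b} := -\partial\mathcal{L}_G/\partial\O^{\a\b}$ turn $D(\partial\mathcal{L}_G/\partial T^\a) = -DH_\a$ immediately, and the second equation of \eqref{E:PGT EOM} becomes $\vt_{[\a}\wedge(-H_{\b]}) + D(-H_{\b\a}) = \mathcal{S}_{\a\b}$; after relabeling the antisymmetrized index pair this is $DH_{\a\b} - s_{\a\b} = \mathcal{S}_{\a\b}$ with $s_{\a\b} = -\vt_{[\a}\wedge H_{\b]}$, which is exactly the claimed Lorentz equation (modulo checking the sign/ordering convention in the antisymmetrization, which is a bookkeeping matter). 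So the genuine work is entirely in the translational equation, specifically in showing
\[
\frac{\partial\mathcal{L}_G}{\partial\vt^\a} = i_{e_\a}\mathcal{L}_G + (i_{e_\a}T^\b)\wedge H_\b + (i_{e_\a}\O^{\g\b})\wedge H_{\b\g}.
\]

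The key step is a ``Lie-derivative'' or ``first variation along a frame'' identity. Since $\mathcal{L}_G = \mathcal{L}_G(\vt^\a, T^\a, \O_\a{}^\b)$ is a $4$-form built algebraically (via the Hodge star and wedge) from its arguments, contracting with $e_\a$ and using that $i_{e_\a}$ is an antiderivation gives a Leibniz expansion of $i_{e_\a}\mathcal{L}_G$ into a sum of terms each carrying one factor $i_{e_\a}\vt^\b = \d^\b_\a$, $i_{e_\a}T^\b$, or $i_{e_\a}\O^{\g\b}$. Equivalently, one invokes the fact that for a $4$-form on a $4$-manifold the total exterior derivative vanishes degree-wise, so $0 = i_{e_\a}(d\mathcal{L}_G) $ is vacuous, but $\mathcal{L}_G\wedge(\text{anything positive degree}) = 0$ lets one write $i_{e_\a}\mathcal{L}_G$ purely in terms of the interior products of the building blocks. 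Matching the coefficient of each building block against the definition of the partial derivatives $\partial\mathcal{L}_G/\partial\vt^\a$, $\partial\mathcal{L}_G/\partial T^\a = -H_\a$, $\partial\mathcal{L}_G/\partial\O^{\g\b} = -H_{\b\g}$ then yields the displayed formula for $t_\a = \partial\mathcal{L}_G/\partial\vt^\a$. Finally, substituting $\partial\mathcal{L}_G/\partial\vt^\a = t_\a$ and $D(\partial\mathcal{L}_G/\partial T^\a) = -DH_\a$ into \eqref{E:PGT EOM} gives $t_\a - DH_\a = \mathcal{T}_\a$, i.e.\ $DH_\a - t_\a = -\mathcal{T}_\a$; a sign convention in the definition of $\mathcal{T}_\a$ (note the $-\mathcal{T}_\a$ appearing in the Einstein--Cartan computation above) reconciles this with \eqref{E:PGT EOM2}.

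The main obstacle I anticipate is making the ``$i_{e_\a}\mathcal{L}_G$ equals the sum of partial-derivative contractions'' step rigorous rather than heuristic: one must be careful that $\mathcal{L}_G$ depends on $\vt^\a$ not only explicitly but also implicitly through the metric coefficients $g_{\a\b}=\eta_{\a\b}$ used in the Hodge star and in raising/lowering, so the naive Leibniz rule must be supplemented by the variational identity for $\delta\star$ (the theorem quoted from \cite{Muench:1998ay} in the preceding remark) — in an orthonormal coframe the $\delta g_{\a\b}$ terms drop, which is why the clean formula holds. A secondary bookkeeping hazard is keeping the index placement and the antisymmetrization brackets $\vt_{[\a}\wedge H_{\b]}$ consistent between the $\partial/\partial\o^{\b\a}$ of \eqref{E:PGT EOM} and the $s_{\a\b}$ of \eqref{E:gravitational energy-momentum}; this is routine but error-prone. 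I would therefore organize the proof as: (i) translate \eqref{E:PGT EOM} into the $H$-notation and read off the Lorentz equation; (ii) establish the frame-contraction identity for $t_\a$ using the antiderivation property of $i_{e_\a}$ together with the orthonormal-frame vanishing of $\delta g_{\a\b}$; (iii) combine to obtain the translational equation.
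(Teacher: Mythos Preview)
Your outline is essentially correct, and it matches the route the paper points toward --- but note that the paper itself does \emph{not} prove this theorem in the text. It states the result and then remarks that ``both the last equalities in \eqref{E:gravitational energy-momentum} require some non-trivial procedures concerning the local diffeomorphisms and local Lorentz invariance, see \cite{Hehl:1994ue}'', deferring the actual argument to Hehl et al.

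Your antiderivation argument for $t_\a$ is precisely the diffeomorphism-invariance Noether identity the paper alludes to, phrased pointwise: writing $\mathcal{L}_\xi\mathcal{L}_G = d(i_\xi\mathcal{L}_G)$ for a top form and expanding the left side by the chain rule on the arguments $(\vt^\a, T^\a, \O^{\a\b})$, then specializing $\xi = e_\a$, yields exactly the interior-product formula you want. The Hodge-star subtlety you flag is real and is resolved, as you say, by working in an orthonormal coframe so that the $\d g_{\a\b}$ contributions in the $\d\!\star$ formula vanish. So the approaches coincide; you have simply made explicit what the paper leaves to the reference.

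One small gap: for $s_{\a\b}$ you obtain $-\vt_{[\a}\wedge H_{\b]}$ by reading off \eqref{E:PGT EOM}, which suffices for the field equation itself, but the \emph{middle} equality $s_{\a\b} = \partial\mathcal{L}_G/\partial\o^{\a\b}$ (with $\mathcal{L}_G$ regarded in the unreduced variables $\vt, d\vt, \o, d\o$) is a separate claim --- this is the piece the paper says comes from local Lorentz invariance. It can also be obtained by a direct chain-rule computation through $\partial T^\g/\partial\o^{\a\b}$ and $\partial\O^{\g\d}/\partial\o^{\a\b}$, but you have not addressed it, and your sign bookkeeping there (as you already note) needs to be carried through carefully.
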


Both the last equalities in (\ref{E:gravitational energy-momentum}) require some non-trivial procedures concerning the local diffeomorphisms and local Lorentz invariance, see \cite{Hehl:1994ue}. With this theorem, one easily derives the Einstein-Cartan theory (\ref{E:EC EOM}) alternatively.

Indeed, with the Einstein-Cartan Lagrangian (\ref{Def:Einstein-Cartan}), one computes that the gravitational excitations from (\ref{E:excitations}) and (\ref{E:gravitational energy-momentum})
\begin{equation}
H_\a := - \frac{\partial \mathcal{L}_G}{\partial T^\a} =0 , \quad H_{\a\b} := -
\frac{\partial \mathcal{L}_G}{\partial \O^{\a\b}} = \eta_{\a\b}
\end{equation}
and the gravitational currents
\begin{equation}
t_\a = \O^{\b\g} \wedge \eta_{\b\g\a}, \qquad s_{\a\b} = 0
\end{equation}
which recover (\ref{E:EC EOM}).

%----------------------------------------------

\subsection{Quadratic PGT Lagrangians}

In the last two sections, ones sees that the Einstein-Cartan theory as a Poincar\'{e} gauge theory is in many ways degenerate. Unlike Yang-Mills, there is a self-interaction of the form $F \wedge \star F$, quadratic in the field strength. Naturally, one would expect that gravity as a gauge theory may have analogous terms like $T \wedge \star T$ and $\O \wedge \star \O$. Using the irreducible decomposition of field strengths for gravity in PGT, (\ref{E:torsion decomp})--(\ref{E:Riem irr3}) one can consider a general Lagrangian quadratic in PGT field strengths,
\begin{multline}\label{E:qPG Lagrangian1}
\mathcal{L}_{qPG} = \frac{1}{2\k} \left[ a_0 \, \O^{\a\b} \wedge \eta_{\a\b} - 2\Lambda \eta + T^\a \wedge \left( \sum_{I=1}^3  a_I \star{}^{(I)} T_\a \right) \right]\\
-\frac{1}{2\varrho} \O^{\a\b} \wedge \left( \sum_{I=1}^6 b_I \star {}^{(I)}\O_{\a\b} \right)
\end{multline}
where the self-coupling constants $a_0, \ldots ,a_3$, $b_1,\ldots, b_6$ are defined to be dimensionless, $[\k]= T^2/M L$, and $[\varrho] = T/ML^2$. In \cite{Baekler:2010fr},  the above quadratic combinations in (\ref{E:qPG Lagrangian1}) are classified into 2 types,:
\begin{equation}\label{E:parity even}
\begin{aligned}
\mathcal{L}^+_{\text{weak}} &= \frac{1}{2\k} \left[ a_0 \, \O^{\a\b} \wedge \eta_{\a\b} - 2\Lambda \eta + T^\a \wedge \left( \sum_{I=1}^3  a_I \star{}^{(I)} T_\a \right) \right]\\
\mathcal{L}^+_{\text{strong}} &= -\frac{1}{2\varrho} \O^{\a\b} \wedge \left( \sum_{I=1}^6 b_I \star {}^{(I)}\O_{\a\b} \right)
\end{aligned}
\end{equation}
where both $\mathcal{L}^+_{\text{weak}}$ and $\mathcal{L}^+_{\text{strong}}$ belong to the \emph{parity even} pieces, which are easily observed by the presence of Hodge dual operator. In fact, in \cite{Baekler:2010fr} they considered a wider class of quadratic combinations including terms like
\begin{equation}\label{E:parity odd}
\begin{aligned}
\mathcal{L}^-_{\text{weak}} &=  \frac{b_0}{2\k} {}^{(3)}\O_{\a\b} \wedge \vt^\a \wedge \vt^\b + \frac{1}{\k} \left( \sigma_1 {}^{(1)}T^\a \wedge {}^{(1)}T_\a + \sigma_2 {}^{(2)}T^\a \wedge {}^{(3)}T_\a\right)\\
\mathcal{L}^-_{\text{strong}} &= -\frac{1}{2\varrho} \left( \m_1 {}^{(1)}\O^{\a\b} \wedge {}^{(1)}\O_{\a\b} + \m_2 {}^{(2)}\O^{\a\b} \wedge {}^{(4)}\O_{\a\b} + \m_3 {}^{(3)}\O^{\a\b} \wedge {}^{(6)}\O_{\a\b} + \m_4 {}^{(5)}\O^{\a\b} \wedge {}^{(5)}\O_{\a\b} \right)
\end{aligned}
\end{equation}
which are considered as \emph{parity odd} terms. Also it is shown in \cite{Baekler:2010fr} that the most general quadratic combinations can be written as
\begin{multline}\label{E:total qPG}
\mathcal{L}_{\text{qPG}} = \frac{1}{2\k} \left[ \left( a_0 R - 2\Lambda + b_0 X \right) \, \eta \right. \\
     \left. {} + \frac{a_2}{3} \mathcal{V} \wedge \star \mathcal{V}  - \frac{a_3}{3} \mathcal{A} \wedge \star \mathcal{A} - \frac{2\sigma_2 }{3} \mathcal{V} \wedge \star \mathcal{A} + a_1 {}^{(1)}T^\a \wedge \star {}^{(1)}T_\a \right] \\
     - \frac{1}{2\varrho} \left[ \left( \frac{b_6}{12}R^2 - \frac{b_3}{12}X^2 + \frac{\mu_3}{12} R \, X \right) \eta + b_4 {}^{(4)}\O^{\a\b} \wedge \star {}^{(4)}\O_{\a\b}\right.\\
      \left. {} + {}^{(2)}\O^{\a\b} \wedge \left( b_2 \star {}^{(2)}\O_{\a\b} + \mu_2 {}^{(4)}\O_{\a\b} \right) + {}^{(5)}\O^{\a\b} \wedge \left( b_5 \star {}^{(5)}\O_{\a\b} + \mu_4 {}^{(5)}\O_{\a\b} \right) \right]
\end{multline}
where $ \mathcal{V}:= i_{e_\g} (T^\g)$ and $\mathcal{A}: = \star (\vt_\a \wedge T^\a)$ denote the \textbf{vector} and \textbf{axial} parts of the torsion tensor, respectively. It has up to 8 independent variables in the Lagrangian, but where do the physics of these terms lie? This is a question to be thoroughly explored. However, recently, Shie, Nester, Yo (SNY, \cite{Shie:2008ms}) have chosen a small subclass out of (\ref{E:total qPG}) by the Hamiltonian analysis of Poincar\'{e} gauge gravity \cite{Yo:1999ex}, \cite{Yo:2001sy}. The SNY-model is simple but physically interesting, which we explore in the next chapter.

In this chapter we have gone through the basic formulation of PGT in principal fibre bundle language, which is suitable for the general gauge theory as well. Such a formulation helps the transition from the typical Yang-Mills theory to gravity gauge theory and shows their many similarities. Later, we project the PGT gauge potentials and strengths $(\varphi, \o , \O^\o, \Theta^\o)$ onto the spacetime $(\vt^\a, \o_\a{}^\b, \O^{\a\b}, T^\a)$ via a local section $\sigma: U\subset M \to P$. Such reduction addresses that the Poincar\'{e} gauge group invariance results in the field strength of torsion and curvature one observes on the spacetime $M$, and clearly address why we obtain torsion and curvature simultaneously when gauging the Poincar\'{e} group into the spacetime. The resultant spacetime is called the Riemann-Cartan spacetime, described by $(M,g,\nabla)$. Thus this is the version that appears most in the physics literature.

We also studied a simplest Lagrangian, $\mathcal{L}_G = \frac{1}{2\k} \, \O^{\a\b} \wedge \eta_{\a\b}$ (\ref{E:HE Lagrangian}) on a Riemann-Cartan spacetime, known as the Einstein-Cartan theory developed around 1961 by Sciama and Kibble. We also sketched the quadratic PGT as a natural extension mimicking the Yang-Mills theory. It can be seen that PGT has a more general structure than GR by its nature, and thus contains more variety and possibility.

Below we show a diagram, Fig.(\ref{fg:RC}), that may best represent the relationship between different geometries to conclude this chapter.

\begin{figure}[h]
\begin{center}
\scalebox{0.6}{\includegraphics{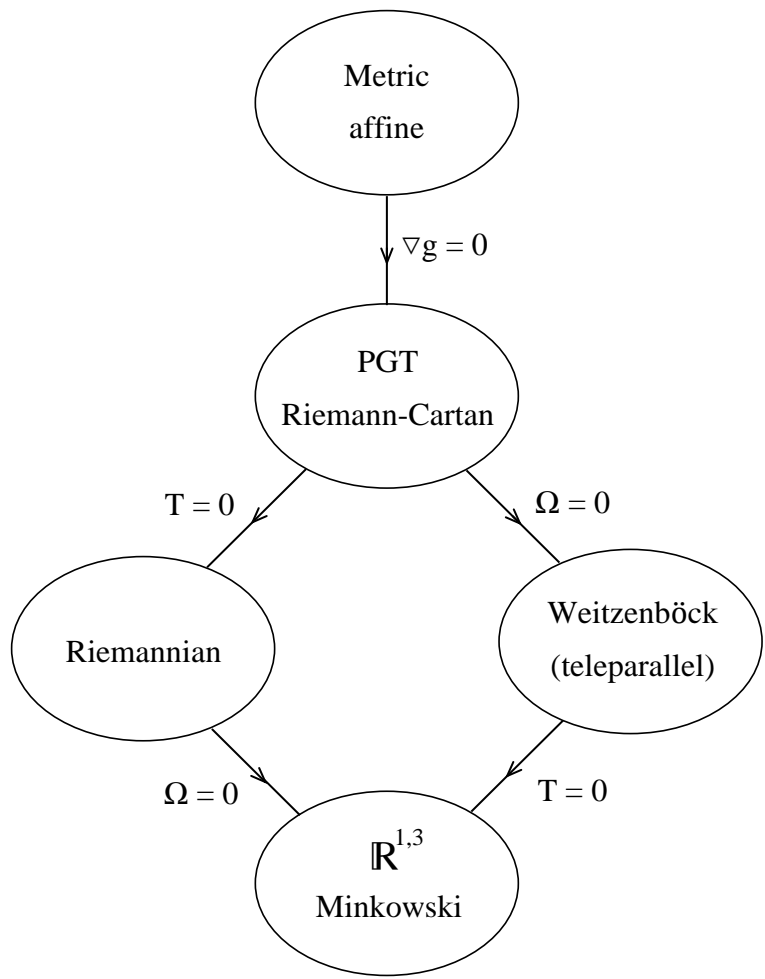}} \caption{Riemann-Cartan
(RC) space and its subcases.}
\end{center}
\label{fg:RC}
\end{figure}

%------------------------------------------------------------

\chapter{Scalar-torsion Mode of PGT}

\section{Lagrangian for the scalar-torsion mode}

As shown in Chater 2, qPG gravity consists of a Riemann-Cartan spacetime $(M,g,\nabla)$ and a Lagrangian 4-form
\begin{equation}\label{E:Lagrangian}
\mathcal{L}(g,\vt,\Gamma) = \mathcal{L}_G + \mathcal{L}_{M}\,,
\end{equation}
where $\vt^\a $ is a set of tetrad,
$\o_\a{}^\b \in \Lambda^1(M)$ is the connection 1-form with respect to $\vt^\a$,
$\mathcal{L}_M$ is the matter Lagrangian, and $\mathcal{L}_G$ is the
gravitational Lagrangian that can be made up by certain
combinations. In~\cite{Shie:2008ms}, SNY studied the spin $0^+$
mode, given by~\cite{Shie:2008ms,Hehl:2012pi}
\begin{equation}\label{E:SNY}
\mathcal{L}_{G} =  \frac{a_0}{2} R \eta + \frac{b}{24} R^2 \eta +
\frac{a_1}{8} T^\a \wedge\star \left(  {}^{(1)} T_\a -2 {}^{(2)} T_\a - \frac{1}{2} {}^{(3)} T_\a
\right)\,,
\end{equation}
where ${}^{(I)} T^\a$ are irreducible torsion pieces in (\ref{E:torsion decompose}), and the coefficients of $\mathcal{L}_G$ in (\ref{E:SNY}) are constrained by the positivity argument~\cite{Shie:2008ms} such that
\begin{equation}\label{E:condition}
a_1 > 0 ,\qquad b > 0.
\end{equation}

%------------------------------------------------------------------

\section{Cosmology in Scalar-torsion Mode of PGT}

To study the cosmology of the scalar-torsion gravity in PGT, we
consider the homogeneous and isotropic FLRW universe, given by the
metric
\begin{equation}\label{E:FRW}
ds^2 = -dt^2 + a^2(t) \left( \frac{dr^2}{1-kr^2} + r^2
d\Omega^2\right),
\end{equation}
where $k$ is the sectional curvature of the spatial homogeneous universe and we set $k=0$ for simplicity.

For (\ref{E:SNY}) of  the SNY model  in the FRLW cosmology with
no spin source $S_{ijk} \equiv 0$, defined in (\ref{E:spin current}), the field equations of PGT (\ref{E:PGT EOM2}) lead to~\cite{Shie:2008ms}
\begin{eqnarray}
\label{E:main eq1}
\dot{H} &=& \frac{\mu}{6 a_1} R + \frac{1}{6a_1} \mathcal{T} -2H^2,\\
\label{E:main eq2} \dot{\Phi}(t) &=& \frac{a_0}{2a_1}R +
\frac{\mathcal{T}}{2 a_1} - 3 H
\Phi + \frac{1}{3} \Phi^2,\\
\label{E:main eq3} \dot{R} &=& -\frac{2}{3} \left( R +
\frac{6\mu}{b} \right) \Phi,
\end{eqnarray}
where $\mu = a_1 + a_0$, $H= \dot{a}(t)/a(t)$, and $\Phi(t)= T_t$,
which is the time component of the torsion trace, defined by
$T_i := T_{ij}{}^j$, and the coordinate indices $i,j,k,\ldots $ run from $ 0 , \ldots 3$.
%,  constituting the coefficients of ${}^{(2)}T^i$.
Here, $R$ in (\ref{E:main eq1})-(\ref{E:main eq3}) denotes the
affine curvature in (\ref{E:Riem irr3}). In addition, we have the relation
\begin{equation}
R = \bar{R} + 2 T^j_{;j} - \frac{2}{3}
T_kT^k\,,
\end{equation}
where $\bar{R} = 6(\dot{H} + 2H^2)$ represents  the curvature of the
Levi-Civita connection induced by (\ref{E:FRW}). The energy-momentum
tensor $\mathcal{T}_{ij}$ is defined as (\ref{E:source 3-form}) and $\mathcal{T}$ stands for the trace
$\mathcal{T}_i{}^i$. Explicitly, one has
\begin{equation}\label{E:rho_T}
\begin{aligned}
\mathcal{T}_{tt} &= \rho_M = \frac{b}{18} \left( R + \frac{6\mu}{b}
\right) \left( 3H - \Phi \right)^2 - \frac{b}{24}R^2 - 3a_1 H^2,\\
\mathcal{T} &= 3p_M- \rho_M \,.
\end{aligned}
\end{equation}
with the subscript $M$ representing the ordinary matter including
both dust and radiation. To see the geometric effect of torsion, we
can write down the Friedmann equations as
\begin{eqnarray}\label{E:Friedmann s-t mode}
H^2 &=& \frac{\rho_c}{3a_0}, \qquad  \quad \quad \quad \rho_c =\rho_M + \rho_T,\nonumber \\
 \dot{H} &=& -\frac{\rho_c + p_{tot}}{3 a_0}, \qquad p_{tot} = p_M + p_T,
\label{E:FriedmannEq}
\end{eqnarray}
with $a_0=\left( 8 \pi G \right)^{-1}$ in GR, where $\rho_c$ and
$p_{tot}$ denote the critical energy density and total pressure of
the universe, while $\rho_T$ and $p_T$ correspond to the energy
density and pressure of some effective field, respectively. By
comparing the equation of motion of the scalar-torsion mode in
PGT~(\ref{E:rho_T}) to the Friedmann
equations~(\ref{E:FriedmannEq}), one obtains
\begin{eqnarray}\label{E:rho,p_T}
\rho_T &=&  3\mu H^2 - \frac{b}{18} \left( R + \frac{6\mu}{b}
\right)
(3H - \Phi )^2 + \frac{b}{24} R^2,\nonumber\\
p_T &=& \frac{1}{3} \left( \mu( R - \bar{R} ) + \rho_T \right),
\end{eqnarray}
which will be regarded as the torsion dark energy density and pressure,
respectively.
\begin{equation}\label{E:conservation_rho_c}
\dot{\rho}_c + 3H \left( \rho_c + p_{tot} \right) =0,
\end{equation}
which can also be derived by applying the identity
\[
\bar{\nabla}_j \bar{G}^{ij} = \bar{\nabla}_j \left( \bar{R}^{ij} -
\frac{1}{2} \bar{R} g^{ij} \right)= \bar{\nabla}_j \left(
\mathcal{T}^{ij}+\mathcal{T}_T^{ij} \right) =0,
\]
where $\bar{\nabla}$ is the covariant derivative with respect to the
Levi-Civita connection and $\mathcal{T}_{T \  j}^{\ i}=diag\left(
-\rho_T, p_T, p_T, p_T \right)$ is the effective energy-momentum
tensor of the torsion dark energy.

In addition, from (\ref{E:main eq1}) -- (\ref{E:main eq3}), one can
check that the continuity equation for the torsion field is also
valid, $i.e.$
\begin{equation}
\label{E:conservation_rho_T} \dot{\rho}_T + 3 H \left( \rho_T+ p_T
\right) =0.
\end{equation}
Consequently, we obtain the continuity equation for the ordinary
matter to be
\begin{equation}\label{E:conservation_rho_M}
\dot{\rho}_M + 3 H \left( \rho_M+ p_M \right) =0.
\end{equation}
By assuming no coupling between radiation and dust, the matter
densities of radiation ($w_r=1/3$) and dust ($w_m = 0$) in
scalar-torsion cosmology share the same evolution behaviors as in GR,
$i.e.$ $\rho_r \propto a^{-4}$ and $\rho_m \propto a^{-3}$,
respectively. In order to investigate the cosmological evolution, it
is natural to define the total EoS by~\cite{DE}
\begin{eqnarray}
w_{tot} = -1 - \frac{2 \dot{H}}{3 H^2} =\frac{p_{tot}}{\rho_c},
\end{eqnarray}
which leads to
\begin{eqnarray}
\label{E:w_T}
 w_{tot} =\Omega_M w_M + \Omega_T w_T,
\end{eqnarray}
where $\Omega_\alpha=\rho_\alpha/\rho_c$ and $w_\alpha =
p_\alpha/\rho_\alpha$ with $\alpha=M,T$, representing the energy
density ratios and EoSs of matter and torsion, respectively. Note
that the EoS in (\ref{E:w_T}), which is commonly used in the
literature, e.g.~\cite{DE}, can be determined from the cosmological observations in~\cite{obs1,obs11,obs12,obs13,obs14}. In particular,
it can be used to distinguish the modified gravity theories from the
$\Lambda$CDM~\cite{DE}.

Consequently, the evolution of the torsion dark energy can be
described solely in terms of $w_T$ by
\begin{eqnarray}
\rho_T(z) = \rho_T^{(0)} \exp \left\{ 3 \int_0^z dz^{\prime}
\frac{1+w_T(z^{\prime})}{1+z^{\prime}} \right\}.
\end{eqnarray}

In the following sections, we focus on this important quantity.

%--------------------------------------------------

\section{Numerical Results of Torsion Cosmology}
The evolution of torsion cosmology is determined by (\ref{E:main
eq1}) -- (\ref{E:main eq3}). In general, one needs to solve the
dynamics of $R$, $\Phi$ and $H$ by the system of ordinary
differential equations. However, one easily sees that in
(\ref{E:main eq3}) there exists a special case: the
constant scalar affine curvature solution, $R=-6\mu/b$~\cite{Shie:2008ms}. Recall that in order to conform with
the positive kinetic energy argument, the condition
(\ref{E:condition}) is required. Such a condition leads to the negative curvature $R=-6\mu/b <0$ in this case with a negative matter
density $\rho<0$,
the condition of $a_1 < -a_0 <0$ is required~\cite{Shie:2008ms}.

We concentrate on the EoS of the scalar-torsion mode in both special
and normal cases which we define later.
  We also present the cosmological evolution of
the density ratio, defined by $\Omega=\rho/\rho_c$, from a high
redshift to the current stage.

%--------------------------------------------------

\subsection{Special Case: $R= const.$}\label{sec:specialcase}
In this special case, we take the assumption of $a_1< 0$, $\mu<0$
and $a_0 > 0$ in~\cite{Shie:2008ms}. The evolution equations
(\ref{E:main eq1}) -- (\ref{E:main eq3}) reduce to
\begin{eqnarray}
\label{eq:sc01}
\rho_M &=& - 3 a_1 H^2 - \frac{3}{2}\frac{\mu^2}{b}, \\
\label{eq:sc02}
\rho_T &=& \frac{3}{2}\frac{\mu^2}{b} + 3 \mu H^2, \\
\label{eq:sc03} \dot{H} &=& -\left( 1+w_M \right)
\left(\frac{3}{4}\frac{\mu^2}{a_1 b}+\frac{3}{2}H^2\right).
\end{eqnarray}

To employ numerical calculation, we rescale the parameters as below:
\begin{eqnarray}
\label{eq:screscaling}
 m^2& =&\rho_m^{(0)}/3 a_0\,, \quad
 \widetilde{a}_0 = a_0/m^2 b, \quad \widetilde{a}_1 = -a_1/m^2 b, \nonumber \\
\widetilde{t}&= & m \cdot t, \quad \quad \widetilde{\mu}=
\widetilde{a}_1-\widetilde{a}_0, \quad
 \widetilde{H}^2 =H^2/m^2, \quad \widetilde{R}=R/m^2,
\end{eqnarray}
where $\rho_m^{(0)}$ is the matter density at $z=0$ and the scalar
affine curvature is $\widetilde{R}= 6 \widetilde{\mu} >0$. From
(\ref{eq:sc01}), (\ref{eq:sc02}) and (\ref{eq:sc03}), we obtain the
following dimensionless equations,
\begin{eqnarray}
\label{eq:scHm2}
&&\widetilde{H}^2= \frac{\widetilde{a}_0}{\widetilde{a}_1}\left( a^{-3}+\chi a^{-4} \right) + \frac{\widetilde{\mu}^2}{2\widetilde{a}_1}, \\
\label{eq:scrhoT}
&&\frac{\rho_T}{\rho_m^{(0)}}= \frac{ \widetilde{\mu}^2}{2 \widetilde{a}_0}- \frac{ \widetilde{\mu}}{\widetilde{a}_0} \widetilde{H}^2, \\
\label{eq:scdotH} &&\widetilde{H}{\widetilde{H}}^{\prime}= \left( 1+w_M
\right) \left(\frac{3}{4} \frac{\widetilde{\mu}^2}{\widetilde{a}_1}-
\frac{3}{2} \widetilde{H}^2\right)\,,
\end{eqnarray}
where the prime ``$ \prime$''.stands for  $d/d\ln a$ and
$\chi=\rho_r^{(0)}/\rho_m^{(0)}$. Using
(\ref{E:conservation_rho_T}),
%The EoS $w_T$
(\ref{E:w_T}) and (\ref{eq:scrhoT}), we find that
\begin{eqnarray}
\label{eq:scTEOS} w_T = -1 - \frac{\dot{\rho}_{T}}{3H \rho_T}= -1
-\frac{4}{3} \frac{\dot{\widetilde{H}}}{2\widetilde{H}^2-  \widetilde{\mu}}.
\end{eqnarray}

\begin{center}
\begin{figure}[tbp]
\begin{tabular}{ll}
\begin{minipage}{80mm}
\begin{center}
\unitlength=1mm \resizebox{!}{6.5cm}{\includegraphics{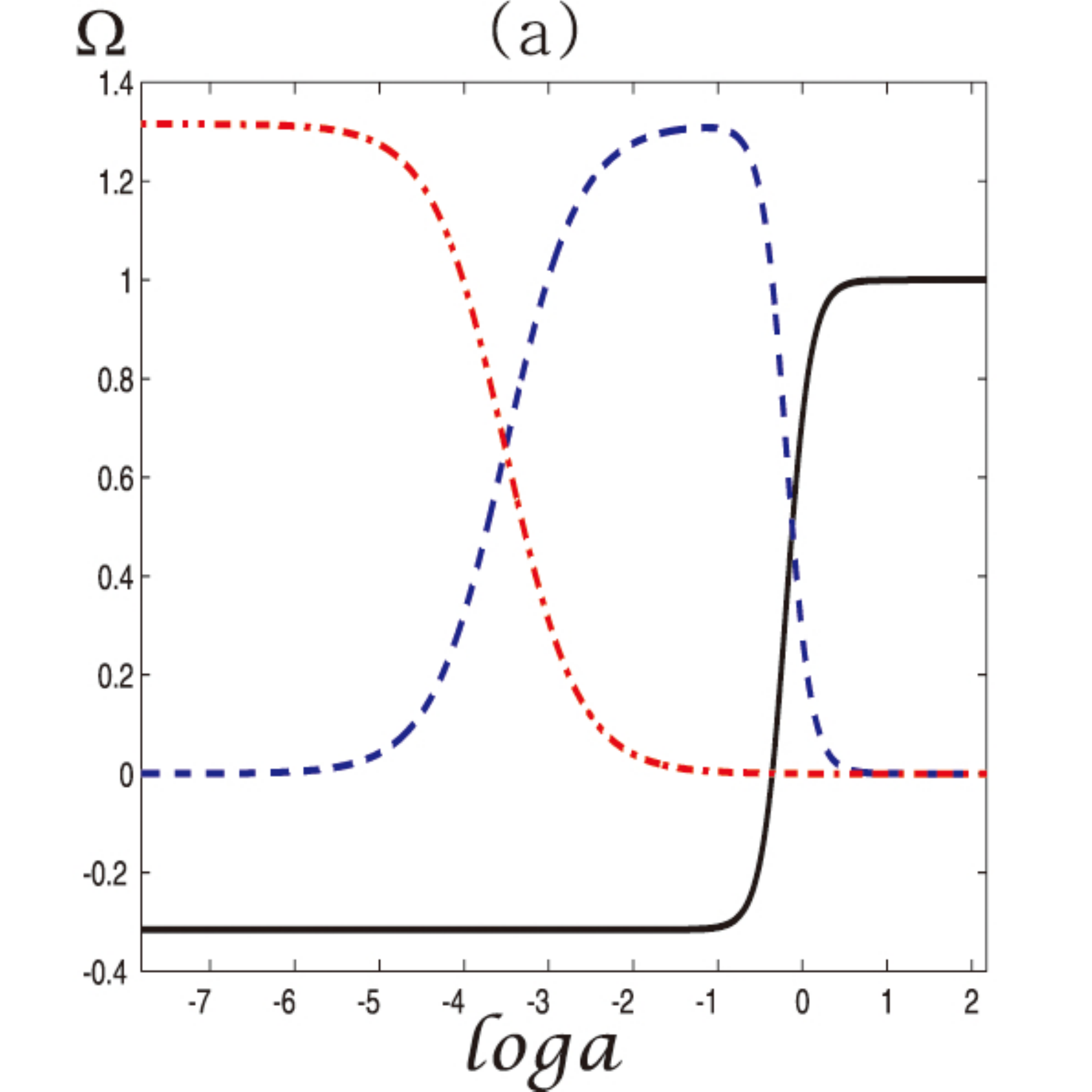}}
\end{center}
\end{minipage}
&
\begin{minipage}{80mm}
\begin{center}
\unitlength=1mm \resizebox{!}{6.5cm}{\includegraphics{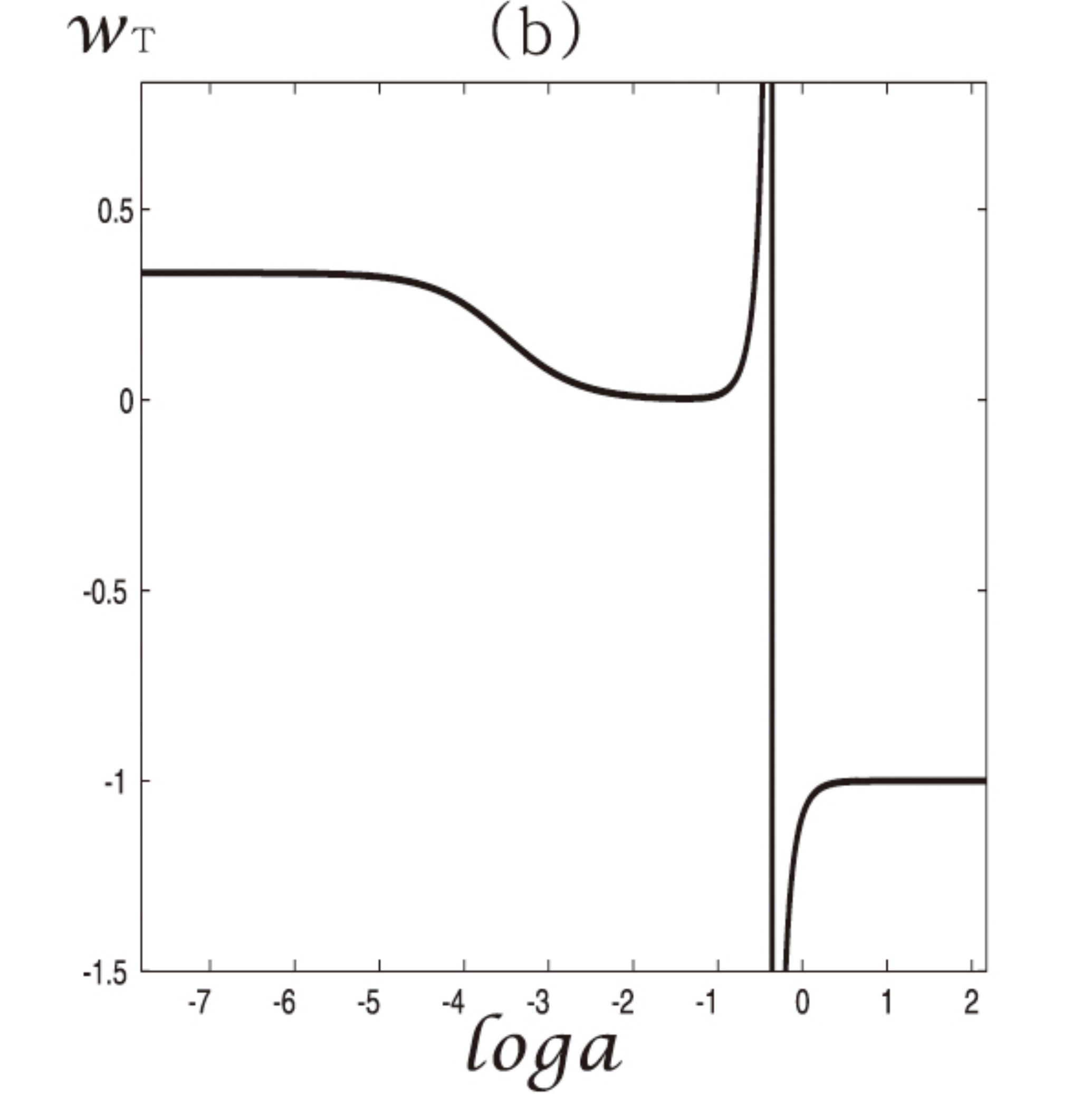}}
\end{center}
\end{minipage}\\[5mm]
\end{tabular}
\caption{ Evolutions of (a) the energy density ratio $\Omega$  and
(b) the torsion EoS $w_T$ with $\Omega_m^{(0)}=27.5\%$, where the
solid (black), dashed (blue), and dotted-dashed (red) lines stand
for torsion, matter and radiation, respectively. } \label{fg:1}
\end{figure}
\end{center}

From (\ref{eq:scHm2})--(\ref{eq:scTEOS}), it is easy to see that the
evolution of $\rho_T$ is automatically determined without solving
any differential equation for given values of $\widetilde{a}_0$ and
$\widetilde{a}_1$.
 The numerical results of this special case are shown in
Fig.~\ref{fg:1}, where we have chosen $\widetilde{a}_0=76$,
$\widetilde{a}_1=100$ and $\chi=3.07 \times 10^{-4}$ corresponding to
$\Omega_m^{(0)}=\widetilde{H}^{-2}_{z=0}\simeq 27.5\%$. In
Fig.~\ref{fg:1}a, we plot the energy density ratios of torsion,
matter and radiation, $\Omega_T$, $\Omega_m$ and $\Omega_r$,
respectively. Notice that $\rho_T$ depends on the parameters
$\widetilde{a}_0$ and $\widetilde{a}_1$, and there exists a late-time
de-Sitter solution when $\widetilde{H}^2= \widetilde{\mu}^2 / 2
\widetilde{a}_1$. In the high redshift regime, in which $\widetilde{H}^2 \gg
\widetilde{\mu}, \widetilde{\mu}^2/ \widetilde{a}_1$, we observe that the
torsion density ratio $\Omega_T$ is a constant which can also be
estimated from (\ref{eq:sc01}) and (\ref{eq:sc02}), namely
\begin{eqnarray}
\label{eq:scMoT} \frac{\rho_M}{\rho_T} = \frac{3 \widetilde{a}_1
\widetilde{H}^2 - 3\widetilde{\mu}^2/2}{3\widetilde{\mu}^2/2 - 3 \widetilde{\mu}
\widetilde{H}^2} \simeq -\frac{\widetilde{a}_1}{\widetilde{\mu}},
\end{eqnarray}
which manifests itself as a negative constant. In Fig.~\ref{fg:1}b, we
show that the torsion EoS $w_T$
%and find that
  acts  as matter $w_m=0$ and radiation $w_r=1/3$
in the matter-dominant  ($\rho_m \gg \rho_r$) and
%, while it behaves like radiation $w_r=1/3$ in the
radiation-dominant  ($\rho_r \gg \rho_m$) stages, respectively,
which are interesting asymptotic behaviors. We also observe that in
the low redshift regime of $\log \, a \simeq 0$, $w_T$ is smaller
than unity, indicating the existence of a late-time acceleration
epoch.

\subsection{The Normal Case}\label{sec:normalcase}

The normal case corresponds to both the
kinetic energy and the matter density being positive, i.e, the parameters $a_0$, $a_1$
and $b$ are subject to the condition (\ref{E:condition}). It is also
convenient to rescale the parameters in the form
\begin{eqnarray}
\label{eq:norescaling} \widetilde{a}_0 &=& a_0/m^2b, \quad \widetilde{a}_1
=a_1/m^2b, \quad
\widetilde{t} =t \cdot m, \quad  \widetilde{\mu} =\widetilde{a}_0 + \widetilde{a}_1, \nonumber \\
\widetilde{H}^2& =&H^2/m^2, \quad \widetilde{\Phi} =\Phi/m, \quad
\widetilde{R}=R/m^2,
\end{eqnarray}
where $m^2=\rho_m^{(0)}/3 a_0$. Using the above rescaling
parameters, (\ref{E:main eq1}) -- (\ref{E:main eq3}) and
(\ref{E:rho_T}) are then rewritten as
\begin{eqnarray}\label{E:EOMrescaled}
\label{eq:no1}
&&\widetilde{H}{\widetilde{H}}^{\prime}= \frac{\widetilde{\mu}}{6\widetilde{a}_1} \widetilde{R} - \frac{\widetilde{a}_0}{2\widetilde{a}_1} a^{-3} -2\widetilde{H}^2, \\
\label{eq:no2}
&&\widetilde{H}{\widetilde{\Phi}}^{\prime}=\frac{\widetilde{a}_0}{2\widetilde{a}_1}\left( \widetilde{R} -3a^{-3} \right) -3\widetilde{H}\widetilde{\Phi} + \frac{1}{3}\widetilde{\Phi}^2,\\
\label{eq:no3}
&&\widetilde{H}{\widetilde{R}}^{\prime}= - \frac{2}{3} \left( \widetilde{R} + 6 \widetilde{\mu} \right) \widetilde{\Phi},\\
\label{eq:no4} &&\frac{1}{18}\left( \widetilde{R} + 6 \widetilde{\mu}
\right) \left( 3 \widetilde{H} - \widetilde{\Phi} \right) -
\frac{\widetilde{R}^2}{24} -3\widetilde{a}_1 \widetilde{H}^2 = 3 \widetilde{a}_0
\left( a^{-3}+\chi a^{-4} \right),
\end{eqnarray}
where we have used $\overset{em}{T}=3P_M-\rho_M= - \rho_m = - 3 a_0
m^2 a^{-3}$ due to $w_r = p_r / \rho_r = 1/3$ and $w_m = p_m /
\rho_m = 0$.
%, which leads to}.
 From $(\ref{E:w_T})$ and (\ref{eq:no1})--(\ref{eq:no4}),
we have
\begin{eqnarray}
\label{eq:no_w} w_T = \frac{1}{3} \,  \frac{\widetilde{\mu} \left(
\widetilde{R}- \bar{R}/m^2 \right)}{3\widetilde{\mu} \widetilde{H}^2 - \left(
\widetilde{R} + 6 \widetilde{\mu} \right) \left( 3 \widetilde{H} - \widetilde{\Phi}
\right)^2 / 18 + \widetilde{R}^2/24} +\frac{1}{3}.
\end{eqnarray}

To perform the numerical computations, we need to specify two
parameters: $\widetilde{a}_0$ and $\widetilde{a}_1$, along with two initial
conditions: $\widetilde{R}$ and $\widetilde{H}$. Thus, the initial condition
for $\widetilde{\Phi}$ is automatically determined by (\ref{eq:no4}).
The numerical results are shown in Fig.~\ref{fg:2}, where the
initial conditions at $z=0$ are set as $(\widetilde{a}_0, \widetilde{a}_1,
\widetilde{R}_0, \widetilde{H}_0)
 = (2, 1, 14, 2), (2, 1, 13, 2), (3, 1, 8, 2)$ for
solid, dot-dashed, and dashed lines, respectively. Note that
$\chi=3.07 \times 10^{-4}$ originates from the WMAP-5 data, and
$\widetilde{H}=2$ corresponds to $\Omega_m^{(0)} =\widetilde{H}^{-2}_0 =
0.25$.

\begin{center}
\begin{figure}[tbp]
\begin{tabular}{ll}
\begin{minipage}{80mm}
\begin{center}
\unitlength=1mm \resizebox{!}{6.5cm}{\includegraphics{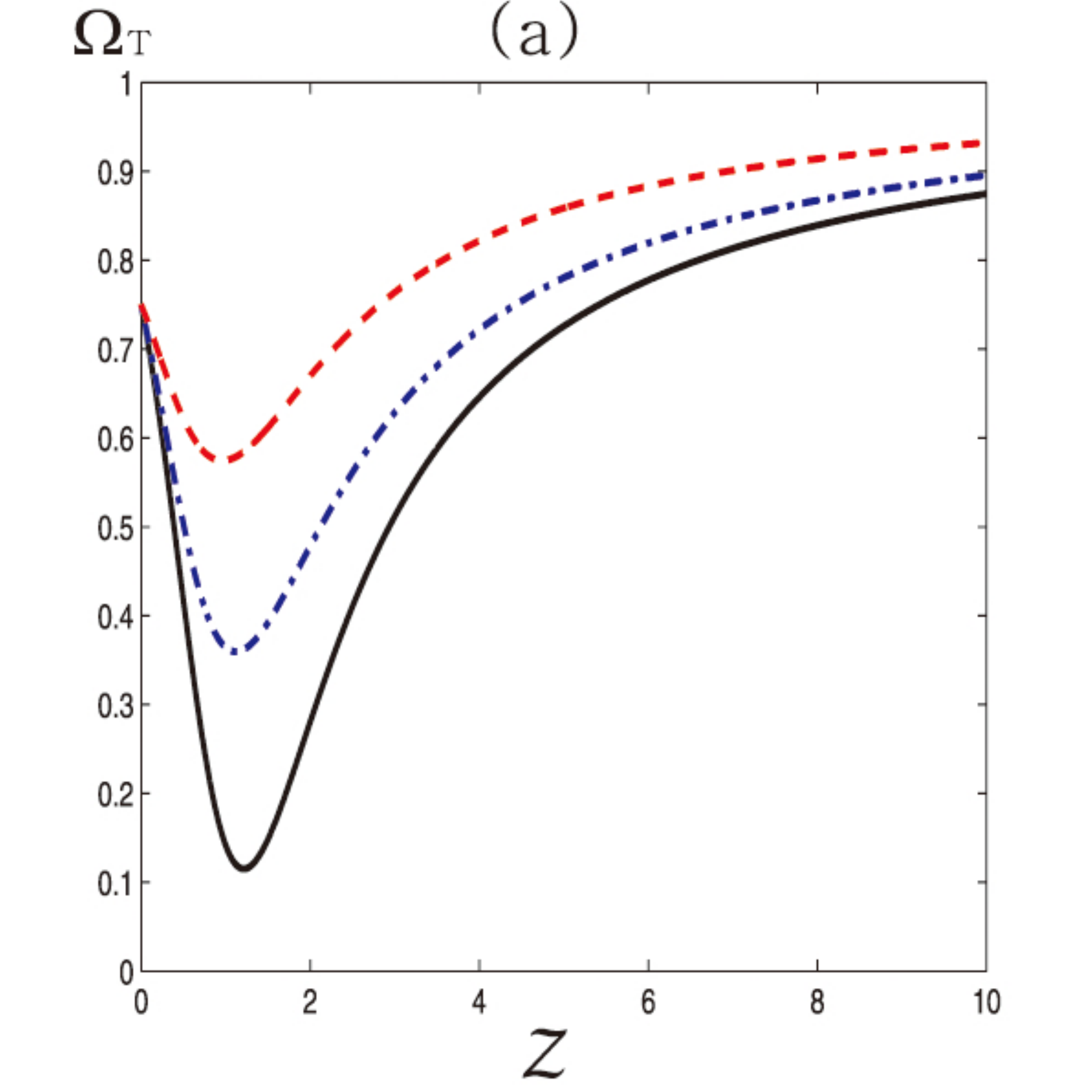}}
\end{center}
\end{minipage}
&
\begin{minipage}{80mm}
\begin{center}
\unitlength=1mm \resizebox{!}{6.5cm}{\includegraphics{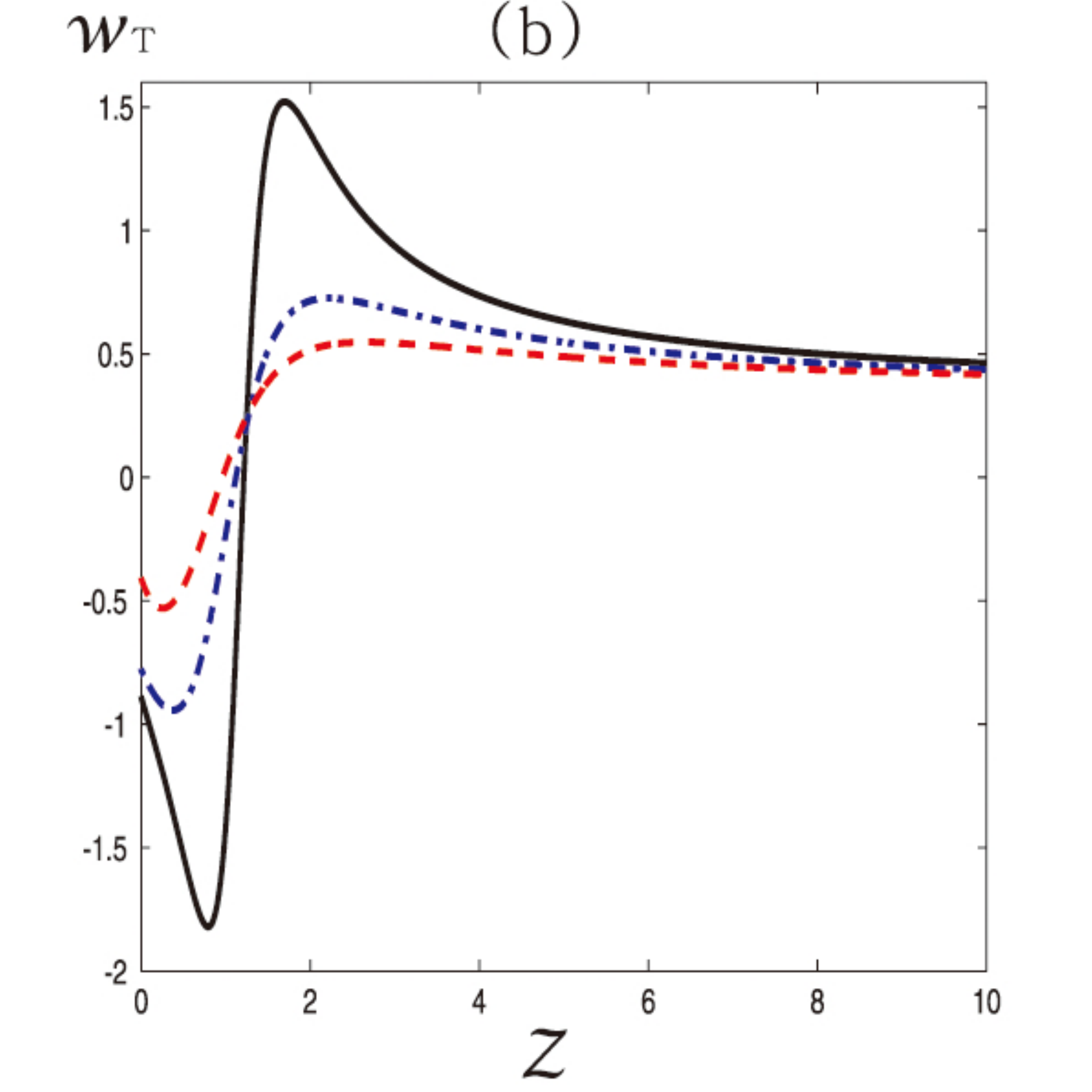}}
\end{center}
\end{minipage}\\[5mm]
\end{tabular}
\caption{Evolutions of (a) the energy density ratio $\Omega_T$ and
(b) the torsion EoS $w_T$ in the universe as functions of the
redshift $z$ with $\Omega_m^{(0)}=25\%$ and $\chi=3.07 \times
10^{-4}$, where the solid, dotted-dashed and dashed lines correspond
to
 $ (\widetilde{a}_0, \widetilde{a}_1, \widetilde{R}_0, \widetilde{H}_0) = (2, 1, 14, 2), (2, 1, 13, 2), (3, 1, 8, 2)$, respectively.}
\label{fg:2}
\end{figure}
\end{center}

In  Fig.~\ref{fg:2}a, we  show the evolution of the density ratio,
$\Omega_T = \rho_T / \rho_c$, as a function of the redshift $z$. The
figure demonstrates that the torsion density $\rho_T$ dominates the
universe in the high redshift regime ($z \gg 1$) with the general
parameter and initial condition selection, while the
matter-dominated regime is reached only within a very short time
interval. In Fig.~\ref{fg:2}b, we show that $w_T$ has an asymptotic
behavior at the high redshift regime, $i.e.$ $ w_{z \gg 0}
\rightarrow 1/3$. Moreover, in the low redshift regime, it may even
have a phantom crossing behavior, $i.e.$, the torsion EoS could
cross the phantom divide line of $w_T=-1$. As a result, the
scalar-torsion mode is able to account for the late-time
accelerating universe. We also notice that from the numerical
illustrations of the normal case one observes asymptotic behavior in
the high red shift regime. Compared to
Refs.~\cite{Shie:2008ms,Ao:2010mg} only oscillating behavior are
indicated. However, we find such asymptotic behavior in the normal case
is not only generic, but rather a property that can be proved
universally. Below we give a further account.

%--------------------------------------------------

\section{Asymptotic Behavior of High Redshift}
% in Scalar-Torsion Mode}

\subsection{Semi-analytical solution in high redshift}\label{sec:AnalyticalSolution}

In the following, we shall provide the semi-analytical solution for
the positive energy case of the scalar-torsion mode in the large scalar affine
curvature limit $R \gg 6\mu/b$ which is commonly achieved in the
high redshift regime $(a \ll 1)$. In such circumstances, we may write
the energy density of ordinary matter and torsion in series
expansions of $a(t)$  as
\begin{eqnarray}
\label{E:rho_exp}
%\begin{aligned}
\rho_M &=& \frac{\rho_m^{(0)}}{a^3} +  \frac{\rho_r^{(0)}}{a^4},
\nonumber\\
\frac{\rho_T}{\rho_m^{(0)}} &= &\sum_{k=-c}^{\infty} A_{-k} \,
a^k\,,
\end{eqnarray}
respectively. Before the analysis, we adopt the rescaling
(\ref{eq:norescaling}) and write (\ref{E:EOMrescaled}) as functions
of $\widetilde{t}$ as
\begin{eqnarray}
\label{E:main eq1-n/dim}
&&\frac{d \widetilde{H}}{d\widetilde{t}}  = \frac{\widetilde{\mu}}{6 \widetilde{a}_1}  \widetilde{R} - \frac{\widetilde{a}_0}{2\widetilde{a}_1 \, a^3} -2 \widetilde{H}^2,\\
\label{E:main eq2-n/dim} && \frac{d\widetilde{\Phi}}{d\widetilde{t} } =
\frac{\widetilde{a}_0}{2 \widetilde{a}_1} \left( \widetilde{R} - \frac{3}{a^3}
\right) - 3 \widetilde{H} \widetilde{\Phi} + \frac{1}{3} \widetilde{\Phi}^2,\\
\label{E:main eq3-n/dim} && \frac{d \widetilde{R}}{d\widetilde{t}} \simeq
-\frac{2}{3} \widetilde{R} \, \widetilde{\Phi}, \\
\label{E:main eq4-n/dim} &&\frac{\widetilde{R}}{18} \left( 3\widetilde{H} -
\widetilde{\Phi} \right) - \frac{\widetilde{R}^2}{24} - 3 \widetilde{a}_1
\widetilde{H}^2 = 3\widetilde{a}_0 \left( \frac{1}{a^3} + \frac{\chi}{a^4}
\right),
\end{eqnarray}
In (\ref{E:main eq3-n/dim}), we have taken the approximation of $R
\gg 6\mu/b$ for the high redshift regime. With the above rescaling,
we shall argue that the lowest order of $\rho_T$ does not exceed
$a^{-4}$ in the following discussion. We formulate the statement as
a theorem.

\begin{theorem}
In the high redshift regime $(a \ll 1)$, $\rho_T = O(a^{-4})$.
\end{theorem}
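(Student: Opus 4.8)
The plan is to reduce the claim to a bound on the Hubble rate and then extract a contradiction from the Friedmann constraint \eqref{E:main eq4-n/dim}. First I would rewrite the Friedmann equation \eqref{E:FriedmannEq} in the rescaling \eqref{eq:norescaling} as $\widetilde H^2 = a^{-3} + \chi a^{-4} + \rho_T/\rho_m^{(0)}$; since $a^{-3}$ and $\chi a^{-4}$ are both $O(a^{-4})$ as $a\to 0$, the assertion $\rho_T = O(a^{-4})$ is equivalent to $\widetilde H^2 = O(a^{-4})$, i.e. to $c\le 4$ for the integer $c$ in the assumed Laurent expansion $\rho_T/\rho_m^{(0)} = \sum_{k\ge -c} A_{-k}a^{k}$ of \eqref{E:rho_exp} (note that $\widetilde H^2$ is then itself a Laurent series, with a \emph{positive} leading coefficient since $\widetilde H^2>0$). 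I would argue by contradiction, assuming $c\ge 5$.

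The heart of the argument is an order-propagation through \eqref{E:main eq1-n/dim} and \eqref{E:main eq3-n/dim}. Writing $' = d/d\ln a$ and using $d/d\widetilde t = \widetilde H\,d/d\ln a$ together with $\widetilde H\widetilde H' = \tfrac12(\widetilde H^2)'$, equation \eqref{E:main eq1-n/dim} can be solved for the affine curvature: $\widetilde R = (6\widetilde a_1/\widetilde\mu)\left[\tfrac12(\widetilde H^2)' + 2\widetilde H^2\right] + (3\widetilde a_0/\widetilde\mu)a^{-3}$. If $\widetilde H^2 = \sum_{j\ge -c}B_j a^{j}$ with $B_{-c}\neq 0$, then $\tfrac12(\widetilde H^2)' + 2\widetilde H^2 = \sum_j \tfrac{j+4}{2}B_j a^{j}$, whose leading coefficient $\tfrac{4-c}{2}B_{-c}$ is nonzero \emph{precisely because} $c\ge 5$; hence $\widetilde R$ is a Laurent series of order exactly $a^{-c}$ with nonzero leading coefficient $D$, so $\widetilde R^2$ has order exactly $a^{-2c}$ with leading coefficient $D^2\neq 0$. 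Then \eqref{E:main eq3-n/dim} pins down the order of $\widetilde\Phi$: its left side $d\widetilde R/d\widetilde t = \widetilde H\widetilde R'$ has order $a^{-3c/2}$ while $\widetilde R$ has order $a^{-c}$, so $\widetilde H\widetilde R' \simeq -\tfrac23\widetilde R\widetilde\Phi$ forces $\widetilde\Phi$ to have order $a^{-c/2}$. (Here the large-curvature approximation $R\gg 6\mu/b$ built into \eqref{E:main eq3-n/dim} is legitimate and self-consistent, since we have just shown $\widetilde R\to\infty$.)

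Finally I would compare orders term by term in \eqref{E:main eq4-n/dim}: the terms $\widetilde R\widetilde H$, $\widetilde R\widetilde\Phi$, $\widetilde R^2$, $\widetilde a_1\widetilde H^2$ and the matter source $3\widetilde a_0(a^{-3}+\chi a^{-4})$ have orders $a^{-3c/2}$, $a^{-3c/2}$, $a^{-2c}$, $a^{-c}$, $a^{-4}$, and for $c\ge 5$ one has $2c > 3c/2 > c > 4$. Thus the $\widetilde R^2$ term strictly dominates every other term in \eqref{E:main eq4-n/dim}, and since its leading coefficient $-D^2/24$ is nonzero it cannot be cancelled, contradicting the equation. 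Hence $c\le 4$, equivalently $\widetilde H^2 = O(a^{-4})$, equivalently $\rho_T = O(a^{-4})$. The main obstacle — and the reason the order counting must be done with care — is the potential for accidental cancellation of leading Laurent coefficients; this is handled by keeping the explicit factors $\tfrac{4-c}{2}$ and $D^2$ visible, which are manifestly nonzero under the standing normal-case hypotheses (\eqref{E:condition} with $\widetilde a_0,\widetilde a_1>0$, so $\widetilde\mu\neq 0$) together with the contradiction hypothesis $c\ge 5$, so that the dominant curvature-squared term genuinely obstructs \eqref{E:main eq4-n/dim}.
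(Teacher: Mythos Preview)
Your argument is correct and reaches the same conclusion as the paper, but through a different equation. Both proofs start from the Laurent ansatz for $\widetilde H^2$, use \eqref{E:main eq1-n/dim} to extract the leading order of $\widetilde R$, and use \eqref{E:main eq3-n/dim} to extract the leading order of $\widetilde\Phi$. The paper then substitutes these expressions into the \emph{dynamical} equation \eqref{E:main eq2-n/dim} for $\dot{\widetilde\Phi}$ and compares leading Laurent coefficients, obtaining the explicit relation $(\widetilde a_1/\widetilde\mu)^2 c^2 r_{-c}^3\bigl[c^2+(5-\widetilde a_0/\widetilde\mu)c+4\bigr]=0$, whose bracket is manifestly positive for $c\ge 1$ under $0<\widetilde a_0/\widetilde\mu<1$, forcing $r_{-c}=0$. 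You instead substitute into the \emph{algebraic} constraint \eqref{E:main eq4-n/dim} and argue purely by order dominance: the $\widetilde R^2/24$ term sits alone at order $a^{-2c}$, strictly below every other term for $c\ge 5$, and its coefficient $-D^2/24$ cannot vanish. Your route is shorter and avoids computing any explicit polynomial, since the constraint contains the quadratic $\widetilde R^2$ term already isolated; the paper's route, by contrast, passes through the second-order dynamical equation and yields a sharper algebraic identity that is reused later in the paper when matching subleading coefficients (e.g.\ \eqref{eq:sol_pow3}--\eqref{eq:sol_a3}). One small remark: the paper goes on to argue that the $a^{-4}$ coefficient of $\rho_T$ is in fact nonzero (so the bound is sharp); your proof establishes only the upper bound $O(a^{-4})$, which is exactly what the theorem as stated asks.
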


\begin{proof}

First we expand
\begin{equation}\label{E:H^2}
\widetilde{H}^2(t) = \sum^{\infty}_{k=-c} r_k \, a^{k-4},  \qquad ( r_k
< \infty   )
\end{equation}
where $c$ is some integer, so that we have
\begin{equation}\label{E:dH}
\frac{d \widetilde{H}}{d\widetilde{t}} = \sum^{\infty}_{k=-c} \left(
\frac{k-4}{2} \right) r_k \, a^{k-4}.
\end{equation}
Using (\ref{E:main eq1-n/dim}), (\ref{E:main eq3-n/dim}),
(\ref{E:H^2}) and (\ref{E:dH}), we obtain
\begin{eqnarray}
\label{E:tilde_R} \widetilde{R} &=& \frac{3 \widetilde{a}_1 }{\widetilde{\mu}}
\left( \sum^{\infty}_{k=-c} k \cdot r_k \,
a^{k-4} \right) + \frac{3 \widetilde{a}_0 }{\widetilde{\mu} \, a^3},\\
\label{E:tilde_Phi} \widetilde{\Phi} &=& - \frac{3}{2} \widetilde{H}  \cdot
\frac{\widetilde{a}_1 \left( \sum^{\infty}_{k=-c} k(k-4) r_k \, a^{k-4}
\right) - \frac{3\widetilde{a}_0}{a^3} }{\widetilde{a}_1 \left(
\sum^{\infty}_{k=-c} k \, r_k \, a^{k-4} \right) +
\frac{\widetilde{a}_0}{a^3}}\,,
\end{eqnarray}
% where we have used  $\mathcal{T} = - \rho_m^{(0)}/a^3$ in
% (\ref{E:rho,trace}).

Substituting (\ref{E:H^2}), (\ref{E:dH}), (\ref{E:tilde_R}) and
(\ref{E:tilde_Phi}) into (\ref{E:main eq2-n/dim}), and  comparing
the lowest power (requiring $c > -1 $, otherwise losing its leading
position) of $a$ in the high redshift, $a \ll 1$, we derive the
following relation
\begin{equation}\label{E:c}
\left( \frac{\widetilde{a}_1}{\widetilde{\mu}} \right)^2 c^2 \cdot r_{-c}^3
\left[ c^2 + (5-\frac{\widetilde{a}_0}{\widetilde{\mu}}) c + 4  \right] =0,
\end{equation}
which leads to $r_{-c}=0$ if  $c \geq 1$
as $0 <\widetilde{a}_0/\widetilde{\mu} < 1$ and $c^2
+(5-\frac{\widetilde{a}_0}{\widetilde{\mu}}) c + 4 \neq 0$.
This is equivalent to saying that (\ref{E:H^2}) has the form
\begin{equation}
\widetilde{H}^2 = \frac{r_0}{a^4} + \frac{r_1}{a^3} + \frac{r_2}{a^2} +
\frac{r_3}{a} + r_4 + \cdots
\end{equation}

Finally, we achieve our claim from (\ref{E:Friedmann s-t mode}) that
\begin{eqnarray}
\label{E:rho_T} \frac{\rho_T}{\rho_m^{(0)}} &= &- \left(
\frac{\chi}{a^4} + \frac{1}{a^3} \right) + \widetilde{H}^2
\nonumber\\
      & &= -\left( \frac{\chi}{a^4} + \frac{1}{a^3} \right) + \left( \frac{r_0}{a^4} + \frac{r_1}{a^3} + \frac{r_2}{a^2} + \frac{r_3}{a} + r_4 + \cdots
       \right) = O\left(\frac{1}{a^4}\right).
\end{eqnarray}
Note that the last equality follows since $r_0\neq \chi$, which will
be explained later.
\end{proof}

We now write the expansion in (\ref{E:rho_T}), by the theorem above,
simply as
\begin{equation}\label{eq:rho_T}
\frac{\rho_T}{\rho_m^{(0)}} = \sum_{k=-4}^{\infty} A_{-k} \, a^k\,.
\end{equation}
We shall only take first few dominating terms for a sufficient
demonstration. By the procedure in the proof of the theorem, we can
as well compare terms of various orders to yield the following
relations:
\begin{eqnarray}
\label{eq:sol_pow3}
O(a^{-10})&:& 3 \left( A_4 + \chi \right) \left(1+ \frac{\widetilde{a}_1}{\widetilde{\mu}} \, A_3  \right)^2 =0, \\
\label{eq:sol_pow4}
O(a^{-9})&:& 2 \left( 1+ \frac{\widetilde{a}_1}{\widetilde{\mu}} A_3 \right) \left[ \frac{\widetilde{a}_0}{\widetilde{\mu}} \left( 1+ \frac{\widetilde{a}_1}{\widetilde{\mu}} A_{3} \right) A_{3} + \frac{4 \widetilde{a}_1}{\widetilde{\mu}} \left( A_4 + \chi \right) A_{2} \right] =0,\\
\label{eq:sol_pow5} O(a^{-8})&:&  \left( 1+
\frac{\widetilde{a}_1}{\widetilde{\mu}} A_3 \right)  \left[ 4
\frac{\widetilde{a}_0}{\widetilde{\mu}} \left( 1+ 3
\frac{\widetilde{a}_1}{\widetilde{\mu}} A_3 \right) A_2 \right. \nonumber\\
&&\left.- \left( 3 A_2 +\frac{\widetilde{a}_1}{\widetilde{\mu}} \left( A_2
\left( 2 + 5 A_3 \right) -18 A_1 \left( A_4+ \chi \right) \right)
\right) \right] = 0.
\end{eqnarray}
 From  (\ref{eq:sol_pow3}), (\ref{eq:sol_pow4}) and
(\ref{eq:sol_pow5}), one concludes a relation,
\begin{equation}\label{eq:sol_a3}
A_3 = -\frac{\widetilde{\mu}}{\widetilde{a}_1} = - \frac{\left( \widetilde{a}_0
+ \widetilde{a}_1 \right)}{ \widetilde{a}_1} <-1,
\end{equation}
with $A_1,A_2$ and $A_4$ left as arbitrary constants to be
determined by initial conditions and (\ref{E:main eq4-n/dim}).
%{\color{red}
 Note that (\ref{eq:sol_a3}) implies $r_1= -\widetilde{a}_0/ \widetilde{a}_1 < 0$
 in (\ref{E:H^2}). However, due to the
observational data that  $a=1$ at the current stage, the radiation
density is much smaller than the dust density ($\chi \ll 1$),
whereas the
%dark energy
torsion density is the same order as the dust density, as seen from
(\ref{E:rho_T}),
\begin{equation}
\frac{\rho^{(0)}_T}{\rho^{(0)}_m} = \left[ (r_0 - \chi) + r_2 +
\cdots \right] - ( 1+ |r_1|) \simeq O(1)\,.
\end{equation}
Subsequently, we have that $ [(r_0 - \chi) + r_2 + \cdots] \leq
\max\{ O(1), O(|r_1|) \}$, along with the assumption $r_k < \infty $
for each $k$. As a result, we conclude that $r_k$, for all $k \neq
1$, should not be too large, which forbids the possibility $r_0 =
\chi$. This argument shows the validity of the last equality in
(\ref{E:rho_T}) with the non-vanishing $O(1/a^4)$ coefficient.

From (\ref{E:w_T}), via the continuity equation~\cite{Tseng:2012hz},
we obtain
\begin{eqnarray}
\label{eq:sol_wt} w_T=-1-\frac{\rho^{\prime}_T}{3\rho_T} \simeq -1 +
\frac{1}{3}  \left( \frac{4 A_4 a^{-4} + 3 A_3 a^{-3}}{A_4 a^{-4} +
A_3 a^{-3}} \right) \simeq \frac{1}{3}\left( 1 - \frac{A_3}{A_4}a
\right),
\end{eqnarray}
where the prime ``$\prime$'' stands for  $d/d\ln a$ and we have used
(\ref{eq:rho_T}) for $a \ll 1$.

\subsection{Numerical computations}\label{sec:Numerical}

In this subsection, we perform numerical computations to support the
analysis above. As an illustration, we take the parameters
$\widetilde{a}_0=2$ and $\widetilde{a}_1=1$ and initial conditions
\[
\widetilde{H}(z=0)=\widetilde{H}_0=2, \qquad \widetilde{R}(z=0) =
\widetilde{R}_0=14,
\]
and show the evolutions of $w_T$, $\widetilde{\Phi}$, and $\widetilde{R}$ in
Figs.~\ref{fg:1}, \ref{fg:2}a and \ref{fg:2}b, respectively.
%{\color{red}
%Therefore (\ref{E:w_T}) is naturally deduced. Similar definition is also seen in \cite{DE}.
%%%%%%%%%%%%%%%%
\begin{center}
\begin{figure}[tbp]
\begin{tabular}{ll}
\begin{minipage}{80mm}
\begin{center}
\unitlength=1mm \resizebox{!}{6.5cm}{\includegraphics{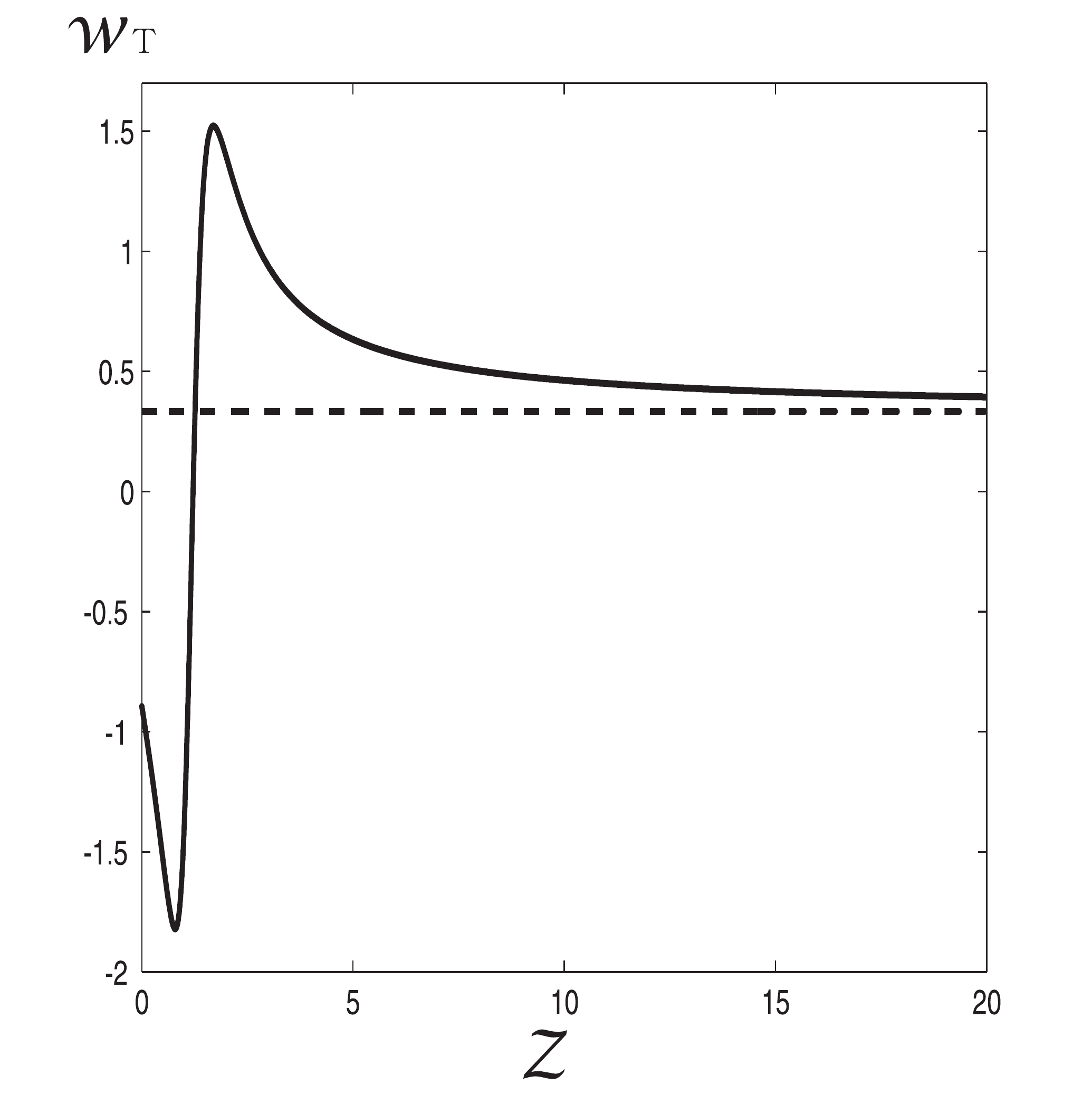}}
\end{center}
\end{minipage}
&
\begin{minipage}{80mm}
\begin{center}
\unitlength=1mm \resizebox{!}{6.5cm}{\includegraphics{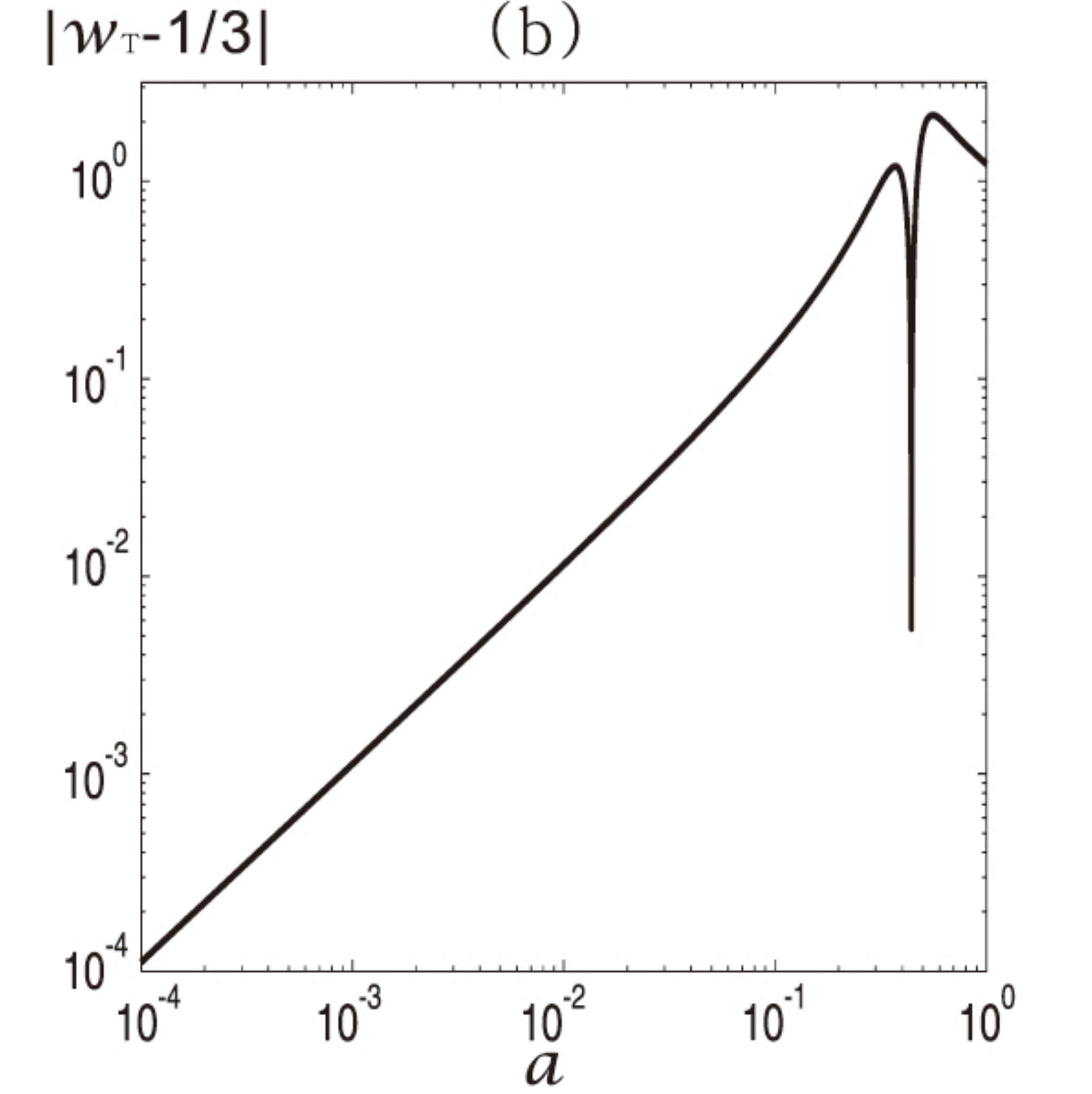}}
\end{center}
\end{minipage}\\[5mm]
\end{tabular}
\caption{Evolutions of (a)  $w_T$ and (b) $\lvert w_T-1/3 \rvert$ as
function of the redshift $z$  and the scale parameter $a$,
respectively, where the parameters and initial conditions are chosen
as $\widetilde{a}_0=2$, $\widetilde{a}_1=1$, $\widetilde{H}_0=2$,
$\widetilde{R}_0=14$ and  $\chi = \rho^{(0)}_r /\rho^{(0)}_m = 3.1
\times 10^{-4}$.} \label{fg:1}
\end{figure}
\end{center}
%%%%%%%%%%%%%%%%%%

In Fig.~\ref{fg:1}a, we demonstrate the EoS of  torsion
%dark energy
 as a function of the redshift $z$. As seen from the figure,
in the high redshift regime $w_T$ approaches $1/3$, which indeed
shows an asymptotic behavior.
Fig.~\ref{fg:1}b indicates that $\lvert w_T-1/3 \rvert$ approximates
a straight line in the scale factor $a$ in the $\log$-scaled
coordinate since the slope in the $\log$-scaled coordinates is
nearly $1$. The singularity in the interval $[0.1,1]$ corresponds to
the crossing $1/3$ of $w_T$. Thus, the numerical results concur with
our semi-analytical approximation in (\ref{eq:sol_wt}). In
Fig.~\ref{fg:2}, we observe that the behaviors of $\widetilde{R}$ and
$\widetilde{\Phi} \propto 1/a^2$ in the high redshift regime are
consistent with the results in
%(\ref{E:H^2}),
(\ref{E:tilde_R}) and (\ref{E:tilde_Phi}), given by
\begin{eqnarray}
%\label{eq:sol_H1}
%&&\widetilde{H}^2 \simeq \left(\chi + A_4\right) a^{-4}, \\
\label{eq:sol_R1}
&&\widetilde{R} \simeq  \frac{2 \widetilde{a}_1}{\widetilde{\mu}} \, A_2 \, a^{-2}, \\
\label{eq:sol_phi1} &&\widetilde{\Phi} \simeq 3\widetilde{H} \propto a^{-2},
\end{eqnarray}
respectively, where $\widetilde{H}^2 \simeq \left(\chi + A_4\right)
a^{-4}$ from (\ref{E:H^2}). Note that from (\ref{eq:sol_R1}), the
behavior of the affine curvature $\widetilde{R}$ is highly different
from that of the
 Riemannian scalar  curvature $\bar{R} = -
\mathcal{T}/a_0 = \rho_m/a_0$, which is proportional to $1/a^3$ in
both the matter (dust) and radiation dominated eras.
%%%%%%%%%%%%%%%%%%%%%%%%%%%%
\begin{center}
\begin{figure}[tbp]
\begin{tabular}{ll}
\begin{minipage}{80mm}
\begin{center}
\unitlength=1mm \resizebox{!}{6.5cm}{\includegraphics{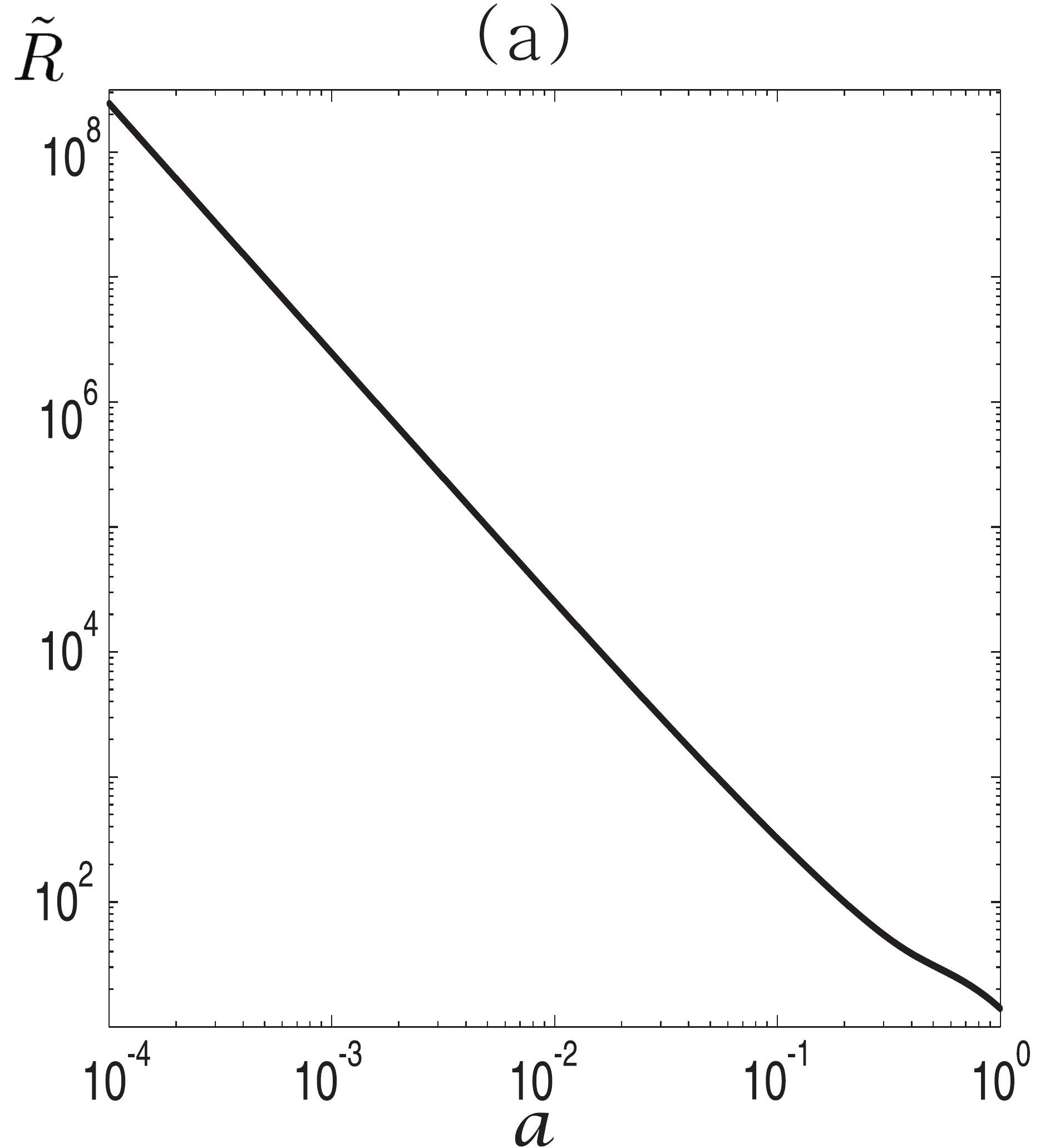}}
\end{center}
\end{minipage}
&
\begin{minipage}{80mm}
\begin{center}
\unitlength=1mm \resizebox{!}{6.5cm}{\includegraphics{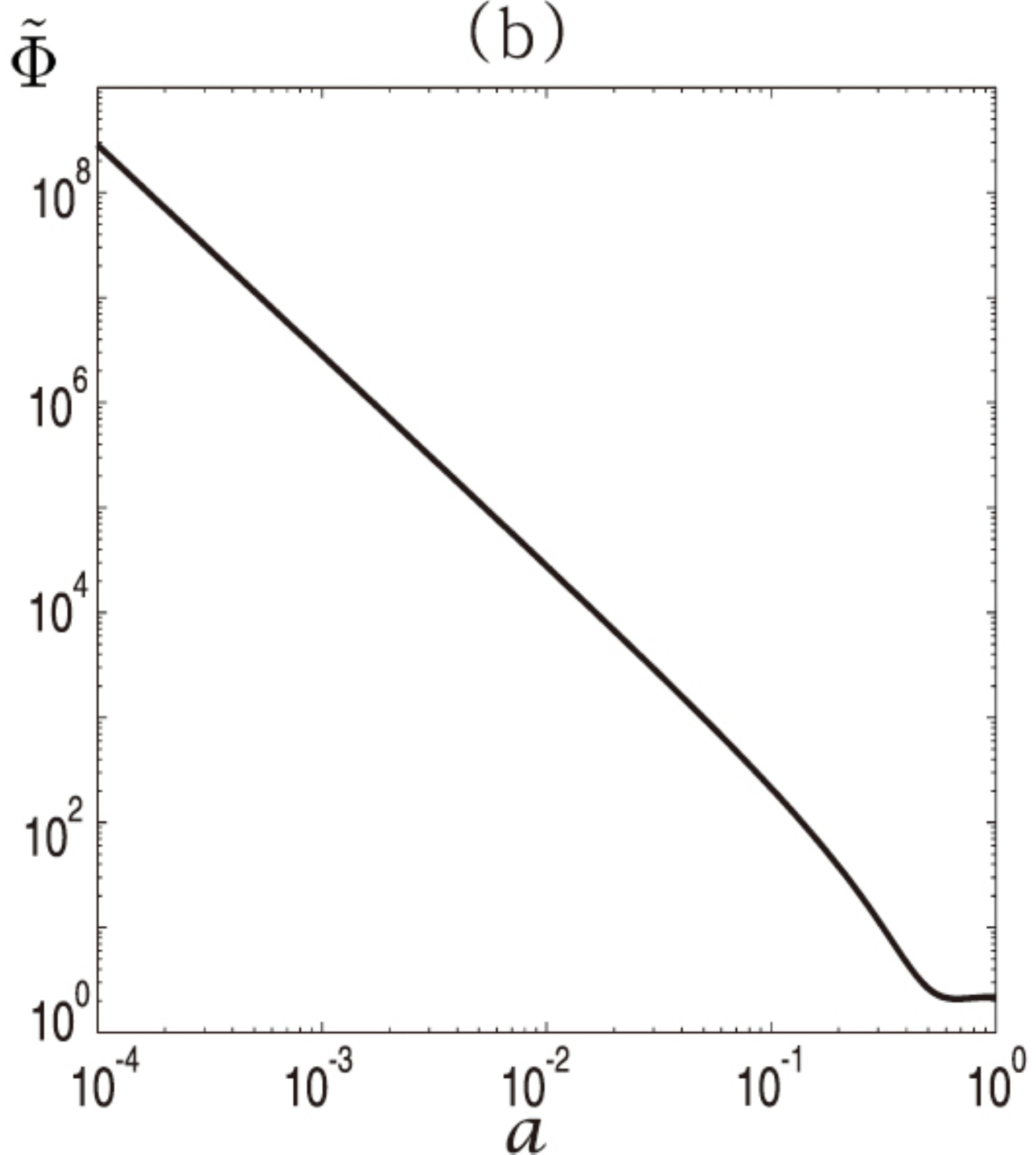}}
\end{center}
\end{minipage}\\[5mm]
\end{tabular}
\caption{ Evolutions of (a) the rescaled affine curvature
$\widetilde{R}$ and (b) the torsion $\widetilde{\Phi}$  as functions of the
scale parameter $a$ in the $\log$ scale with the parameters
%$\widetilde{a}_0=2$ and $\widetilde{a}_1=1$
and initial conditions taken to be the same as Fig.~\ref{fg:1}.}
%$\widetilde{H}_0=2$ and
%$\widetilde{R}_0=14$
%{\color{red} and radiation to dust ratio $\chi = 3.1 \times 10^{-4}$}.}
\label{fg:2}
\end{figure}
\end{center}

%--------------------------------------------------
\chapter{Teleparallel Gravity} \label{TEGR}

\begin{chapquote}{\textit{Friedrich W. Hehl}}
\ldots \textit{Torsion is not
just a tensor, but rather a very specific
tensor that is intrinsically related to the
translation group, as was shown by \'{E}lie Cartan in 1923-24.}
\end{chapquote}

\section{A Degenerate Theory of PGT}

From Chapter 2, we have seen that from the affine frame bundle theory where PGT resides, the gauge group $G= \mathbb{R}^{1,3} \rtimes
SO(1,3)$ results in the decomposition of the affine connection
$\g^*\widetilde{\omega} = \omega + \varphi $ according to the Lie algebra
$\mathfrak{g} = \mathbb{R}^{1,3} \oplus \mathfrak{so}(1,3)$, and from (\ref{E:AC curvature}) we see
\begin{equation}\label{E:R+T}
D^{\omega}\g^*\widetilde{\omega} = \underbrace{D^{\omega}\omega
}_{\mathfrak{so}(1,3)}+
\underbrace{D^{\omega}\varphi}_{\mathbb{R}^{1,3}}
\end{equation}
It is then natural to ponder the question whether there could be some
connection $\omega$ such that either $D^{\omega} \omega = 0$ or
$D^{\omega} \varphi = 0$ which takes only a 1-sided Lie-algebra
value.

This na\"{\i}ve question leads to two different gravitational
theories. Of course the case $D^{\omega} \varphi = T^{\omega}
= 0$ (vanishing torsion) is known as Riemannian geometry of GR; the other
one $D^{\omega} \omega = \Omega^{\omega} = 0$ (vanishing
curvature), is called the \textbf{teleparallel geometry} (also known as the \textbf{absolute parallelism}). It is also clear from the
Lie algebra (\ref{E:R+T}) (only the tail of the Lie-algebra value
$\mathbb{R}^{1,3}$ is left) to see the dubbed name
translational gauge theory. Thus GR and teleparallel gravity can be considered as two extreme yet complementary degenerate cases. For
clarity, we summarize as follows:
\begin{definition}{(Teleparallel geometry: bundle, vanishing curvature)}\label{Def:TG1}

On the orthonormal frame bundle $(F(M),\pi , M, SO(1,3))$, if there
exists a connection 1-form $\omega \in \Lambda^1(F(M),
\mathfrak{so}(1,3)) $ such that $D^\omega \omega \equiv 0$, then we
call the tuple $(M,g,\nabla)$ \textbf{teleparallel geometry}, where
the connection $\nabla$ on $M$ is induced by $D^{\omega}$ on $F(M)$.

\end{definition}

Gravitational theorists, who do not necessarily have to know bundle
theories, usually take an alternative version on the (projected)
base manifold.
\begin{definition}{(Teleparallel geometry: base manifold, Weitzenb\"{o}ck connection)}\label{Def:TG2}

Given a semi-Riemannian spacetime $(M,g)$ with a set of (local, not
necessarily) frame $ \{ e_\a \in TM \}$, $\a= 0,1,2,3$, and a
connection $\nabla^W$ on $M$ satisfying
\begin{equation}\label{E:Weitzenbock connection}
\nabla^W_{e_\a} e_\b = 0, \qquad \mbox{(for all $\a$, $\b$)}
\end{equation}
the tuple $(M,g,\nabla^W, \{e_\a\} )$ is defined as teleparallel geometry with respect to $\{ e_\a \}$, also called a \textbf{Weitzenb\"{o}ck spacetime} along with the connection $\nabla^W$ called the \textbf{Weitzenb\"{o}ck connection}
\end{definition}

The definition of a Weitzenb\"{o}ck connection automatically indicates all the
connection 1-forms vanish, $\omega_\a{}^{\b} := \vt^\b(\nabla^W e_\a)
\equiv 0$, and hence the curvature vanishes
\begin{equation}\label{E:zero R}
\Omega^{\mu}{}_{\nu} = d\omega_\m{}^\n + \omega_\n{}^\g \wedge
\omega_\g{}^\m  \equiv 0 ,\quad \Leftrightarrow \quad
R^{\mu}{}_{\nu\a\b} \equiv 0.
\end{equation}

We emphasize that two facts that are neglected sometimes:

\begin{remark}

\begin{enumerate}
\item
The curvedness or flatness is \emph{not} an absolute property of a spcetime
manifold, rather it changes with the connection assigned.

\item
Initially the flatness of teleparallel gravity is with respect to one
chosen frame, say $e_\a$, but since $R^{\mu}{}_{\nu\a\b} \equiv 0$ is a tensor equation, we have $R(X,Y)Z
= 0 $ for all $X,Y,Z \in TM $. In particular, if we choose
another frame $\widetilde{e}_\b $ for computation, we still have
curvature zero. Hence the parallelism is well-defined and independent of
any chosen frame as long as there exists one frame $e_\a$ such that
(\ref{E:Weitzenbock connection}) holds.

\end{enumerate}
\end{remark}

However, one natural question may arise: whether the two definitions
are equivalent or not? A closer look of the question reveals its
essence:
\begin{equation}\label{E:equivalence}
\begin{aligned}
D \omega_{\a}{}^\b := d\omega_\a{}^\b + \omega_\g{}^\b \wedge
\omega_\a{}^\g \equiv 0 \quad &\overset{?}{\Leftrightarrow} \quad
\omega_\a{}^{\b} \equiv 0\\
\mbox{(vanishing curvature, Def \ref{Def:TG1})} \quad
&\overset{?}{\Leftrightarrow} \quad \mbox{(Weitzenb\"{o}ck
connection, Def \ref{Def:TG2})}
\end{aligned}
\end{equation}

Obviously, the direction ($\Leftarrow$) is simply by (\ref{E:zero
R}). The other direction requires some effort to see and involves partial
differential equations in general. First we
consider that there are two frames $\{ \widetilde{e}_\a \}$ and
$\{e_\a \}$, where the latter satisfies the Weitzenb\"{o}ck
condition (\ref{E:Weitzenbock connection}). Since there exists a
linear isomorphism $ A_\a^\b (x) $ for all $x\in M$ such that
$\widetilde{e}_\a (x) = A_\a^\b (x) \, e_\b (x)$, then we may
calculate the Weitzenb\"{o}ck connection 1-form expressed in the
frame $\{ \widetilde{e}_\a \}$. Let $\nabla^W \widetilde{e}_\b :=
\widetilde{\o}_\b^\a \, \widetilde{e}_\a $, then we have
$\widetilde{\o}_\b^\a (X) = \widetilde{\vt}^\a \left( \nabla^W_X
\widetilde{e}_\b \right)$ where $\widetilde{\vt}^\a $ is the dual
basis of $\widetilde{e}_\a$, and thus
\begin{equation}
\widetilde{\o}_\b^\a (\widetilde{e}_{\mu})  = \widetilde{\vt}^\a
\left( \nabla^W_{\widetilde{e}_{\mu}} \widetilde{e}_\b  \right) =
\widetilde{e}_{\mu} (A_\b^{\nu}) (A^{-1})_{\nu}^\a, \quad
\Leftrightarrow \quad \widetilde{\o}_\b^\a = (dA_\b^{\nu}) \cdot(
A^{-1})_{\nu}^\a \neq 0.
\end{equation}
We see that the Weitzenb\"{o}ck connection 1-form
$\widetilde{\o}_\b^\a$ in the frame $\{\widetilde{e}_\a \}$ is in
general nonzero although $\omega_\b{}^\a \equiv 0$ in the frame $\{
e_\a \}$. The computation of changing frames helps answering the
question (\ref{E:equivalence}). Since if we start from any frame
$\{e_\a\} $ in Def. (\ref{Def:TG1}), to acquire the Weitzenb\"{o}ck
condition defined by (\ref{E:Weitzenbock connection}) we need to
solve $A_{\a}^{\b}(x)$ from the coupled partial differential
equations, namely
\begin{equation}
D \widetilde{\o}_{\a}{}^\b = d\widetilde{\o}_\a{}^\b +
\widetilde{\o}_\g{}^\b \wedge \widetilde{\o}_\a{}^\g \equiv 0
\end{equation}
then the existence of $A_{\a}^{\b}(x)$ at all $x\in M$ guarantees
the equivalence of the other direction.

\subsection{Another construction from PGT}

There is another construction that leads to the same teleparallel gravity
directly from PGT. This intuitive derivation can be found in
\cite{Hehl:Erice1979}. To require vanishing curvature
$\Omega^\b{}_\a \equiv 0$ for a gravitational theory, one can put
manually a Lagrangian multiplier two-form $\Lambda_{\a\b} \in
\Lambda^2(M)$ to attain the desired constraint from the Lagrangian such that
\begin{equation}\label{E:multiplier Lagrangian}
L_{\text{tot}} = L_{\text{G}} + L_{\text{mat}} - \frac{1}{2\varrho}
\Lambda_{\a\b} \wedge \Omega^{\a\b}.
\end{equation}
which leads to the PGT field equations (\ref{E:PGT EOM2}),
\begin{equation}\label{E:Teparallel EOM}
\begin{aligned}
D H_\a - e_\a \lr  L_{\text{G}} -  (e_\a \lr T^\b ) \wedge
H_\b &= \mathcal{T}_\a \\
\frac{1}{2\varrho} D \Lambda_{\a\b} + \vt_{[\a} \wedge H_{\b]} &=
\mathcal{S}_{\a\b}
\end{aligned}
\end{equation}
where one observes that $(\ref{E:Teparallel EOM})_2$ helps solve
the Lagrangian multiplier $\Lambda_{\a\b}$ while $(\ref{E:Teparallel
EOM})_1$ is free of $\Lambda_{\a\b}$ and thus defines the field equation
for teleparallel gravity, see \cite{B&H}.

So far in this section we introduced three approaches that define teleparallel geometry. However, we have not yet specified a
geometric Lagrangian to indicate the evolution of
Weitzenb\"{o}ck spacetime.  Below we introduce an interesting
Lagrangian equivalent to GR.

%------------------------------------------------------------------------

\section{The Teleparallel Equivalent to General Relativity (TEGR)}

The teleparallel equivalent to general relativity (TEGR, or
$\text{GR}_{||}$) is a special type of teleparallel gravity that
manifests equivalence to GR in spacetime and matter evolutions. First
we give a formal definition.
\begin{definition}{(TEGR)}\label{Def:TEGR}

\textbf{TEGR} is defined by a Weitzenb\"{o}ck spacetime
$(M,g,\nabla^W, \{e_\a\} )$ with the gravitational Lagrangian
(4-form)
\begin{equation}\label{E:teleparallel Lagrangian 1}
L_{\text{GR}_{||}} = - \frac{1}{2\k} T^\a \wedge \star \left( -
{}^{(1)}T_\a + 2 {}^{(2)}T_\a + \frac{1}{2} {}^{(3)}T_\a \right)
\end{equation}
and vanishing spin source of matter, $\mathcal{S}_{\a\b} \equiv 0$. Here ${}^{(I)}T_\a$ are the torsion irreducible pieces defined in
(\ref{E:torsion decomp}).

\end{definition}
With this special TEGR Lagrangian (\ref{E:teleparallel Lagrangian
1}), the translational and rotational excitations (\ref{E:excitations}) and gravitational currents (\ref{E:gravitational energy-momentum}) can be computed as
\begin{equation}\label{E:TEGR excitation}
\begin{aligned}
H_\a &= \frac{1}{\k} \star \left( - {}^{(1)}T_\a + 2 {}^{(2)}T_\a +
\frac{1}{2} {}^{(3)}T_\a \right), \quad H_{\a\b} = - \frac{\p \mathcal{L}_{G}}{\p \O^{\a\b}} =0,\\
t_\a &= i_{e_\a}(\mathcal{L}_{G}) + (i_{e_\a}( T^\b)) \wedge H_\b , \qquad  \qquad s_{\a\b} =\frac{\p \mathcal{L}_{G}}{\p \o^{\a\b}} =0,
\end{aligned}
\end{equation}
As a consequence, only field equation $(\ref{E:PGT EOM2})_1$ survives, given by (\ref{E:Teparallel EOM})
\begin{equation}\label{E:Teparallel EOM2}
dH_\a - t_\a = \k \mathcal{T}_\a
\end{equation}
where the second field equation degenerates much as in the Einstein-Cartan theory. In fact, the equivalence of TEGR to GR is reflected by two facts: the first one is given by as follows.
\begin{theorem}
The motion of matter (assumed spinless) in TEGR is
equivalent to GR (the proof demonstrated here is provided by Hehl).
\end{theorem}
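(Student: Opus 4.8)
The plan is to compare the equation of motion of a spinless test particle (or, more generally, a spinless matter field) in a Weitzenböck spacetime $(M,g,\nabla^W,\{e_\a\})$ against its counterpart in the Riemannian spacetime $(M,g,\widetilde{\nabla})$ built on the same metric $g$, and to show they coincide. The geometric input I would use is the decomposition $\nabla^W = \widetilde{\nabla} - K$, where $K_\a{}^\b$ is the contortion 1-form introduced in Section~\ref{Sec:Computational Aspects}, satisfying $K_{(\a\b)} \equiv 0$ and $T^\a = K^\a{}_\b \wedge \vt^\b$. The antisymmetry $K_{\a\b} = -K_{\b\a}$ is the crucial structural fact: it means the difference between the two connections is an object that drops out of the symmetric combinations governing geodesic-type motion.

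First I would write down the matter field equation from $(\ref{E:PGT matter EOM})$, $\frac{\partial \mathcal{L}_M}{\partial \phi} - D(\frac{\partial \mathcal{L}_M}{\partial (D\phi)}) = 0$, noting that for spinless matter the exterior covariant derivative $D$ acting on scalars (or on forms built only from the tetrad) reduces, because the $SO(1,3)$ connection enters $D$ only through the representation $\rho_*$ on internal indices, which is trivial for a scalar. Hence $D\phi = d\phi$ for a spinless scalar, and the matter Lagrangian $\mathcal{L}_M(\vt^\a,\phi,d\phi)$ in TEGR does not see the Weitzenböck connection at all — it depends on the geometry only through the coframe $\vt^\a$, equivalently through $g$. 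Since GR's matter Lagrangian depends on exactly the same data $(g,\phi,d\phi)$, the Euler–Lagrange equations are literally identical. For the point-particle version, I would instead start from the autoparallel/geodesic equation of $\nabla^W$ and show that, after using $\nabla^W = \widetilde{\nabla} - K$ and contracting with the velocity $u^\a u^\b$, the contortion contribution $K_{\a\b}u^\a u^\b$ vanishes by antisymmetry, so the $\nabla^W$-autoparallel of a particle coincides with the $\widetilde{\nabla}$-geodesic; alternatively, one observes that the variational principle $\delta \int \sqrt{-g(u,u)}\,d\lambda = 0$ produces the Levi-Civita geodesic regardless of which connection is nominally present, because the length functional only involves $g$.

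The main obstacle I anticipate is being careful about what ``motion of matter'' means and making the argument watertight for a genuine field rather than a test particle: one must verify that no hidden coupling to $\nabla^W$ (e.g.\ through an ordering of covariant derivatives, or through the volume form, or through the way $D$ acts on a $p$-form matter field carrying internal $SO(1,3)$ indices) reintroduces the contortion in a way that fails to cancel. The clean resolution is the split $D^{\nabla^W} = D^{\widetilde{\nabla}} - (K\,\dot\wedge\,\cdot)$ together with the observation that in TEGR the \emph{matter} part of the action is postulated to be minimally coupled through the Levi-Civita-compatible coframe only (this is precisely the content of ``spinless matter'' plus $\mathcal{S}_{\a\b}\equiv 0$), so the contortion never enters $\mathcal{L}_M$; any apparent $\nabla^W$-dependence is an exact-differential/boundary ambiguity that does not affect the field equations, exactly as in the remark on local Lorentz violation in the Introduction. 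I would close by noting that this establishes the first of the two facts; the second (equivalence of the gravitational field equations, i.e.\ that $dH_\a - t_\a = \k\mathcal{T}_\a$ reproduces the Einstein equation) is handled separately.
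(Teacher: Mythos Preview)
Your route differs from the paper's. Hehl's argument does not inspect the matter field equation or the particle action; it works directly with the energy--momentum conservation law in Riemann--Cartan geometry,
\[
D\mathcal{T}_\a \;=\; (e_\a\lr T^\b)\wedge\mathcal{T}_\b + (e_\a\lr\Omega^{\b\g})\wedge\mathcal{S}_{\b\g}\,,
\]
sets $\Omega^{\b\g}=0$ (Weitzenb\"ock) and $\mathcal{S}_{\b\g}=0$ (spinless), and rewrites the left side via $D\mathcal{T}_\a=\widetilde{D}\mathcal{T}_\a+K_\a{}^\b\wedge\mathcal{T}_\b$. Using $T^\b=K^\b{}_\g\wedge\vt^\g$ on the right, two contortion terms cancel by $K_{\a\b}=-K_{\b\a}$, leaving $\widetilde{D}\mathcal{T}_\a=(e_\a\lr K^{\b\g})\wedge\vt_{[\g}\wedge\mathcal{T}_{\b]}$. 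The Noether identity $\vt_{[\g}\wedge\mathcal{T}_{\b]}=0$ (symmetry of the canonical energy--momentum, itself a consequence of vanishing spin current) then gives the GR conservation law $\widetilde{D}\mathcal{T}_\a=0$. That conservation law is what ``motion of matter'' means here.

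Your Lagrangian argument is a legitimate alternative and arguably more conceptual: $\mathcal{S}_{\a\b}=\d\mathcal{L}_M/\d\o^{\a\b}=0$ says precisely that $\mathcal{L}_M$ does not see the connection, hence depends only on $(\vt^\a,\phi,d\phi)$ and thus on $g$ alone, so the matter dynamics coincide with GR. The paper's route instead produces a direct statement about $\mathcal{T}_\a$ and exhibits exactly where each hypothesis (zero curvature, zero spin, metricity of $K$) is consumed.

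One genuine error: your autoparallel argument fails. The $\nabla^W$-autoparallel differs from the Levi--Civita geodesic by a term of the form $K_{\g\a}{}^\b u^\g u^\a$; the metricity antisymmetry $K_{(\a\b)}=0$ lives on the pair $\a\b$, \emph{not} on the pair $\g\a$ being symmetrized by $u^\g u^\a$, so this term does not vanish. Weitzenb\"ock autoparallels are genuinely different curves from Riemannian geodesics --- this is a standard cautionary point in teleparallel gravity. Your fallback, that the arc-length functional depends only on $g$, is the correct point-particle argument and should replace the autoparallel claim.
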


\begin{proof}

Recall the contortion 1-form
\begin{equation}\label{E:contortion}
K_\a{}^\b := \widetilde{\o}_\a{}^\b - \o_\a{}^\b
\end{equation}
where $\widetilde{\o}_\a{}^\b$ is the Levi-Civita connection and
$\o_\a{}^\b$ is the Weitzenb\"{o}ck connection. From the fact that
\begin{equation}
T^\b = d \vt^\b + \o_\a{}^\b \wedge \vt^\a , \quad 0=  d \vt^\b +
\widetilde{\o}_\a{}^\b \wedge \vt^\a
\end{equation}
we can rewrite torsion 2-form in terms of the contortion
\begin{equation}\label{E:torsion identity}
T^\b = K^\b{}_\g \wedge \vt^\g
\end{equation}
We also notice the fact that $K^{\a\b} = - K^{\b\a}$ if both
$\widetilde{\o}_\a{}^\b $ and $ \o_\a{}^\b$ are metric compatible.
Now from the energy-momentum conservation law in $U_4$ for PGT, we
have
\begin{equation}\label{E:conserv1}
D\mathfrak{T}_\a= \left( e_\a \lr T^\b \right) \wedge
\mathfrak{T}_\b + \left( e_\a \lr \Omega^{\b\g} \right) \wedge
\mathfrak{S}_{\b\g} = \left( e_\a \lr K^\b{}_\g \right) \wedge
\vt^\g \wedge \mathfrak{T}_\b - K^\b{}_\a \wedge \mathfrak{T}_\b
\end{equation}
where we have used the Weitzenb\"{o}ck connection $\Omega^{\b\g}=0 $ and
(\ref{E:torsion identity}). On the other hand, the LHS of
(\ref{E:conserv1}) follows from the definition that
\begin{equation}\label{E:two D eq}
D\mathfrak{T}_\a :=  d\mathfrak{T}_\a - \o_\a{}^\b \wedge
\mathfrak{T}_\b := d\mathfrak{T}_\a - \left( \widetilde{\o}_\a{}^\b
- K_\a{}^\b \right) \wedge \mathfrak{T}_\b =
\widetilde{D}\mathfrak{T}_\a + K_\a{}^\b \wedge \mathfrak{T}_\b,
\end{equation}
where $\widetilde{D}$ denotes the Levi-Civita connection. When we
equate (\ref{E:conserv1}) and (\ref{E:two D eq}), we have
\begin{equation}
\widetilde{D}\mathfrak{T}_\a + \cancel{K_\a{}^\b \wedge
\mathfrak{T}_\b} = \left( e_\a \lr K^\b{}_\g \right) \wedge \vt^\g
\wedge \mathfrak{T}_\b - \cancel{K^\b{}_\a \wedge \mathfrak{T}_\b}
\end{equation}
where the two terms cancel due to the fact $K_{\a\b} = - K_{\b\a}$,
so that we obtain
\begin{equation}
\widetilde{D}\mathfrak{T}_\a = \left( e_\a \lr K^{\b\g} \right)
\wedge \vt_{[\g} \wedge \mathfrak{T}_{\b]}.
\end{equation}
Recall that in TEGR, we assume only spinless matter $\mathfrak{S}_{\g\b} =0$ which has the
identity $\vt_{[\g} \wedge \mathfrak{T}_{\b]}=0 $. Then we derive
$\widetilde{D}\mathfrak{T}_\a = 0$, which is the conservation law of
GR, indicating that the motion is same as GR. This completes the proof.

\end{proof}

Next, we show the equivalence of the gravitational field evolution to
GR. There are two levels of the proof to certify such an equivalence. One
sets out from the level of the Lagrangian and converts
$L_{\text{GR}_{||}}$ in (\ref{E:teleparallel Lagrangian 1}) into the
Hilbert-Einstein Lagrangian of GR $L_{\text{EH}} =\frac{1}{2\k}
\widetilde{\Omega}^{\a\b} \wedge \eta_{\a\b} $ via a remarkable
identity (see \cite{Muench:1998ay})
\begin{equation}
L_{\text{GR}_{||}}= -\frac{1}{2\k} \widetilde{\Omega}^{\a\b} \wedge
\eta_{\a\b} + d( \vt^\a \wedge \star d\vt_\a )
\end{equation}
where $\widetilde{\Omega}_{\a\b}$ is the curvature 2-form of
Levi-Civita connection $\widetilde{\o}_\a{}^\b$. The other assertion
for the equivalence is shown at the level of the gravitational field
equations as we present below. First we prove some useful
identities.

\begin{lemma}
\begin{align}
\frac{1}{2} K^{\mu\nu} \wedge \eta_{\a\mu\nu}  &= \star \left( -
{}^{(1)}T_\a + 2 {}^{(2)}T_\a +
\frac{1}{2} {}^{(3)}T_\a \right)\label{E:contortion id} \\
\O^\b{}_\a &= \widetilde{\O}_\a{}^\b - \widetilde{D}K_\a{}^\b -
K_\a{}^{\mu} \wedge K_{\mu}{}^\b \label{E:curvature id}
\end{align}
\end{lemma}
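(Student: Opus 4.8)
Both identities are purely geometric, $SO(1,3)$-covariant relations between torsion/contortion and curvature, so the plan is to prove them directly from the structure equations in an arbitrary orthonormal coframe $\vt^\a$, using only the $\eta$-basis identities (\ref{E:eta basis id}), the relation $T^\b = K^\b{}_\g \wedge \vt^\g$ of (\ref{E:torsion identity}), and the metricity condition $K_{\a\b} = -K_{\b\a}$.

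For (\ref{E:contortion id}) I would proceed as follows. Write $K^{\m\n} = K_\lambda{}^{\m\n}\,\vt^\lambda$ with $K_\lambda{}^{\m\n}$ antisymmetric in $\m\n$, and use (\ref{E:eta basis id})$_3$, namely $\vt^\lambda \wedge \eta_{\a\m\n} = \d^\lambda_\a\,\eta_{\m\n} + \d^\lambda_\m\,\eta_{\n\a} + \d^\lambda_\n\,\eta_{\a\m}$, to turn the left-hand side into an explicit combination of the $2$-forms $\eta_{\m\n}$ contracted against $K$. On the right-hand side I would expand each $\star\,{}^{(I)}T_\a$ in the $\eta$-basis via (\ref{E:eta basis2}) together with the definitions (\ref{E:T irr pieces}) of the tensor, vector and axial pieces, and the inversion of $T^\a = K^\a{}_\g\wedge\vt^\g$ already recorded as $K_{\a\b} = i_{e_{[\a}}T_{\b]} - \tfrac12 i_{e_\a}(i_{e_\b}T_\g)\vt^\g$; matching the two expansions then reduces to an identity among Kronecker deltas. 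A cleaner variant, which I would actually use to avoid brute force, is to observe that $T^\a \mapsto \tfrac12 K^{\m\n}\wedge\eta_{\a\m\n}$ is a linear $SO(1,3)$-equivariant operation, hence respects the irreducible splitting $T^\a = {}^{(1)}T^\a + {}^{(2)}T^\a + {}^{(3)}T^\a$; it then suffices to verify the coefficients $-1,\,2,\,\tfrac12$ by evaluating the left-hand side on each irreducible piece separately, a short computation for the vector piece ${}^{(2)}T^\a = \tfrac13\vt^\a\wedge i_{e_\b}T^\b$, for the axial piece ${}^{(3)}T^\a$, and for the remaining tensor piece. As a consistency check, these are exactly the coefficients that make $L_{\mathrm{GR}_{||}}$ in (\ref{E:teleparallel Lagrangian 1}) differ from the Einstein--Hilbert form only by an exact $4$-form.

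For (\ref{E:curvature id}) the argument is a routine substitution. Inserting $\o_\a{}^\b = \wt\o_\a{}^\b - K_\a{}^\b$ (from the definition (\ref{E:contortion})) into the structure equation $\O^\b{}_\a = d\o_\a{}^\b + \o_\g{}^\b\wedge\o_\a{}^\g$ of (\ref{E:T and R}) and expanding, the $K$-independent terms reassemble into $\wt\O_\a{}^\b = d\wt\o_\a{}^\b + \wt\o_\g{}^\b\wedge\wt\o_\a{}^\g$; the terms linear in $K$, namely $-dK_\a{}^\b - \wt\o_\g{}^\b\wedge K_\a{}^\g - K_\g{}^\b\wedge\wt\o_\a{}^\g$, are precisely $-\wt D K_\a{}^\b$ by the definition of the Levi-Civita exterior covariant derivative acting on the mixed-index object $K_\a{}^\b$ (using that $\wt\o_\a{}^\g \wedge K_\g{}^\b = -K_\g{}^\b\wedge\wt\o_\a{}^\g$); and the term quadratic in $K$ is $K_\g{}^\b\wedge K_\a{}^\g = -K_\a{}^\m\wedge K_\m{}^\b$. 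Collecting the three contributions yields the stated formula; the only care needed is with the placement of the summed index and the attendant signs, which are fixed once and for all by the conventions in (\ref{E:T and R}).

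The only genuine obstacle is the coefficient bookkeeping in (\ref{E:contortion id}): the tensor ($16$-component) piece and in particular the axial piece, whose definition (\ref{E:T irr pieces})$_3$ carries a double Hodge dual, make a straight component computation error-prone, which is why I would organize that half of the proof around the irreducible decomposition rather than expanding everything in components. Identity (\ref{E:curvature id}) is mechanical by comparison.
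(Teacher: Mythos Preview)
Your argument for (\ref{E:curvature id}) is correct and is exactly the paper's own proof: the paper substitutes $\o_\a{}^\b = \widetilde{\o}_\a{}^\b - K_\a{}^\b$ into $\O^\b{}_\a = d\o_\a{}^\b - \o_\a{}^\g \wedge \o_\g{}^\b$, expands, and regroups into $\widetilde{\O}^\b{}_\a - \widetilde{D}K_\a{}^\b - K_\a{}^\g \wedge K_\g{}^\b$, precisely as you describe.

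For (\ref{E:contortion id}) the paper gives no proof at all; it simply refers the reader to Appendix~C of \cite{Obukhov:2006gea}. Your outline therefore goes beyond what the paper does. The strategy you propose---observing that $T^\a \mapsto \tfrac12 K^{\m\n}\wedge\eta_{\a\m\n}$ is a linear $SO(1,3)$-equivariant map between $\mathbb{R}^{1,3}$-valued $2$-forms and hence must act as a scalar on each irreducible piece, then fixing the three scalars by checking the vector, axial and tensor parts separately---is sound and is in fact the efficient way to carry this out; it is also the method used in the reference the paper cites. The brute-force alternative via (\ref{E:eta basis id})$_3$ works too but is, as you note, more error-prone on the axial piece.
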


\begin{proof}
The proof for the first identity can be found in the Appendix C
\cite{Obukhov:2006gea}. Here we only prove the second. By
definition,
\[
\O^\b{}_\a = d\o_\a{}^\b - \o_\a{}^\g \wedge \o_\g{}^\b
\]
with the contortion 1-form $K_\a{}^\b = \widetilde{\o}_\a{}^\b -
\o_\a{}^\b$. It can be rewritten as
\[
\begin{aligned}
\O^\b{}_\a &= d \left( \widetilde{\o}_\a{}^\b - K_\a{}^\b \right) - \left( \widetilde{\o}_\a{}^\g - K_\a{}^\g \right) \wedge \left( \widetilde{\o}_\g{}^\b - K_\g{}^\b \right)\\
   &= \left( d\widetilde{\o}_\a{}^\b - \widetilde{\o}_\a{}^\g \wedge \widetilde{\o}_\g{}^\b \right) - \left( dK_\a{}^\b - K_\a{}^\g \wedge \widetilde{\o}_\g{}^\b - \widetilde{\o}_\a{}^\g \wedge K_\g{}^\b \right) - K_\a{}^\g \wedge K_\g{}^\b\\
   &= \widetilde{\O}^\b{}_\a - \widetilde{D}K_\a{}^\b - K_\a{}^\g \wedge K_\g{}^\b
\end{aligned}
\]

\end{proof}

\begin{theorem}
The $\text{GR}_{||}$ field equation $(\ref{E:Teparallel EOM})_1$ is
equivalent to Einstein's equation (GR).
\end{theorem}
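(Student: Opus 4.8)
The plan is to prove the equivalence by a direct computation that converts the left-hand side of $(\ref{E:Teparallel EOM})_1$ into the Einstein 3-form $\tfrac12\eta_{\a\b\g}\wedge\widetilde{\O}^{\g\b}$ by means of the two identities $(\ref{E:contortion id})$ and $(\ref{E:curvature id})$ of the preceding Lemma. First I would work in a frame adapted to the Weitzenb\"ock connection, so that $\o_\a{}^\b\equiv0$ and hence $DH_\a=dH_\a$, while $K_\a{}^\b=\widetilde{\o}_\a{}^\b-\o_\a{}^\b=\widetilde{\o}_\a{}^\b$; this lets me rewrite the exterior derivative covariantly as $dH_\a=\widetilde{D}H_\a+K_\a{}^\b\wedge H_\b$. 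Substituting $\k H_\a=\tfrac12 K^{\m\n}\wedge\eta_{\a\m\n}$ from $(\ref{E:contortion id})$ and using $\widetilde{D}\eta_{\a\m\n}=0$ (which follows from $(\ref{E:eta basis id2})$ since the Levi-Civita torsion vanishes), one gets $\widetilde{D}H_\a=\tfrac{1}{2\k}\,\widetilde{D}K^{\m\n}\wedge\eta_{\a\m\n}$.

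Next, because $\O^\b{}_\a\equiv0$ in teleparallel geometry, identity $(\ref{E:curvature id})$ collapses to $\widetilde{D}K_\a{}^\b=\widetilde{\O}_\a{}^\b-K_\a{}^\m\wedge K_\m{}^\b$, i.e. $\widetilde D K^{\m\n}=\widetilde\O^{\m\n}-K^{\m}{}_{\lda}\wedge K^{\lda\n}$ up to the usual raising/lowering of frame indices. Feeding this back produces exactly the term $\tfrac{1}{2\k}\widetilde\O^{\m\n}\wedge\eta_{\a\m\n}$ — the Einstein 3-form, up to index conventions and the overall sign already present in $(\ref{E:EC EOM})$ — together with a remainder purely algebraic in the contortion, namely $-\tfrac{1}{2\k}(K^{\m}{}_{\lda}\wedge K^{\lda\n})\wedge\eta_{\a\m\n}$ and $K_\a{}^\b\wedge H_\b$. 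I would then expand $L_{\text{G}}=-\tfrac12 T^\a\wedge H_\a$ (immediate from Definition \ref{Def:TEGR}) and the pieces $-e_\a\lr L_{\text{G}}-(e_\a\lr T^\b)\wedge H_\b$ hiding in $t_\a$, using the Leibniz rule for $i_{e_\a}$, the torsion--contortion relation $T^\b=K^\b{}_\g\wedge\vt^\g$ of $(\ref{E:torsion identity})$, and the $\eta$-basis identities $(\ref{E:eta basis id})$; the claim is that these exactly cancel the algebraic remainder, leaving $(\ref{E:Teparallel EOM})_1$ in the form $\tfrac12\eta_{\a\b\g}\wedge\widetilde\O^{\g\b}=\k\mathcal{T}_\a$, which is Einstein's equation.

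The main obstacle is precisely this last cancellation: it is a bookkeeping calculation in which the quadratic-in-$K$ terms from $\widetilde{D}K$, the term $K_\a{}^\b\wedge H_\b$ coming from the $d\to\widetilde{D}$ conversion, and the interior-product terms inside $t_\a$ must be matched term by term, leaning on the antisymmetry $K_{(\a\b)}=0$ enforced by the metricity condition $(\ref{E:metricity2})$ and on the irreducible decomposition $(\ref{E:torsion decomp})$--$(\ref{E:T irr pieces})$ that fixes the particular coefficients $(-1,+2,+\tfrac12)$ inside $H_\a$; signs and index placements are delicate throughout. As a cross-check — and a cleaner, essentially computation-free route — I would also note the Lagrangian identity $L_{\text{GR}_{||}}=-\tfrac{1}{2\k}\widetilde\O^{\a\b}\wedge\eta_{\a\b}+d(\vt^\a\wedge\star d\vt_\a)$ recorded above: in TEGR the connection is frozen to the Weitzenb\"ock one and only $\vt^\a$ is varied, and an exact form contributes merely a boundary term under this variation, so the field equation obtained by varying $\vt^\a$ in $L_{\text{GR}_{||}}$ coincides with the one obtained from the Hilbert--Einstein Lagrangian, namely Einstein's equation.
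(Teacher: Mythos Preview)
Your proposal is correct and follows essentially the same route as the paper: convert $DH_\a$ to $\widetilde D H_\a + K_\a{}^\b\wedge H_\b$, use $(\ref{E:contortion id})$ to write $\k H_\a=\tfrac12 K^{\m\n}\wedge\eta_{\a\m\n}$ together with $\widetilde D\eta_{\a\m\n}=0$, then invoke $(\ref{E:curvature id})$ with $\O^\b{}_\a\equiv0$ so that the $(\widetilde D K+K\wedge K)$-piece assembles into $\tfrac{1}{2\k}\widetilde\O^{\b\n}\wedge\eta_{\a\b\n}$, and finally verify that the remaining contortion-quadratic terms from $K_\a{}^\b\wedge H_\b$, $e_\a\lr L_{\text{GR}_{||}}$, and $(e_\a\lr T^\b)\wedge H_\b$ cancel via the $\eta$-basis identities and $K_{(\a\b)}=0$. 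The paper carries out exactly this computation (including the explicit cancellation you flag as the main obstacle), and also mentions the Lagrangian-level argument you offer as a cross-check.
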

\begin{proof}
Recall that the $\text{GR}_{||}$ choice $H_\b$ is given by
(\ref{E:TEGR excitation}). Via (\ref{E:contortion id}) it reads
\begin{equation}\label{E:special H_a 2}
H_\b = \frac{1}{2\k} K^{\mu\nu} \wedge \eta_{\b\mu\nu}.
\end{equation}
Thus the $\text{GR}_{||}$ Lagrangian (\ref{E:teleparallel Lagrangian
1}) can be rewritten as:
\begin{equation}
L_{\text{GR}_{||}} =  - \frac{1}{2} T^\b \wedge H_\b
                   =  - \frac{1}{4\k} \left( K^{\b\g} \wedge \vt_\g \right) \wedge \left( K^{\mu\nu} \wedge \eta_{\b\mu\nu} \right)
                   =    \frac{1}{2\k} \left( K^\b{}_{\mu} \wedge K^{\mu\nu} \wedge \eta_{\nu\b} \right)
\end{equation}
where we have used (\ref{E:torsion identity}), (\ref{E:special H_a
2}) and identity (\ref{E:eta basis id}). Then the $\text{GR}_{||}$
field equation $(\ref{E:Teparallel EOM})_1$ reads:
\begin{multline}\label{E:main eq}
D H_\a - e_\a \lr L_{\text{GR}_{||}} -  (e_\a \lr T^\b ) \wedge H_\b = \left( \frac{1}{2\k}  \underbrace{\widetilde{D}K^{\mu\nu} \wedge \eta_{\a\mu\nu}}_{\mathclap{(\bigstar)}} + K_\a{}^\b \wedge H_\b \right) \\
- \frac{1}{2\k}\left[ e_\a \lr \left( K^\b{}_{\mu} \wedge K^{\mu\nu}
\right) \wedge \eta_{\nu\b} + \underbrace{K^\b{}_{\mu} \wedge
K^{\mu\nu} \wedge \eta_{\nu\b\a}}_{\mathclap{(\bigstar)}}  \right] -
(e_\a \lr T^\b ) \wedge H_\b
\end{multline}
where the term
\begin{equation}
D H_\a = \widetilde{D} H_\a + K_\a{}^\b \wedge H_\b = \frac{1}{2\k}
\widetilde{D}K^{\mu\nu}  \wedge \eta_{\a\mu\nu} + K_\a{}^\b \wedge
H_\b
\end{equation}
utilizes (\ref{E:two D eq}), (\ref{E:special H_a 2}) and the fact
$\widetilde{D}\eta_{\a_1 \cdots \a_p} \equiv 0 $ from (\ref{E:eta basis id2}), and
\begin{align}
e_\a \lr L_{\text{GR}_{||}} = \frac{1}{2\k}\left[ e_\a \lr \left(
K^\b{}_{\mu} \wedge K^{\mu\nu} \right) \wedge \eta_{\nu\b} +
K^\b{}_{\mu} \wedge K^{\mu\nu} \wedge \eta_{\nu\b\a}  \right]
\end{align}
One observes that using the important identity (\ref{E:curvature
id}) with the $\text{GR}_{||}$ requirement $R^{\b\nu} \equiv 0 $,
the $(\bigstar)$ terms collect as
\begin{equation}
\frac{1}{2\k} \left( \widetilde{D}K^{\b\nu} + K^\b{}_{\mu} \wedge
K^{\mu\nu} \right) \wedge \eta_{\a\b\nu} = \frac{1}{2\k}
\widetilde{R}^{\b\nu} \wedge \eta_{\a\b\nu} = - \frac{1}{\k} \left(
\widetilde{R}_{\mu\a} - \frac{1}{2} \widetilde{R} \, g_{\mu\a}
\right) \eta^{\mu}
\end{equation}
which is immediately recognized as the Einstein tensor. The
appearance of this term almost claims the completion of the proof;
it remains to prove that all the rest of the terms in (\ref{E:main eq})
cancel, which only takes some more steps to see it. Indeed, since the
last term in (\ref{E:main eq}) can be rewritten as
\begin{align}
\left( e_\a \lr T^\b \right) \wedge H_\b &= \left( e_\a \lr \left( K^{\b\g} \wedge \vt_\g \right)  \right) \wedge H_\b \\
                                         &= \left( \left(e_\a \lr K^{\b\g}\right) \wedge \vt_\g - K^\b{}_\a \right) \wedge H_\b  = - 2 \left( e_\a \lr K^\b{}_{\mu} \right) \wedge K^{\mu\nu} \wedge \eta_{\nu\b} + K_\a{}^\b \wedge H_\b
\end{align}
together, the 3 terms remaining vanish,
\begin{equation}
K_\a{}^\b \wedge H_\b - \frac{1}{2\k}\left( e_\a \lr \left(
K^\b{}_{\mu} \wedge K^{\mu\nu} \right) \wedge \eta_{\nu\b} \right) -
\left( e_\a \lr T^\b \right) \wedge H_\b = 0
\end{equation}
where we have used $e_\a \lr \left( K^\b{}_{\mu} \wedge K^{\mu\nu}
\right) \wedge \eta_{\nu\b} = 2 \left( e_\a \lr K^\b{}_{\mu} \right)
\wedge K^{\mu\nu} \wedge \eta_{\nu\b}$.

\end{proof}

The above two theorems then conclude that TEGR is equivalent to GR
in the matter and geometric evolutions. One observes that the use of
computational tools developed in Sec.(\ref{Sec:Computational Aspects}) largely reduces the complicated calculation, while it
causes more abstraction and indirectness. For most teleparallel
theorists, the component forms of TEGR are then used more often (see \cite{Peireira}). Since the conversion in between is rarely seen, we demonstrate
some sketch below.
\begin{theorem}
\begin{equation}\label{E:teleparallel Lagrangian 2}
\begin{aligned}
L_{\text{GR}_{||}} &:= - \frac{1}{2} T^\a \wedge H_\a \\
                   &= \frac{1}{2\k} \left( \frac{1}{4} T_{\mu\nu\a} \, T^{\mu\nu\a} + \frac{1}{2} T_{\mu\nu\a} \, T^{\a\nu\mu} - T_{\mu\sigma}{}{}^{\sigma} \,
                   T^{\mu\delta}{}{}_{\delta}
                   \right) \, \eta \\
                   &:= - T_{\text{scalar}} \cdot \eta
\end{aligned}
\end{equation}
where the term $T_{\text{scalar}}:= \frac{1}{2\k} \left( \frac{1}{4}
T_{\mu\nu\a} \, T^{\mu\nu\a} + \frac{1}{2} T_{\mu\nu\a} \,
T^{\a\nu\mu} - T_{\mu\sigma}{}{}^{\sigma} \,
T^{\mu\delta}{}{}_{\delta} \right) \in C^{\infty}(M)$ is referred to as
the \textbf{torsion scalar}.
\end{theorem}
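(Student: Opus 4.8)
\emph{Proof proposal.} The first equality is immediate: by \eqref{E:TEGR excitation} we have $H_\a = \tfrac{1}{\k}\star\!\left(-{}^{(1)}T_\a + 2\,{}^{(2)}T_\a + \tfrac12\,{}^{(3)}T_\a\right)$, so that $-\tfrac12 T^\a\wedge H_\a$ is exactly the Lagrangian of Definition \ref{Def:TEGR}; nothing is to be proved there. For the second equality the cleanest route starts from a fact already established in the proof of the preceding theorem: using the contortion identity \eqref{E:contortion id} one has $H_\a = \tfrac1{2\k}K^{\m\n}\wedge\eta_{\a\m\n}$, and combining this with $T^\a = K^\a{}_\g\wedge\vt^\g$ from \eqref{E:torsion identity} and the fourth $\eta$-basis identity in \eqref{E:eta basis id} gives the manifestly contortion-quadratic $4$-form
\[
L_{\text{GR}_{||}} = -\tfrac12\,T^\a\wedge H_\a = \frac{1}{2\k}\,K^\b{}_\m\wedge K^{\m\n}\wedge\eta_{\n\b}.
\]
So the remaining task is purely to expand this expression in components.

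The plan for that expansion is: work in an orthonormal coframe, write $K^{\m\n} = K_\rho{}^{\m\n}\,\vt^\rho$, and evaluate $\vt^\rho\wedge\vt^\lambda\wedge\eta_{\n\b} = (\d^\rho_\b\d^\lambda_\n - \d^\rho_\n\d^\lambda_\b)\,\eta$ by iterating the identities \eqref{E:eta basis id}. This collapses $K^\b{}_\m\wedge K^{\m\n}\wedge\eta_{\n\b}$ to an algebraic contraction of two contortion tensors times the volume form $\eta$. One then substitutes the inversion $K_{\m\a\b}\leftrightarrow T_{\m\a\b}$ coming from $T^\a = K^\a{}_\g\wedge\vt^\g$ together with the metricity constraint $K_{(\a\b)}\equiv 0$ (equivalently the closed-form expression for $K_{\a\b}$ in terms of the $T_\a$ quoted before the Gravity Lagrangian example), and regroups the quadratic-in-$T$ result. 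The three algebraically independent scalars that survive are $T_{\m\n\a}T^{\m\n\a}$, $T_{\m\n\a}T^{\a\n\m}$ and $T_{\m\s}{}^\s T^{\m\d}{}_\d$, and the content of the theorem is that their coefficients come out to be $\tfrac14$, $\tfrac12$ and $-1$, i.e. $L_{\text{GR}_{||}} = -T_{\text{scalar}}\,\eta$.

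As an independent cross-check (and an alternative proof) I would feed the decomposition $T^\a = {}^{(1)}T^\a + {}^{(2)}T^\a + {}^{(3)}T^\a$ from \eqref{E:torsion decomp} directly into $-\tfrac1{2\k}T^\a\wedge\star(-{}^{(1)}T_\a + 2\,{}^{(2)}T_\a + \tfrac12\,{}^{(3)}T_\a)$. Pointwise orthogonality of the three irreducible pieces kills the cross terms, leaving $-\tfrac1{2\k}\bigl(-\|{}^{(1)}T\|^2 + 2\|{}^{(2)}T\|^2 + \tfrac12\|{}^{(3)}T\|^2\bigr)\eta$, where $\|\cdot\|^2$ is the pointwise square of a $2$-form via the Lemma $\langle\psi,\phi\rangle = -\star(\phi\wedge\star\psi)$. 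With ${}^{(2)}T^\a = \tfrac13\vt^\a\wedge i_{e_\b}T^\b$ the vector norm is proportional to $T_{\m\s}{}^\s T^{\m\d}{}_\d$, and with ${}^{(3)}T^\a = \tfrac13\star\!\big(\star(\vt_\b\wedge T^\b)\wedge\vt^\a\big)$ the axial norm, after contracting the two Levi-Civita symbols ($\veps_{\m\n\rho\s}\veps^{\m\n\a\b} = -2(\d^\a_\rho\d^\b_\s - \d^\b_\rho\d^\a_\s)$ in Lorentzian signature), is a combination of $T_{\m\n\a}T^{\m\n\a}$ and $T_{\m\n\a}T^{\a\n\m}$; using $\|{}^{(1)}T\|^2 = \|T\|^2 - \|{}^{(2)}T\|^2 - \|{}^{(3)}T\|^2$ and collecting must again give $(\tfrac14,\tfrac12,-1)$.

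The main obstacle is not conceptual but bookkeeping. One has to keep the Lorentzian-signature sign conventions mutually consistent across the $\star\star$ identity, the $\veps$--$\veps$ contraction and the normalisation of the irreducible projectors, and to track the numerical factors (the $\tfrac13$'s squared, the $\tfrac12$ from the $2$-form normalisation, the factor hidden in $\eta_{\a\m\n}$) all the way down to the final triple $(\tfrac14,\tfrac12,-1)$. A secondary nuisance is the contortion/torsion inversion, whose index symmetrisations are easy to get wrong; it is precisely to guard against this that I would carry out both the contortion route and the irreducible-decomposition route and require them to agree before asserting the constants are correct.
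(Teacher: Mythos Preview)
Your proposal is correct, and both of your routes would land on the result, but the paper's own argument is more direct than either. Rather than passing through contortion (your primary route) or invoking orthogonality of the irreducible pieces (your cross-check), the paper simply substitutes the definitions \eqref{E:T irr pieces} into the bracket. Using ${}^{(1)}T_\a = T_\a - {}^{(2)}T_\a - {}^{(3)}T_\a$, $3\,{}^{(2)}T^\a = \vt^\a\wedge i_{e_\b}T^\b$, and the interior-product form of ${}^{(3)}T^\a$, the combination $-\bigl(-{}^{(1)}T_\a + 2\,{}^{(2)}T_\a + \tfrac12\,{}^{(3)}T_\a\bigr)$ collapses to $T_\a - \vt_\a\wedge i_{e_\b}T^\b - \tfrac12\,i_{e_\a}(\vt_\g\wedge T^\g)$. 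The paper then evaluates the three wedge products $T_\a\wedge\star T^\a$, $T_\a\wedge\star(T_\m\,\vt^\a\wedge\vt^\m)$, and $T_\a\wedge\star\bigl(-\tfrac12 i_{e^\a}(\vt_\g\wedge T^\g)\bigr)$ directly in components via \eqref{E:T and R} and \eqref{E:eta basis id}; these land immediately on $T_{\m\n\a}T^{\m\n\a}$, $T_\m T^\m$, and $T^{\m\n\a}T_{\a[\m\n]}$, from which the coefficients $(\tfrac14,\tfrac12,-1)$ follow by combination.

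The trade-offs: your contortion route has the merit of recycling the expression $\tfrac1{2\k}K^\b{}_\m\wedge K^{\m\n}\wedge\eta_{\n\b}$ already obtained in the preceding theorem, but at the cost of the contortion--torsion inversion you yourself flag as error-prone. Your orthogonality route is the most structural and would generalise nicely, but it requires separately computing each $\|{}^{(I)}T\|^2$. The paper's substitution route sidesteps both: no inversion, no orthogonality lemma, just three Hodge-wedge evaluations.
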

\begin{proof}
From (\ref{E:T irr pieces}) we simplify (\ref{E:teleparallel
Lagrangian 1}) as
\begin{equation}
L_{\text{GR}_{||}} = \frac{1}{2\k}  T_\a \wedge \star \left(T^\a -
\vt^\a \wedge i_{e_\b}(T^\b) - \frac{1}{2} i_{e^\a} \left( \vt_\g
\wedge T^\g \right)  \right)
\end{equation}
and then we compute the terms above individually
\begin{equation}
\begin{aligned}
\vt^\a \wedge i_{e_\b}(T^\b) &= - T_{\mu} \,  \vt^\a \wedge
\vt^{\mu}\\
i_{e^\a} \left( \vt_\g \wedge T^\g \right) &= T_{\a} -
T_{\a[\mu\nu]} \, \vt^{\mu} \wedge \vt^{\nu}
\end{aligned}
\end{equation}
where (\ref{E:T and R}) and (\ref{E:eta basis id}) are repeatedly
used and $T_{\mu}:= T_{\mu\nu}{}{}^{\nu}$ is the torsion trace we defined early. Note that here a single notation
$T_{\mu}$ can either denote \emph{torsion trace} or \emph{torsion
2-form}. However the confusion shall not rise since one can tell
from the correct degree of differential forms. After rearrangement,
one is left to compute
\begin{equation}
\begin{aligned}
T_\a \wedge \star T^\a  &= - \frac{1}{2}
T_{\mu\nu\a} \, T^{\mu\nu\a} \, \eta ,\\
T_\a \wedge \star \left( T_{\mu} \, \vt^\a \wedge \vt^\m \right) &=
T_\m \, T^\m \, \eta, \\
T_{\a} \wedge \star \left( -\frac{1}{2} i_{e^\a} (\vt_\g \wedge T^\g
)\right) &= -\frac{1}{2} T_\a \wedge \star T^\a - \frac{1}{2}
T^{\mu\nu\a} \, T_{\a [\mu\nu ]} \eta.
\end{aligned}
\end{equation}
if we put everything together, we obtain the component form in
(\ref{E:teleparallel Lagrangian 2}).

\end{proof}

\begin{remark}

Most of the literatures that studies teleparallel gravity, the component formulation is usually adopted. One of the merits is that the computation made direct. Since both a tetrad and local coordinate vectors are a basis, one can expand a tetrad in terms of the local coordinate vectors, hence
\begin{equation}
e_\a = e^i_\a(x) \, \frac{\partial}{\p x^i}, \qquad \vt^\a = e^{\a}_i (x) \, dx^i
\end{equation}
where $e^i_\a: U \subseteq M \to \mathbb{R} $ is the coefficient matrix. Along with the dual condition
\[
\vt^\b (e_\a) = \left( e^\b_i (x) \, dx^i \right) \,  \left( e^j_\a (x) \, \frac{\partial}{\p x^j} \right) =  e^\b_i(x) \, e^i_\a(x) = \d^\b_\a, \qquad \left( dx^i \left( \frac{\partial}{\p x^j} \right)\Bigr|_{x} = \d^i_j \right)
\]
for all $x\in U$, one obtains the relation $e^\b_i (x) \, e^i_\a (x) = \d^\b_\a  $. Thus the entire teleparallel field equation (\ref{E:Teparallel EOM2}) in terms of $e^\b_i$ and $ e^i_\a$ is only the equation of $e^\b_i$ alone. More often, teleparallelists tend to call the coefficients $ e^i_\a $ and $e^\b_i$ as a tetrad and co-tetrad.

\end{remark}

\section{Local Lorentz violation in TEGR}

Although we have proved that TEGR is equivalent to GR in many
aspects, there is one peculiar property that does not share with GR:
local Lorentz violation. It has been pointed out in \cite{Li:2010cg}, \cite{Sotiriou:2010mv}
that the TEGR Lagrangian $L_{\text{GR}_{||}} $ does not respect local Lorentz symmetry, where the local violation term is in the form of an exact differential. Hence in effect, as a boundary term, it does not affect the action and field equations. However, the story will be different in the $f(T)$ theory, in which violation terms cannot be eliminated, see \cite{Li:2010cg}.

In fact, we can give an account why the TEGR Lagrangian fails to be
Lorentz invariant from the viewpoint of the fibre bundle theory. To
depict the local Lorentz violation first we define it operationally

\begin{definition}{\textbf{(Local Lorentz Invariance and Violation)}}

Let $\{U_i \subseteq M \}$ be an open covering of the spacetime $M$,
and a family of local Lagrangian 4-forms $\{ L_i \in \Lambda^4(U_i, \mathbb{R})\}$ defined on $U_i$. If on nonempty overlaps $U_i \cap U_j$
with transition map $\Psi_{ij}: U_i \cap U_j \to
SO(1,3)$, the two Lagrangians agree $L_i(x) = L_j(x)$ (in the sense of up to a constant),
$\forall x\in U_i \cap U_j$, then we say $L_i$ or
$L_j$ is \textbf{locally invariant} on $U_i \cap U_j$. Moreover, if
$L := \cup_i L_i$ is a locally invariant on any nonempty overlaps,
then $L$ is a locally invariant Lagrangian on the whole $M$. Thus if a
Lagrangian is not locally invariant on some nonempty overlaps, we
call it \textbf{Lorentz violation} otherwise.
\end{definition}

From Section. \ref{Sec:chaning frame}, we see that if we have two orthonormal frames $(e_\a) \in \mathfrak{X}(U_i)$ and $(\widetilde{e}_\b ) \in \mathfrak{X}(U_j)$ defined on $U_i$ and $U_j$
respectively with $U_i \cap U_j \neq \phi $ and $\a$,
$\b= 0,1,2, 3$ with $(\vt^\a) \in \Lambda^1(U_i)$ and $(\widetilde{\vt}^\b) \in \Lambda^1(U_j)$ denoting their dual coframes, then by (\ref{E:frame change}) the frame changing between two sections $\s_{U_i} : U_i \to M$ and $\s_{U_j} : U_j \to M$ with the transition function $\Psi_{U_j U_i} : U_i \cap U_j \to SO(1,3)$ one has the transforms
\begin{equation}\label{E:local Lorentz}
e_\b(x) = (A(x))_\b^\a \, \widetilde{e}_\a(x) \, \Leftrightarrow \,
\widetilde{\vt}^\b(x) = (A^{-1}(x))^\b_\a \, \vt^\a(x)
\end{equation}
where $A_\b^\a := (\Psi_{U_j U_i})_\b^\a$ denotes the matrix form of $SO(1,3)$. According to such transformation rule, the TEGR action (\ref{E:teleparallel Lagrangian 1}) and the corresponding field momenta (\ref{E:TEGR excitation}) transforms as
\begin{equation}\label{E:Lorentz violation}
\begin{aligned}
\widetilde{\mathcal{L}}_G(x) &= \mathcal{L}_G(x)  - \frac{1}{2\kappa} d \left(
(A^{-1}(x))^\a_\b
\, d(A(x))^\b_\g \wedge \eta^\g{}_\a \right)\\
\widetilde{H}_\a(x) &= (A^{-1}(x))^\b_\a \, H_\b(x) - \frac{1}{2\kappa}(
A^{-1}(x))^\b_\a \, (A^{-1}(x))^\n_\g \, d(A(x))^\g_\m \wedge
\eta_{\b}{}^\m_{}{}_\n\\
\widetilde{t}_\a(x) &= (A^{-1}(x))^\b_\a \, t_\b(x) + d(A^{-1}(x))^\b_\a
\wedge H_\b \\
&\qquad \qquad\qquad \qquad - \frac{1}{2\kappa} d \left( (A^{-1}(x))^\b_\a \,
(A^{-1}(x))^\n_\g \, d(A(x))^\g_\m \wedge \eta_{\b}{}^\m_{}{}_\n
\right)(x)
\end{aligned}
\end{equation}
where $x\in U_i \cap U_j$ and the tilde-symbols refer to
quantities in the $U_j$ system.

After moments of thought, one recognizes that the violation of
(\ref{E:Lorentz violation}) follows from the fact that they are not \emph{basic differential
forms} (see Def. (\ref{Def:basic differential form})) from a principal bundle. One sees that if $\a \in \overline{\Lambda}^k(P,V)$ is a basic differential form, by applying the change of sections (\ref{E:change section}) its local form transforms as
\begin{equation}
\s_{U_j}^* \, \a = \rho\left( \Psi^{-1}_{U_i U_j}(x) \right) \cdot \s_{U_i}^* \,\a
\end{equation}
where $\s_{U_i}: U_i \to P$ is a local section. Some well-behaved examples are: the \emph{torsion 1-form} $T^\a$ by (\ref{E:LM change section}) and (\ref{E:frame change}), and the \emph{curvature 2-form} $\O^\o$ by (\ref{E:curvature transition}). An ill-behaved example is the \emph{connection 1-form} $\o$ by (\ref{E:connection transition}) and (\ref{E:connection transition2}). In fact, we have \cite{Steenrod}

\begin{theorem}

Basic differential forms on a principal bundle $\pi:P \to M$ are in 1-1 correspondence to global-defined differential forms on $M$ with
values in the associated bundle $P\times_G V$
\begin{equation}
\overline{\Lambda}^k(P,V) \cong \Lambda^k(M,P\times_G V),
\end{equation}
where the \textbf{associated bundle} $P\times_G V$ is defined as
$(P\times V)/G:=\{[p,v] | \, p \in P, v\in V \}$ by the quotient of
the $G$-action $(p,v)\cdot g := (p \cdot g,\rho(g^{-1})\cdot v )$.
Yet forms on $M$ of value in $P\times_G V$ are equivalently
represented by a family of $\mu_i \in \Lambda^k(U_i,V)$ with the
\emph{gluing condition}
\begin{equation}\label{E:gluing condition}
\mu_i = \rho(\Psi_{ji}^{-1}) \cdot \mu_j
\end{equation}
where $\Psi_{ji} : U_i \cap U_j \to G$ is the transition function.
\end{theorem}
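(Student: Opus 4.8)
The plan is to establish the correspondence in two directions: first the abstract identification $\overline{\Lambda}^k(P,V) \cong \Lambda^k(M, P\times_G V)$, and then the concrete cocycle description in terms of a family $\{\mu_i\}$ satisfying the gluing condition \eqref{E:gluing condition}. For the first direction I would start from $\alpha \in \overline{\Lambda}^k(P,V)$ and construct a form $\tilde\alpha$ on $M$ with values in the associated bundle $P\times_G V$ as follows: given $x\in M$ and tangent vectors $X_1,\dots,X_k \in T_xM$, pick any $p \in \pi^{-1}(x)$ and any lifts $\hat X_i \in T_pP$ with $\pi_*\hat X_i = X_i$, and set $\tilde\alpha_x(X_1,\dots,X_k) := [p,\ \alpha_p(\hat X_1,\dots,\hat X_k)]$. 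The key checks are (i) independence of the choice of lifts $\hat X_i$, which follows because $\alpha$ is horizontal (any two lifts differ by a vertical vector, which $\alpha$ kills), and (ii) independence of the choice of $p$ in the fibre, which follows from $G$-equivariance: replacing $p$ by $pg$ replaces $\hat X_i$ by $R_{g*}\hat X_i$ (plus vertical, again killed), and then $\alpha_{pg}(R_{g*}\hat X_1,\dots) = \rho(g^{-1})\cdot\alpha_p(\hat X_1,\dots)$ by \eqref{E:basic differential form}, so $[pg, \rho(g^{-1})v] = [p,v]$ in $P\times_G V$ by definition of the quotient action. Conversely, given a form $\beta \in \Lambda^k(M,P\times_G V)$ one recovers $\alpha \in \overline{\Lambda}^k(P,V)$ by a pointwise inverse: at $p \in P$ with $\pi(p)=x$, the point $p$ determines a linear iso $V \to (P\times_G V)_x$, $v\mapsto[p,v]$, and one sets $\alpha_p(\hat X_1,\dots,\hat X_k)$ to be the preimage of $\beta_x(\pi_*\hat X_1,\dots,\pi_*\hat X_k)$ under this iso; horizontality is automatic since $\pi_*$ kills verticals, and $G$-equivariance is exactly the compatibility $[pg,v]=[p,\rho(g)v]$. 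These two constructions are mutually inverse, giving the isomorphism.

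For the second part I would unwind the associated bundle into a \v{C}ech cocycle description relative to a trivializing cover $\{U_i\}$. Over each $U_i$ a local section $\sigma_{U_i}:U_i \to P$ gives a trivialization, and to $\beta \in \Lambda^k(M,P\times_G V)$ I associate $\mu_i := $ the $V$-valued $k$-form on $U_i$ obtained by writing $\beta_x = [\sigma_{U_i}(x), \mu_i(x)(\cdots)]$; equivalently $\mu_i = \sigma_{U_i}^*\alpha$ under the first isomorphism. On an overlap $U_i\cap U_j$ one has $\sigma_{U_j}(x) = \sigma_{U_i}(x)\cdot \Psi_{ij}(x)$ (with the paper's convention for transition functions from Definition \ref{Def:transition function}), so $[\sigma_{U_i}(x),\mu_i] = \beta_x = [\sigma_{U_j}(x),\mu_j] = [\sigma_{U_i}(x)\Psi_{ij}(x),\mu_j] = [\sigma_{U_i}(x), \rho(\Psi_{ij}(x))\mu_j]$, whence $\mu_i = \rho(\Psi_{ij})\cdot\mu_j$, i.e.\ the gluing condition; note that once one tracks carefully whether the relevant transition map is $\Psi_{ij}$ or its inverse, this is precisely \eqref{E:gluing condition} up to that bookkeeping. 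Conversely a family $\{\mu_i\}$ satisfying the gluing condition patches to a well-defined global $\beta$ because on triple overlaps the cocycle identity from the Proposition following Definition \ref{Def:transition function} guarantees consistency, and one checks the patched object is independent of the chosen trivializations.

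The main obstacle, and the step deserving the most care, is getting the equivariance conventions exactly consistent across the three places they enter: the definition of the quotient $G$-action on $P\times V$ (here $(p,v)\cdot g = (pg,\rho(g^{-1})v)$), the $G$-equivariance law \eqref{E:basic differential form} for basic forms, and the cocycle/transition-function convention of Definition \ref{Def:transition function} together with the section-change formula \eqref{E:change section}. A sign or an inverse misplaced in any one of these flips $\rho(\Psi_{ij}^{-1})$ to $\rho(\Psi_{ij})$ in the gluing condition; the substance is trivial but the bookkeeping is where an error would hide. A secondary (genuinely soft) point is smoothness: one should remark that the lifts $\hat X_i$ can be chosen smoothly in a neighborhood (e.g.\ via a local connection or simply via a local trivialization), so that $\tilde\alpha$ and $\alpha$ are honest smooth forms and not merely pointwise-defined; this is standard and I would dispatch it in a sentence. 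Everything else—bilinearity, alternation, the $C^\infty(M)$-linearity making these genuine tensorial forms—is inherited directly from the corresponding properties on $P$ and needs no separate argument.
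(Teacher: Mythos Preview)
Your proof is correct and is the standard argument found in the references the paper cites (e.g.\ Kobayashi--Nomizu, Bleecker). However, the paper itself does \emph{not} prove this theorem: it states the result with a citation to Steenrod and then immediately applies it to conclude the corollary on local Lorentz violation. So there is no ``paper's own proof'' to compare against; you have supplied what the paper omits.

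One small remark on your bookkeeping caveat: the paper's gluing condition reads $\mu_i = \rho(\Psi_{ji}^{-1})\cdot\mu_j$, with the transition function indexed as $\Psi_{ji}$ rather than $\Psi_{ij}$. Your derivation gives $\mu_i = \rho(\Psi_{ij})\cdot\mu_j$, and since $\Psi_{ij} = \Psi_{ji}^{-1}$ by the proposition following Definition~\ref{Def:transition function}, these agree exactly---so your hedge about ``up to that bookkeeping'' is unnecessary here; the conventions match.
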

Thus using the theorem, we conclude that

\begin{corollary}{(Local Lorentz violation)}

The reason for the local Lorentz violation (\ref{E:Lorentz violation}) is due to the fact that the TEGR Lagrangian $L_{\text{GR}_{||}}$ is \textbf{not} a basic differential form, i.e, not a global-defined scalar on $M$.
\end{corollary}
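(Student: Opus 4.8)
The plan is to read both halves of the asserted equivalence through the identification $\overline{\Lambda}^k(P,V)\cong\Lambda^k(M,P\times_G V)$ stated just above, applied to the orthonormal frame bundle $\pi:F(M)\to M$ with structure group $SO(1,3)$. A Lagrangian $4$-form is $\mathbb{R}$-valued, so the representation at play is the trivial one, $\rho\equiv 1$; under it the gluing condition (\ref{E:gluing condition}) collapses to $L_i=L_j$ on every nonempty overlap $U_i\cap U_j$, which is exactly the operational notion of local Lorentz invariance introduced in the preceding definition. Thus being a basic differential form, being (the family of local representatives of) a globally defined scalar $4$-form on $M$, and being locally Lorentz invariant are one and the same property, so it suffices to produce a single overlap on which the two local representatives of $L_{\text{GR}_{||}}$ disagree.

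First I would pick two orthonormal sections $\s_{U_i}$ and $\s_{U_j}$ over an overlap $U_i\cap U_j\neq\emptyset$, with transition function $\Psi_{U_jU_i}=A:U_i\cap U_j\to SO(1,3)$, and substitute the frame-change law (\ref{E:frame change}) into the defining formula (\ref{E:teleparallel Lagrangian 1}) for $L_{\text{GR}_{||}}$, keeping track of how $H_\a$ in (\ref{E:TEGR excitation}), (\ref{E:special H_a 2}) transforms. The outcome is precisely the relation recorded in (\ref{E:Lorentz violation}): the two local representatives are tied by $\wt{\mathcal L}_G=\mathcal L_G-\tfrac{1}{2\k}\,d\!\bigl((A^{-1})^\a{}_\b\,d(A)^\b{}_\g\wedge\eta^\g{}_\a\bigr)$. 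Since for a suitable position-dependent $A$ this correction is a nonzero exact $4$-form, $\mathcal L_G$ and $\wt{\mathcal L}_G$ do not agree (and not merely up to a constant), the gluing condition fails, and $L_{\text{GR}_{||}}\notin\overline{\Lambda}^4(F(M),\mathbb{R})$. Conversely — the easy direction — had $L_{\text{GR}_{||}}$ been basic, its local representatives would obey $\s_{U_j}^*L_{\text{GR}_{||}}=\rho(\Psi_{U_iU_j}^{-1})\cdot\s_{U_i}^*L_{\text{GR}_{||}}=\s_{U_i}^*L_{\text{GR}_{||}}$, i.e.\ it would be locally Lorentz invariant, just as $T^\a$ and $\wt\O^{\a\b}$ are by (\ref{E:LM change section}) and (\ref{E:curvature transition}). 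So ``not basic'' and ``locally Lorentz violating'' are literally the same statement, which is the content of the corollary.

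The one step requiring genuine care — the main obstacle — is to be sure the exact correction in (\ref{E:Lorentz violation}) is really nonzero and not a bookkeeping artifact. I would settle this by tracing the non-invariance to its source: via (\ref{E:special H_a 2}), $L_{\text{GR}_{||}}$ is a contraction of the genuinely basic torsion $2$-form $T^\a$ with $H_\a$, and $H_\a$ is built from the contortion $K^{\m\n}=\wt\o^{\m\n}-\o^{\m\n}$ in which the Weitzenb\"ock piece $\o^{\m\n}\equiv 0$ privileges one parallelization; a position-dependent tetrad rotation trades it for a different flat connection, reproducing exactly the inhomogeneous behaviour of (\ref{E:connection transition})--(\ref{E:connection transition2}), which does not cancel after contraction. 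Equivalently, using the identity $L_{\text{GR}_{||}}=-\tfrac{1}{2\k}\wt\O^{\a\b}\wedge\eta_{\a\b}+d(\vt^\a\wedge\star d\vt_\a)$, the Einstein--Hilbert term is manifestly basic (the Levi-Civita curvature transforms tensorially by (\ref{E:curvature transition})), so the whole anomaly is carried by $d(\vt^\a\wedge\star d\vt_\a)$, and $\vt^\a\wedge\star d\vt_\a$ fails the gluing condition because $d\vt^\a$ acquires the inhomogeneous piece $d(A^{-1})^\a{}_\b\wedge\vt^\b$ under (\ref{E:frame change}). Evaluating this on a concrete non-constant boost or rotation $A(x)$ on a coordinate patch exhibits a nonzero correction and closes the argument.
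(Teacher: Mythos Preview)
Your proposal is correct and follows essentially the same route as the paper: the corollary is stated there as an immediate consequence of the preceding theorem identifying $\overline{\Lambda}^k(P,V)\cong\Lambda^k(M,P\times_G V)$, together with the already-displayed transformation law (\ref{E:Lorentz violation}) showing the gluing condition fails. You supply considerably more detail than the paper does—spelling out that the trivial representation reduces gluing to $L_i=L_j$, and taking care to argue the exact correction term is genuinely nonzero via the contortion and the identity $L_{\text{GR}_{||}}=-\tfrac{1}{2\k}\wt\O^{\a\b}\wedge\eta_{\a\b}+d(\vt^\a\wedge\star d\vt_\a)$—but the underlying logic is the same.
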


In contrast, we know that the Hilbert-Einstein action has a nice transformation behavior\footnote{In fact, before Hilbert gave the Lagrangian (\ref{E:HE Lagrangian}) Einstein took one different from the scalar curvature $R$ by a total differential and was not invariant under changes of the coordinate frame. Nevertheless, the corresponding field equations under the variation can still possess Lorentz invariance property.} under the change of frames. As one now looks back to the Einstein-Hilbert action, or even the more general Einstein-Cartan theory,
\[
\mathcal{L}_G := \frac{1}{2\k} \O^{\a\b} \wedge \eta_{\a\b} = \frac{1}{2\k} \, R \, \sqrt{-g} \, d^4x \in
\Lambda^4(M)
\]
it can be checked that it belongs to $\overline{\Lambda}^0(P,V)$. Therefore it is \emph{well-glued} on each overlap $U_\a \cap U_\b$ and hence globally defined showing no local Lorentz violation.

\begin{remark}
There is usually a misleading concept that a scalar without index, e.g.
a Lagrangian, is always an invariant scalar under change of frames, since it
may only be locally defined. One has to check the gluing condition carefully.
It should be now clear from the theorem that a locally invariant scalar must be originated from
$\overline{\Lambda}^k(P,V) \cong \Lambda^k(M,P\times_G V)$, and hence the word
``tensor" does not mean too much unless it is specified globally or locally.

However it is generally believed that physics law should remain the
same form in every frame which is an essence of a gauge
theory. This is reminiscent of the Yang-Mill's theory of the non-Abelian gauge. Local (gauge) violation occurs if we consider the field strength defined like $F = dA$ for some gauge potential $A\in \Lambda^1(M,\mathfrak{g})$ transforming as
\[
A =
\varphi^{-1} \, d\varphi + \varphi^{-1} \, \tilde{A} \, \varphi
\]
by (\ref{E:connection transition2}). To cure the local violation of $F$, the remedy is to add the term $A \wedge A$ such that $\widetilde{F} := dA + A\wedge A$, which was the Yang-Mills original idea. Thus $\widetilde{F}$ will then transforms as ("a tensor" by an abuse of language) $\widetilde{F} = \Psi^{-1} \, \widetilde{F} \Psi$, which is now globally defined.
\end{remark}

\subsection{Generalized teleparallel theories}

Despite the local Lorentz violation term, the interesting property of TEGR equivalence with GR makes some people to conceive an extension of teleparallel gravity to $f(T)$ theory (here $T=T_{\text{scalar}}$). This proposal mainly replaces the Lagrangian $L_{\text{GR}_{||}}$ with $L = f(T)$ by mimicking the $f(R)$ phenomenological gravity models, while the underlying geometry remain unchanged. Thus one can see that $f(T)$ theory contains nontrivial local
Lorentz violation terms that cannot be treated as boundary terms, see \cite{Li:2010cg}. In effect, the field equations of $f(T)$ deviate from those of GR. However \cite{Li:2013oef} finds that if one considers the weak field limit of the tetrad field on the solar system scale, the geodesic equation of $f(T)$ coincides with that of GR up to the Newtonian limit, which indicates that the current solar system observations hardly distinguish these two theories, regardless of the actual form of $f(T)$. However, $f(T)$ theory does demonstrate a different evolution history on the cosmological scale. Some recent studies show that $f(T)$ has a certain effect in on the sub-horizon scale \cite{Wu:2012hs}, which modifies the effective Newtonian constant and causes different formation history for the large scale structure. In this background, $f(T)$ serves as an alternative theory to compare with $\Lambda$CDM. Several papers are devoted in the subsequent investigations of $f(T)$ models.

In addition to $f(T)$ theory, another minimal extension of teleparallel gravity mimicking the scalar-tensor theory in GR was proposed, called\textbf{ teleparallel dark energy}. Such a theory was shown to provide a contrast to that of GR, see \cite{Geng:2011aj}, \cite{Geng:2011ka}. Some other problems of $f(T)$ were also found by Ong, Izumi, Nester, and Chen \cite{Ong:2013qja} that in a generic $f(T)$ theory there could be super-luminal propagating modes due to the effects of nonlinear constraints via the analysis of the corresponding characteristic equations and the Hamiltonian structure. Nevertheless, $f(T)$ theory still provides useful phenomenological explanations in cosmology problems \cite{Bamba:2010wb}, \cite{Geng:2012vn}.

%--------------------------------------------------

%--------------------------------------------------
\chapter{Five Dimensional Theories} \label{TEGR}

In this chapter we study the five dimensional theories of teleparallel
gravity. Kaluza \cite{Kaluza} and Klein \cite{Klein} attempted to apply extra dimension to unify
electromagnetism and gravity.\footnote{Around 1914, Nordstr\"{o}m proposed a similar 5D unified field theory attempt of gravitation and Maxwell's electromagnetism before Kaluza and Klein, see \cite{Nordstrom}.}

Kaluza's ansatz in 1921 was to split the 5-dimensional spacetime
$\overline{M}$ into a 4-dimensional spacetime $M$ and the Maxwell's
electrodynamics we perceive. However, Kaluza's attempt was to
consider the fibre as $\mathbb{R}^1$ while Oskar Klein in 1926
modified his theory with curled fibre $S^1 \cong U(1)$, a compact
Abelian Lie group. Nowadays we can interpret the Kaluza-Klein theory
as a principal fibre bundle $\pi: P = \overline{M}\to M$ with gauge
group $G = U(1)$, where $M$ denotes our 4-dimensional spacetime. The
splitting of $\overline{M}$ into $M$ with electromagnetic force is
then due to the nature of the local trivialization of the principal fibre bundle
in Def.(\ref{Def:PFB}) with an assigned metric $\bar{g} = \pi^* g +
\overline{A} \otimes \overline{A}$, where $g$ is the metric on $M$
and $A $ is a connection form on $M$ such that $\overline A = \pi^*
A $, see \cite{Jost2}.

In the following we are motivated by the Kaluza-Klein theory to
construct five-dimensional theories in teleparallel gravity.

\section{Five-dimensional teleparallel gravity construction}

As before, we keep our setting as general as possible. Let $f:M \to V$ be an isometric embedding of a 4-dimensional Lorentzian manifold $M$ (hypersurface) into a 5-dimensional Lorentzian manifold (the bulk) $(V,\bar{g})$.

Consider a tetrad $\{e_0,\ldots,
e_3\}$ on $M$ and its natural extension as a tetrad
$\{\bar{e}_0,\ldots,\bar{e}_3,\bar{e}_5\}$ on $V$, where
$\bar{e}_a:=f_*(e_a)$ and $\bar{e}_5$ is the unit normal vector
field to $M$. The corresponding dual coframe to $e_a $ is $\{\vt^0,\ldots,\vt^3\}$ and to $\bar{e}_A$ is $\{\bvt^0,\ldots,\bvt^3,\bvt^5\}$, and thus $f^*(\bvt^a) =\vt^a$. We shall identify $M$ with $\overline{M}:=f(M)\subset V$,
$X\in \mathfrak{X}(M)$ with $\bar{X}\in \mathfrak{X}(\overline{M})$, and
$\vt^a \in T^*M$ with $\bvt^a \in T^*\overline{M}$...,etc
interchangeably.

Indices of the tetrad on $M$ are labeled by small Latin letters
$a,b,c,\ldots,= 0,1,2,3$. For the local coordinate on $M$ the index is denoted by
Greek letters $\mu,\nu,\ldots =0,1,2,3$. Both the spatial components of the local coordinates and tetrad on $M$ share the labelling by middle Latin letters
$i,j,k, \ldots =1,2,3$; indices of the tetrad in $V$ are labelled by capital Latin letters
$A,B,C,\ldots,= 0,1,2,3,5$; indices of the local coordinate, e.g, $dx^M=(dx^{\mu},dx^5)$, in $V$ are denoted by capital Latin letters $M, N= 0, 1, 2, 3, 5$, where $5$ denotes the extra dimension ($5^{th}$ dimension). Quantities with a bar, e.g,
$\bar{e}_A$, are used to mean objects viewed in $V$.

%%%%%%%%%%%%%%%%%%%
\begin{figure}[h]
\begin{center}
\scalebox{0.6}{\includegraphics{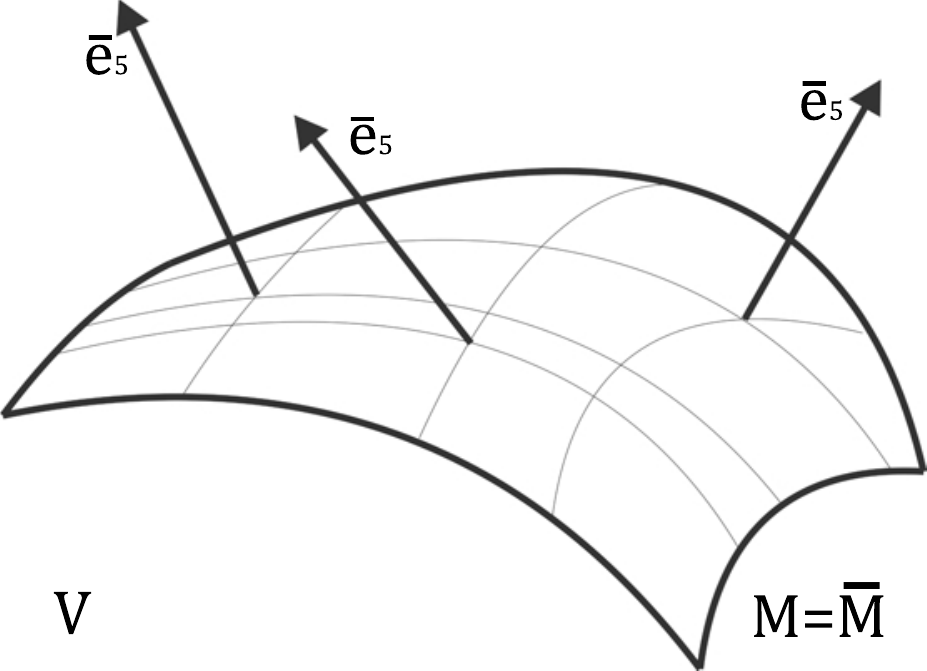}} \caption{The isometric embedding $f:M \to V$.}
\end{center}
\label{fig:embedding}
\end{figure}
%%%%%%%%%%%%%%%%%%%%%%

The 4-dimensional teleparallel gravity on $M$, as introduced in Chapter 4, is formulated by a tetrad $\vt^a$, and the
Weitzenb\"{o}ck connection defined by (\ref{E:Weitzenbock connection}) with respect to $e_a$, denoting $ \nabla^W e_b := \omega_b{}^a \, e_a$. The metric $g$ of $M$ is written as
\begin{equation}
ds^2 = g_{\mu\nu}\,dx^{\mu} \otimes dx^{\nu} = \e_{ab}\,\vt^a
\otimes \vt^b\,,
\end{equation}
where $\e_{ab}$ denotes the Minkowski metric. The metric signature is fixed as $(-,+,+,+,\veps)$ where $\veps:= \bar{g}(\bar{e}_5,\bar{e}_5)= \pm 1$ is the sign of the $5^{th}$ dimension.

With the TEGR Lagrangian (\ref{E:teleparallel Lagrangian 1})
\begin{equation}\label{E:torsion 4-form}
\mathcal{T} := T_a \wedge \star
\left[ {}^{(1)}T^a - 2 \, {}^{(2)}T^a - \frac{1}{2} \, {}^{(3)}T^a \right],
\end{equation}
or the component form (\ref{E:teleparallel Lagrangian 2})
\begin{subequations}\label{E:4D Torsion scalar}
\begin{eqnarray}
T &=& \frac{1}{4}\,T_{abc} \, T^{abc} +
      \frac{1}{2}\,T_{abc} \,T^{cba} -
      \,T^{b}{}_{ba} \, T^{c}{}_{c}{}^{a}\label{E:torsion scalar 1}\\
  &=& \frac{1}{4}\,T_{\mu\nu\sigma} \, T^{\mu\nu\sigma} +
      \frac{1}{2}\,T_{\mu\nu\sigma} \,T^{\sigma\nu\mu} -
      \,T^{\nu}{}_{\nu\mu} \, T^{\s}{}_{\s}{}^{\mu}   \,.\label{E:torsion scalar 2}
\end{eqnarray}
\end{subequations}
so that $\mathcal{T} = T \star 1$. Teleparallel gravity on the bulk $V$ is similarly defined
with the Weitzenb\"{o}ck connection 1-form $\{\bar{\omega}^B_A\in \Lambda^1(V)\}$ on $V$ corresponding to $\{\bar{e}_0,\ldots,\bar{e}_3,\bar{e}_5 \}$, and the 5-form TEGR Lagrangian $\bar{\mathcal{T}} $ for $V$
\begin{equation}\label{E:5D Torsion scalar}
\bar{\mathcal{T}} =  \bar{T}_A \wedge \bar{\star}
\left[ {}^{(1)}\bar{T}^A - 2 \, {}^{(2)}\bar{T}^A
- \frac{1}{2} \, {}^{(3)}\bar{T}^A \right]\,,
\end{equation}
where $\bar{\star}$ is the Hodge dual operator in $(V,\bar{g})$ and
$\bar{T}^A:= \nabla^W \bvt^A = \bar{d}\bvt^A + \bar{\omega}^A_B
\wedge \bvt^B = \bar{d} \bvt^A$ is the torsion 2-form on $V$. Here there are two Cartan differentials $d:\Lambda^k(M) \to
\Lambda^{k+1}(M)$ and $\bar{d}:\Lambda^k(V) \to \Lambda^{k+1}(V)$ to be carefully distinguished, along with the requirement
$\bar{d}\bigr|_{\Lambda^k(M)} = d\bigr|_{\Lambda^k(M)} $. The gravitational action on $V$ is given by
\begin{equation}
{}^{(5)}S= - \frac{1}{2\kappa_5} \int \bar{\mathcal{T}} = - \frac{1}{2\kappa_{5}}
\int \bar{T} \,  \bar{\star} 1 \, ,
\end{equation}
where $\kappa_{5}=8\,\pi\,G^{(5)}$ represents the 5-dimensional gravitational
coupling, $\bar{T}$ stands for the torsion scalar of $V$, and
$\bar{\star} 1 = {}^{(5)}e \, d^5x =\det (e_{M}^A) \, d^5x $
is the volume form of $V$.

In order to understand 5-dimensional teleparallel gravity, it is necessary to look back to the analysis in that of GR.

%--------------------------------------------------------

\section{Five-dimensional gravity of GR}

Here we provide an approach, which is rarely found in the literature, using Cartan's moving frame to derive the Gauss-Codazzi Theorem that provides a useful basis for the later construction. The Gauss-Codazzi equation has the importance of connecting relations between the 5-dimensions and the 4-dimensions, and hence gives the projected information of the five-dimensional spacetime down to the 4-dimensional spacetime.

Keeping the notations for the setting from the last section, and suppose now that $(V,\overline{\nabla} ,\bar{g})$ and $(M,g,\nabla) $ are both semi-Riemannian manifolds. From Cartan's structure equation (vanishing torsion),
\begin{equation}\label{E:Cartan structure}
\begin{aligned}
\bar{d} \, \bvt^A &+ \bar{\o}_B{}^A \wedge \bvt^B =0 , \qquad \mbox{(on $\overline{M}$)}\\
d \vt^a &+ \o_b{}^a \wedge \vt^b =0 , \qquad \mbox{(on $M$)}
\end{aligned}
\end{equation}
in particular, from $(\ref{E:Cartan structure})_1$ with $A=5$ and $A=a$ we have
\begin{equation}
\begin{aligned}
\bar{d} \, \bvt^5 &+ \bar{\o}_b{}^5 \wedge \bvt^b =0 ,\\
\bar{d} \, \bvt^a &+ \bar{\o}_5{}^a \wedge \bvt^5 + \bar{\o}_b{}^a \wedge \bvt^b =0
\end{aligned}
\end{equation}
also from $(\ref{E:Cartan structure})_2$, we may replace $\bar{d}\bvt^a = -\o_b{}^a \wedge \vt^b$ since we require $\bar{d}\bigr|_{\Lambda^k(M)} = d\bigr|_{\Lambda^k(M)} $. Then we derive
\begin{equation}\label{E:extrinsic connection}
\left( \bar{\o}_b{}^a - \o_b{}^a \right) \wedge \vt^b + \bar{\o}_5{}^a \wedge \bvt^5 =0.
\end{equation}

Recall that the extrinsic curvature (or the second fundamental form) is defined by
\begin{equation}
\nabla_{\overline{X}} \overline{Y} -\nabla_X Y = K(X,Y) \, \bar{e}_5
\end{equation}
or alternatively
\begin{equation}\label{Def:extrinsic curv}
K(X,Y) : = \veps \bar{g} \left( \nabla_{\overline{X}} \overline{Y} -\nabla_X Y  , \bar{e}_5  \right) = \veps \bar{g} \left( \nabla_{\overline{X}} \overline{Y} , \bar{e}_5  \right).
\end{equation}
Taking $X = e_a$, $Y= e_b$, one obtains
\begin{equation}\label{E:extrinsic curv}
\bar{\o}_a{}^5(e_b) :=\bar{\o}_{ba}{}^5 = K(e_a, e_b) := K_{ab}.
\end{equation}
Note that here $K_{ab}$ denotes the extrinsic curvature 2-tensor, not the contortion 1-form (\ref{E:contortion}). On the other hand, the metric compatible condition
\begin{equation}
0  = \bar{d} \bar{g} \left(\bar{e}_a, \bar{e}_5 \right) =  \bar{g} \left( \overline{\nabla} \bar{e}_a, \bar{e}_5 \right) + \bar{g} \left( \bar{e}_a, \overline{\nabla} \bar{e}_5 \right)
\end{equation}
helps us to derive
\begin{equation}
\bar{\o}_5{}^b (e_a):= -\eta^{bc} \veps \, \bar{\o}_c{}^5 (e_a) =  -\eta^{bc} \veps \,  \bar{\o}_{ac}{}^5  = - \veps K^b{}_a
\end{equation}
where (\ref{E:extrinsic curv}) is used. Finally combined with (\ref{E:extrinsic connection}) we obtain the relation of connection 1-forms on $V$ and on $M$, given by
\begin{equation}\label{E:extrinsic connection 2}
\bar{\o}_b{}^a = \o_b{}^a - \veps K_b{}^a \bvt^5.
\end{equation}
Therefore only with some simple computation, one has attained
\begin{theorem}{(\cite{Bartnik:2002cw})}
\begin{equation}\label{E:Gauss eq}
\begin{aligned}
\overline{R}^a{}_{bcd} &= R^a{}_{bcd} - \veps \,  K^a{}_c \, K_{db} + \veps \, K^a{}_d \, K_{cb}, \qquad \mbox{(Gauss)}\\
\overline{R}  &= R + 2 \left( \veps \mathcal{L}_{\bar{e}_5} tr K - K^{ab} \, K_{ab} \right), \qquad \mbox{(Mainardi)}
\end{aligned}
\end{equation}
\end{theorem}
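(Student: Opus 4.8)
The plan is to feed the connection identity $(\ref{E:extrinsic connection 2})$, together with its companions for the forms carrying the normal index, into Cartan's \emph{second} structure equation on the bulk,
\[
\overline{\Omega}_b{}^a = \bar d\,\bar{\o}_b{}^a + \bar{\o}_C{}^a \wedge \bar{\o}_b{}^C = \tfrac{1}{2}\,\overline{R}^a{}_{bCD}\,\bvt^C\wedge\bvt^D ,
\]
where — unlike in $(\ref{E:Cartan structure})$ — the summed index $C$ now runs over \emph{all} bulk values $0,1,2,3,5$. From the computation preceding $(\ref{E:extrinsic connection 2})$ I already have the tangential parts $\bar{\o}_a{}^5 = K_{ab}\,\vt^b + (\cdots)\,\bvt^5$ and $\bar{\o}_5{}^a = -\veps\,K^a{}_b\,\vt^b + (\cdots)\,\bvt^5$, and from $(\ref{E:Cartan structure})_1$ with $A=5$ that $\bar d\,\bvt^5 = -\bar{\o}_b{}^5\wedge\bvt^b$, whose purely tangential part vanishes because $K_{ab}$ is symmetric. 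These are the only ingredients needed.

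\textbf{Step 1 (Gauss).} First I would restrict the displayed identity to purely tangential $2$-forms $\vt^c\wedge\vt^d$. In $\bar d\,\bar{\o}_b{}^a$ the piece $\bar d(-\veps\,K_b{}^a\,\bvt^5)$ contributes nothing tangential (every term keeps a $\bvt^5$, using $\bar d\,\bvt^5$ above), leaving $d\o_b{}^a$; in $\sum_{C\le 3}\bar{\o}_C{}^a\wedge\bar{\o}_b{}^C$ the cross terms carry $\bvt^5$ and $\bvt^5\wedge\bvt^5=0$, leaving $\o_c{}^a\wedge\o_b{}^c$; together these rebuild $\Omega_b{}^a$. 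The new contribution is the $C=5$ term $\bar{\o}_5{}^a\wedge\bar{\o}_b{}^5$, whose tangential part is $(-\veps K^a{}_d\,\vt^d)\wedge(K_{bc}\,\vt^c) = \veps\,K^a{}_d K_{bc}\,\vt^c\wedge\vt^d$. Matching coefficients of $\vt^c\wedge\vt^d$ and antisymmetrising in $c,d$ (using $K_{ab}=K_{ba}$) should give exactly $\overline{R}^a{}_{bcd} = R^a{}_{bcd} - \veps K^a{}_c K_{db} + \veps K^a{}_d K_{cb}$.

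\textbf{Step 2 (scalar / Mainardi).} Next I would contract the Gauss equation with $\eta^{ad}$, obtaining $\eta^{bd}\overline{R}^a{}_{bad}$ in terms of $R$, $(\mathrm{tr}\,K)^2$ and $K^{ab}K_{ab}$. Since $\bar g^{55}=\veps$, the bulk scalar curvature is $\overline{R} = \eta^{bd}\,\overline{Ric}_{bd} + \veps\,\overline{Ric}_{55}$, and what is still missing are the mixed components $\overline{R}^5{}_{b5d}$ and $\overline{R}^a{}_{5a5}$. These are \emph{not} algebraic in $K$; I would read them off from $\overline{\Omega}_a{}^5$ (equivalently $\overline{\Omega}_5{}^a$) evaluated on an argument pair containing $\bar e_5$, where $\bar d\,\bar{\o}_a{}^5$ produces the normal derivative $\bar e_5\big(\bar{\o}_a{}^5(e_b)\big) = \bar e_5(K_{ab})$. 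Its trace over $a=b$ is $\mathcal{L}_{\bar e_5}\mathrm{tr}\,K$; the remaining algebraic terms from $\bar{\o}_C{}^a\wedge\bar{\o}_5{}^C$ recombine with the $K$-quadratic pieces above, and in the conventions fixed by the signature $(-,+,+,+,\veps)$ the whole thing should collapse to $\overline{R} = R + 2\big(\veps\,\mathcal{L}_{\bar e_5}\mathrm{tr}\,K - K^{ab}K_{ab}\big)$.

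\textbf{Main obstacle.} I expect Step 1 to be essentially bookkeeping, the only real danger being discipline about \emph{which} index sums include the normal slot $5$ and about raising/lowering with $\eta$ versus $\bar g$ in the signature $(-,+,+,+,\veps)$ — sign errors in $\veps$ are the chief pitfall. The genuine obstacle is Step 2: the $\mathcal{L}_{\bar e_5}\mathrm{tr}\,K$ term is the Codazzi--Mainardi content, so it cannot come out of the purely algebraic substitution that settles Gauss. One must actually differentiate the second fundamental form along $\bar e_5$, i.e.\ use $\bar d\,\bar{\o}_a{}^5$ and the $C=5$ slots of the structure equation rather than only the relation $(\ref{E:extrinsic connection 2})$ between connection forms; getting this term with the right coefficient, and checking that the leftover quadratic terms really do assemble as claimed, is where the work lies.
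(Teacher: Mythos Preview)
Your proposal is correct and follows essentially the same route as the paper: substitute the relation $(\ref{E:extrinsic connection 2})$ into Cartan's second structure equation $\overline{\Omega}^A{}_B = \bar d\,\bar{\o}_B{}^A + \bar{\o}_C{}^A\wedge\bar{\o}_B{}^C$, read off the Gauss equation from the $(A,B)=(a,b)$ tangential component, then obtain the missing piece $\overline{R}^a{}_{5a5} = \veps\,\mathcal{L}_{\bar e_5}\mathrm{tr}\,K - K^{ab}K_{ab}$ from the $(A,B)=(a,5)$ component and assemble the scalar curvature. The only cosmetic difference is that the paper packages the final contraction as $\overline{R} = \overline{\Omega}^{AB}\wedge\bar{\eta}_{AB} = \overline{\Omega}^{ab}\wedge\bar{\eta}_{ab} + 2\,\overline{\Omega}^{a5}\wedge\bar{\eta}_{a5}$ rather than your index decomposition $\overline{R} = \eta^{bd}\,\overline{Ric}_{bd} + \veps\,\overline{Ric}_{55}$, but these are the same computation.
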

\begin{proof}
Simply use the relation (\ref{E:extrinsic connection 2}) on the curvature 2-form
\begin{equation}
\overline{\Omega}^A{}_B := \bar{d} \bar{\o}_B{}^A + \bar{\o}_C{}^A \wedge \bar{\o}_B{}^C = \frac{1}{2} \overline{R}^A{}_{BCD} \, \bvt^C \wedge \bvt^D.
\end{equation}
With $A=a$ and $B=b$, one can derive the first equation; with $A=a$, $B=5$ one can derive
\begin{equation}\label{E:U(1) action}
\overline{R}^a{}_{5b5} = \veps \mathcal{L}_{\bar{e}_5} K^a{}_b - K^{ac} \, K_{cb} , \quad \overline{R}^a{}_{5a5} = \veps \mathcal{L}_{\bar{e}_5} tr K - K^{ab} \, K_{ab},
\end{equation}
where $\mathcal{L}_{\bar{e}_5}$ is the Lie derivative with respect to the normal direction. The second identity of the curvature scalar is obtained from
\begin{equation}\label{E:R+U(1) action}
\overline{R} = \overline{\O}^{AB} \wedge \overline{\eta}_{AB} = \overline{\O}^{ab} \wedge \overline{\eta}_{ab} + 2 \overline{\O}^{a5} \wedge \overline{\eta}_{a5}.
\end{equation}
\end{proof}

One notices that if we put the so-called cylindrical condition (see (\ref{E:Cylindrical condition})), namely $\mathcal{L}_{\bar{e}_5} K^a{}_b \equiv 0$, then $(\ref{E:Gauss eq})_2 $ reads $\overline{R}  = R - 2 K^{ab} \, K_{ab} $, where the last term is reminiscent of the Maxwell's electrodynamics action $-\frac{1}{4} F_{ab} \, F^{ab}$ with a suitable reparametrization \cite{Ryder}. If we interpret the extrinsic curvature of $M$ in $V$ as the electrodynamics we perceive, then the five-dimensional gravity shall induce four-dimensional gravity plus Maxwell's electrodynamics. Also notice that there is no extra coupling term between gravity and electromagnetism in this theory.

With these understanding in GR, next we turn to teleparallel gravity.

%----------------------------------------------------------------

\section{Effective gravitational action on $M$}

To derive the effective teleparallel gravity from $V$ onto $M$, a crucial ingredient is to find a relation analogous to the Gauss equation (\ref{E:Gauss eq}). However one notices in TEGR that the Weitzenb\"{o}ck connection leads to vanishing extrinsic curvature (\ref{Def:extrinsic curv}) of $M$ in $V$, $K_{ab} \equiv 0 $ , since $\overline{\nabla}^W \overline{e}_5 = \bar{\o}_5{}^a \, e_a \equiv 0$. Thus the Gauss equation (\ref{E:Gauss eq})
\[
\overline{R}^a{}_{bcd} = R^a{}_{bcd} - \veps \,  K^a{}_c \, K_{db} + \veps \, K^a{}_d \, K_{cb}
\]
simply becomes a zero identity due to $\overline{R}^a{}_{bcd} = 0$ and $R^a{}_{bcd} = 0 $ in teleparallel gravity, which indicates that the brane is like a
flat-paper of $\mathbb{R}^2$ put into an Euclidean space $\mathbb{R}^3$.

The extra degree of freedom in TEGR actually lies in the torsion 2-form.
From (\ref{E:T and R}) we may decompose the torsion $\bar{T}^a$ of
$V$ into normal and parallel components with respect to $M$
\begin{equation}\label{E:torsion decompose}
\bar{T}^a = T^a + \bar{T}^a{}_{b 5} \, \bvt^b \wedge \bvt^5 \,,
\end{equation}
where $T^a=\frac{1}{2} T^a{}_{bc}\, \vt^b \wedge \vt^c$ is the
torsion 2-form on $M$. Therefore the second term in (\ref{E:torsion decompose}) plays a
role like the extrinsic curvature $K_{ab}$ in GR.

Since $f:M \to V$ is an isometric embedding, for any chart
$(\chi,U=(x^{\mu})\subseteq \mathbb{R}^4)$ of $M$ we can always find
a local normal form $(\Phi,W=(x^{\mu},y)\subseteq  \mathbb{R}^5)$
such that $F=\Phi^{-1} \circ f\circ \chi :\chi(U) \to \Phi(W)$ is
given by $F(x^{\mu}) = (x^{\mu},y=0)$ and the 5D metric $\bar{g}$ is in the form
\begin{equation}
\bar{g}_{MN} =
\begin{pmatrix}
    g_{\mu\nu}(x^{\mu},y)  &  0  \\
    0  &  \veps \phi^2(x^{\mu},y) \,,
\end{pmatrix}
\end{equation}
where $y=x^5$. The choice of the notation $\phi$ is in order for the extra dimensional effect to mimic a scalar field. Within such coordinates, we obtain a preferred frame for
$V$ with
\begin{equation}\label{E:special frame}
\bar{e}_A= \left( e_a, \frac{1}{\phi} \, \frac{\partial}{
\partial y} \right), \quad \bvt^A = \left( \bvt^a, \phi  \, dy \right)\,,
\end{equation}
such that $\bar{g} = \eta_{ab} \, \vartheta^a\otimes\vartheta^b+
\varepsilon \bvt^5\otimes \bvt^5$. In this case,
\begin{equation}
\bar{T}^n = \bar{d} \bvt^n + \bar{\omega}^n_A \wedge \bvt^A =
\bar{d}\bvt^n = \frac{e_a(\phi)}{\phi} \, \bvt^a \wedge \bvt^n
\end{equation}
one reads $\bar{T}^{5}{}_{b 5}= \frac{1}{\phi} \, e_{b}(\phi)$ where $e_{b}(\phi) := (d\phi)(e_b)$. Thus we conclude that in the frame (\ref{E:special frame}), the
nonvanishing torsion components of $V$ are $T^{a}{}_{bc}$,
$\bar{T}^{a}{}_{5b}$ and
$\bar{T}^{5}{}_{b 5}= \frac{1}{\phi} \, e_{b}(\phi)$.

So far we have not yet specified the type of our five-dimensional spacetime. If we now let the ambient space $V$ be a local product of $U \times W$,
where $U \subseteq M$ is open in $M$ and $W$
corresponds to the extra spatial dimension. Utilize (\ref{E:5D Torsion scalar}) and the local product
structure of $V$, we may compute the integration of the five-dimensional
action over the base space $U$ of $M$,
\begin{equation}\label{E:5D action}
S_{\text{bulk}} = \frac{-1}{2\k_5} \int_U \int_W \left(T  +
\frac{1}{2} \left( T_{ab5} \, T^{ab5} + T_{a5b}\, T^{b5a} \right) +
\frac{2}{\phi} e_a(\phi) \, t^a -  t_5 \cdot t^5 \right) \phi dy \,
dvol^4 ,
\end{equation}
where $T$ is the (induced) 4-dimensional torsion scalar
defined in (\ref{E:4D Torsion scalar}) and let $t^a := T_{ab}{}{}^{b}$ denote the torsion trace instead of $T_a$ to avoid confusion. The equation (\ref{E:5D action}) then provides us with a general
effective action on the hypersurface $M$ in TEGR theory. Next, we concentrate on
two specific theories of braneworld and Kaluza-Klein scenarios.

%%%%%%%%%%%%%%%%%%%
\begin{figure}[h]
\begin{center}
\scalebox{0.17}{\includegraphics{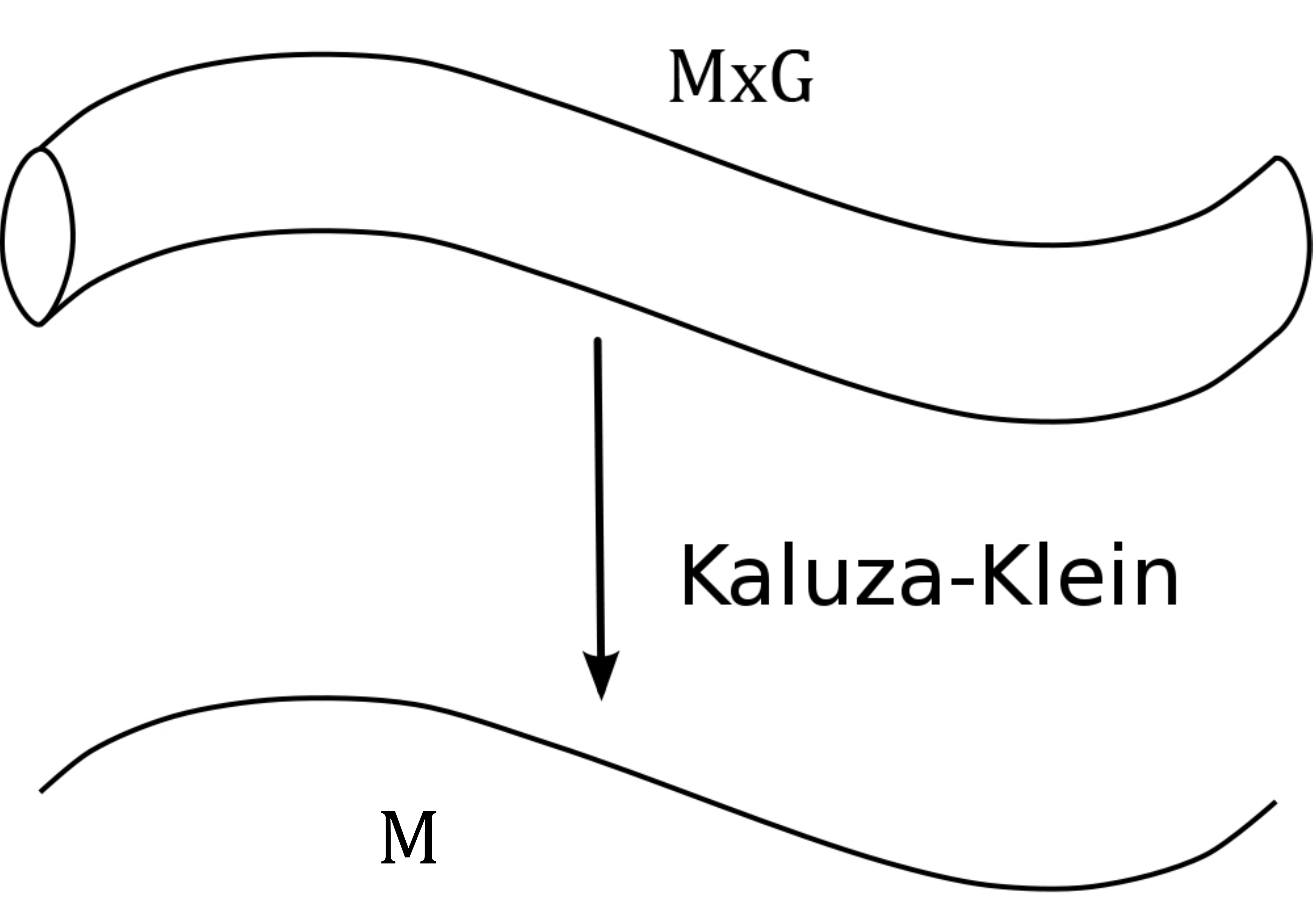}} \caption{The space $M \times G $ is compactified over the compact Lie group $G$, and after the Kaluza-Klein decomposition we have an effective field theory over $M$ (Image courtesy of Wikipedia).}
\end{center}
\label{fig:KK}
\end{figure}
%%%%%%%%%%%%%%%%%%%%%%

\subsection{Braneworld Scenario}
In the braneworld scenario, we set the hypersurface $M$ located at
$y=0$ as a brane and specify the fibre $W = \mathbb{R}$ such that $V = M \times \mathbb{R}$.  From (\ref{E:5D action}), the general action on the bulk reads
\begin{equation}\label{E:bulk action}
S_{\text{bulk}} = \frac{-1}{2\k_5} \int_M \int_{\mathbb{R}}
\left\{\phi T + \phi \left( \frac{1}{2} \left( T_{ab5} \, T^{ab5} +
T_{a5b}\, T^{b5a} \right) + \frac{2}{\phi} e_a(\phi) \, t^a -  t_5
\, t^5 \right)\right\} \, dy \, dvol^4
\end{equation}
The first term of the parentheses $\int_M \int_{\mathbb{R}}\phi \, T \, \sqrt{-g} \, dy
\, d^4x$ in (\ref{E:bulk action}) is recognized as the usual TEGR Lagrangian with a non-minimally
coupled scalar field $\phi$ on the brane localized in the
fifth-dimension, which is analogous to the non-minimally
coupled Hilbert action $\int_M
\int_{\mathbb{R}} \phi \, R \, \sqrt{-g} \, dy \, d^4x$ of
4-dimension in GR. The second term arises from the fifth-dimensional
component.

According to the \emph{induced-matter theory}, the fifth-dimensional
component and the flow along the 5th-dimension of the second term in (\ref{E:bulk action})
can be regarded as the induced-matter
from the geometry.
It is the projected effect due to the extra spatial dimension.
We note that the \emph{mathematically equivalent formulation} between
the induced-matter and braneworld theories has been demontsrated by Ponce de Leon
in~\cite{PonceDeLeon:2001un}.

\subsection{Kaluza-Klein Theory}

If we identity the space $V$ locally as $U
\times S^1$ where topologically $S^1 \cong U(1)$ and consider the 4-dimensional effective low-energy
theory to require the \emph{Kaluza-Klein ansatz} in TEGR,
\begin{equation}
\label{E:Cylindrical condition}
e_5(g_{\mu\nu}) =0 \quad \mbox{or}\qquad \frac{\partial}{\partial
y}g_{\mu\nu} =0, \quad \mbox{(Kaluza-Klein ansatz)}
\end{equation}
which is also called the \textbf{cylindrical condition} to indicate that the function is independent of the fibre level. With this condition the theory allows only the \emph{massless Fourier mode} \cite{Overduin:1998pn}.
The metric is then reduced to
\begin{equation}
\bar{g}_{MN} =
\begin{pmatrix}
    g_{\mu\nu}(x^{\mu})  &  0  \\
    0                    &  \phi^2(x^{\mu})
\end{pmatrix}
\end{equation}
with $\veps=+1$. By the \emph{KK ansatz},
we see $T^a{}_{b5} =0$ and $t^5=0$ so that the extra-dimensional integration is trivial, i.e,
$\int_{S^1} \phi(x^{\mu})\,dy=2\pi r\, \phi(x^{\mu})$,
where $r$ is the radius of the fifth-dimension. As a result, we obtain
\begin{equation}
\label{E:effective KK action}
S_{\text{KK}} = \frac{-1}{2\kappa_{4}} \int_U \left( \phi\, T
+ 2\,\partial_{\mu}\phi \, t^{\mu} \right) e \, d^4x ,
\end{equation}
where $\kappa_{4}:= \k_5 /2\pi r$ is the \emph{effective}
4-dimensional gravitation coupling constant.
%%%%%%%%
We point out that our result of (\ref{E:effective KK action})
disagrees with that given in~\cite{Bamba:2013fta}.
One can adopt a simple case with $F(T)=T$ in Eq.~(5) of~\cite{Bamba:2013fta} to check that the resultant equation
differs from ours in~(\ref{E:effective KK action}).

In the next section, we examine our five-dimensional theory of TEGR in a cosmological background.

%--------------------------------------------------

\section{Friedmann Equation of Braneworld Scenario in TEGR}

Before the computation for the TEGR theory, it is necessary to reformulate the FLRW cosmology of GR in differential forms.

\subsection{FLRW cosmology in GR}

In GR, the FLRW universe of sectional curvature $k=0$ is described by
\begin{equation}
g_{\mu\nu } = diag \left( -1 , a^2(t), a^2(t), a^2(t) \right)
\end{equation}
a canonical choice of coframe field is given by
\begin{equation}
\vt^0 = dt, \quad  \vt^i = a(t) \, dx^i
\end{equation}

First we need to compute the connection 1-form of the Levi-Civita connection $\widetilde{\nabla} e_a = \tdo_a{}^b \, \vt^b$. Start with Cartan's structure equation $(\ref{E:Cartan structure})_2$
\begin{equation}
0 = T^0  = \cancel{d\vt^0} + \tdo_i{}^0 \wedge \vt^i
\end{equation}
on the isotropic condition, one can choose  $\widetilde{\o}_i{}^0 = B(t,\mathbf{x}) \, dx^i$ for some temporal function $B(t,\mathbf{x})$, and the spatial components yield
\begin{equation}\label{E:FRW Cartan struc2}
0 = T^i  = d\vt^i + \tdo_j{}^i \wedge \vt^j + \tdo_0{}^i \wedge \vt^0.
\end{equation}
Again, on the isotropic consideration, we may assume $\tdo_j{}^i = C_{kj}{}{}^i (t,\mathbf{x}) \, dx^k$, then (\ref{E:FRW Cartan struc2}) tells us that
\[
0 = \left( \dot{a}(t) - B(t,\mathbf{x}) \right) \, dt \wedge dx^i + C_{kj}{}{}^i(t,\mathbf{x})\, dx^k \wedge a(t ) \, dx^j.
\]
By the linear independence of $\{dt, dx^i\}$, one arrives at $B(t,\mathbf{x}) = \dot{a}$ and $C_{kj}{}{}^i(t,\mathbf{x}) = 0$, which is
\begin{equation}\label{E:Riemann connection}
\tdo_i{}^0  =  \eta_{ij} \, \tdo_0{}^j = \frac{\dot{a}}{a} \, \vt^i, \quad \mbox{and} \quad  \tdo_j{}^i  \equiv 0.
\end{equation}
where we have used the relation $\tdo_0{}^i = - \eta_{00} \, \eta^{ij} \, \tdo_j{}^0$ from the metric compatibility. With (\ref{E:Riemann connection}), the curvature 2-forms (\ref{E:T and R}) are
\begin{equation}
\begin{aligned}
\Omega^{jk} &= \eta^{km} \, \O^j{}_m = \frac{\dot{a}^2}{a^2} \, \vt^j \wedge \vt^k \\
\Omega^{0j} &= \eta^{jm} \, \O^0{}_m = \frac{\ddot{a}}{a} \, \vt^0 \wedge \vt^j
\end{aligned}
\end{equation}
one can readily verify the Friedmann equation from (\ref{E:EC EOM})
\[
\begin{aligned}
\frac{1}{2} \, \O^{\b\g} \wedge \eta_{0\b\g}  &= \k t_0 , \quad \Leftrightarrow  \quad \left( \frac{\dot{a}}{a} \right)^2 = \k \rho, \\
\frac{1}{2} \, \O^{\b\g} \wedge \eta_{i\b\g}  &= \k t_i , \quad \Leftrightarrow  \quad 2\frac{\ddot{a}}{a} + 4 \left( \frac{\dot{a}}{a} \right)^2 = \k (\rho - p)
\end{aligned}
\]
where the energy-momentum 3-form is defined by $t_0 = \rho \, \eta_0$ and $t_i =   p \, \d_{ij} \, \eta^j$. Again, one observes that the use of differential forms reduces tedious Christoffel symbol and curvature components computation. With these helpful computations in the Riemannian case, we may return to TEGR.

\subsection{FLRW Brane Universe}

Now we apply the teleparallel braneworld effect in
cosmology. Assume that the brane $M$ at $y=0$ is a
homogeneous and isotropic universe. The bulk metric $\bar{g}$
is further assumed to be a maximally symmetric 3-space with spatially flat ($k=0$) by
\begin{equation}\label{E:bulk metric}
\bar{g}_{MN} = diag \left(-1, a^2(t,y), a^2(t,y), a^2(t,y),
\veps\,\phi^2(t,y)\right)
\end{equation}
by choosing the canonical coframe field
\begin{equation}
\bvt^0 = dt, \quad \bvt^i = a (t,y)\, dx^i, \quad and \quad
\bvt^5 = \phi(t,y) \, dy
\end{equation}

Subsequently, the torsion 2-forms are
\begin{equation}
\bar{T}^0= \bar{d} \, \bvt^0 = 0,\qquad \bar{T}^i = \bar{d} \bvt^i =
\frac{\dot{a}}{a} \, \bvt^0 \wedge \bvt^i + \frac{a'}{a\phi} \,
\bvt^5 \wedge \bvt^i, \qquad \bar{T}^5 =  \frac{\dot{\phi}}{\phi} \,
\bvt^0 \wedge \bvt^5\,,
\end{equation}
where the \emph{dot} and  \emph{prime} stand for the partial
derivatives respect to $t$ and $y$, respectively. Thus the non-vanishing torsion components of five-dimension are read off
\begin{equation}
T_{0i}{}{}^i = \frac{\dot{a}}{a}, \quad T_{ni}{}{}^i = \frac{1}{\phi} \frac{a'}{a}, \quad  T_{na}{}{}^n = \frac{e_a(\phi)}{\phi}
\end{equation}
With these data, the bulk Lagrangian (\ref{E:bulk action}) in the FRLW cosmology background has the form
\begin{equation}
\bar{\mathcal{T}} = \left[ T + \left(\frac{3-9\,\veps}{\phi^2}\,
\frac{a'^2}{a^2} + 6\, \frac{\dot{a}}{a} \, \frac{\dot{\phi}}{\phi}
\right) \right] dvol^5
\end{equation}
where $\veps=+1$ and $T = 6\dot{a}^2/a^2$ is the usual 4-dimensional
scalar torsion.

\subsection{Equations of Motion}
The gravitational field equations on the bulk can be derived from the
formulation given
in~\cite{Gronwald:1997bx,Obukhov:2002tm}. The equations
of motion on $V$ are 4-forms
\begin{equation}\label{E:4-form EOM}
\bar{D}\bar{H}_A - \bar{E}_A = - 2\,\k_5\,{}^{(5)} \bar{\Sigma}_A\,,
\end{equation}
with
\begin{eqnarray}\label{Def:H_A,E_A}
\bar{H}_A &=& (-2) \bar{\star} \left( {}^{(1)}\bar{T}_A
  - 2 \, {}^{(2)}\bar{T}_A
  - \frac{1}{2} \, {}^{(3)}\bar{T}_A \right)\,, \nonumber\\
\bar{E}_A &:=& i_{\bar{e}_A}(\bar{\mathcal{T}})
  + i_{\bar{e}_A}(\bar{T}^B) \wedge \bar{H}_B\,,
  \nonumber\\
\bar{\Sigma}_A &:=& \frac{\delta \bar{L}_{mat}}{\delta \bvt^A}\,,
\end{eqnarray}
where $\bar{\Sigma}_A$ is the canonical energy-momentum 4-form of
matter fields, and   $\bar{H}_A$ can be
simplified as~\cite{Obukhov:2002tm}
\begin{equation}\label{E:H_A}
\bar{H}_A = \left( \bar{g}^{BC}\bar{K}^D_C \right) \wedge
\bar{\star}\left( \bvt_A \wedge \bvt_B \wedge \bvt_D \right)\,,
\end{equation}
with $\bar{K}^D_C := \bar{\omega}^D_C - \widetilde{\omega}^D_C$ being the
contortion 1-form.

Following the same procedure as in (\ref{E:Riemann connection}), we are able to derive the unique Levi-Civita
connection 1-form $\widetilde{\omega}^D_C$ with respect to the coframe in $V$, given by
\begin{equation}\label{E:Levi-Civita connection}
\begin{aligned}
\widetilde{\omega}^0_i &= \,\, \frac{\dot{a}}{a} \, \bvt^i,  \quad
\widetilde{\omega}^i_0 = \,\, \widetilde{\omega}^0_i ,\quad
\widetilde{\omega}^0_5 = \veps \frac{\dot{\phi}}{\phi} \, \bvt^n ,
\quad \widetilde{\omega}^5_0 = \veps \, \widetilde{\omega}^0_5\, , \\
\widetilde{\omega}^5_j &= - \veps \frac{a'}{\phi a} \, \bvt^j, \quad
\widetilde{\omega}^j_5 = -\veps \widetilde{\omega}^i_j,\quad
\widetilde{\omega}^i_j \equiv 0\,.
\end{aligned}
\end{equation}
From (\ref{E:H_A}) and (\ref{E:Levi-Civita connection}), some computation yields
the equations of  motion for the bulk:
\begin{equation} \label{E:Friedmann eqn}
\begin{aligned}
\bar{D} \bar{H}_0 - \bar{E}_0 &= 3 \left[ \left(
    \frac{\dot{a}^2}{a^2} + \frac{\dot{a}}{a} \frac{\dot{\phi}}{\phi}
    \right) - \frac{\veps}{\phi^2} \left( \frac{a''}{a} -
    \frac{a'}{a}\frac{\phi'}{\phi}\right) - \left(\frac{1+\veps}
    {2\phi^2}\right)\frac{a'^2}{a^2} \right] \bar{\star}\bvt_0 \\
    &\qquad \qquad \qquad \qquad \qquad \qquad \qquad \qquad+ \frac{3\veps}{\phi} \left( \frac{\dot{a}'}{a} - \frac{a'}{a}
    \frac{\dot{\phi}}{\phi}\right) \bar{\star}\bvt_5 = - \k_5 \,\bar{\Sigma}_0 \, , \\
\bar{D} \bar{H}_5 - \bar{E}_5
&= \frac{3}{\phi}\left(
    \frac{a'}{a}\frac{\dot{\phi}}{\phi} - \frac{\dot{a}'}{a}\right)
    \bar{\star}\bvt_0 + 3\left[ \left( \frac{\ddot{a}}{a} +
    \frac{2\dot{a}^2}{a^2}\right) - \left(\frac{1+\veps}{2\phi^2}\right)
    \frac{a'^2}{a^2} \right] \bar{\star}\bvt_5 \\
    &= - \k_5 \, \bar{\Sigma}_5\,.
\end{aligned}
\end{equation}
Thus $(\ref{E:Friedmann eqn})_1$ is the Friedmann equation of the bulk. Moreover, if we let $\veps = + 1$ and expand the energy-momentum 4-form $\bar{\Sigma}_A = \bar{T}_A^B \, \bar{\star} \bvt_B$, we obtain
\begin{equation}\label{E:Friedmann eqn 2}
\left( \frac{\dot{a}^2}{a^2} + \frac{\dot{a}}{a}
\frac{\dot{\phi}}{\phi} \right) - \frac{1}{\phi^2} \left(
\frac{a''}{a} - \frac{a'}{a}\frac{\phi'}{\phi}\right) -
\frac{1}{\phi^2} \frac{a'^2}{a^2} = \frac{\k_5}{3} \bar{T}_{00}
\end{equation}
Furthermore, if we consider our matter field as a perfect fluid, one can decompose the energy-momentum
tensor into bulk and brane parts as~\cite{Binetruy:1999ut}
\begin{eqnarray}
\bar{T}_A^B(t,y) &=& \left(\bar{T}_A^B\right)_{\text{bulk}}+
\left(\bar{T}_A^B\right)_{\text{brane}}\,,
\nonumber\\
\left(\bar{T}_A^B\right)_{\text{brane}} &=& \frac{\delta(y)}{\phi} \,
diag(-\rho(t),P(t),P(t),P(t),0)\,,
\end{eqnarray}
where $\left(\bar{T}_A^B\right)_{\text{bulk}}$ represents the
\emph{vacuum energy-momentum tensor} or the \emph{cosmological constant}
$(\Lambda_5/\k_5)\e_A^B$ in the bulk, and $\rho(t)$ and $P(t)$ are
the energy density and the pressure of the normal matter localized
on the brane, respectively.

If the first discontinuity appears in the first derivative of the bulk metric $\bar{g}$, or $\bar{g}\in C^0(M)\setminus\bigcup_{k= 1}C^{k}(M)$ to be precise,
the Dirac delta function would appear in the second derivative of the bulk metric.
The FLRW metric leads to the equation of the scale factor with the form at $y=0$
\begin{equation}
a''(t,y) = \delta(y) \, [a'](t,0) + \widetilde{a}''(t,y)\,,
\end{equation}
where $\widetilde{a}''$ denotes the non-distributional part of $a''$
and the definition of the \emph{jump} is
\begin{equation}
[f](0):= \lim_{\delta \to 0^+}  f(\delta) - f(-\delta) \qquad (f: M
\to \mathbb{R})\,,
\end{equation}
which measures the discontinuity of a real-valued function $f$ across the
brane. With the form of the scale factor, (\ref{E:Friedmann eqn 2})
yields the junction condition
\begin{equation}\label{E:Jump of scale factor}
[a'](t,0) = \frac{\k_5}{3\veps} \rho \, a_0(t) \, \phi_0(t)
\end{equation}
where $a_0(t):= a(t,0)$ and $\phi_0(t):=\phi(t,0)$ are considered as
the scalar factor and a scalar field on the brane, respectively.
Furthermore, if we impose the so-called $\mathbb{Z}_2$ symmetry
~\cite{Horava:1995qa} for the scale factor in Eq.~(\ref{E:Jump of
scale factor}) on the bulk as a real-valued quantity $f$ must be an
odd function $f(x) = -f(-x)$ across the brane, we obtain the
Friedmann equation on the brane to be
\begin{equation}
\label{E:Brane Friedmann eqn} \frac{\dot{a}_0^2(t)}{a_0^2(t)} +
\frac{\ddot{a}_0(t)}{a_0(t)} = - \frac{\k_5^2}{36}\, \rho(t) (\rho(t)
+ 3P(t)) - \frac{k_5}{3\phi_0^2(t)} \,
\left(\bar{T}_{55}\right)_{\text{bulk}}\,,
\end{equation}
which is, unexpectedly, found to be the same as
the braneworld theory of GR shown in~\cite{Binetruy:1999ut}.
Hence, this indicates that the cosmological braneworld scenario in TEGR coincides
with that of GR, i.e,, there is no distinction between TEGR and GR in
the braneworld FLRW cosmology, which again justifies the name of TEGR.

The physical consequence of the cosmological brane scenario here
then follows from the discussions in~\cite{Binetruy:1999ut}. In particular, if the extra
5th-dimension is compact, one can check if the solutions of
$a(t,y)$ and $\phi(t,y)$ derived from (\ref{E:Friedmann eqn}) are
well-defined  ones, as given in \cite{Binetruy:1999ut}. In effect, we established the diagram
\[
\]
\xymatrix{
\text{5D TEGR} \ar@{.>}[d]_{\text{reduction}} \ar@2{<->}[r] \ar@{-->}[rd] & \text{5D GR}  \ar@{.>}[d]^{\text{reduction}} \\
\text{4D TEGR} \ar@2{<->}[r] & \text{4D GR}  }
\[
\]

Although the result may appear to be obvious, in fact there exists some non-triviality within the reduction. The 5-dimensional reduction to the 4-dimensional of the TEGR and GR involves different connections, the Levi-Civita and the Weitzenb\"{o}ck connection, and different geometric projections. Thus after some moment of thoughts, one realizes it is not that transparent as one though it was.

Finally, we remark that the Friedmann equation (\ref{E:Friedmann
eqn}) in the bulk can be identified as $G_{00} = - \k_5 \bar{T}_{00}$
and $G_{05}=0$, which are the same as those
in~\cite{Binetruy:1999ut}. This result implies that a
\emph{radiating} contribution of the universe can be generated in
TEGR due to the extra spatial dimension. It can be viewed as a
generic property that there exists a component of \emph{dark
radiation} in the braneworld scenario. We have to mention that there
is no extrinsic curvature in TEGR since the projected effects of the
dark radiation and discontinuity property of the brane come from
torsion itself, which is clearly beyond the expectations of
GR~\cite{PonceDeLeon:2001un,Maartens:2010ar} as already pointed out in~\cite{Nozari:2012qi}.

%--------------------------------------------------
\chapter{Conclusion} \label{conclusion}

In this thesis, we have given a complete view of Poincar\'{e} gauge gravity starting from the affine frame bundle $(\mathbb{A}(M),\widetilde{\pi}, M,GL(\mathbb{R}^{1,3}),\widetilde{\o})$ to Riemann-Cartan spacetime $(M,g,\nabla)$. It can be seen from such a viewpoint that the torsion is a natural byproduct of gauging the Poincar\'{e} group $\mathcal{P} = \mathbb{R}^{1,3} \rtimes SO(1,3) $ into gravity, simply due to the existence of the canonical 1-form $\varphi\in \overline{\Lambda}^1(L(M),\mathbb{R}^{1,3})$ of the frame bundle $L(M)$ and the decomposition of the Lie algebra $\mathfrak{P} = \mathbb{R}^{1,3} \oplus \mathfrak{so}(1,3)$.

% paper 1&2
We have also discussed one of the interesting models in PGT, the torsion-scalar mode that does support the late-time universe acceleration, and investigated its features in a cosmological model. In particular, we have studied that the energy density ratio $\Omega_T$ and the torsion EoS
$w_T$ in two main cases of the scalar-torsion mode in PGT.

For the first case of the negative energy matter density with negative constant affine curvature $R<0$, the torsion EoS $w_T$ demonstrates the same behavior as
the background fluid in the high redshift regime: in the case of the radiation-dominated era we have $w_T=1/3$, in the case $w_T=0$ of the matter-dominated era it is $w_T = 0$ and in a later stage of the de-Sitter point we also have $w_T = -1$. We also observe that the torsion density ratio of $\O_T$ in the high redshift regime becomes a \emph{negative} constant.

In the second case of the scalar-torsion mode where the positive kinetic energy
condition holds, the numerical solution of the field equations shows that in general $w_T$ has an asymptote to $1/3$ in the high redshift
regime, while it could cross the phantom divide line in the low redshift regime. With a further analysis we find that such asymptotic behavior of $w_T$ and $\O_T$ in fact can be resolved by a semi-analytical solution. In principle, we apply a Laurent series expansion in the scale factor $a(t)$ for the torsion density
$\rho_T$ and find that the series in fact has a cut-off as a the lower bound at $O(a^{-4})$ in the high redshift, corresponding to
a radiation-like behavior. By a
comparison of the next leading-order term of $a^{-3}$ in the field
equations, we are able to extract the coefficient $A_3 = -\mu / a_1$, which results
in the vanishing of the $a^{-3}$ term in the affine curvature $R$,
such that $R$ is only proportional to $a^{-2}$, and this result is consistent with the
numerical demonstration.

% paper 3 & 4

In the last two chapters, we regard another special geometry descendant from PGT, teleparallel gravity. A clear definition for the Weitzenb\"{o}ck spacetime and its parallelism are provided. In particular, using the basic differential forms on the principal fibre bundle, we are able to understand the reason of local Lorentz violation in teleparallel gravity that generally occurs.

We also construct the extra dimension theory for TEGR in five-dimensional spacetime. With the use of Cartan's moving frame by means of differential forms, we can find the torsion relations between the brane and the bulk, analogous to the Gauss-Codazzi equation in GR. In particular, from the extra dimension theory rigorously constructed, we can show that the Kaluza-Klein theory in
teleparallel gravity does not generate a Brans-Dicke type of the
effective 4-dimensional Lagrangian as in GR. This result disagrees with the one given in~\cite{Bamba:2013fta}.

We further apply our theory as the braneworld theory of teleparallel gravity and investigate its FLRW cosmology solution. To our surprise, it provides equivalent field equations and hence the same solutions as Einstein's general relativity. We thus conclude that the additional radiation of the universe can arise from the extra dimension, which is a generic feature of the branworld theory.

%In GR, the extrinsic curvature plays an important role
%to give the projected effects in the lower dimension, which
%indicates the embedding can lead to different dynamics of the
%hypersurface. On the other hand, in TEGR the projected effect on the
%lower dimensional manifold is determined by the projection of
%torsion 2-forms.

%%------------------------------------------------------------------------------

\appendix

\chapter{The Spin Current}\label{APP:spin current}

Let $\phi \in \Gamma(U)\otimes V$ be a particle field, a vector-valued local section on spacetime $U \subseteq M$, where $V$ is a vector space under consideration (in general $\mathbb{R}^n$ or $\mathbb{C}^n$ ). Consider the infinitesimal variation of $\phi$, denoted by $\delta \phi$
\begin{align}
\delta \int \mathcal{L}_{\text{M}} \left( x , \vt^\a(x) ,\phi^a (x), \phi^a_{;k}(x) \right) \, dvol^4 &= \int \left( \frac{\p \mathcal{L}_{\text{M}}}{\p \phi^a}\, \delta \phi^a + \frac{\p \mathcal{L}_{\text{M}}}{\p \phi^a_{;k}} \, \delta \phi^a_{;k} \right) \, dvol^4 , \\
&=\int \left[ \frac{\p \mathcal{L}_{\text{M}}}{\p \phi^a} \, \delta \phi^a + D_{\p_k} \left( \frac{\p \mathcal{L}_{\text{M}}}{\p
\phi^a_{;k}}\delta \phi^a \right) - D_{\p_k} \left( \frac{\p L}{\p \phi^a_{;k}} \right)\delta
\phi^a \right] \, dvol^4\\
&= \int \left[ \left( \frac{\p \mathcal{L}_{\text{M}}}{\p \phi^a}- D_{\p_k} \left( \frac{\p \mathcal{L}_{\text{M}}}{\p
\phi^a_{;k}} \right) \right) \, \delta \phi^a  + D_{\p_k} \left( \frac{\p \mathcal{L}_{\text{M}}}{\p \phi^a_{;k}} \, \delta
\phi^a \right) \right] \, dvol^4
\end{align}
where $dvol^4 = \sqrt{-g} \, d^4x $, $D$ is the exterior covariant derivative defined in (\ref{Def:Exterior covariant derivative}), $\delta \phi^a_{;k} :=  \delta (D \phi^a )( \p_k ) =  D (\delta \phi^a )( \p_k ) $ due to the fixed connection, and we have applied the integration by parts in the second equality. If we define a vector field $J^k:= \frac{\p \mathcal{L}_{\text{M}}}{\p \phi^a_{;k}} \, \delta
\phi^a$ (\textbf{general conserved current}) and assume $\phi$ satisfies the \textit{Euler-Lagrange equation} (on-shell)
\begin{equation}
 \frac{\p \mathcal{L}_{\text{M}}}{\p \phi^a}- D_{\p_k} \left( \frac{\p \mathcal{L}_{\text{M}}}{\p
\phi^a_{;k}} \right) = 0,
\end{equation}
with the infinitesimal invariant (under a Lie group $G$) of the Lagrangian, we obtain,
\begin{equation}\label{E:int div}
\delta \int \mathcal{\mathcal{L}_{\text{M}}} \left( x ,\phi^a (x), \phi^a_{;k}(x) \right) \, dvol^4 = \int D_{\p_k} \left( \frac{\p \mathcal{L}_{\text{M}}}{\p
\phi^a_{;k}}\delta\phi^a \right) \, dvol^4
\end{equation}
which will lead to Noether's first theorem for the $G$-symmetry. Now we carefully define the meaning of infinitesimal variation $\delta \phi$. Let $\mathfrak{g}$ be the Lie algebra of $G$, $\rho: G \to GL(V)$ be a Lie group representation, and $E \in \mathfrak{g}$ is an element. Let $ \phi(t):= \rho \left( e^{tE} \right) \cdot \phi $ and define $\d \phi$ as the variation of $\phi(t)$ in a 1-parameter subgroup of $G$,
\begin{equation}
\delta\phi := \frac{d}{dt}\phi(t)\bigr|_{t=0} = \left( \rho_* E \right) \cdot \phi
\end{equation}
where in components we write
\begin{equation}
\delta\phi^a=\left( \rho_* E \right)^a_b \, \phi^b
\end{equation}
where $\rho_*: \mathfrak{g} \to \mathfrak{gl}(V)$ is the induced Lie algebra representation, $(\xi_a)\in V$ is a set of chosen basis for $V$, and we expand $\phi := \phi^a \, \xi_a$ so that $\left( \rho_* E \right)^a_b$ denotes the matrix representation of $\rho_* E$ under the basis $\xi_a$. Thus (\ref{E:int div}) in view of the invariance of the Lagrangian under $G$ leads (on shell) to
\begin{equation}
Div (J) = D_{\p_k} \, J^k \equiv 0
\end{equation}
In particular, if we take the Lorentz symmetry with $G=SO(1,3)$, we obtain a conserved current corresponding to each $E_{ij}$ generator in $\mathfrak{so}(1,3)$, i.e,
\begin{equation}
J^k = \frac{\p \mathcal{L}}{\p
\phi^a_{;k}}( \rho_* E_{ij})^a_b \, \phi^b
\end{equation}
which is called the \textbf{canonical spin angular momentum} or (simply \textbf{spin current}).
\begin{equation}\label{E:spin current2}
S_{ij}{}{}^k= \left( \frac{\p \mathcal{L}_{\text{M}}}{\p \phi^a_{;k}} \right) \, \rho_{[ij]}{}{}_b{}{}^a \cdot \phi^b = - S_{ji}{}{}^k , \qquad \left( \mbox{where} \, \, \rho_{[ij]}{}{}_b{}{}^a := \left( \rho_* E_{ij} \right)^a_b \right)
\end{equation}

It follows that we have $\rho_{[ij]}{}{}_b{}{}^a := \left( \rho_* E_{ij} \right)^a_b$ due to the property $E_{ij} = - E_{ji}$ ($\forall \,\, i,j$) for generators of $\mathfrak{so}(1,3)$. Notice that the indices $a,b$ and $i,j$ are indicating different objects, which should be distinguished. Indices $a,b$ correspond to the vector basis $(e_a)\in V$, hence the matrix indices, while $i,j$ specify which Lie algebra basis of $\mathfrak{g}$ we are referring to, not the spacetime coordinate.

%%------------------------------------------------------------------------------

\newpage

\backmatter

\newpage

\addcontentsline{toc}{section}{Appendix}  % The section of appendix and add it into the table of contents
\appendix

%\end{CJK}

\begin{thebibliography}{99}  % The section of reference
\addcontentsline{toc}{section}{Reference}


%-------------------- Introduction

\bibitem{obs1}
  A.~G.~Riess {\it et al.} [Supernova Search Team Collaboration],
%  {\it{Observational evidence from supernovae for an accelerating universe
% and a cosmological constant}},
 Astron.\ J.\  {\bf 116}, 1009 (1998).
%[\href{http://xxx.lanl.gov/abs/astro-ph/9805201}
%{{\tt arXiv:astro-ph/9805201}}].


\bibitem{obs11}
  S.~Perlmutter {\it et al.} [Supernova Cosmology Project Collaboration],
%  {\it{Measurements of Omega and Lambda from 42 high redshift supernovae}},
  Astrophys.\ J.\  {\bf 517}, 565 (1999).
%[\href{http://xxx.lanl.gov/abs/astro-ph/9812133}
%{{\tt arXiv:astro-ph/9812133}}].


\bibitem{obs12}
  D.~N.~Spergel {\it et al.} [WMAP Collaboration],
%  {\it{First year Wilkinson Microwave Anisotropy Probe (WMAP) observations:
%Determination of cosmological parameters}},
  Astrophys.\ J.\ Suppl.\  {\bf 148}, 175 (2003).
%[\href{http://xxx.lanl.gov/abs/astro-ph/0302209}
%{{\tt arXiv:astro-ph/0302209}}].



%\cite{arXiv:1001.4538}
\bibitem{arXiv:1001.4538}
  E.~Komatsu {\it et al.} [WMAP Collaboration],
  %``Seven-Year Wilkinson Microwave Anisotropy Probe (WMAP) Observations: Cosmological Interpretation,''
  Astrophys.\ J.\ Suppl.\  {\bf 192}, 18  (2011).
%  [arXiv:1001.4538 [astro-ph.CO]].
  %%CITATION = APJSA,192,18;%%




\bibitem{astro-ph/0501171}
  D.~J.~Eisenstein {\it et al.} [SDSS Collaboration],
  %``Detection of the baryon acoustic peak in the large-scale correlation function of SDSS luminous red galaxies,''
  Astrophys.\ J.\  {\bf 633}, 560  (2005).
%  [astro-ph/0501171].
  %%CITATION = ASJOA,633,560;%%


\bibitem{obs13}
  M.~Tegmark {\it et al.} [SDSS Collaboration],
%  {\it{Cosmological parameters from SDSS and WMAP}},
  Phys.\ Rev.\  {\bf D69}, 103501 (2004).
%[\href{http://xxx.lanl.gov/abs/astro-ph/0310723}
%{{\tt arXiv:astro-ph/0310723}}].


%\cite{astro-ph/0306046}
\bibitem{astro-ph/0306046}
  B.~Jain and A.~Taylor,
  %``Cross-correlation tomography: measuring dark energy evolution with weak lensing,''
  Phys.\ Rev.\ Lett.\  {\bf 91}, 141302  (2003).
%  [astro-ph/0306046].
  %%CITATION = PRLTA,91,141302;%%



\bibitem{DE}
E.~J.~Copeland, M.~Sami and S.~Tsujikawa,
  %``Dynamics of dark energy,''
  Int.\ J.\ Mod.\ Phys.\ D {\bf 15}, 1753 (2006);
%  [hep-th/0603057].
  %%CITATION = HEP-TH/0603057;%%
 M.~Li, X.~-D.~Li, S.~Wang and Y.~Wang,
  %``Dark Energy,''
  Commun.\ Theor.\ Phys.\  {\bf 56}, 525 (2011).
%  [arXiv:1103.5870 [astro-ph.CO]].



\bibitem{Hehl:1976kj}
  F.~W.~Hehl, P.~Von Der Heyde, G.~D.~Kerlick and J.~M.~Nester,
  %``General Relativity with Spin and Torsion: Foundations and Prospects,''
  Rev.\ Mod.\ Phys.\  {\bf 48}, 393 (1976).
  %%CITATION = RMPHA,48,393;%%




\bibitem{Obukhov:1987tz}
  Y.~N.~Obukhov, V.~N.~Ponomarev and V.~V.~Zhytnikov,
 % ``Quadratic Poincare Gauge Theory Of Gravity: A Comparison With The General Relativity Theory,''
  Gen.\ Rel.\ Grav.\  {\bf 21}, 1107 (1989).



\bibitem{Hehl:1994ue}
  F.~W.~Hehl, J.~D.~McCrea, E.~W.~Mielke and Y.~Ne'eman,
  %``Metric affine gauge theory of gravity: Field equations, Noether identities, world spinors, and breaking of dilation invariance,''
  Phys.\ Rept.\  {\bf 258}, 1 (1995).
%  [gr-qc/9402012].




\bibitem{Trautman:2006fp}
  A.~Trautman,
  Encyclopedia of Math. Physics, ed by J-P Francoise \emph{et al}. (Oxford: Elsevier), vol. 2, p.189 (2006)
  %``Einstein-Cartan theory,''
  [arXiv:gr-qc/0606062]



%\cite{Einstein:ap}
\bibitem{Einstein:ap}
  A.~Einstein,
  %``Riemann-Geometrie mit Aufrechterhaltung des Begriffes des Fernparallelismus,''
  Sitzungsber. Preuss. Akad. Wiss., {\bf 1928}(XVII), 217¡V221, (1928);
%\cite{Einstein:teleaction}
%\bibitem{Einstein:teleaction}
%  A.~Einstein,
  %``Einheitliche Feldtheorie und Hamiltonsches Prinzip,''
 % Sitzungsber. Preuss. Akad. Wiss.,
 {\bf 1929}(X), 156¡V159, (1929).



\bibitem{ECletter}
R. Debever, \textit{Elie Cartan and Albert Einstein: Letters on Absolute Parallelism}, Princeton University Press, 2015



\bibitem{Weyl1918}
  H. Weyl,
  %``Gravitation und Elektrizit¡Lat,''
  Sitzungsber. Preuss. Akad. Wiss., 465¡V480, 1918


\bibitem{Weyl1929}
  H. Weyl,
  %``Electron and Gravitation. I,''
  Zeitschrift f\"{u}r Physik \textbf{56}, 330¡V352, 1929



\bibitem{Utiyama:1956sy}
  R.~Utiyama,
 % ``Invariant theoretical interpretation of interaction,''
  Phys.\ Rev.\  {\bf 101}, 1597 (1956).



\bibitem{Sciama}
D. W. Sciama, On the analogy between charge and spin in general
relativity, in: \emph{Recent Developments in General Relativity},
Festschrift for Infeld (Pergamon Press, Oxford; PWN, Warsaw, 415¡V439, 1962


\bibitem{Kibble}
  T.~W.~B.~Kibble,
  %``Lorentz invariance and the gravitational field,''
  J.\ Math.\ Phys.\  {\bf 2}, 212 (1961).
  %%CITATION = JMAPA,2,212;%%




\bibitem{McInnes:1984sm}
  B.~McInnes,
  %``On The Affine Approach To Riemann-cartan Space-time Geometry,''
  Class.\ Quant.\ Grav.\  {\bf 1}, 115 (1984).
  %%CITATION = CQGRD,1,115;%%



\bibitem{Trautman:1970cy}
  A.~Trautman,
  %``Fiber bundles associated with space-time,''
  Rept.\ Math.\ Phys.\  {\bf 1}, 29 (1970).
  %%CITATION = RMHPB,1,29;%%




\bibitem{Tseng:2012hn}
  H.~H.~Tseng, C.~C.~Lee and C.~Q.~Geng,
  %``Scalar-torsion mode in a cosmological model of the Poincare gauge theory of gravity,''
  JCAP {\bf 1211}, 013 (2012).
  %%CITATION = JCAPA,1211,013;%%



\bibitem{Geng:2013hp}
  C.~Q.~Geng, C.~C.~Lee and H.~H.~Tseng,
  %``Asymptotic cosmological behavior of scalar-torsion mode in Poincare gauge theory,''
  Phys.\ Rev.\ D {\bf 87}, no. 2, 027301 (2013)
  [arXiv:1301.0049 [astro-ph.CO]].



\bibitem{Tseng:2012hz}
  C.~Q.~Geng, C.~C.~Lee and H.~H.~Tseng,
  %``Scalar-Torsion Cosmology in the Poincare Gauge Theory of Gravity,''
  Nucl.\ Phys.\ Proc.\ Suppl.\  {\bf 246-247}, 203 (2014)
  [arXiv:1207.0579 [gr-qc]].



\bibitem{Shie:2008ms}
  K.~F.~Shie, J.~M.~Nester and H.~J.~Yo,
  %``Torsion Cosmology and the Accelerating Universe,""
  Phys.\ Rev.\ D {\bf 78}, 023522 (2008).
%  [arXiv:0805.3834 [gr-qc]].





 \bibitem{Hayashi:1981mm}
 K.~Hayashi and T.~Shirafuji,
  %``Gravity from Poincare Gauge Theory of the Fundamental Particles. 1. Linear and Quadratic Lagrangians,''
  Prog.\ Theor.\ Phys.\  {\bf 64}, 866 (1980); {\bf 64}, 883 (1980);
   {\bf 64}, 1435 (1980);  {\bf 64}, 2222 (1980); {\bf 65}, 525 (1981); {\bf 66}, 318 (1981);
  %%CITATION = PTPKA,64,866;%%
 %``Gravity From Poincare Gauge Theory Of The Fundamental Particles. 2. Equations Of Motion For Test Bodies And Various Limits,''
%  Prog.\ Theor.\ Phys.\  {\bf 64}, 883 (1980)
%  [Erratum-ibid.\  {\bf 65}, 2079 (1981)].
  %%CITATION = PTPKA,64,883;%%
%\bibitem{Hayashi:1981mm}
%  K.~Hayashi and T.~Shirafuji,
%  ``Gravity From Poincare Gauge Theory Of The Fundamental Particles. Particles. 1. Linear and Quadratic Lagrangians,''
  %Prog.\ Theor.\ Phys.\
  {\bf 66}, 2258 (1981).





\bibitem{Sezgin:1979zf}
  E.~Sezgin and P.~van Nieuwenhuizen,
  %``New Ghost Free Gravity Lagrangians with Propagating Torsion,''
  Phys.\ Rev.\ D {\bf 21}, 3269 (1980).
  %%CITATION = PHRVA,D21,3269;%%

\bibitem{Kopczynski}
  W. Kopczy\'{n}ski,
  %``A non-singular universe with torsion,''
  Phys.\ Lett.\ A {\bf 39}, 219 (1972)









\bibitem{Chen:2009at}
  H.~Chen, F.~-H.~Ho, J.~M.~Nester, C.~-H.~Wang and H.~-J.~Yo,
  %``Cosmological dynamics with propagating Lorentz connection modes of spin zero,''
  JCAP {\bf 0910}, 027 (2009).

  %\cite{Li:2009zzc}
\bibitem{Li:2009zzc}
  X.~-z.~Li, C.~-b.~Sun and P.~Xi,
  %``Torsion cosmological dynamics,''
  Phys.\ Rev.\ D {\bf 79}, 027301 (2009).
%  [arXiv:0903.3088 [gr-qc]].
  %%CITATION = ARXIV:0903.3088;%%

  %\cite{Li:2009gj}
\bibitem{Li:2009gj}
  X.~-z.~Li, C.~-b.~Sun and P.~Xi,
  %``Statefinder diagnostic in a torsion cosmology,''
  JCAP {\bf 0904}, 015 (2009).
%  [arXiv:0903.4724 [gr-qc]].
  %%CITATION = ARXIV:0903.4724;%%



%\cite{Ho:2011qn}
\bibitem{Ho:2011qn}
  F.~-H.~Ho and J.~M.~Nester,
  %``Poincar\'e gauge theory with even and odd parity dynamic connection modes: isotropic Bianchi cosmological models,''
  J.\ Phys.\ Conf.\ Ser.\  {\bf 330}, 012005 (2011).
%  [arXiv:1105.5001 [gr-qc]].
  %%CITATION = ARXIV:1105.5001;%%






 %\cite{Ho:2011xf}
\bibitem{Ho:2011xf}
  F.~-H.~Ho and J.~M.~Nester,
  %``Poincar\'e Gauge Theory With Coupled Even And Odd Parity Dynamic Spin-0 Modes: Dynamic Equations For Isotropic Bianchi Cosmologies,''
  Annalen Phys.\  {\bf 524}, 97 (2012).
%  [arXiv:1106.0711 [gr-qc]].
  %%CITATION = ARXIV:1106.0711;%%

\bibitem{Ao:2010mg}
  X.~-c.~Ao, X.~-z.~Li and P.~Xi,
  %``Analytical approach of late-time evolution in a torsion cosmology,''
  Phys.\ Lett.\ B {\bf 694}, 186 (2010).


\bibitem{Baekler:2010fr}
  P.~Baekler, F.~W.~Hehl and J.~M.~Nester,
  %``Poincare gauge theory of gravity: Friedman cosmology with even and odd parity modes. Analytic part,''
  Phys.\ Rev.\ D {\bf 83}, 024001 (2011)
  [arXiv:1009.5112 [gr-qc]].
  %%CITATION = ARXIV:1009.5112;%%



\bibitem{Ao:2011kc}
  X.~-c.~Ao and X.~-z.~Li,
  %``Torsion Cosmology of Poincar\'e gauge theory and the constraints of its parameters via SNeIa data,''
  JCAP {\bf 1202}, 003 (2012).


  %\cite{Xi:2011uz}
\bibitem{Xi:2011uz}
  P.~Xi, X.~-h.~Zhai and X.~-z.~Li,
  %``Alternative mechanism of avoiding the big rip or little rip for a scalar phantom field,''
  Phys.\ Lett.\ B {\bf 706}, 482 (2012).
%  [arXiv:1111.6355 [gr-qc]].
  %%CITATION = ARXIV:1111.6355;%%



\bibitem{Hehl:2012pi}
  F.~W.~Hehl,
  %``Gauge Theory of Gravity and Spacetime,''
  arXiv:1204.3672 [gr-qc].
  %%CITATION = ARXIV:1204.3672;%%




\bibitem{Li:2010cg}
  B.~Li, T.~P.~Sotiriou and J.~D.~Barrow,
  %``$f(T)$ gravity and local Lorentz invariance,''
  Phys.\ Rev.\ D {\bf 83}, 064035 (2011)
  [arXiv:1010.1041 [gr-qc]].




\bibitem{Nordstrom}
  G. Nordstr\"{o}m,
  %``\"{U}ber die M\"{o}glichkeit, das elektromagnetische Feld und das Gravitationsfeld zu vereinigen,''
  Phys.\ Z./ {\bf 15}, 504¡V506 (1914)




%\cite{Kaluza}
\bibitem{Kaluza}
  T.~Kaluza,
  %``On the Problem of Unity in Physics,''
  Sitzungsber.\ Preuss.\ Akad.\ Wiss.\ Berlin Math.\ Phys.\  {\bf 1921}, 966, 1921.
  %%CITATION = SPWPA,1921,966;%%
  %12 citations counted in INSPIRE as of 15 Dec 2013

%\cite{Klein}
\bibitem{Klein}
  O.~Klein,
  %``Quantum Theory and Five-Dimensional Theory of Relativity. (In German and English),''
  Z.\ Phys.\  {\bf 37}, 895 (1926)
  Surveys High Energ.\ Phys.\  {\bf 5}, 241 , 1986.
  %%CITATION = ZEPYA,37,895;%%
  %1644 citations counted in INSPIRE as of 15 Dec 2013


  %\cite{ADD}
\bibitem{ADD}
  N.~Arkani-Hamed, S.~Dimopoulos and G.~R.~Dvali,
  %``The Hierarchy problem and new dimensions at a millimeter,''
  Phys.\ Lett.\ B {\bf 429}, 263 ,1998;
  %[hep-ph/9803315].
  %%CITATION = HEP-PH/9803315;%%
  %4993 citations counted in INSPIRE as of 15 Dec 2013
  N.~Arkani-Hamed, S.~Dimopoulos and G.~R.~Dvali,
  %``Phenomenology, astrophysics and cosmology of theories with submillimeter dimensions and TeV scale quantum gravity,''
  Phys.\ Rev.\ D {\bf 59}, 086004, 1999
  %[hep-ph/9807344].
  %%CITATION = HEP-PH/9807344;%%
  %2161 citations counted in INSPIRE as of 15 Dec 2013





\bibitem{Geng:2014yya}
  C.~Q.~Geng, L.~W.~Luo and H.~H.~Tseng,
  %``Teleparallel Gravity in Five Dimensional Theories,''
  Class.\ Quant.\ Grav.\  {\bf 31}, 185004 (2014)
  [arXiv:1403.3161 [hep-th]].
  %%CITATION = ARXIV:1403.3161;%%



\bibitem{Geng:2014nfa}
  C.~Q.~Geng, C.~Lai, L.~W.~Luo and H.~H.~Tseng,
  %``Kaluza¡VKlein theory for teleparallel gravity,''
  Phys.\ Lett.\ B {\bf 737}, 248 (2014)
  [arXiv:1409.1018 [gr-qc]].
  %%CITATION = ARXIV:1409.1018;%%








%\cite{deAndrade:1999vq}
\bibitem{deAndrade:1999vq}
  V.~C.~de Andrade, L.~C.~T.~Guillen and J.~G.~Pereira,
  %``Teleparallel equivalent of Kaluza-Klein,''
  Phys.\ Rev.\ D {\bf 61}, 084031 (2000).
  %[gr-qc/9909004].
  %%CITATION = GR-QC/9909004;%%
  %12 citations counted in INSPIRE as of 12 Jan 2014

%\cite{Barbosa:2002mg}
\bibitem{Barbosa:2002mg}
  A.~L.~Barbosa, L.~C.~T.~Guillen and J.~G.~Pereira,
  %``Teleparallel equivalent of nonAbelian Kaluza-Klein theory,''
  Phys.\ Rev.\ D {\bf 66}, 064028 (2002).
  %[gr-qc/0208052].
  %%CITATION = GR-QC/0208052;%%
  %5 citations counted in INSPIRE as of 12 Jan 2014

%\cite{Fiorini:2013hva}
\bibitem{Fiorini:2013hva}
  F.~Fiorini, P.~A.~Gonzalez and Y.~Vasquez,
  %``Compact extra dimensions in cosmologies with f(T) structure,''
  arXiv:1304.1912 [gr-qc].
  %%CITATION = ARXIV:1304.1912;%%
  %1 citations counted in INSPIRE as of 12 Jan 2014


\bibitem{Bamba:2013fta}
  K.~Bamba, S.~Nojiri and S.~D.~Odintsov,
  %``Effective $F(T)$ gravity from the higher-dimensional Kaluza-Klein and Randall-Sundrum theories,''
  Phys.\ Lett.\ B {\bf 725}, 368 (2013)
  [arXiv:1304.6191 [gr-qc]].
  %%CITATION = ARXIV:1304.6191;%%


%\cite{Nozari:2012qi}
\bibitem{Nozari:2012qi}
  K.~Nozari, A.~Behboodi and S.~Akhshabi,
  %``Braneworld Teleparallel Gravity,''
  Phys.\ Lett.\ B {\bf 723}, 201 (2013).
  %[arXiv:1212.5772 [gr-qc]].
  %%CITATION = ARXIV:1212.5772;%%
  %2 citations counted in INSPIRE as of 15 Dec 2013






\bibitem{Steenrod}
N. Steenrod, \textit{The Topology of Fibre Bundles}, Princeton, Princeton University Press, 1951


\bibitem{Kobayashi}
S. Kobayashi, K. Nomizu, \textit{Foundations of Differential Geometry}, Wiley, New York, 1996




\bibitem{Bleecker}
D. Bleecker, \textit{Gauge Theory and Variational Principles}, Addison-Wesley, 1981




\bibitem{Jost1}
J. Jost, \textit{Riemannian Geometry and Geometric Analysis $6^{th}$ ed}, Berlin, Springer, 2011





\bibitem{Trautman:1979cq}
  A.~Trautman,
  %``The Geometry Of Gauge Fields. (talk),''
  Czech.\ J.\ Phys.\ B {\bf 29}, 107 (1979).
  %%CITATION = CZYPA,B29,107;%%



\bibitem{Wu:1975es}
  T.~T.~Wu and C.~N.~Yang,
  %``Concept of Nonintegrable Phase Factors and Global Formulation of Gauge Fields,''
  Phys.\ Rev.\ D {\bf 12}, 3845 (1975).
  %%CITATION = PHRVA,D12,3845;%%

\bibitem{Jost2}
J. Jost, \textit{Geometry and Physics}, Springer-Verlag, Berlin Heidelberg, 2009


\bibitem{TrautmanDG}
A. Trautman, Differential Geometry for Physicists, Naples, Italy, Bibliopolis, 1984







\bibitem{Trautman-EC1}
 A. Trautman,\textit{On the Einstein-Cartan equations. Part I} Bull. Acad. Polon. Sci., s¡¦er. sci. math., astr. et phys. 20 1972

\bibitem{Trautman-EC2}
 A. Trautman,\textit{On the Einstein-Cartan equations. Part II} Bull. Acad. Polon. Sci., s¡¦er. sci. math., astr. et phys. 20 1972

\bibitem{Trautman-EC3}
 A. Trautman,\textit{On the Einstein-Cartan equations. Part III} Bull. Acad. Polon. Sci., s¡¦er. sci. math., astr. et phys. 20 1972


\bibitem{Trautman-EC4}
 A. Trautman,\textit{On the Einstein-Cartan equations. Part IV} Bull. Acad. Polon. Sci., s¡¦er. sci. math., astr. et phys. 21 1973





\bibitem{McCrea:1992wa}
  J.~D.~McCrea,
  %``Irreducible decompositions of non-metricity, torsion, curvature and Bianchi identities in metric affine space-times,''
  Class.\ Quant.\ Grav.\  {\bf 9}, 553 (1992).
  %%CITATION = CQGRD,9,553;%%




\bibitem{Hehl:2014eja}
  F.~W.~Hehl,
  %``On energy-momentum and spin/helicity of quark and gluon fields,''
  arXiv:1402.0261 [gr-qc].
  %%CITATION = ARXIV:1402.0261;%%







\bibitem{Muench:1998ay}
  U.~Muench, F.~Gronwald and F.~W.~Hehl,
  %``A Small guide to variations in teleparallel gauge theories of gravity and the Kaniel-Itin model,''
  Gen.\ Rel.\ Grav.\  {\bf 30}, 933 (1998)
  [gr-qc/9801036].
  %%CITATION = GR-QC/9801036;%%




%\cite{Gronwald:1995em}
\bibitem{Gronwald:1995em}
  F.~Gronwald and F.~W.~Hehl,
  %``On the gauge aspects of gravity,''
  In *Erice 1995, Quantum gravity* 148-198
  [gr-qc/9602013].
  %%CITATION = GR-QC/9602013;%%
  %94 citations counted in INSPIRE as of 13 May 2015


\bibitem{Hehl:1979gp}
  F.~W.~Hehl, J.~Nitsch and P.~Von der Heyde,
  %``Gravitation And Poincare Gauge Field Theory With Quadratic Lagrangian,''
  Print-79-0270 (COLOGNE).





%---------------------------------------
%qPGT





\bibitem{Yo:1999ex}
  H.~J.~Yo and J.~M.~Nester,
  %``Hamiltonian analysis of Poincare gauge theory scalar modes,''
  Int.\ J.\ Mod.\ Phys.\ D {\bf 8}, 459 (1999)
  [gr-qc/9902032].
  %%CITATION = GR-QC/9902032;%%


  \bibitem{Yo:2001sy}
  H.~J.~Yo and J.~M.~Nester,
  %``Hamiltonian analysis of Poincare gauge theory: Higher spin modes,''
  Int.\ J.\ Mod.\ Phys.\ D {\bf 11}, 747 (2002)
  [gr-qc/0112030].
  %%CITATION = GR-QC/0112030;%%




%-----------------------------------------------------------------------



%-----------------------------------------------------------------------
%% Scalar- Torsion Mode %%





\bibitem{obs14}
  S.~W.~Allen, R.~W.~Schmidt, H.~Ebeling, A.~C.~Fabian, L.~van Speybroeck,
%  {\it{Constraints on dark energy from Chandra observations of the largest
%relaxed galaxy clusters}},
  Mon.\ Not.\ Roy.\ Astron.\ Soc.\  {\bf 353}, 457 (2004).
%[\href{http://xxx.lanl.gov/abs/astro-ph/0405340}
%{{\tt arXiv:astro-ph/0405340}}].







\bibitem{Hehl:Erice1979}
F. W. Hehl, P. von der Heyde, G. D. Kerlick, and J. M. Nester,
General relativity with spin and torsion: Foundations and prospects,
\emph{Rev. Mod. Phys.} \textbf{48}, 393¡V416 (1976); see also F. W.
Hehl, Four lectures in Poincar¡¦e gauge field theory, in: P. G.
Bergmann and V. de Sabbata (eds.), \emph{Proc. of the $6^{th}$
Course of the International School of Cosmology and Gravitation on
Spin, Torsion, Rotation, and Supergravity}, held in Erice, Italy,
May 1979 (Plenum, New York, 1980), pp. 5¡V61 [see
http://www.thp.uni-koeln.de/gravitation/mitarbeiter/Erice1979.pdf ].



\bibitem{B&H}
   M. Blagojevi\'{c}, F. W. Hehl,
   \emph{Gauge Theories of Gravitation: A Reader with Commentaries}, Imperial
   College, London, Press 2013






\bibitem{Obukhov:2006gea}
    Y.~N.~Obukhov, Poincar\'{e} gauge gravity: Selected topics,
  \emph{Int.\ J.\ Geom.\ Meth.\ Mod.\ Phys.\ } {\bf 3}, 95 (2006)




\bibitem{Peireira}
   R. Aldrovandi \& J.G. Peireira
   \emph{Teleparallel Gravity: an Introduction}, Springer, Netherlands, 2013





\bibitem{Sotiriou:2010mv}
  T.~P.~Sotiriou, B.~Li and J.~D.~Barrow,
  %``Generalizations of teleparallel gravity and local Lorentz symmetry,''
  Phys.\ Rev.\ D {\bf 83}, 104030 (2011)
  [arXiv:1012.4039 [gr-qc]].


%---------------------------------------------------
%TEGR


\bibitem{Li:2013oef}
  J.~T.~Li, Y.~P.~Wu and C.~Q.~Geng,
  %``Parametrized post-Newtonian limit of the teleparallel dark energy model,''
  Phys.\ Rev.\ D {\bf 89}, no. 4, 044040 (2014)
  [arXiv:1312.4332 [gr-qc]].
  %%CITATION = ARXIV:1312.4332;%%



\bibitem{Wu:2012hs}
  Y.~P.~Wu and C.~Q.~Geng,
  %``Matter Density Perturbations in Modified Teleparallel Theories,''
  JHEP {\bf 1211}, 142 (2012)
  [arXiv:1211.1778 [gr-qc]].





\bibitem{Geng:2011aj}
  C.~Q.~Geng, C.~C.~Lee, E.~N.~Saridakis and Y.~P.~Wu,
  %``'Teleparallel' Dark Energy,''
  Phys.\ Lett.\ B {\bf 704}, 384 (2011)
  [arXiv:1109.1092 [hep-th]].
  %%CITATION  ARXIV:1109.1092;%%
  %86 citations counted in INSPIRE as of 27 May 2015

% Observational constraint:
\bibitem{Geng:2011ka}
  C.~Q.~Geng, C.~C.~Lee and E.~N.~Saridakis,
  %``Observational Constraints on Teleparallel Dark Energy,''
  JCAP {\bf 1201}, 002 (2012)
  [arXiv:1110.0913 [astro-ph.CO]].
  %%CITATION  ARXIV:1110.0913;%%
  %73 citations counted in INSPIRE as of 27 May 2015



\bibitem{Ong:2013qja}
  Y.~C.~Ong, K.~Izumi, J.~M.~Nester and P.~Chen,
  %``Problems with Propagation and Time Evolution in f(T) Gravity,''
  Phys.\ Rev.\ D {\bf 88}, 024019 (2013)
  [arXiv:1303.0993 [gr-qc]].
  %%CITATION = ARXIV:1303.0993;%%








% f(T):
% Model & Observational constraint:
\bibitem{Bamba:2010wb}
  K.~Bamba, C.~Q.~Geng, C.~C.~Lee and L.~W.~Luo,
  %``Equation of state for dark energy in $f(T)$ gravity,''
  JCAP {\bf 1101}, 021 (2011)
  [arXiv:1011.0508 [astro-ph.CO]].
  %%CITATION  ARXIV:1011.0508;%%
  %158 citations counted in INSPIRE as of 27 May 2015

%Perturbation:
\bibitem{Geng:2012vn}
  C.~Q.~Geng and Y.~P.~Wu,
  %``Density Perturbation Growth in Teleparallel Cosmology,''
  JCAP {\bf 1304}, 033 (2013)
  [arXiv:1212.6214 [astro-ph.CO]].
  %%CITATION  ARXIV:1212.6214;%%
  %8 citations counted in INSPIRE as of 27 May 2015




%---------------------------------------------------



\bibitem{Bartnik:2002cw}
  R.~Bartnik and J.~Isenberg,
  %``The Constraint equations,''
  gr-qc/0405092.
  %%CITATION = GR-QC/0405092;%%

\bibitem{Ryder}
L. Ryder, \textit{Introduction to general relativity}, Cambridge University Press, 2009








%\bibitem{BGL-Comment}
%
%\cite{Bamba:2010iw}
%\bibitem{Bamba:2010iw}
  K.~Bamba, C.~Q.~Geng and C.~C.~Lee,
  %``Comment on 'Einstein's Other Gravity and the Acceleration of the
  %Universe'',''
  arXiv:1008.4036 [astro-ph.CO];
  %%CITATION = ARXIV:1008.4036;%%
   P.~Wu and H.~W.~Yu,
  %``$f(T)$ models with phantom divide line crossing,''
  Eur.\ Phys.\ J.\ \textbf{C71}, 1552 (2011);
  %  [arXiv:1008.3669 [gr-qc]];
%
%\cite{Bamba:2010wb}
%\bibitem{Bamba:2010wb}
  K.~Bamba, C.~Q.~Geng, C.~C.~Lee and L.~W.~Luo,
  %``Equation of state for dark energy in $f(T)$ gravity,''
  JCAP {\bf 1101}, 021 (2011).
%  [arXiv:1011.0508 [astro-ph.CO]].
%  R.~Ferraro and F.~Fiorini,
  %``Modified teleparallel gravity: Inflation without inflaton,''
%  Phys.\ Rev.\ D {\bf 75}, 084031 (2007).
  %[gr-qc/0610067].
  %%CITATION = GR-QC/0610067;%%
  %137 citations counted in INSPIRE as of 15 Dec 2013




%\cite{PonceDeLeon:2001un}
\bibitem{PonceDeLeon:2001un}
  J.~Ponce De Leon,
  %``Equivalence between space-time matter and brane world theories,''
  Mod.\ Phys.\ Lett.\ A {\bf 16}, 2291 (2001).
  %[gr-qc/0111011].
  %%CITATION = GR-QC/0111011;%%
  %88 citations counted in INSPIRE as of 12 Jan 2014



\bibitem{Maartens:2010ar}
  R.~Maartens and K.~Koyama,
  %``Brane-World Gravity,''
  Living Rev.\ Rel.\  {\bf 13}, 5 (2010).




%\cite{Overduin:1998pn}
\bibitem{Overduin:1998pn}
  J.~M.~Overduin and P.~S.~Wesson,
  %``Kaluza-Klein gravity,''
  Phys.\ Rept.\  {\bf 283}, 303 (1997).
  %[gr-qc/9805018].
  %%CITATION = GR-QC/9805018;%%
  %463 citations counted in INSPIRE as of 11 Jan 2014





\bibitem{Gronwald:1997bx}
  F.~Gronwald,
  %``Metric affine gauge theory of gravity. 1. Fundamental structure and field equations,''
  Int.\ J.\ Mod.\ Phys.\ D {\bf 6}, 263 (1997).
%  [gr-qc/9702034].


\bibitem{Obukhov:2002tm}
  Y.~N.~Obukhov and J.~G.~Pereira,
  %``Metric affine approach to teleparallel gravity,''
  Phys.\ Rev.\ D {\bf 67}, 044016 (2003);
%  [gr-qc/0212080].
Y.~N.~Obukhov,
  %``Poincare gauge gravity: Selected topics,''
  Int.\ J.\ Geom.\ Meth.\ Mod.\ Phys.\  {\bf 3}, 95 (2006).





%\cite{Binetruy:1999ut}
\bibitem{Binetruy:1999ut}
  P.~Binetruy, C.~Deffayet and D.~Langlois,
  %``Nonconventional cosmology from a brane universe,''
  Nucl.\ Phys.\ B {\bf 565}, 269 (2000);
  %[hep-th/9905012].
  %%CITATION = HEP-TH/9905012;%%
  %902 citations counted in INSPIRE as of 19 Oct 2013
%\cite{Binetruy:1999hy}
%\bibitem{Binetruy:1999hy}
  P.~Binetruy, C.~Deffayet, U.~Ellwanger and D.~Langlois,
  %``Brane cosmological evolution in a bulk with cosmological constant,''
  Phys.\ Lett.\ B {\bf 477}, 285 (2000).
  %[hep-th/9910219].
  %%CITATION = HEP-TH/9910219;%%
  %886 citations counted in INSPIRE as of 19 Oct 2013





%  \cite{Horava:1995qa}
\bibitem{Horava:1995qa}
  P.~Horava and E.~Witten,
  %``Heterotic and type I string dynamics from eleven-dimensions,''
  Nucl.\ Phys.\ B {\bf 460}, 506 (1996).
  %[hep-th/9510209].
  %%CITATION = HEP-TH/9510209;%%
  %2008 citations counted in INSPIRE as of 14 Feb 2014







\end{thebibliography}
\end{document}